\begin{document}

\title{\textbf{Uncloneable Quantum Advice}}

\newlength{\authornamewidth}
\setlength{\authornamewidth}{0.25\textwidth}
\author{
	\makebox[\authornamewidth]{Anne Broadbent}\\
	\small University of Ottawa\\
	\footnotesize \texttt{abroadbe@uottawa.ca}
	\and
	\makebox[\authornamewidth]{Martti Karvonen}\\
	\small University of Ottawa\\
	\footnotesize \texttt{martti.karvonen@uottawa.ca}
	\and
	\makebox[\authornamewidth]{S\'ebastien Lord}\\
	\small University of Ottawa\\
	\footnotesize \texttt{sebastien.lord@uottawa.ca}
}

\date{ }

\maketitle

\begin{abstract}

The famous no-cloning principle has been shown recently to enable a
number of uncloneable functionalities. Here we address for the first
time \emph{unkeyed} quantum uncloneablity, via the study of a
complexity-theoretic tool that enables a computation, but that is
natively unkeyed: \emph{quantum advice}. Remarkably, this is an
application of the no-cloning principle in a context where the quantum
states of interest are \emph{not} chosen by a random process. We show
the unconditional existence of promise problems admitting
\emph{uncloneable quantum advice}, and the existence of languages with
uncloneable advice, assuming the feasibility of quantum copy-protecting
certain functions. Along the way, we note that state complexity classes,
introduced by Rosenthal and Yuen (ITCS 2022) --- which concern
the computational difficulty of synthesizing sequences of quantum
states --- can be naturally generalized to obtain state \emph{cloning}
complexity classes. We make initial observations on these classes,
notably obtaining a result analogous to the existence of undecidable
problems.

Our proof technique establishes the existence of \emph{ingenerable
sequences of finite bit strings} --- essentially meaning that they
cannot be generated by any uniform circuit family. We then prove a
generic result showing that the difficulty of
accomplishing a computational task on uniformly random inputs implies
its difficulty on any fixed, ingenerable sequence. We use this result to
derandomize quantum cryptographic games that
relate to cloning, and then incorporate a result of Kundu and Tan (arXiv
2022) to obtain uncloneable advice. Applying this two-step process to a
monogamy-of-entanglement game yields a promise problem with uncloneable
advice, and applying it to the quantum copy-protection of pseudorandom
functions with super-logarithmic output lengths yields a language with
uncloneable advice.

\end{abstract}

\newpage

\setcounter{tocdepth}{3}
\tableofcontents

\newpage

%======================================================================%
\section{Introduction}
%======================================================================%
\label{sc:intro}

The no-cloning principle has been a key tenet of quantum information
theory from its very early days \cite{Die82,WZ82}.\footnote{%
	Diek, Wooters, and Zurek proved the no-cloning principle to argue
	that a proposed protocol for faster-than-light communication
	\cite{Her82} was unphysical. However, a proof by Parks \cite{Par70}
	of this principle already existed in the literature, albeit with
	no explicit connections to quantum information theory. See
	the work of Ortigoso \cite{Ort18} for more on the history of the
	no-cloning principle.}
This principle tells us that it is, in general, impossible to create a
perfect copy of an \emph{unknown} quantum state. In these works, the
state is unknown in the sense that it is selected uniformly at random
from two possibilities. Follow up results, \emph{e.g.}:
\cite{BH96,BDE+98,Wer98}, also give bounds on the ability to create
close-but-imperfect copies of states chosen from some set.

Quantum cryptography, which seeks to leverage quantum mechanical
phenomena to achieve cryptographic goals, has greatly benefited from
the no-cloning principle. Indeed, this principle establishes a stark
qualitative difference between classical information, which can be
perfectly copied, and quantum information, which cannot. In its most
generic cryptographic application, it implies that a malicious
eavesdropper cannot, in general, keep a perfect transcript of the
quantum communication between two honest parties. This idea can be
understood as being at the core of the security of many quantum
cryptographic schemes, such as prepare-and-measure quantum key
distribution protocols, \emph{e.g.}: \cite{BB84}, and quantum money
schemes, \emph{e.g.}: \cite{Wie83}.

This leads to a fruitful avenue of research in quantum cryptography,
occasionally known as \emph{uncloneable cryptography}, which can be
summarized by the following question: \emph{Which notions from classical
information processing can be made ``uncloneable'' by the addition of
quantum mechanics and its no-cloning principle?} One such example, as
initially set-out by Aaronson \cite{Aar09}, is the task of
copy-protecting a function or a program. Quantum copy-protection, for a
family of functions~$\mc{F}$, can be understood as the process of
encoding a member $f$ of $\mc{F}$ as a quantum state $\rho_f$ such that
the following are satisfied:
\begin{itemize}
	\item
		Correctness:
		There exists an honest procedure which, on input of $\rho_f$ and
		$x$, outputs $f(x)$.
	\item
		Security: It is infeasible to ``split'' $\rho_f$ into two
		quantum systems, both allowing the evaluation of $f$ using any,
		possibly malicious, procedures.
\end{itemize}
Clearly, this is impossible to achieve in a purely classical setting but
the no-cloning principle opens the door to achieving this in the quantum
setting. Since Aaronson's introduction of this idea, there has been a
flurry of works considering the question of quantum copy-protection,
\emph{e.g.}: \cite{ALL+21,CMP20arxiv}, and the closely related notion of
secure software leasing, \emph{e.g.}: \cite{ALP21,BJL+21}.

A key aspect of these existing constructions is that their
security guarantees only hold if the function which is encoded is chosen
at random from some larger set and not disclosed to the end-user. More
formally, the family $\mc{F}$ is often understood as being the set of
all maps~$f(k, \cdot)$ generated from a single keyed function $f :
\{0,1\}^\kappa \times \{0,1\}^d \to \{0,1\}^c$. The recipient of the
copy-protected program then receives the state $\rho_{f(k,\cdot)}$
for a random key $k$ which is unknown to them. More to the point,
current approaches to quantum copy-protection cannot be applied to
specific, unkeyed, functionalities. For example, it is unclear if, using
existing techniques, it would be possible to copy-protect a
\emph{specific} algorithm $A$ solving some fixed and known decision
problem $P$.

In this work, we initiate the study of copy-protecting unkeyed
functionalities, \emph{i.e.}:~copy-protecting fixed functions which are
not chosen at random from a larger set. We frame our main results as the
construction of \emph{uncloneable quantum advice} for certain promise
problems and languages. This framing occurs naturally since advice
can already be understood as a program helping a user to solve a
decision problem \cite{Wat09}.

One drawback of our work is that the resulting advice is either
uncomputable, or \emph{extremely} difficult to generate. We leave
addressing this issues to future work.

\paragraph{Organization of this work.}
We complete our introduction by giving a high-level overview of our
contributions in \cref{sc:contributions} and highlighting some open
questions in \cref{sc:questions}. We then proceed to review some
more technical preliminaries in \cref{sc:prelims} and formalize our
novel technical tool, ingenerable sequences, in \cref{sc:ingenerable}.
We then examine cloning complexity classes in
\cref{sc:cloning-complexity} before continuing with the main goal of
this work in \cref{sc:advice}. In this last section, we briefly review
quantum copy-protection, formally define uncloneable advice, and show
how to instantiate this notion for one class of promise problems and one
class of languages.

%~~~~~~~~~~~~~~~~~~~~~~~~~~~~~~~~~~~~~~~~~~~~~~~~~~~~~~~~~~~~~~~~~~~~~~%
\subsection{Contributions}
\label{sc:contributions}
%~~~~~~~~~~~~~~~~~~~~~~~~~~~~~~~~~~~~~~~~~~~~~~~~~~~~~~~~~~~~~~~~~~~~~~%

We review here the main contributions in this work. We note that, for
pedagogical reasons, this high-level review does not follow the same
order as our detailed presentation.

%----------------------------------------------------------------------%
\subsubsection{Uncloneable Quantum Advice}
\label{sc:upoly-review}
%----------------------------------------------------------------------%

Our first conceptual contribution is the formal definition of
\emph{uncloneable quantum advice}.\footnote{%
	There has been an ongoing implicit debate in the literature on the
	proper spelling of the word \textit{uncloneable}. Many authors,
	perhaps even a majority, prefer \textit{unclonable},
	\textit{i.e.}~they omit the \textit{e} from \textit{clone} before
	appending the \textit{-able} suffix. While the
	\textit{Oxford English Dictionary} recognizes both \textit{clonable}
	and \textit{cloneable} (notably, neither appears in this reference
	with the \textit{un-} prefix), it expresses a preference for
	\textit{clonable} \cite{OED-clonable}. However, to the best of our
	knowledge, the earliest work in the field of quantum cryptography
	which uses the word \textit{uncloneable} as an element of technical
	terminology \cite{Got03} keeps the \textit{e}. We will continue to
	follow this convention.

	Note that this debate is not limited to the field of quantum
	cryptography. See, for example, Maes' discussion on this subject in
	their textbook on physically uncloneable functions
	\cite[sec.~2.3.1]{Mae13}.} Defining this notion is the content of
	\cref{sc:advice-df}.

The prototypical complexity class with \emph{quantum} advice is denoted
$\mathbf{BQP}/\text{qpoly}$. This class contains precisely the decision
problems $P = (P_\text{yes}, P_\text{no})$ for which there exists a
fixed sequence of quantum states $(\rho_n)_{n \in \N}$ and an efficient (polynomial-time)
family of quantum circuits $(C_n)_{n \in \N}$ such that on input of a
problem instance $p \in P$ and the corresponding advice state
$\rho_\abs{p}$, the circuit~$C_\abs{p}$ accepts (respectively, rejects)
with probability at least $2/3$ if $p \in P_\text{yes}$ (respectively,
$p \in P_\text{no})$. The ``poly'' in ``qpoly'' simply signifies that
the advice states can contain at most a polynomial number of qubits. The
``q'', of course, stands for ``quantum''. Quantum advice was first
introduced in \cite{NY04} and this notion was further developed in
\cite{A05}.

An important aspect of advice is that it is a fundamentally
\emph{non-uniform} resource and that there is no restriction on the
computational complexity of generating it. In fact, the advice states
may even encode the solutions to uncomputable problems.

In a survey of quantum complexity theory \cite{Wat09}, Watrous
states that ``[q]uantum advice is a formal abstraction that addresses
this question: How powerful is quantum software?'' Indeed, the circuit
family $(C_n)_{n \in \N}$ from the previous paragraph can be understood
as simply outputting the single bit which results from ``running the
software'' encoded in the advice state $\rho_\abs{p}$ with the problem
instance $p$ as the input. More formally, the decision problem $P$
naturally defines a map~$P_\text{yes} \cup P_\text{no} \to \{0,1\}$
where a problem instance $p$ is mapped to $1$ if it is an element of
$P_\text{yes}$ and mapped to $0$ if it is an element of
$P_\text{no}$. Thus, $\rho_n$ can be seen as a quantum
program which allows a holder to correctly evaluate this map on all
inputs of length $n$.

Since we have established a close parallel between quantum advice and
quantum programs, it is natural that our definition of
\emph{uncloneable} quantum advice should follow closely the definition
of \emph{uncloneable} quantum programs, as formalized by quantum
copy-protection.

More precisely, we will say that the quantum advice states
$(\rho_n)_{n \in \N}$ for some decision problem~$P$ are \emph{uncloneable} if
no efficient adversary $A$ can split the advice state $\rho_n$ between
two other adversaries~$B$ and $C$ such that both of them can
simultaneously and correctly solve independently sampled problem
instances $p_B$ and $p_C$ of length $n$ with more than a negligible
advantage over $\frac{1}{2}$. This is precisely the security requirement
that a copy-protection scheme would aim to provide for the
map~$P_\text{yes} \cup P_\text{no} \to \{0,1\}$ described above. We
specify here that the problem instances $p_B$ and~$p_C$ are sampled
uniformly at random among all \emph{yes} instances with probability
$\frac{1}{2}$ and uniformly at random among all \emph{no} instances with
probability $\frac{1}{2}$. Formalizing this, as we have done in
\cref{df:neglqp/upoly} of \cref{sc:advice-df}, yields a complexity class
which we denote $\mathbf{neglQP}/\text{upoly}$, where the ``u'' denotes
the fact that the advice is uncloneable. The class $\mathbf{neglQP}$,
for its part, is precisely the class of problems which can be solved
with negligible error in quantum polynomial time. As we now sketch, we
define $\mathbf{neglQP}/\text{upoly}$ and not simply
$\mathbf{BQP}/\text{upoly}$ because it is possible that these two
classes are distinct.

By the standard amplification technique of repeating a
computation a polynomial number of times and taking a majority vote on
the result, we see that $\mathbf{neglQP} = \mathbf{BQP}$. In fact, this
same argument also yields
$\mathbf{neglQP}/\text{qpoly} = \mathbf{BQP}/\text{qpoly}$, provided
that polynomially many copies of the basic advice state are provided.
However, it is unclear if $\mathbf{neglQP}/\text{upoly} =
\mathbf{BQP}/\text{upoly}$, since the basic amplification technique previously
sketched fails to yield this equality. Indeed, if polynomially many
copies of an advice state are provided, half can be given to one party
and half to another. This breaks the uncloneability guarantee. It is
also unclear if the other common amplification technique, given by
Marriot and Watrous \cite{MW05}, yields this equality between complexity
classes. Indeed, their technique does not require sending many copies of
the advice state, but it does require changing the advice and there
is no \emph{a priori} guarantee that the uncloneability guarantee is
preserved under this transformation.

%----------------------------------------------------------------------%
\subsubsection{Cloning Complexity Classes}
\label{sc:cloning-complexity-review}
%----------------------------------------------------------------------%

As a second conceptual contribution, we define \emph{cloning
complexity classes.}
The uncloneability guarantee for advice states discussed in the previous
section is \emph{not} simply that it is impossible to efficiently
implement the transformation $\rho_n \mapsto \rho_n \tensor \rho_n$.
Indeed, if it was possible to efficiently transform the advice state
$\rho_n$ into $\sigma_n \tensor \sigma_n$, where $\sigma_n \not= \rho_n$
but where each $\sigma_n$ would still allow malicious users to correctly
solve the decision problem, then $\rho_n$ fails to be uncloneable
advice in the sense described in \cref{sc:upoly-review}. In fact, this
distinction between the (in)ability to copy a quantum state and
the (in)ability of copying its underlying operational capabilities
is at the source of many interesting aspects and challenges of
uncloneable cryptography.

Nonetheless, it is evident that the infeasibility of implementing the
transformation $\rho_n \mapsto \rho_n \tensor \rho_n$ is a necessary
condition for the sequence $(\rho_n)_{n \in \N}$ to be uncloneable
advice for some decision problem. This naturally leads us to consider
questions of the following form, with a particular interest in cases
where the answer is negative:

\begin{quote}
		\emph{%
			Given a sequence $\left(\rho_n\right)_{n \in \N}$
			of fixed quantum states, is there a sequence of circuits
			$(C_n)_{n \in \N}$, satisfying some given
			computational constraints, such that $
				C_n(\rho_n)
				\approx
				\rho_n \tensor \rho_n
			$ for all $n$?}
\end{quote}

Our insight here, covered in \cref{sc:cloning-complexity}, is that this
type of question generalizes those captured by the \emph{state
complexity classes} recently studied by Metger, Rosenthal, and Yuen
\cite{RY21arxiv,MY23arxiv}. To take an explicit example of such a
class presented in these works, a sequence of states $(\rho_n)_{n\in\N}$
is in the state complexity class $\mathbf{statePSPACE}$ if and only
if there exists a uniform polynomial-space circuit family $(C_n)_{n \in
\N}$ such that $C_n$ outputs a state overwhelmingly close to $\rho_n$ on
the empty input. The questions of the form above can be recast to form
\emph{cloning complexity classes}. These classes can then be understood
as the ``$1 \to 2$'' generalization of the ``$0 \to 1$'' state
complexity classes. We formalize this in \cref{df:state-complexity}.

We note in passing that existing works on the no-cloning principle do not
answer cloning complexity questions of the form presented above as
they all consider the difficulty of cloning a state sampled at random
from some larger set. Our cloning complexity classes explicitly
remove this randomness from the cloning task.

The crux of this work is not focused on cloning complexity classes, but
the tools and techniques we developed to instantiate uncloneable quantum
advice do yield two interesting results on these classes and how they
relate to state complexity classes:
\begin{itemize}
	\item
				There exists a sequence of fixed quantum states which cannot be
		\emph{generated} by any uniform circuit family, but which can be
		cloned by an efficient circuit family (\Cref{th:hard-generate-easy-clone}).
	\item
		There exists a sequence of fixed quantum states which cannot be
		\emph{cloned} by any uniform circuit family, even those
		implementing arbitrarily large computations (\Cref{th:hard-generate-hard-clone}).
\end{itemize}
These two results, with an additional generic result showing that
cloning cannot be more difficult than generating
(\cref{th:clone-gen-bound}) are visualized in \cref{fg:gen-vs-clone}
which locates sequences of states with respect to two axes: one denoting
the difficulty to generate them and one denoting the difficulty to clone
them. Moreover, \cref{th:hard-generate-hard-clone} can be interpreted as
establishing a distinction between cloneable and uncloneable sequences
of states, similar to how maps can be partitioned between those which
are computable and those which are not. We leave the study of
complexity-theoretic analogues (\emph{e.g.}: establishing distinctions
between efficiently cloneable and uncloneable sequences) for future~work.

\begin{figure}
	\begin{center}
	\begin{tikzpicture}

	\node (tt) at (0,0) {};
	\node (th) at (0,5) {};
	\node (ht) at (5,0) {};
	\node (hh) at (5,5) {};

	\fill[lightgray]
	($(tt) + (0,0.1)$) -- ($(th)$) -- ($(hh) - (0.1,0)$) -- cycle;

	\draw[thick]
	($(tt)$) -- ($(ht)$)
	node [pos=0.50, below=1.5em, sloped] (TextNode)
		{\small Difficulty of Generating}
	node [pos=0.15, below,       sloped] (TextNode) {\small Trivial}
	node [pos=0.82, below,       sloped] (TextNode) {\small Impossible};

	\draw[thick]
	($(tt)$) -- ($(th)$)
	node [pos=0.50, above=1.5em, sloped] (TextNode)
		{\small Difficulty of Cloning}
	node [pos=0.15, above,       sloped] (TextNode) {\small Trivial}
	node [pos=0.82, above,       sloped] (TextNode) {\small Impossible};

	\draw[thick] ($(ht)$) -- ($(hh)$) -- ($(th)$);
	
	\node[text width = 1.75in] (wie-txt) at ($(hh) + (3.2,-1)$)
	{\small A sequence of Wiesner states constructed from a computably ingenerable
	sequence. See \cref{th:hard-generate-hard-clone}.};
	\node (wie-seq) at ($(hh) - (0.1,0.1)$) {};
	\draw[fill=black]  ($(hh) - (0.1,0.1)$) circle (1pt);
	\draw[thick,->,gray,dotted] (wie-txt) to [out=135,in=045] (wie-seq);
	
	\node[text width = 1.75in] (hlt-txt) at ($(ht) + (3.2,1)$)
	{\small A computably ingenerable sequence. See
	\cref{th:hard-generate-easy-clone}};
	\node (hlt-seq) at ($(ht) - (0.1,-0.1)$) {};
	\draw[fill=black]  ($(hlt-seq)$) circle (1pt);
	\draw[thick,->,gray,dotted] (hlt-txt) to [out=225,in=315] (hlt-seq);

	\node[text width = 1.75in] (smp-txt) at ($(tt)+(-3.8,1)$)
	{\small A sequence of strings efficiently generated by a
	Turing machine.};
	\node (smp-seq) at ($(tt) + (0.1,0.1)$) {};
	\draw[fill=black]  ($(smp-seq)$) circle (1pt);
	\draw[thick,->,gray,dotted] (smp-txt) to [out=315,in=225] (smp-seq);

	\node[text width = 1.75in] (nan-txt) at ($(th) + (-3.8,-1)$)
	{\small There are no sequences in the shaded region as cloning is no
	more difficult than generating. See \cref{th:state-inclusion}.};
	\node (nan-seq) at ($(th) + (0.1,-0.1)$) {};
	\draw[thick,->,gray,dotted] (nan-txt) to [out=045,in=135] (nan-seq);

\end{tikzpicture}
	\end{center}
	\caption{\label{fg:gen-vs-clone}%
		An annotated and informal representation of the space of
		sequences of quantum states contrasting the difficulty to
		generate and to clone a given sequence. Ingenerable
		sequences, which are novel to this work, are discussed in more
		details in \cref{sc:ingen-review}.}
\end{figure}
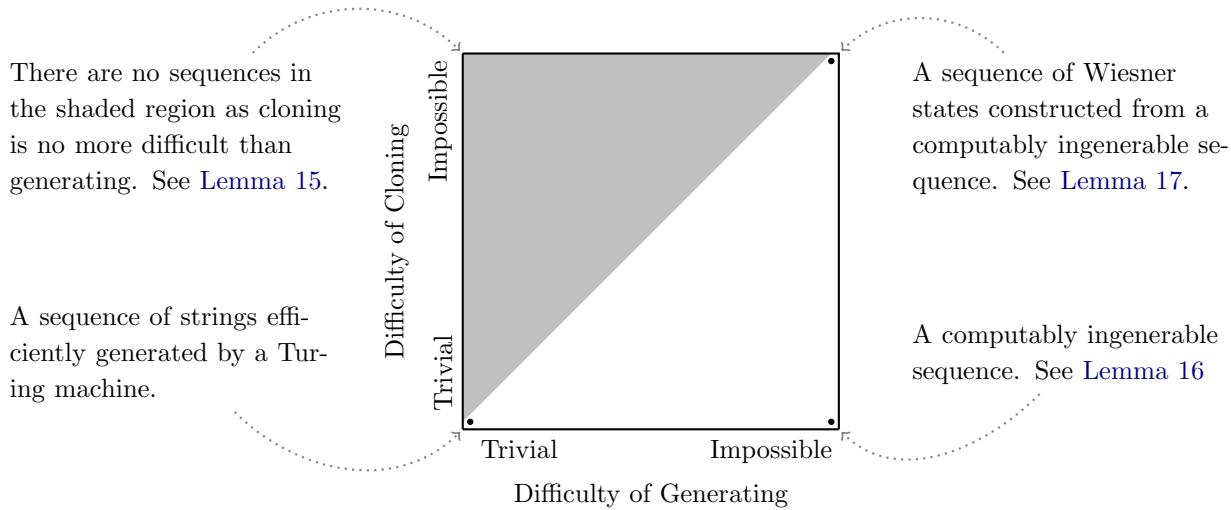

We emphasize that the development and study of cloning complexity
classes are not the main goal of this work; however, they do offer an
interesting way for us to connect some aspects of our work with existing
results. We also believe that the study of cloning complexity
classes could be of central interest and importance for further
developments in uncloneable cryptography.

%----------------------------------------------------------------------%
\subsubsection{Ingenerable Sequences of Bit Strings}
\label{sc:ingen-review}
%----------------------------------------------------------------------%

The main technical tool which we develop in this work are
\emph{ingenerable sequences} of bit strings. This is the content of
\cref{sc:ingenerable}.

In a nutshell, a sequence $(s_n)_{n \in \N}$ of bit strings
$s_n \in \{0,1\}^*$ is computably (respectively, exponential-time)
ingenerable if every uniform (respectively, exponential-time) circuit
family $(C_n)_{n \in \N}$ eventually always fails at generating the
elements of the sequence with sufficiently large probability. More
precisely, there exists a polynomial $p$ such that for every uniform
(respectively, exponential-time)
family of quantum circuits $(C_n)_{n\in\N}$ the quantity $\bra{s_n}
C_n(\varepsilon)\ket{s_n}$, where $\varepsilon$ denotes the empty input,
is eventually strictly smaller than $p(n) \cdot 2^{-\abs{s_n}}$. We
emphasize that the polynomial $p$ is universal, in the sense that it
does not depend on the circuit family $(C_n)_{n \in \N}$. However,
the value of $n$ at which point the inequality begins to be satisfied
can change from one circuit family to the next.

A simple counting argument demonstrates that such sequences exist
unconditionally for many desired lengths of the component strings.
While computably ingenerable sequence are by their nature uncomputable,
we also show that certain exponential-time ingenerable sequence can be
computed by classical deterministic Turing machines in triple
exponential time. Our counting argument is constructive, in the sense
that it uniquely describes an ingenerable sequence, even in the case of
computably ingenerable sequences where we cannot actually compute their
elements. Further, we give an explicit classical deterministic algorithm
running in triple exponential time which computes an exponential-time
ingenerable sequence. These results are stated explicitly in
\cref{th:ingenerable}.

Note that ingenerable sequences are purely classical information and so
they can trivially be copied, this immediately yields
\cref{th:hard-generate-easy-clone} which was discussed previously in
\cref{sc:cloning-complexity-review}.

Ingenerable sequences are useful for us in this work due to a generic
result, \cref{th:random-to-ingenerable}, relating the difficulty of a
computational task on uniformly random inputs to its difficulty on
inputs picked from an ingenerable sequence. Specifically, we show that
if a uniform (respectively, exponential-time) circuit family
$(C_n)_{n \in \N}$, which takes as input a string of at least
super-logarithmic and at most polynomial length, outputs the bit $1$
with negligible probability on uniformly random inputs, then it must
output $1$ with negligible probability on inputs fixed from a computably
(respectively, exponential-time) ingenerable sequence
$(s_n)_{n \in \N}$. Even more technically, we show that
\begin{equation}
	\E_{x} \bra{1} C_n(x) \ket{1} = \eta(n)
	\implies
	\bra{1} C_n(s_n) \ket{1} = \eta'(n)
\end{equation}
for two possibly distinct negligible functions $\eta$ and $\eta'$,
provided that $(s_n)_{n \in \N}$ is a suitable ingenerable sequence.
We emphasize here that this holds for
\emph{all} uniform (respectively, exponential-time) circuit families
$(C_n)_{n \in \N}$ while the ingenerable sequence $(s_n)_{n \in \N}$
remains fixed and independent of the circuit family.

This result allows us to derandomize certain cryptographic games related
to cloning. For example, in \cref{th:money-derandomized}, we demonstrate
the existence of a fixed sequence of Wiesner states\footnote{%
	A Wiesner state, introduced to quantum cryptography in \cite{Wie83},
	is any tensor product of the single-qubit computational basis
	states, $\ket{0}$ and $\ket{1}$, or of their Hadamard conjugates,
	$\ket{+} = \frac{\ket{0} + \ket{1}}{\sqrt{2}}$ and
	$\ket{-} = \frac{\ket{0} - \ket{1}}{\sqrt{2}}$. For two strings $x,
	\theta$ of the same length, we let $\ket{x^\theta} =
	H^{\theta_1}\ket{x_1} \tensor \cdots \tensor H^{\theta_n}
	\ket{x_n}$, where $H$ is the single-qubit Hadamard operator, denote
	a Wiesner state.}
which cannot be cloned by any uniform circuit family. Specifically, we
show that any uniform circuit family attempting to copy this fixed
sequence will generate states having negligible fidelity with perfect
copies of the original ones. This follows from applying our above result to the
theorem of Molina, Vidick, and Watrous showing that the ability to clone
a uniformly random Wiesner state is negligible in the length of this
state \cite{MVW13}. It then suffices to replace the uniformly random
Wiesner state with one which is determined by a computably ingenerable sequence.

%----------------------------------------------------------------------%
\subsubsection{A Promise Problem with Uncloneable Advice from MoE games}
\label{sc:problem-review}
%----------------------------------------------------------------------%

We now present our first main result: a promise problem $P =
(P_\text{yes}, P_\text{no})$
which admits uncloneable advice. This is the content of
\cref{sc:promise}. Recall that for a promise problem we do
not require~$P_\text{yes}$ and $P_\text{no}$ to partition $\{0,1\}^*$: some strings may
not be an element of either sets, but we are promised to only get
problem instances from $P_\text{yes}\cup P_\text{no}$.

First, we consider an exponential-time ingenerable sequence
$(x_n)_{n\in\N}$ and define the set $P_\text{yes}$ to be all strings $y$ where
the inner product $x_n \cdot y$ is $1$. Similarly, the set
$P_\text{no}$ will be all strings which have an inner product of $0$
with the elements of the ingenerable sequence. Evidently, this problem
cannot be efficiently solved without some advice on the sequence
$(x_n)_{n \in \N}$.
Indeed, if we were able to solve this problem without any advice, we
would be able to generate the ingenerable sequence. Now, the obvious
advice to directly give is $x_n$ as this would certainly allow an honest
user to solve all problem instances. However, this advice is trivially
cloneable as it is a classical piece of information. So, instead, we
will give as advice the Wiesner state $\ket*{x_n^{\theta_n}}$ for a
string $\theta_n$ which will be determined shortly. We know from prior
works on monogamy-of-entanglement (MoE) games \cite{TFKW13} that it is
infeasible for an adversary to split a random Wiesner state
$\ket{x^\theta}$ in such a way that two parties can recover the string
$x$ if they know $\theta$, provided that the initial splitting adversary
does not know~$\theta$. We recall and formalize this result as
\cref{th:tfkw} in our detailed exposition. In some sense, the state~$\ket{x^\theta}$ encodes information on the string $x$ in an uncloneable
way, which is what we need. However, three interrelated issues now arise.

The first problem is that if $\theta_n$ is fixed --- and also
$x_n$ for that matter --- how do we make sure that an adversary is
not able to create copies of the fixed state
$\ket*{x_n^{\theta_n}}$? \Cref{th:tfkw} considers
the case of uniformly random Wiesner states, not a fixed sequence of
such states. The solutio is to also take $\theta_n$
from an exponential-time ingenerable sequence. For technical reasons, which can be
roughly understood as making sure that $x_n$ and $\theta_n$
are not too correlated, we will consider a single exponential-time
ingenerable sequence $(s_n)_{n \in \N}$ and parse each string
$s_n$ as the concatenation $(x_n, \theta_n)$ of two
strings of length~$n$. We can then apply our generic derandomization
theorem for ingenerable sequences, \cref{th:random-to-ingenerable}
discussed previously in \cref{sc:ingen-review}, to the
MoE game described above to show that
no triplet of polynomial-time (or, even, exponential-time) adversaries
can have a non-negligible advantage at
winning this game when challenged with the fixed state
$\ket*{x_n^{\theta_n}}$.

Second, we have argued above that polynomial-time adversaries cannot win the
MoE  game of \cite{TFKW13} with a non-negligible
probability when challenged with the fixed states
$\ket*{x_n^{\theta_n}}$. However, this game requires both
guessing adversaries to correctly determine the complete
string~$x_n$ to win. Solving the problem $P$ only requires them to
determine the inner product $x_n\cdot y$ for some string $y$, a
task which may be easier. How can we bridge this gap? It suffices to use
the recent result of Kundu and Tan \cite{KT22arxiv1}, formalized in
\cref{th:kundu-tan}, which states that a negligible probability of both
guessing adversaries determining $x_n$ implies a that the probability
that both guessing adversaries determining~$x_n \cdot y$, for independent
and uniformly random values of $y$, is at most negligibly greater than
$\frac{1}{2}$.

Finally, how does the honest user of the advice know $\theta_n$?
We have sketched above our argument as to why the advice state
$\ket*{x_n^{\theta_n}}$ is uncloneable, but not yet as to
how it is useful. The naive way to use this state as advice for
this problem is to measure it in the $\theta_n$-Wiesner basis
to obtain the string~$x_n$. This requires the knowledge of
$\theta_n$, something which is ingenerable, and not known to the
honest user so far. To overcome this issue, we will modify our problem
to be a \emph{promise} problem. The promise is that every problem
instance $y$ is prefixed with the fixed string $\theta_\abs{y}$. In
other words, we are giving $\theta_n$ as part of the problem
instance. Because of this, $\theta_n$ is not accessible to an
adversary attempting to split the advice state before knowing a problem
instance.

In summary, we start with the monogamy-of-entanglement game of
\cite{TFKW13}, derandomize it via exponential-time ingenerable
sequences, and then apply Kundu and Tan's lemma.
We then argue that a promise problem with uncloneable advice can be
extracted from this construction.

We further note that this construction can be 
generalized by replacing the monogamy-of-entanglement game with any
sufficiently secure uncloneable encryption scheme \cite{BL20} where the
key $k_n$ used and the message $m_n$ to be be encoded play the role of
$\theta_n$ and $x_n$, respectively, and are also determined by an
ingenerable sequence.
We note that the resulting construction, up to the use of ingenerable
sequences to derandomize the scheme, is conceptually similar to Kundu
and Tan's construction of uncloneable encryption for a single-bit
message with \emph{variable keys} \cite{KT22arxiv1}.

As a last remark, we highlight that we use \emph{exponential-time}
ingenerable sequences in this construction. As previously mentioned, we
show in this work that there exist such sequences which can be computed
in triple exponential time. Thus, if $P$ is constructed from an
exponential-time ingenerable sequence which can be computed in triple
exponential time, then $P$ can be enumerated by a classical
deterministic Turing machine. Hence, our result is more than merely
existential: we show how to construct $P$ in such a way that the
\emph{yes} and \emph{no} instances can be enumerated and we explicitly
give an algorithm which does this enumeration.\footnote{%
	Strictly speaking, we give a deterministic classical algorithm
	in the proof of \cref{th:ingenerable} computing in
	triple exponential time the elements of an exponential-time
	ingenerable sequence. It is then  easy to see how to construct
	an algorithm enumerating $P_\text{yes}$ and $P_\text{no}$ when given
	an algorithm which computes $\theta_n$ and $x_n$ for any $n \in\N$.}

%----------------------------------------------------------------------%
\subsubsection{A Language with Uncloneable Advice from Copy-Protected PRFs}
\label{sc:language-review}
%----------------------------------------------------------------------%

Our second main result is the construction of a \emph{language} with
uncloneable advice. Here, we \emph{do} require $P_\text{yes}$ and
$P_\text{no}$ to partition $\{0,1\}^*$. This is the content of
\cref{sc:language} and \cref{sc:language-2}

To construct this language, we leverage even more explicitly the
connection between advice and programs: Our starting building block is the assumption that it is possible to copy-protect pseudorandom
functions with super-logarithmic outputs. This can be achieved under certain assumptions \cite{CLLZ21}.

At a high level, we can understand the problem instances of the
promise problem described in the previous section as pairs
$(\theta, y)$ where the string $\theta$ was used as an ``input'' for the
``program'' encoded by the Wiesner state $\ket{x^\theta}$. Evaluating a
Wiesner state in this paradigm then consists of measuring it in the
$\theta$-Wiesner basis and outputting the resulting $x$. The reason we
obtained a promise problem, and not a language, is that this ``program''
has a deterministic answer on only one input:~$\theta$.

Copy-protected programs do not suffer this limitation of only having one
``good input''. If $\rho_f$ is the copy-protected program of the
function $f$, then evaluating $\rho_f$ on any input $x$ will yield
$f(x)$ with high probability. Thus, we can reuse the same template as in
the previous promise problem: problem instances are pairs $(x, y)$
where $x$ is treated as an input to a program state $\rho_f$ to obtain a
string $f(x)$ with near certainty. We then take the inner product
$f(x) \cdot y$ to determine if $(x,y)$ is an element of the language.
However, since $\rho_f$ can be properly evaluated on \emph{any} input,
unlike the Wiesner state ``program'', we need not promise that the input
$x$ is some particular well-behaved string: it can be arbitrary. Thus,
we can obtain a language, and not merely a promise problem from this
type of construction.

Here, as for our promise problem, the result of Kundu and Tan
(\cref{th:kundu-tan}) is used to relate the difficulty of two
guessers independently and simultaneously determining $f(x) \cdot y$ to
their difficulty of determining $f(x)$. We also make use of
exponential-time ingenerable sequences to derandomize the function which is
copy-protected all the while maintaining the uncloneability of the
resulting program.

%~~~~~~~~~~~~~~~~~~~~~~~~~~~~~~~~~~~~~~~~~~~~~~~~~~~~~~~~~~~~~~~~~~~~~~%
\subsection{Open Questions}
\label{sc:questions}
%~~~~~~~~~~~~~~~~~~~~~~~~~~~~~~~~~~~~~~~~~~~~~~~~~~~~~~~~~~~~~~~~~~~~~~%

This work represents the first attempt to study uncloneable advice and
shows that this concept is, in principle, achievable. Beyond
identifying other languages and promise problems which admit uncloneable
advice, this work leaves many other open questions. We highlight two
such questions, mentioned in the previous discussion, which we believe
are of particular interest.

\paragraph{Can we say more on the complexity of cloning quantum states?}
As illustrated in \cref{fg:gen-vs-clone} and discussed in
\cref{sc:cloning-complexity-review}, our work establishes the
existence of sequences of states at the simultaneous extremes of cloning
and generating complexity.

Are there scenarios between the extremes sketched above? In other words,
how densely populated is the lower-right triangle of
\cref{fg:gen-vs-clone}? Our results imply the existence of a sequence of
states that are impossible to clone in polynomial time --- or indeed in
exponential time --- but which can be
perfectly generated in
triple exponential time. Can we close this gap? For example, can we find
a fixed sequence which cannot be cloned in polynomial time, but can be
generated in exponential time?

In a sense, our work initiates a theory of ``computability of cloning''
by showing a separation between sequences of states which can and cannot
be cloned. We believe that further and more fine-grained work in this
direction would help establish a theory on the complexity of cloning
quantum states. We hope that such a theory would then provide useful
tools for uncloneable cryptography and other aspects of quantum
information theory.

\paragraph{Can we ``boost'' the success probability of an honest
party using uncloneable advice while maintaining the uncloneability
property?} As discussed in~\cref{sc:upoly-review},
the standard error reduction technique of giving multiple copies of the
advice state evidently fails to maintain the required uncloneability
property of the initial state. Furthermore, the well-known amplification
technique of Marriott and Watrous \cite{MW05}, developed in the context
of quantum \emph{proofs} and the $\mathbf{QMA}$ complexity class, does
not appear to be directly applicable to the context of uncloneable
advice.
We highlight two obstacles in applying the result of Marriott and
Watrous to uncloneable advice.
First, the Marriott-Watrous technique requires changing the proof state
and it is unclear if this modification to the state would preserve the
required uncloneability property. Second, and perhaps more fatal, is
that directly applying the Marriott-Watrous technique to quantum
\emph{advice}, instead of quantum \emph{proofs}, does not in general
yield quantum \emph{advice} as the required change to the quantum state
depends on problem instance for which we wish to increase the
probability of success. This dependency is acceptable in the
context of \emph{proofs} as these may depend on the problem instance,
but is unacceptable in the context of \emph{advice} which must be
independent of the problem instance.

Addressing this question would elucidate the relation between the
complexity classes $\mathbf{neglQP}/\text{upoly}$ and
$\mathbf{BQP}/\text{upoly}$. Moreover, finding a generic boosting
technique for uncloneable advice could have wider repercussions in
uncloneable cryptography by effectively lowering the necessary threshold
for correctness.

\subsection{Acknowledgements}
This work was supported by the Air Force Office of Scientific Research under award number
FA9550-20-1-0375, Canada’s NSERC, and the University of Ottawa’s Research Chairs program.

%======================================================================%
\section{Preliminaries}
\label{sc:prelims}
%======================================================================%

%~~~~~~~~~~~~~~~~~~~~~~~~~~~~~~~~~~~~~~~~~~~~~~~~~~~~~~~~~~~~~~~~~~~~~~%
\subsection{Notation and Terminology}
%~~~~~~~~~~~~~~~~~~~~~~~~~~~~~~~~~~~~~~~~~~~~~~~~~~~~~~~~~~~~~~~~~~~~~~%

\paragraph{Sets, bit strings, miscellaneous.}
We let $\N$ denote the non-negative integers, $\R$ denote the real
numbers,~$\R^+$ denote the strictly positive real numbers, and
$\R_0^+ = \R^+ \cup\{0\}$, denote the non-negative real numbers. For any
$n \in \N$, we let $[n]$ denote the set $\{i\in\N\;:\; 1\leq i\leq n\}$.
For any set $\mc{X}$, we let~$\mc{P}(\mc{X})$ denote the power set of
$\mc{X}$.

For all $n \in \N$, let $\{0,1\}^n$ denote the set of bit strings of
length $n$ and let~$\{0,1\}^* = \cup_{n \in \N} \{0,1\}^n$ denote the
set of all bit strings of finite length. We let
$\abs{x}$ denote the length of the bit string $x$ and we denote the
empty bit string by $\varepsilon$, which is to say that
$\{0,1\}^0 = \{\varepsilon\}$. If $x$ and $y$ are two strings of the
same length, we let $x \cdot y$ denote their inner product modulo $2$.
The result of concatenating two bit strings $x \in \{0,1\}^n$ and $y \in
\{0,1\}^m$ is denoted by $(x, y) \in \{0,1\}^{n+m}$, using implicitly the
isomorphism $\{0,1\}^n \times \{0,1\}^m \cong \{0,1\}^{n+m}$. For all
$n \in \N$ we let $0^n$ and $1^n$ denote the all-zero and all-one bit
strings in $\{0,1\}^n$, respectively.

Logarithms are always taken in base $2$.

\paragraph{Probability theory.}
If $\mc{X}$ is a set, we use $\E_{x \gets \mc{X}}f(x)$ to denote the
expectation of $f$ when $x$ is sampled uniformly at random from
$\mc{X}$. If $\alpha$ is a random variable, we use $x \gets \alpha$ to
denote that $x$ is sampled according to $\alpha$.

\paragraph{Asymptotic analysis.}
We will use the $\omega$, $\Omega$, and $\mc{O}$ notation to
characterize the asymptotic behaviour of functions from $\N$ to $\R^+_0$
\cite{Knu76}.\footnote{%
	Note that we follow Brassard \cite{Bra85} by always treating
	$\omega(f)$, $\Omega(f)$, and $\mc{O}(g)$ as sets and avoiding
	``one-way equations''.}

Given two functions $f, g : \N \to \R$, we say that $f$ is
\emph{eventually smaller} than $g$ if there exists a
$n_0 \in \N$ such that $n \geq n_0 \implies f(n) \leq g(n)$.
A function $f : \N \to \R$ is said to be \emph{negligible} if it is
eventually smaller than $n \mapsto n^{-c}$ for all $c \in \N$. See, for
example, \cite{Bel02}. Recall that $2^{-f}$ is negligible if and only if
$f \in \omega(\log)$ and that if $\eta$ is negligible, $f$ is non-decreasing, and
$f \in \Omega(n^r)$ for some $r \in \R^+$, then $\eta \circ f$ is also
negligible.

\paragraph{Turing machines and their run times.}
We only consider deterministic Turing machines in this work and we refer
to standard references, such as \cite{AB09} or \cite{LV19}, for more
details. We do, however, establish some notation and terminology.

A Turing machine $\ttt{T}$ takes as input a string $x \in \{0,1\}^*$ and
either (i) halts and produces as output a string $y \in \{0,1\}^*$, or
(ii) never halts, in which case it does not produce an output. We write
$\ttt{T}(x) = y$ to denote the fact that the Turing machine $\ttt{T}$
halts and produces the output $y$ on input $x$. Turing machines perform
their calculations in a series of discrete steps. We say that a Turing
machine $\ttt{T}$ runs in
polynomial-time if and only if there exists a polynomial $p : \N \to \N$
such that on input of any string $x$, the machine $\ttt{T}$ halts after
at most $p(\abs{x})$ steps. For this paper, we identify efficient
(quantum) computations with polynomial-time (quantum) computations. We
also say that a Turing machine runs in exponential, double exponential,
or triple exponential-time if there exists a polynomial $p : \N \to \N$
such that on input of any string $x$, the machine halts after at most
$2^{p(\abs{x})}$, $2^{2^{p(\abs{x}})}$, or $2^{2^{2^{p(\abs{x})}}}$ steps,
respectively.

We let $\mc{T}$ denote the set of all Turing machines. Note that
$\mc{T}$ is countable and so there exists a bijective map
$\tau : \N \to \mc{T}$.

Finally, for maps $f : \N \to \N$, we say that a Turing machine
$\ttt{T}$ computes $f$, perhaps within some time bound, if on input of
$1^n$, i.e.~the bit string composed of $n$ instances of $1$,
it outputs $1^{f(n)}$. In particular, if $\ttt{T}$ can computes
$f : \N \to \N$ and $\ttt{T}(1^n)$ halts within $T(n)$ steps, then $f(n)
\leq T(n)$ as $\ttt{T}$ can write at most one symbol per step.

\paragraph{Quantum mechanics, Hilbert spaces, operators, and channels.}
Quantum mechanics, to the extent needed for this work, is a theory
concerning linear operators on finite-dimensional complex Hilbert
spaces. We refer the reader to the standard introductory textbooks of
Watrous \cite{Wat18} and Nielsen and Chuang \cite{NC10} for more
details. The set of linear and unitary operators on a given Hilbert
space $\mc{H}$ are denoted $\mc{L}(\mc{H})$ and $\mc{U}(\mc{H})$,
respectively. We recall that a \emph{channel} is a completely-positive
trace preserving map $\Phi : \mc{L}(\mc{H}) \to \mc{L}(\mc{H}')$ and
that channels precisely coincide with the set of all possible
transformations that a quantum system may undergo over a finite time
period. We also recall that by a \emph{state}, we mean a density
operator.

In this work, we assume that all Hilbert spaces are finite tensor
products of $\C^2$, which is to say that we only consider spaces
representing finite numbers of qubits. We suppose that
$\left(\C^2\right)^n$ admits
$\{\ket{s}\}_{s \in \{0,1\}^n}$ as an orthonormal basis and, when
appropriate, we identify a string $s$ with its corresponding density
operator $\ketbra{s}$.

We recall that the trace distance between two states is given by
$\frac{1}{2}\norm{\rho - \sigma}_\text{Tr}$ where
$\norm{\cdot}_{\Tr}$ denotes the trace norm, which is also the
$p=1$ case of the more general Schatten $p$-norms. We will denote the
completely bounded trace norm on channels, also known as the diamond
norm, by $\norm{\cdot}_\diamond$.

\paragraph{Gate sets and quantum circuits.}
\label{sc:prelims-circuits}
A \emph{gate set} $\mc{G}$ is a finite set of channels. We assume that
all elements of a gate set are channels of the form
$L \mapsto U L U^\dag$ for a unitary operator
$U \in \mc{U}({\C^2}^{\tensor n})$ acting on $n$ qubits whose whose
entries in the computational basis are algebraic. Note that $n$ need not
be identical across unitaries. There are two possible exceptions:
$\mc{G}$ may also include the state preparation map
$P : \C \to \mc{L}(\C^2)$, given by $c \mapsto c \ketbra{0}$, and the
single-qubit trace $\Tr : \mc{L}(\C^2) \to \C$.

For a gate set $\mc{G}$, we let $\langle \mc{G} \rangle$ denote the set
of channels generated by $\mc{G}$. More formally,
$\langle \mc{G} \rangle$ is precisely the set of all channels $\Psi$
which can be expressed as
\begin{equation}
\label{eq:circuit}
	\Psi
	=
	\Circ_{i = 1}^{d}
		\Tensor_{j = 1}^{w_i}
		\Psi_{i,j}
\end{equation}
for some non-negative integers $d, w_1, \ldots, w_d \in \N$ and channels
$\Psi_{i,j} \in \mc{G}$. We call an expression $C$ of the form presented
on the right-hand side of \cref{eq:circuit} a \emph{circuit} built from
$\mc{G}$ implementing the map $\Psi$ and we denote the set of all
circuits built from $\mc{G}$ by $\langle \mc{G} \rangle_\text{c}$.
Formally, we distinguish between different circuits even if they
implement in the same channel but, when convenient, we identify a
circuit with the channel it implements. A \emph{circuit family}
$C = (C_n)_{n \in \N}$ is a sequence of circuits built from the same
gate set. We occasionally write ``a circuit family $C$'' without the
explicit sequence.

An \emph{encoding} for a gate set $\mc{G}$ is a surjective map
$e : \{0,1\}^* \to \langle \mc{G} \rangle_\text{c}$. If $e(s) = C$, we
call the string $s$ a description of $C$. Encodings must satisfy
additional conditions \cite{Wat09} which we do not exhaustively
list. We require an encoding $e$ to be such that every $\Psi_{i,j}$ term
(or lack thereof) in the circuit $e(s)$ must be efficiently
computable from $i$, $j$, and~$s$. Every gate set admits an encoding.

We say that $C$ is an $(a,b)$-circuit if it implements a channel
$C : \mc{L}({\C^2}^{\tensor a}) \to \mc{L}({\C^2}^{\tensor b})$ for $a,b
\in \N$. We say that $C$ is a $(a,b)$-circuit family
if each $C_n$ is an $(a(n), b(n))$-circuit for $a,b : \N \to \N$. We say
that a circuit family $C$ is \emph{uniform} if there exists a Turing
machine $\ttt{T}$ which, on input of $1^n$, outputs a description of
$C_n$ in some prescribed encoding. We say that $C$ is
\emph{polynomial-time}, \emph{exponential-time}, or \emph{triple
exponential-time} if such a $\ttt{T}$
exists and runs in polynomial-time, exponential-time, or triple
exponential-time, respectively.

A gate set $\mc{G}$ is said to be \emph{universal} if for any
$\epsilon > 0$ and any channel $\Phi$ there exists a
$\Psi \in \langle \mc{G} \rangle$ such that
$\norm{\Phi - \Psi}_\diamond \leq \epsilon$. As a consequence of the
Solovay-Kitaev theorem \cite{Kit97} and its algorithmic implementations
\cite{DN06,BGT21arxiv1}, for any computable map $t : \N \to \N$, any
universal gate set $\mc{G}$, and any uniform circuit family $C'$ built
from any (possibly non-universal) gate set $\mc{G}'$, there exists a
uniform family $C$ built from $\mc{G}$ such that
$\norm{C_n - C'_n}_\diamond \leq 2^{-t(n)}$. Moreover, if $t$ is
efficiently computable, we can replace ``uniform'' with
``polynomial-time'' or ``exponential-time''. Unless otherwise specified,
we work with the
universal gate set composed of the state preparation map, the
single-qubit trace, as well as conjugation by the Toffoli, Hadamard, and
phase-shift unitaries \cite{Wat09}.

%~~~~~~~~~~~~~~~~~~~~~~~~~~~~~~~~~~~~~~~~~~~~~~~~~~~~~~~~~~~~~~~~~~~~~~%
\subsection{A Simultaneous Quantum Goldreich-Levin Theorem}
%~~~~~~~~~~~~~~~~~~~~~~~~~~~~~~~~~~~~~~~~~~~~~~~~~~~~~~~~~~~~~~~~~~~~~~%

We recall a recent result of Kundu and Tan \cite[Lemma 23]{KT22arxiv1}
and recast it slightly to consider uniform and polynomial-time circuit
families. In some sense, this result extends the quantum Goldreich-Levin
theorem \cite{AC02} to a setting with two adversaries.

This lemma concerns a scenario where two non-communicating parties,
Bob and Charlie, share a quantum state $\rho_{x_B,x_C}$. This state
represents the information each of these parties has on the value of two
bit strings $(x_B,x_C) \in \{0,1\}^n \times \{0,1\}^n$ sampled according
to a random variable~$\xi$. The proof given in \cite{KT22arxiv1} for
this lemma shows that if Bob and Charlie can simultaneously,
independently, and respectively determine the values of the inner
products $x_B \cdot y_B$ and $x_C \cdot y_C$ with probability at least
$\frac{1}{2} + \delta$, where the probability is taken over the uniform
random sampling of $y_B$ and $y_C$ (which are then given to Bob and
Charlie, respectively) and the sampling of $(x_B,x_C)$ from $\xi$, then
they could \emph{instead} simultaneously, independently, and
respectively determine the values of $x_B$ and $x_C$ with probability at
least $\delta^3/2$.

More technically, the proof shows how to transform a circuit
guessing the inner product $x_L \cdot y_L$ for either $L \in \{B, C\}$
into a circuit guessing $x_L$. The idea is to adapt the
Bernstein-Vazirani algorithm~\cite{BV97} where the oracle for the inner
product map $y \mapsto x \cdot y$ is replaced with a purified version of
the circuit guessing the inner product. An important property of this
transformation, not explicitly mentioned in \cite{KT22arxiv1} but which
we highlight here, is that this transformation is uniform (respectively,
polynomial-time or exponential-time) in the sense that applying it
circuit-by-circuit to a uniform (respectively, polynomial-time or
exponential-time) circuit family yields
another uniform (respectively, polynomial-time or exponential-time)
circuit family.

In practice, as in \cite{KT22arxiv1}, we take the contrapositive of the
above. We state that if Bob and Charlie have a negligible probability of
computing $x_B$ and $x_C$, then they have at most a negligible advantage
in computing $x_B \cdot y_B$ and $x_C \cdot y_C$.

\begin{lemma}
\label{th:kundu-tan}
	Let $\ell, b, c : \N \to \N$ be maps and, for all $n \in \N$,
	let $\xi_n$ be a random variable on the
	set~$\{0,1\}^{\ell(n)} \times \{0,1\}^{\ell(n)}$. For
	all $n \in \N$ and pairs $(x_B, x_C)$ in the support of
	$\xi_n$, let $\rho_{n, x_B, x_C}$ be a density operator
	on $b(n) + c(n)$ qubits.

	If for every pair $(B', C')$ of uniform $(b, \ell)$- and
	$(c, \ell)$-circuit families, respectively, the map
	\begin{equation}
		n
		\mapsto
		\E_{\substack{
			(x_B,x_C) \gets \xi_n
		}}
		\bra{x_B, x_C}
			\left(B'_n \tensor C'_n\right)
			\left(
				\rho_{n,x_B,x_C}
			\right)
		\ket{x_B, x_C}
	\end{equation}
	is negligible, then for every pair $(B, C)$ of uniform
	$(\ell + b, 1)$- and $(c + \ell, 1)$-circuit families, respectively,
	the map
	\begin{equation}
		n
		\mapsto
		\E_{\substack{
			y_B \gets \{0,1\}^{\ell(n)}  \\
			y_C \gets \{0,1\}^{\ell(n)} \\
			(x_B,x_C) \gets \xi_n
		}}
		\bra{x_B \cdot y_B, x_C \cdot y_C}
			\left(B_n \tensor C_n\right)
			\left(
				y_B \tensor \rho_{n,x_B,x_C} \tensor y_C
			\right)
		\ket{x_B \cdot y_B, x_C \cdot y_C}
		-
		\frac{1}{2}
	\end{equation}
	is negligible.

	Moreover, the above holds if we replace all instances of ``uniform''
	with ``polynomial-time'' or ``exponential-time''.
\end{lemma}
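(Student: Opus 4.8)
The plan is to prove the contrapositive and to invoke the reduction underlying \cite[Lemma 23]{KT22arxiv1} essentially as a black box, the new content being a careful accounting of the complexity of that reduction. So suppose, towards a contradiction, that there is a pair $(B, C)$ of uniform $(\ell + b, 1)$- and $(c + \ell, 1)$-circuit families for which the advantage $\delta(n)$ of Bob and Charlie at jointly guessing $x_B \cdot y_B$ and $x_C \cdot y_C$ --- the second displayed quantity in the statement, minus $\tfrac12$ --- is non-negligible; that is, there is a polynomial $q$ with $\delta(n) \geq 1/q(n)$ for infinitely many $n$. I will construct from $(B,C)$ a pair $(B', C')$ of uniform $(b, \ell)$- and $(c, \ell)$-circuit families that jointly guess $x_B$ and $x_C$ with non-negligible probability, contradicting the hypothesis.

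For each $n$, the circuit $B'_n$ is obtained by the Bernstein--Vazirani--style extraction of \cite{AC02} used by Kundu and Tan: prepare the uniform superposition $\sum_{y_B} \ket{y_B}$ over a fresh challenge register together with a $\ket{-}$ on an answer qubit, run a coherent (Stinespring-dilated) version of $B_n$ on the challenge register and Bob's $b(n)$-qubit share of $\rho_{n, x_B, x_C}$ with its output bit XORed onto the answer qubit, uncompute the dilated $B_n$, and measure the challenge register in the Hadamard basis to obtain a guess for $x_B$; the circuit $C'_n$ is defined symmetrically from $C_n$. The point to emphasize is that all of this is an \emph{explicit} and \emph{efficient} transformation of circuit descriptions: given a description of $B_n$ as a finite composition of gates of the form $L \mapsto U L U^\dag$, state preparations $P$, and single-qubit traces $\Tr$, one obtains its dilation by introducing a fresh ancilla in place of each $P$, postponing every $\Tr$ to the end and then deleting it, which yields an isometry implementable by a circuit; the reverse circuit (conjugations by $U^\dag$, with ancilla preparations and traces interchanged) implements the uncomputation; and the Bernstein--Vazirani wrapper adds only $\mc{O}(\ell(n))$ Hadamards and one controlled-phase gate. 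Hence the map $(B_n, C_n) \mapsto (B'_n, C'_n)$ is computed by a fixed Turing machine running in time polynomial in the length of the input descriptions, and $B'_n$, $C'_n$ are $(b,\ell)$- and $(c,\ell)$-circuits as required.

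Composing this description-level transformation with the Turing machine witnessing uniformity of $(B, C)$ yields a Turing machine that, on input $1^n$, outputs descriptions of $(B'_n, C'_n)$; if the former machine runs in polynomial (respectively, exponential) time, so does the latter, by the polynomial overhead bound together with the closure of these time classes. Thus $(B', C')$ is a uniform (respectively, polynomial-time, exponential-time) pair of the required shapes. By Kundu and Tan's analysis --- which only uses that $B_n$ and $C_n$ are quantum circuits on the stated registers and is oblivious to their internal structure --- the success probability of $(B', C')$ at jointly outputting $(x_B, x_C)$ is at least $\delta(n)^3/2$. For the infinitely many $n$ with $\delta(n) \geq 1/q(n)$ this is at least $1/(2 q(n)^3)$, so the map $n \mapsto \E_{(x_B,x_C)\gets\xi_n}\bra{x_B,x_C}(B'_n\tensor C'_n)(\rho_{n,x_B,x_C})\ket{x_B,x_C}$ is not negligible, contradicting the hypothesis. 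Since the argument never distinguished ``uniform'' from ``polynomial-time'' or ``exponential-time'' beyond the closure-under-polynomial-overhead remark, the ``moreover'' clause follows at once.

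The main obstacle is the complexity bookkeeping just sketched, rather than any new probabilistic estimate. Two things need care: first, that ``running the guesser coherently and then uncomputing it'' genuinely lives inside the channel-plus-encoding circuit formalism of \cref{sc:prelims-circuits} --- this is why one works with the Stinespring dilation of the circuit rather than with the channel itself, and why it matters that encodings make each gate of $B_n$ efficiently recoverable from its description; and second, that the overhead of all these manipulations is polynomial in the description length, so that membership in the polynomial-time and exponential-time circuit families is preserved, and not merely membership in the class of uniform families. The quantitative heart, namely the $\delta^3/2$ bound relating inner-product advantage to full-string recovery, is imported wholesale from \cite{KT22arxiv1} and is not reproved here.
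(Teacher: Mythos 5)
Your proposal is correct and takes essentially the same route as the paper: the lemma is obtained by taking the contrapositive of Kundu and Tan's Lemma~23 \cite{KT22arxiv1}, importing their Bernstein--Vazirani-style reduction (with the guessing circuit purified and used in place of the inner-product oracle) as a black box for the $\delta^3/2$ bound, and observing that this circuit-level transformation is computable with polynomial overhead in the description length, so uniformity, polynomial-time, and exponential-time are preserved. The complexity bookkeeping you supply is precisely the property the paper highlights as its only addition to \cite{KT22arxiv1}.
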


We note that an alternative proof of this lemma can also be found in
\cite{AKL23arxiv}.

%======================================================================%
\section{Ingenerable Sequences of Bit Strings}
\label{sc:ingenerable}
%======================================================================%

In this section, we define the notion of ingenerable sequences,
demonstrate that such sequences exist, and illustrate a few of their
simple properties.

Before proceeding, we emphasize that while our definitions and
results are formulated in the language of quantum circuits, \emph{there
is nothing inherently quantum at play in this section}, with the
exception of \cref{sc:derandomize-cloning-example}.
In particular, the definitions can easily be rephrased to fit a wide
variety of computational models, including classical circuits or
classical Turing machines.
However, we believe the resulting notions are only interesting when the
computational model is \emph{probabilistic} and \emph{uniform}.
Moreover, the proofs of our main existence theorems essentially only
rely on the assumption that there are countably many instances in the
computational model.

Thus, essentially all definitions and results of this section, excluding
\cref{sc:derandomize-cloning-example}, still hold if we were to
replace the word ``circuit'', where the ``quantum'' is implicit, with
the words ``probabilistic classical Turing machine'' or ``uniform
classical circuits supplied with additional random bits''.
Our definitions would also hold for ``non-uniform circuits'', classical
or quantum, but our existence theorems would not.

Of course, sequences which are ingenerable with respect to one model of
computation need not be ingenerable with respect to another.

%~~~~~~~~~~~~~~~~~~~~~~~~~~~~~~~~~~~~~~~~~~~~~~~~~~~~~~~~~~~~~~~~~~~~~~%
\subsection{Definitions and Existence}
%~~~~~~~~~~~~~~~~~~~~~~~~~~~~~~~~~~~~~~~~~~~~~~~~~~~~~~~~~~~~~~~~~~~~~~%

We first introduce terminology concerning sequences of bit
strings.

\begin{definition}
	Let $\ell : \N \to \N$ be a map. An $\ell$-sequence is a sequence of
	bit strings $(s_n)_{n \in \N}$ such that
	$s_n \in \{0,1\}^{\ell(n)}$ for all $n \in \N$.
\end{definition}

In practice, we will only concern ourselves with sequences of linear
length, such as $2n$-sequences, or of polynomial length. However, the
theory developed in this section is applicable to
arbitrary~$\ell$-sequences.

Now, we formally define the notion of $q$-ingenerability for some map
$q : \N \to \R$. In short, a sequence is $q$-ingenerable if every
circuit family under consideration \emph{eventually always fails} to
produce the elements of the sequence with probability at least $q$. We
also define a weaker notion of this idea, which we call \emph{weak}
ingenerability. A sequence is weakly ingenerable if every circuit family
\emph{fails infinitely often} to produce the elements of
the sequence with probability at least $q$.

We immediately see that every $q$-ingenerable sequence is weakly
$q$-ingenerable. Essentially all of our constructions will use
ingenerable sequences, but weak ingenerability will be easier to compare
and contrast with other existing definitions and notions.

Finally, we consider ingenerability with respect to two different
classes of circuit families. A computably ingenerable sequence is one
which cannot be generated by any uniform circuit family. An
exponential-time ingenerable sequence is one which cannot be generated
by any exponential-time circuit family. Our definition can be adapted
in a straightforward way to other classes of circuit families but
these two will be sufficient for this work. The main conceptual
advantage of exponential-time ingenerable sequences over computably
ingenerable sequences is that certain such sequences can actually be
computed by Turing machines, unlike computably ingenerable sequences
which are, by their natural, uncomputable. See \cref{th:ingenerable}.

Recall from the end of \cref{sc:prelims-circuits} that, for two maps
$a, b : \N \to \N$, an $(a,b)$-circuit family $C$ is a sequence of
circuits $(C_n)_{n \in \N}$ where $C_n$ takes $a(n)$ qubits as input and
produces $b(n)$ qubits as output.\footnote{In this context, we interpret
any $k \in \N$ as the constant map $n \mapsto k$.}

\begin{definition}
	Let $\mc{G}$ be a gate set and $q : \N \to \R$ be a map. An
	$\ell$-sequence~$(s_n)_{n \in \N}$ is computably (respectively,
	exponential-time) $q$-ingenerable with respect
	to $\mc{G}$ if for every uniform (respectively, exponential-time)
	$(0,\ell)$-circuit
	family $C$ built from $\mc{G}$ the inequality
	\begin{equation}
		\bra{s_n} C_n(\varepsilon) \ket{s_n}
		<
		q(n)
	\end{equation}
	holds for all but finitely many values of $n \in \N$.

	We say that $(s_n)_{n \in \N}$ is \emph{weakly} computably
	(respectively, exponential-time) $q$-ingenerable with
	respect to $\mc{G}$ if we only require the above inequality to hold
	for infinitely many values of $n \in \N$.
\end{definition}

It is trivial to show that the concept of ingenerability depends on the
underlying gate set $\mc{G}$, which is why we make this set explicit in
the definition. For example, consider the gate set composed of only
the $\ket{0}$ state preparation map, $\mc{G}=\{c\mapsto c\ketbra{0}\}$,
and the gate set $\mc{G}' = \mc{G} \cup \{L \mapsto X L X\}$ where we
add conjugation by the $X = \ketbra{0}{1} + \ketbra{1}{0}$ unitary to
$\mc{G}$. Clearly, the $n$-sequence $(1^n)_{n \in \N}$
is computably $c$-ingenerable with respect to $\mc{G}$ for any constant
$c > 0$, but not with respect to $\mc{G}'$.

The above example is a bit contrived since $\mc{G}$ and $\mc{G}'$ are
not very interesting gate sets. In particular, neither are universal.
However, it appears difficult to show that the notion of
$r$-ingenerability coincides even for two distinct but universal gate
sets~$\mc{G}$ and~$\mc{G}'$. This is due to the fact that universality
only guarantees that a circuit built from~$\mc{G}'$ can
\emph{approximate} a circuit built from~$\mc{G}$ to arbitrarily small
\emph{but non-zero} error and that $q$-ingenerability is defined via a
simple inequality. For example, it is \emph{a priori} possible for
there to exists an $\ell$-sequence $(s_n)_{n \in \N}$ and a
uniform $(0,\ell)$-circuit family $(C_n)_{n \in \N}$ built from $\mc{G}$
such that $
	\bra{s_n} C_n(\varepsilon) \ket{s_n}
	=
	1
$ for all $n \in \N$, but that for every uniform~$(0, \ell)$-circuit
family $(C'_n)_{n \in \N}$ built from $\mc{G}'$ we would
have that $
	\bra{s_n} C'_n(\varepsilon)\ket{s_n}
	<
	1
$ for all~$n \in \N$. Such a sequence would be $1$-ingenerable with
respect to $\mc{G}'$, but not with respect to $\mc{G}$.

We will later give a slightly modified definition of ingenerability,
\cref{df:ingenerable}, which will be sufficient for our needs and which
will avoid the issue of any possible dependence on the choice of the
gate set. This definition will also have the added benefit of not
requiring an explicit choice of an $q$ map. First, however, we
demonstrate the existence of $q$-ingenerable $\ell$-sequences with
respect to any given gate set $\mc{G}$ and for any suitable choices of
$q$ and $\ell$.

\begin{theorem}
\label{th:expo-ingenerable-gate-set}
	Let $\ell : \N \to \N$ and $q : \N \to \R$ be maps such that
	$q(n)\cdot 2^{\ell(n)} > n + 1 $ for all $n \in \N$. Then, for any
	gate set $\mc{G}$ there exists an $\ell$-sequence which is
	exponential-time $q$-ingenerable with respect to $\mc{G}$.
\end{theorem}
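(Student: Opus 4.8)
The plan is to construct the sequence by a diagonalization/counting argument, and then verify that the construction can be carried out in exponential time. Fix the gate set $\mc{G}$, an encoding $e : \{0,1\}^* \to \langle\mc{G}\rangle_\text{c}$ for it, and the bijection $\tau : \N \to \mc{T}$ enumerating Turing machines. The key observation is that there are only countably many uniform $(0,\ell)$-circuit families built from $\mc{G}$: each is specified (not necessarily uniquely) by a Turing machine that on input $1^n$ outputs a description of $C_n$. So I would index the ``threats'' by $n \in \N$, assigning to index $k$ the machine $\tau(k)$, and at stage $n$ of the construction worry about diagonalizing against the families coming from $\tau(0), \dots, \tau(n)$.

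For each $n$, consider the at most $n+1$ circuits obtained as follows: for each $k \le n$, run $\tau(k)$ on input $1^n$ for an appropriate (exponential in $n$) number of steps; if it halts with a valid description $s$ of a $(0,\ell(n))$-circuit, let $C^{(k)}_n = e(s)$, otherwise discard it. Each such $C^{(k)}_n$ is a channel $\C \to \mc{L}((\C^2)^{\otimes \ell(n)})$, so $C^{(k)}_n(\varepsilon)$ is a density operator on $\ell(n)$ qubits, and $\sum_{w \in \{0,1\}^{\ell(n)}} \bra{w} C^{(k)}_n(\varepsilon)\ket{w} = 1$. Hence for each fixed $k$, the set of strings $w \in \{0,1\}^{\ell(n)}$ with $\bra{w} C^{(k)}_n(\varepsilon)\ket{w} \ge q(n)$ has size at most $1/q(n) < 2^{\ell(n)}/(n+1)$. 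Taking the union over the at most $n+1$ values of $k \le n$, the set of ``bad'' strings $B_n$ has size $\abs{B_n} < (n+1)\cdot 2^{\ell(n)}/(n+1) = 2^{\ell(n)}$, so $\{0,1\}^{\ell(n)} \setminus B_n$ is nonempty. Define $s_n$ to be, say, the lexicographically least string in $\{0,1\}^{\ell(n)} \setminus B_n$. This determines the sequence uniquely.

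It remains to check two things. First, ingenerability: let $C = (C_n)_{n\in\N}$ be any exponential-time $(0,\ell)$-circuit family built from $\mc{G}$, witnessed by a Turing machine $\ttt{T} = \tau(k_0)$ running in time $2^{p(n)}$ for some polynomial $p$. For all $n \ge k_0$ with $2^{p(n)}$ below whatever step-budget the construction allots at stage $n$ (which holds for all but finitely many $n$, provided the construction's budget grows fast enough — this is where I must be slightly careful in choosing the budget), the circuit $C_n$ is among the $C^{(k_0)}_n$ considered at stage $n$, so $s_n \notin B_n$ forces $\bra{s_n} C_n(\varepsilon)\ket{s_n} < q(n)$. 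Thus the inequality holds for all but finitely many $n$, as required. Second, computability in exponential time: on input $1^n$ the construction must simulate $n+1$ machines for $2^{p(n)}$ steps each, compute the resulting $\ell(n)$-qubit density operators, and evaluate $\bra{w}\cdot\ket{w}$ for each $w \in \{0,1\}^{\ell(n)}$. Since $\ell(n) \le \log_2(q(n)\cdot 2^{\ell(n)})$ is at most polynomial in $n$ whenever $q$ is (and the hypothesis is only interesting in that regime; in general one can bound $\ell$ via the growth of $q$), and since evaluating a circuit of size at most $2^{p(n)}$ on $\ell(n)$ qubits takes time exponential in $n$, the whole stage-$n$ computation runs in time exponential in $n$.

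The main obstacle I anticipate is the bookkeeping around the step-budget: I need the budget allotted at stage $n$ to eventually dominate $2^{p(n)}$ for every polynomial $p$, yet keep the per-stage running time singly exponential in $n$. This is a standard fix — e.g. allot $2^{n \cdot \log n}$ steps, or diagonalize against $\tau(k)$ run for $2^{n}$ steps but only ``activate'' threat $k$ at stage $\max(k, n_k)$ for a suitable threshold — but it must be stated carefully, and it is the reason the theorem gives \emph{exponential-time} ingenerability rather than merely computable ingenerability (for the latter one would drop the budget entirely, at the cost of the sequence becoming uncomputable). A secondary subtlety is that the density operators $C^{(k)}_n(\varepsilon)$ have algebraic entries and one works with them to sufficient precision; since the inequality defining $B_n$ is strict and we only need $\abs{B_n} < 2^{\ell(n)}$ with slack $n+1$, a crude rational approximation suffices and no exact algebraic arithmetic is needed.
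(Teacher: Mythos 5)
Your construction is the same counting-plus-diagonalization argument the paper uses (bad sets of size at most $1/q(n)$ per machine, union over the first $n+1$ machines, size $<2^{\ell(n)}$, take the lexicographically least survivor), and the verification of ingenerability is the same. The one place where your proposal has a real problem is exactly the place you flag: the step budget. Your concrete suggestions do not work. A per-stage budget of $2^{n\log n}$ does not eventually dominate $2^{p(n)}$ for every polynomial $p$ (take $p(n)=n^2$), and a fixed budget of $2^{n}$ with delayed ``activation'' of threat $k$ fails for the same reason: a machine running in time $\Theta(2^{n^2})$ is never simulated to completion at any stage, so the family it generates is never diagonalized against. You need either a single budget growing faster than $2^{p(n)}$ for every polynomial $p$ (e.g.\ doubly exponential), or the paper's device: enumerate \emph{pairs} $(\ttt{T},r)$ with $r$ drawn from the family $r_k(n)=2^{(n+2)^k}$, arranging that every machine appears with infinitely many bounds, so that every exponential-time machine eventually occurs at some index paired with a bound dominating its run time for all $n$. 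With either fix the rest of your argument goes through verbatim.

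A secondary point: your ``second thing to check,'' computability of the sequence in exponential time, is not part of the statement (which is purely existential) and is also false as claimed. Once the budget is large enough to capture all exponential-time machines, the stage-$n$ simulation alone exceeds singly exponential time, and the brute-force evaluation of $\bra{w}C(\varepsilon)\ket{w}$ for circuits of up to budget-many gates costs still more; the paper only obtains \emph{triple}-exponential-time computability, and only in a separate, later theorem (\cref{th:ingenerable}) after fixing the enumeration carefully. Dropping that claim costs you nothing here; indeed, for the bare existence statement you could even drop the budget entirely and get the stronger computably $q$-ingenerable conclusion, which implies the exponential-time version.
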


\begin{proof}
	We first construct an
	$\ell$-sequence $(s_n)_{n \in \N}$ of bit strings and then we show
	that it is exponential-time $q$-ingenerable with respect to
	$\mc{G}$. We interpret the output of all Turing machines in this
	proof as circuits constructed from $\mc{G}$.

	Let $\mc{R} = \{r_{k}(n) = 2^{(n + 2)^k}\}_{k \in \N}$ be a set of
	maps.
	In the context of this proof, these maps will be taken as
	bounds on the run times of various Turing machines.
	We note two useful properties of the set $\mc{R}$ before proceeding.
	First, $r_i \leq r_{i + 1}$ for all $i \in \N$. Second,
	if $g \in \mc{O}(2^{n^{k_0}})$ for some $k_0\in\N$, then there
	exists an $r_{k'} \in \mc{R}$ such that $g \leq r_{k'}$.

	Now, let $\mc{T}$ be the set of all Turing machines and let
	$\Phi : \N \to \mc{T} \times \mc{R}$ be a map such that for all
	$\ttt{T} \in \mc{T}$ there are infinitely many $r \in \mc{R}$ such
	that the pair $(\ttt{T}, r)$ is in the image of $\Phi$. By the
	previously discussed properties of $\mc{R}$, this implies that if
	$\ttt{T}$ runs in exponential-time, then there exists an
	$n_\ttt{T}$ such that $\Phi(n_\ttt{T}) = (\ttt{T}, r)$ and
	$\ttt{T}(1^n)$ halts in at most $r(n)$ steps for all $n \in \N$.
	Such a map $\Phi$ exists as $\mc{T} \times \mc{R}$ is countable
	and hence there is a surjection from $\N$ to $\mc{T} \times
	\mc{R}$. Note that $\Phi$ being a surjection is a sufficient,
	\emph{but not necessary}, condition.
	To ease later notation, let $\tau : \N \to \mc{T}$ and $\rho : \N \to
	\mc{R}$ be the unique maps satisfying $\Phi(n) = (\tau(n), \rho(n))$ for all $n
	\in \N$.

	Now, for every $t,n \in \N$, define the set\footnote{%
		Recall that $\tau(t)(1^n)(\varepsilon)$ is a quantum
		state, when this expression is defined. Indeed, we consider the
		$t$-th Turing machine, $\tau(t)$, and run it on input
		$1^n$. This yields the bit string $\tau(t)(1^n)$, assuming the
		machine halts. Implicitly, we interpret this bit string as a
		quantum circuit, which itself implicitly defines a quantum
		channel. Assuming this quantum channel takes as input the empty
		system, we apply it to the empty quantum state $\varepsilon$ to
		obtain as output the state $\tau(t)(1^n)(\varepsilon)$.
	}
	\begin{equation}
		\mc{S}_{t,n}
		=
		\begin{cases}
			\left\{
				x \in \{0,1\}^{\ell(n)}
				:
				\bra{x} \tau(t)(1^n)(\varepsilon)\ket{x}
				\geq
				q(n)
			\right\}
			&
			\parbox[t]{0.32\textwidth}{if, on input of $1^n$, $\tau(t)$
			halts in at most $\rho(t)(n)$ steps and yields a $(0,
			\ell(n))$-circuit.}
			\\
			\varnothing &\text{else.}
		\end{cases}
	\end{equation}
	In other words, $\mc{S}_{t,n}$ is the set of strings generated with
	probability at least $q(n)$ by the circuit described by the $t$-th
	Turing machine on input $1^n$, provided that the machine indeed
	halts in at most $\rho(t)(n)$ steps and outputs the description of a
	$(0, \ell(n))$-circuit. Under these assumptions, we
	have that
	$
		\sum_{x \in \{0,1\}^{\ell(n)}}
			\bra{x}\tau(t)(1^n)(\varepsilon)\ket{x}
		=
		1
	$ which implies that
	\begin{equation}
		\abs{\mc{S}_{t,n}} \leq \frac{1}{q(n)}.
	\end{equation}
	Note that this inequality also holds if $\tau(t)(1^n)$ does not halt
	within the specified number of steps or does not yield a
	$(0,\ell(n))$-circuit as $\abs{\mc{S}_{t,n}} = 0$ in
	this case and our assumptions imply that $q$ must be strictly
	positive. Next, define
	\begin{equation}
		\mc{S}_n
		=
		\bigcup_{t \in \{0,\ldots,n\}}
		\mc{S}_{t,n}.
	\end{equation}
	By our assumption on the maps $\ell$ and $q$, we have that
	\begin{equation}
		\abs{\mc{S}_n}
		\leq
		\left(n + 1\right)
		\cdot
		\frac{1}{q(n)}
		<
		2^{\ell(n)},
	\end{equation}
	which implies that $\{0,1\}^{\ell(n)} \setminus \mc{S}_n$ is
	non-empty. For all $n \in \N$, we take $s_n$ to be the first
	element, in lexicographic order, of
	$\{0,1\}^{\ell(n)} \setminus \mc{S}_n$. This yields an
	$\ell$-sequence $(s_n)_{n \in \N}$.

	Now, we show that $(s_n)_{n \in \N}$ is exponential-time
	$q$-ingenerable with respect to $\mc{G}$. Consider an
	exponential-time $(0,\ell)$-circuit family
	$(C_n)_{n \in \N}$. If no such circuit family exists, then
	$(s_n)_{n \in \N}$ is vacuously exponential-time $q$-ingenerable and
	we are done.

	Let $\ttt{T}$ be an exponential-time Turing machine which generates
	this circuit family. By our previous discussion on
	$\mc{R}$ and $\Phi$, there exists a $t_\ttt{T} \in \N$ such that
	$\Phi(t_\ttt{T}) = (\ttt{T}, r)$ and where $\ttt{T}(1^n)$ halts in
	at most $r(n)$ steps. It is now sufficient to show that
	\begin{equation}
		n \geq t_\ttt{T}
		\implies
		\bra{s_n}
		C_n(\varepsilon)
		\ket{s_n}
		<
		q(n)
	.
	\end{equation}
	Assume this was not the case and that there exists an
	$n' \geq t_\ttt{T}$ such that $
		\bra{s_{n'}}
			C_{n'}(\varepsilon)
		\ket{s_{n'}}
		\geq q(n')
	$. Then, by definition, $
		s_{n'}
		\in
		\mc{S}_{t_\ttt{T},n'}
		\subseteq
		\mc{S}_{n'}
	$. Contradiction, since $
		s_{n'} \in \{0,1\}^{\ell(n')}
		\setminus
		\mc{S}_{n'}
	$.
\end{proof}

Note that the above proof is constructive
in the sense that once $q$, $\ell$, $\mc{G}$ and $\Phi$ are fixed, then
the sequence $(s_n)_{n \in \N}$ is uniquely determined.

We also note that by modifying the set $\mc{R}$ used in the proof, we
can obtain sequences which are $q$-ingenerable with respect to circuit
families generated by Turing machines operating under various time
constraints. An important example for this work is the case where
$\mc{R} = \{r_k(n) = \infty\}_{k \in \N}$ and where we say that the
Turing machine $\ttt{T}$ halts in at most $\infty$ steps on a given
input if and only if it does halt. For completeness, we formalize this
below.

\begin{theorem}
\label{th:comput-ingenerable-gate-set}
	Let $\ell : \N \to \N$ and $q : \N \to \R$ be maps such that
	$q(n)\cdot 2^{\ell(n)} > n + 1 $ for all $n \in \N$. Then, for any
	gate set $\mc{G}$ there exists an $\ell$-sequence which is
	computably $q$-ingenerable with respect to $\mc{G}$.
\end{theorem}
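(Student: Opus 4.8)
The plan is to obtain \cref{th:comput-ingenerable-gate-set} by running the exact same argument as in the proof of \cref{th:expo-ingenerable-gate-set}, but with the set of time-bound maps $\mc{R}$ replaced by the ``no time bound'' family $\mc{R} = \{r_k(n) = \infty\}_{k \in \N}$, where we adopt the convention that a Turing machine ``halts in at most $\infty$ steps'' on a given input precisely when it halts on that input at all. With this convention, the two structural properties of $\mc{R}$ used in the earlier proof still hold vacuously or trivially: the maps are (weakly) monotone, and every map $g : \N \to \N$ computed by a halting Turing machine satisfies $g \leq r_k$ for every $k$. Hence there is nothing essential to change.

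Concretely, I would first fix $\ell$, $q$, and $\mc{G}$ as in the statement, and let $\Phi : \N \to \mc{T} \times \mc{R}$ be a map such that for every $\ttt{T} \in \mc{T}$ there are infinitely many $r \in \mc{R}$ with $(\ttt{T}, r)$ in the image of $\Phi$; such a $\Phi$ exists since $\mc{T} \times \mc{R}$ is countable. Write $\Phi(n) = (\tau(n), \rho(n))$. I would then define $\mc{S}_{t,n}$ exactly as before, where now the side condition ``$\tau(t)$ halts in at most $\rho(t)(n)$ steps'' simply reads ``$\tau(t)(1^n)$ halts'' (and additionally yields a $(0,\ell(n))$-circuit). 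The same normalization argument gives $\abs{\mc{S}_{t,n}} \leq 1/q(n)$, so that $\abs{\mc{S}_n} = \abs{\bigcup_{t \leq n} \mc{S}_{t,n}} \leq (n+1)/q(n) < 2^{\ell(n)}$ by the hypothesis on $\ell$ and $q$, leaving $\{0,1\}^{\ell(n)} \setminus \mc{S}_n$ non-empty. Taking $s_n$ to be its lexicographically first element defines the desired $\ell$-sequence.

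Finally, for ingenerability: given any uniform $(0,\ell)$-circuit family $(C_n)_{n \in \N}$ generated by a Turing machine $\ttt{T}$ (which now need only halt on each $1^n$, with no running-time restriction), there is, by the covering property of $\Phi$, some $t_\ttt{T} \in \N$ with $\tau(t_\ttt{T}) = \ttt{T}$, and then for all $n \geq t_\ttt{T}$ we get $s_n \notin \mc{S}_{t_\ttt{T},n} \supseteq \{x : \bra{x} C_n(\varepsilon)\ket{x} \geq q(n)\}$, i.e.\ $\bra{s_n} C_n(\varepsilon) \ket{s_n} < q(n)$ for all but finitely many $n$, as required. I do not anticipate a genuine obstacle here: the only subtlety is confirming that the ``$\infty$ time bound'' convention is consistent with the countability argument (it is, since $\mc{R}$ remains countable and $\mc{S}_{t,n}$ is still well-defined whenever $\tau(t)(1^n)$ halts), and that dropping the time bound is exactly what removes the exponential-time restriction on the adversarial circuit families. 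One may additionally remark, as after the previous theorem, that the construction remains constructive in the sense that $(s_n)_{n \in \N}$ is uniquely determined once $q$, $\ell$, $\mc{G}$, and $\Phi$ are fixed, even though it is now in general uncomputable.
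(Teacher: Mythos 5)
Your proposal is correct and is essentially the paper's own proof: the paper proves \cref{th:comput-ingenerable-gate-set} exactly by rerunning the argument of \cref{th:expo-ingenerable-gate-set} with $\mc{R} = \{r_k(n) = \infty\}_{k \in \N}$ under the same ``halts in at most $\infty$ steps iff it halts'' convention. Your filled-in details (the counting bound $\abs{\mc{S}_n} \leq (n+1)/q(n) < 2^{\ell(n)}$, the choice of $s_n$, and the observation that any index $t_\ttt{T}$ with $\tau(t_\ttt{T}) = \ttt{T}$ now suffices since no time bound must be matched) are exactly what the paper's sketch leaves implicit.
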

\begin{proof}[Proof sketch.]
	Follow the proof given for \cref{th:expo-ingenerable-gate-set}, but
	with $\mc{R} = \{r_k(n) = \infty\}_{k \in \N}$.
\end{proof}

We now address the question of the possible dependence of our definition
of ingenerable sequences on the choice of the underlying gate set.
Looking ahead, this work will be mainly concerned with $\ell$-sequences
which are computably or exponential-time $(p\cdot2^{-\ell})$-ingenerable
for a polynomial~$p$, but that the particular polynomial is not
important. With this in mind, we formulate the following definition of
ingenerability which will be sufficient for our needs as well as
independent of the choice of the gate set. We also note that this
definition naturally extends to the notion of weak ingenerability.

\begin{definition}
\label{df:ingenerable}
	Let $\ell : \N \to \N$ be a map. An $\ell$-sequence is said to be
	computably (respectively, exponential-time) ingenerable if for every
	gate set $\mc{G}$ there exists a polynomial $p_\mc{G} : \N \to \R$
	such that the sequence is computably (respectively,
	exponential-time) $(p_\mc{G} \cdot 2^{-\ell})$-ingenerable with
	respect to $\mc{G}$.
\end{definition}

By the Solovay-Kitaev theorem, we will see that
$(p \cdot 2^{-\ell})$-ingenerability with respect to a universal gate
set for any polynomial $p : \N \to \R$ is a sufficient condition to be
ingenerable. By definition, it is also a necessary condition. However, a
technical condition on the computability of $\ell$ is needed to formally
prove this. The following lemma, which establishes conditions under
which ingenerability is a trivial property, will help us account for
this technicality.

\begin{lemma}
\label{th:ingenerable-trivial}
	Let $\ell : \N \to \N$ be a map. The following two statements are
	equivalent:
	\begin{enumerate}
		\item
			Every $\ell$-sequence is computably (respectively,
			exponential-time) ingenerable.
		\item
			Either $\ell$ is uncomputable (respectively, can't be
			computed in exponential-time), or $\ell \in \mc{O}(\log)$.
	\end{enumerate}
\end{lemma}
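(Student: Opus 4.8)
The plan is to prove the equivalence by establishing both directions, with the bulk of the work in showing that (2) implies (1), and then noting that (1) implies (2) follows by exhibiting an explicitly generable $\ell$-sequence whenever the hypotheses of (2) fail.

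First I would handle the direction (2) $\implies$ (1). Fix an $\ell$-sequence $(s_n)_{n\in\N}$ and a gate set $\mc{G}$; I need to produce a polynomial $p_{\mc{G}}$ such that the sequence is computably (respectively, exponential-time) $(p_{\mc{G}}\cdot 2^{-\ell})$-ingenerable with respect to $\mc{G}$. I will argue that in fact \emph{any} constant function $p_{\mc{G}}(n) = c > 0$, say $c=1$, works, regardless of the sequence. Consider the case $\ell \in \mc{O}(\log)$: then $2^{\ell(n)}$ is polynomially bounded, so $2^{-\ell(n)}$ is at least $1/\mathrm{poly}(n)$, hence bounded below by a constant on... no --- here I must be careful, since $2^{-\ell(n)}$ is not bounded below by a constant. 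Instead the right move is: when $\ell \in \mc{O}(\log)$, the target bound $q(n) = 2^{-\ell(n)}$ (times a constant) is \emph{small}, and I want to show no circuit family beats it. But a circuit that just outputs a uniformly random $\ell(n)$-bit string hits $s_n$ with probability exactly $2^{-\ell(n)}$, so we cannot hope for $q(n) = 2^{-\ell(n)}$ exactly; we need $q(n)$ a bit larger, which is why \cref{th:expo-ingenerable-gate-set} requires $q(n)\cdot 2^{\ell(n)} > n+1$. So the correct argument when $\ell\in\mc{O}(\log)$ is to invoke \cref{th:comput-ingenerable-gate-set} (resp.\ \cref{th:expo-ingenerable-gate-set}) with $q = 2\cdot 2^{-\ell}$, say --- but this only produces \emph{one} ingenerable sequence, not a statement about \emph{every} sequence. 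Let me reconsider: when $\ell\in\mc{O}(\log)$, $2^{\ell(n)}\le n^{O(1)}$, so for a polynomial $p$ with $p(n) > (n+1)\cdot 2^{\ell(n)}$ eventually (which exists since the right side is polynomial), we get $p(n)\cdot 2^{-\ell(n)} > n+1$ eventually, and then $q:=p\cdot 2^{-\ell}$ satisfies $q(n) > 1$ eventually; but $\bra{s_n}C_n(\varepsilon)\ket{s_n}\le 1$ always, so the ingenerability inequality $\bra{s_n}C_n(\varepsilon)\ket{s_n} < q(n)$ holds eventually for \emph{every} circuit family and \emph{every} sequence, trivially. That is the point. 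Similarly, when $\ell$ is uncomputable (resp.\ not exponential-time computable): if some uniform (resp.\ exponential-time) $(0,\ell)$-circuit family existed that generated $s_n$ with probability $\ge q(n) = p_{\mc{G}}(n)\cdot 2^{-\ell(n)}$ for infinitely many $n$ --- wait, ingenerability is the \emph{negation}, so I need to rule this out for all but finitely many $n$. The key observation: a uniform $(0,\ell)$-circuit family must have $\ell$ computable (the Turing machine generating $C_n$ outputs a circuit on $\ell(n)$ output qubits, and one can read off $\ell(n)$ from the description, hence compute $\ell$); if the family is exponential-time, then $\ell$ is exponential-time computable by the remark at the end of the Turing-machine preliminaries ($f(n)\le T(n)$). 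So if $\ell$ is uncomputable (resp.\ not exp-time computable), \emph{no} uniform (resp.\ exp-time) $(0,\ell)$-circuit family exists at all, and ingenerability holds vacuously for any polynomial $p_{\mc{G}}$. This gives (2) $\implies$ (1).

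Next, for (1) $\implies$ (2), I argue the contrapositive: suppose $\ell$ is computable (resp.\ exponential-time computable) and $\ell\notin\mc{O}(\log)$, i.e.\ $\ell\in\omega(\log)$ along a subsequence --- more precisely $\ell$ is unbounded and not $\mc{O}(\log)$. I want to exhibit an $\ell$-sequence that is \emph{not} computably (resp.\ exp-time) ingenerable, i.e.\ for which there is a gate set $\mc{G}$ (it suffices to use our fixed universal gate set, or even the simple one) such that for every polynomial $p$ there is a uniform (resp.\ exp-time) $(0,\ell)$-circuit family generating $s_n$ with probability $\ge p(n)\cdot 2^{-\ell(n)}$ for infinitely many $n$. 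The natural choice is $s_n = 0^{\ell(n)}$ (or $s_n = 1^{\ell(n)}$): since $\ell$ is computable (resp.\ exp-time computable), the circuit family that on input $\varepsilon$ prepares $\ell(n)$ zero-qubits is uniform (resp.\ exp-time, using the Solovay--Kitaev remark to compile into the fixed universal gate set since $\ell$ is efficiently computable in the respective sense), and it outputs $s_n$ with probability $1$. Then $\bra{s_n}C_n(\varepsilon)\ket{s_n} = 1$, which is \emph{not} eventually $< p(n)\cdot 2^{-\ell(n)}$ whenever $p(n)\cdot 2^{-\ell(n)} \le 1$ infinitely often; and since $\ell\notin\mc{O}(\log)$ means $2^{\ell(n)}$ grows faster than any polynomial along infinitely many $n$, indeed $p(n)\cdot 2^{-\ell(n)} < 1$ for infinitely many $n$ for every polynomial $p$. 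Hence $(s_n)_{n\in\N}$ is not $(p\cdot 2^{-\ell})$-ingenerable with respect to the universal gate set for any $p$, so not ingenerable, contradicting (1). (One subtlety: for the \emph{computable} case I should double-check that a uniform circuit family preparing $0^{\ell(n)}$ exists with the simple gate set --- it does, using only the $\ket 0$-preparation map, and $\ell$ computable gives the uniform Turing machine; for the exp-time case $\ell$ exp-time computable gives the machine writing out $\ell(n)$ copies of the preparation gate in time $\mathrm{poly}(\ell(n)) = 2^{O(\mathrm{poly}(n))}$.)

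The main obstacle I anticipate is getting the case analysis and the quantifier order exactly right in the (2) $\implies$ (1) direction --- specifically, arguing cleanly that ``$\ell \in \mc{O}(\log)$'' forces the threshold $p_{\mc{G}}\cdot 2^{-\ell}$ above $1$ (so ingenerability is vacuous), versus ``$\ell$ uncomputable/not-exp-computable'' forces the \emph{nonexistence} of any relevant circuit family (so ingenerability is again vacuous), and checking that ``uniform $(0,\ell)$-circuit family exists'' genuinely implies ``$\ell$ is computable'' with the matching time bound --- this rests on the encoding conventions (every $\Psi_{i,j}$ efficiently computable from the description) and on reading the output-qubit count $\ell(n)$ off the circuit description, which I'd state as a small observation rather than belabor. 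The other directions are routine once the witness sequence $s_n = 0^{\ell(n)}$ and the Solovay--Kitaev compilation remark are invoked.
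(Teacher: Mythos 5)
Your proof is correct and follows essentially the same route as the paper's: vacuity when $\ell$ is not computable within the relevant time bound (since a generating Turing machine would compute $\ell$), triviality when $\ell \in \mc{O}(\log)$ (a polynomial pushes $p\cdot 2^{-\ell}$ above $1$), and the witness sequence $(0^{\ell(n)})_{n\in\N}$ prepared by the all-$\ket{0}$ circuit family for the converse, which the paper states directly rather than as a contrapositive. The Solovay--Kitaev compilation you mention in the last step is unnecessary since the gate set you pick already contains the state-preparation map, but this does not affect correctness.
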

\begin{proof}
	We first show that the second statement implies the first.

	If $\ell$ is uncomputable (respectively, cannot be computed in
	exponential-time), then there are no uniform (respectively,
	exponential-time)~$(0,\ell)$-circuit families, as a Turing machine
	producing circuits with the appropriate number of output
	qubits could be used to compute $\ell$. As there are no suitable
	$(0,\ell)$-circuit family, it is vacuously true that every
	$\ell$-sequence is computably (respectively, exponential-time)
	ingenerable.

	If $\ell \in \mc{O}(\log)$, then there exists $n', c \in \N$
	such that $
		n \geq n'
		\implies
		\ell(n) \leq c \log(n)
	$, which implies that $2^{-\ell(n)} \geq n^{-c}$ for all
	sufficiently large values of $n$. Now, take
	$p(n)=n^{c+1}$ and $\tilde{n}=\max\{n',2\}$
	so that $
		n \geq \tilde{n}
		\implies
		p(n) \cdot 2^{-\ell(n)}
		\geq
		n
		\geq
		2
	$. Now, consider any $\ell$-sequence $(s_n)_{n \in \N}$
	and any uniform~$(0,\ell)$-circuit family $(C_n)_{n \in \N}$ built
	from any gate set $\mc{G}$. For all $n \geq \tilde{n}$, we have that
	\begin{equation}
		\bra{s_n} C_n(\varepsilon) \ket{s_n}
		\leq
		1
		<
		2
		\leq
		p(n) \cdot 2^{-\ell(n)}.
	\end{equation}
	Thus, $(s_n)_{n \in \N}$ is computably ingenerable, which implies
	that it is also exponential-time ingenerable.

	We now show that the first statement implies the second. Assume that
	every $\ell$-sequence is computably (respectively, exponential-time)
	ingenerable. Further, assume that $\ell$ is computable
	(respectively, computable in exponential-time), as otherwise we are
	done. Consider the gate set~$\mc{G}$ composed of precisely the
	$\ket{0}$ state preparation map, the
	$\ell$-sequence $(0^{\ell(n)})_{n \in \N}$, and the uniform
	(respectively, exponential-time)
	$(0,\ell)$-circuit family $(C_n)_{n \in \N}$ constructed from
	$\mc{G}$ where $C_n$ is simply the parallel composition of $\ell(n)$
	instances of the $\ket{0}$ state preparation map. Clearly,
	$
		\bra{0^{\ell(n)}}
		C_n(\varepsilon)
		\ket{0^{\ell(n)}}
		=
		1
	$. But, $(0^{\ell(n)})_{n \in \N}$ is assumed to be computably
	(respectively, exponential-time) ingenerable.
	Thus, there exists a polynomial $p_\mc{G} : \N \to \R$ such that
	$1 < p_\mc{G}(n) \cdot 2^{-\ell(n)}$ for all sufficiently large
	values of $n$. This implies that $\ell(n) \leq \log(p_\mc{G}(n))$
	for all sufficiently large values of $n$, which is sufficient to
	conclude that $\ell \in \mc{O}(\log)$.
\end{proof}

\begin{lemma}
\label{th:ingenerable-universal}
	Let $\mc{G}$ be a universal gate set and $\ell : \N \to \N$ be any
	map (respectively, any efficiently computable map). If an
	$\ell$-sequence is computably (respectively,
	exponential-time) $(p\cdot2^{-\ell})$-ingenerable with
	respect to $\mc{G}$ for a polynomial $p : \N \to \R$, then it is
	computably (respectively, exponential-time) ingenerable.
\end{lemma}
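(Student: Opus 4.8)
The plan is to show that $(p\cdot 2^{-\ell})$-ingenerability with respect to the single universal gate set $\mc{G}$ already forces $((p+1)\cdot 2^{-\ell})$-ingenerability with respect to \emph{every} gate set. Concretely, I would fix an arbitrary gate set $\mc{G}'$ and aim to verify that $(s_n)_{n\in\N}$ is computably (respectively, exponential-time) $((p+1)\cdot 2^{-\ell})$-ingenerable with respect to $\mc{G}'$; since $\mc{G}'$ is arbitrary and the polynomial $p+1$ does not depend on it, \cref{df:ingenerable} is then met.

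First I would dispose of a computability technicality. In the computable case $\ell$ is allowed to be an arbitrary map, and if it happens to be uncomputable then \cref{th:ingenerable-trivial} already tells us that \emph{every} $\ell$-sequence is computably ingenerable, so there is nothing to prove; hence I may assume $\ell$ is computable. In the exponential-time case $\ell$ is efficiently computable by hypothesis. In either remaining situation the map $t := \ell + 1$ is computable (respectively, efficiently computable), since adding a constant preserves this.

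The main step is then a perturbation argument. Given an arbitrary uniform (respectively, exponential-time) $(0,\ell)$-circuit family $C' = (C'_n)_{n\in\N}$ built from $\mc{G}'$, I would invoke the algorithmic Solovay-Kitaev theorem with target precision $t$: because $\mc{G}$ is universal and $t$ is computable (respectively, efficiently computable), there is a uniform (respectively, exponential-time) $(0,\ell)$-circuit family $C = (C_n)_{n\in\N}$ built from $\mc{G}$ with $\norm{C_n - C'_n}_\diamond \le 2^{-t(n)} = \frac{1}{2}\cdot 2^{-\ell(n)}$ for all $n$. Since $C$ is built from $\mc{G}$, the hypothesis gives $\bra{s_n}C_n(\varepsilon)\ket{s_n} < p(n)\cdot 2^{-\ell(n)}$ for all but finitely many $n$. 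Bounding the difference of the two expectation values by the diamond norm — it dominates the trace-norm distance of the two outputs on the empty input, and $\ketbra{s_n}$ has unit operator norm — I obtain, for all but finitely many $n$,
\begin{equation}
	\bra{s_n}C'_n(\varepsilon)\ket{s_n}
	\le
	\bra{s_n}C_n(\varepsilon)\ket{s_n} + 2^{-t(n)}
	<
	p(n)\cdot 2^{-\ell(n)} + \frac{1}{2}\cdot 2^{-\ell(n)}
	<
	(p(n) + 1)\cdot 2^{-\ell(n)}.
\end{equation}
This is exactly the inequality required for $(p+1)\cdot 2^{-\ell}$-ingenerability with respect to $\mc{G}'$, so taking $p_{\mc{G}'} = p+1$ finishes the argument.

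The only genuine obstacle is the one neutralized in the first step: Solovay-Kitaev converts $\mc{G}'$-circuits into $\mc{G}$-circuits within the uniform (respectively, exponential-time) regime only when the target precision is a computable (respectively, efficiently computable) function of $n$, whereas in the computable case $\ell$ may be arbitrary. The resolution is that precisely when $\ell$ is too wild for this to work, the conclusion is vacuous (there are no $(0,\ell)$-circuit families at all), so Solovay-Kitaev is never actually needed there; and when $\ell$ is (efficiently) computable, $t = \ell+1$ inherits this and is small enough to absorb the approximation error into the polynomial. Everything else is the one-line triangle-inequality estimate displayed above.
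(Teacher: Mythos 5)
Your proposal is correct and follows essentially the same route as the paper's proof: dismiss the uncomputable-$\ell$ case via \cref{th:ingenerable-trivial}, then use the (algorithmic) Solovay-Kitaev theorem to approximate an arbitrary $\mc{G}'$-family by a $\mc{G}$-family to exponentially small diamond-norm error and absorb the error into the polynomial, yielding $(p+1)\cdot 2^{-\ell}$-ingenerability with respect to $\mc{G}'$. The only difference is cosmetic — you take precision $2^{-(\ell+1)}$ where the paper takes $2^{-\ell}$ — and both work.
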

\begin{proof}
	Let $(s_n)_{n \in \N}$ be an $\ell$-sequence which is
	$(p \cdot 2^{-\ell}$)-ingenerable with respect to $\mc{G}$. Further,
	assume that $\ell$ is computable, as otherwise $(s_n)_{n \in \N}$ is
	already known to be ingenerable by \cref{th:ingenerable-trivial}.

	Let $\mc{G}'$ be any gate set and let $(C'_n)_{n \in \N}$ be a
	uniform (respectively, exponential-time) family of
	$(0,\ell)$-circuits built from this set. Now, let
	$(\tilde{C}_n)_{n \in \N}$ be a uniform family of
	$(0,\ell)$-circuits with gates from the universal gate set $\mc{G}$
	such that $
		\norm*{C_n - \tilde{C}_n}_\diamond
		\leq
		2^{-\ell(n)}
	$ for all $n \in \N$. Such a uniform (respectively,
	exponential-time) family exists by the Solovay-Kitaev theorem. Since
	$(s_n)_{n \in \N}$ is computably (respectively, exponential-time)
	$(p \cdot 2^{-\ell})$-ingenerable with respect to $\mc{G}$, we have
	that
	\begin{equation}
		\bra{s_n} C_n(\varepsilon) \ket{s_n}
		\leq
		\bra{s_n} \tilde{C}_n(\varepsilon) \ket{s_n}
		+
		2^{-\ell(n)}
		<
		(p(n) + 1) \cdot 2^{-\ell(n)}
	\end{equation}
	for all sufficiently large values of $n$. Hence, the
	$\ell$-sequence $(s_n)_{n \in \N}$ is computably (respectively,
	exponential-time)
	$(p+1)\cdot2^{-\ell}$-ingenerable with respect to $\mc{G}'$. This
	yields the desired result as $p+1$ is a polynomial.
\end{proof}

The culmination of the results in this section is the following theorem
which asserts the existence of computably ingenerable and
exponential-time $\ell$-sequences for any suitable choice of $\ell$.

\begin{theorem}
\label{th:ingenerable}
	There exists a computably ingenerable $\ell$-sequence for any map
	$\ell : \N \to \N$.

	Moreover, if $\ell$ is efficiently computable, then
	there exists an exponential-time ingenerable $\ell$-sequence which
	can be computed in triple exponential-time by a classical
	deterministic Turing machine.
\end{theorem}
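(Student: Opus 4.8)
The first statement is a short corollary of \cref{th:comput-ingenerable-gate-set} and \cref{th:ingenerable-universal}. Fix any universal gate set $\mc{G}$ (for instance the standard one fixed at the end of the preliminaries), put $p(n) = n + 2$, and let $q = p \cdot 2^{-\ell}$. Then $q(n) \cdot 2^{\ell(n)} = p(n) = n + 2 > n + 1$ for every $n \in \N$, so \cref{th:comput-ingenerable-gate-set} yields an $\ell$-sequence that is computably $(p \cdot 2^{-\ell})$-ingenerable with respect to $\mc{G}$. Since $\mc{G}$ is universal and $p$ is a polynomial, the ``computable'' case of \cref{th:ingenerable-universal} --- which places no restriction on $\ell$ --- promotes this to computable ingenerability, as required.

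For the ``moreover'' part, suppose $\ell$ is efficiently computable. Repeating the argument with \cref{th:expo-ingenerable-gate-set} in place of \cref{th:comput-ingenerable-gate-set}, and applying the ``exponential-time'' case of \cref{th:ingenerable-universal} (which is available precisely because $\ell$ is efficiently computable), already produces an exponential-time ingenerable $\ell$-sequence. What remains is to exhibit one that is moreover computable in triple exponential time. The crucial point is that the construction in the proof of \cref{th:expo-ingenerable-gate-set} is entirely deterministic once the data $\mc{G}$, $q$, $\ell$, and $\Phi$ are pinned down, so it suffices to choose each of them to be computable and then bound the running time of the resulting procedure. Concretely, take $\mc{G}$ to be the standard universal gate set, $q = (n+2) \cdot 2^{-\ell}$ as above, and let $\Phi \colon \N \to \mc{T} \times \mc{R}$ be obtained by composing the inverse of a fixed, efficiently computable and efficiently invertible pairing bijection $\langle \cdot , \cdot \rangle \colon \N \times \N \to \N$ with the canonical enumeration $\tau \colon \N \to \mc{T}$ and the indexing $k \mapsto r_k$ of $\mc{R} = \{\, r_k(n) = 2^{(n+2)^k} \,\}_{k \in \N}$. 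This $\Phi$ is a bijection, hence in particular pairs each Turing machine with infinitely many elements of $\mc{R}$, so \cref{th:expo-ingenerable-gate-set} applies to it; and choosing $\langle \cdot, \cdot \rangle$ so that $a, b \le \langle a, b \rangle$, writing $\Phi(t) = (\tau(a_t), r_{b_t})$ gives $a_t, b_t \le t$.

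Next I would analyse the running time of the Turing machine that, on input $1^n$, carries out this construction. It iterates over $t \in \{0, \dots, n\}$; for each $t$ it computes $\Phi(t) = (\tau(a_t), \rho(t))$ efficiently, where $\rho(t) = r_{b_t}$ with $b_t \le t \le n$, so $\rho(t)(n) \le 2^{(n+2)^n} =: D(n)$, a \emph{double}-exponential function of $n$. It then simulates $\tau(t)$ on input $1^n$ for at most $D(n)$ steps, in time $D(n)^{\mc{O}(1)}$; if $\tau(t)$ halts within this bound and outputs the description of a valid $(0, \ell(n))$-circuit $C$ --- a condition checkable in time $D(n)^{\mc{O}(1)}$, using that $\ell(n)$ is computable and $\ell(n) \le n^{\mc{O}(1)}$ since $\ell$ is efficiently computable --- then $C$ has description length at most $D(n)$, hence acts on at most $D(n)$ qubits with at most $D(n)$ gates. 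The machine simulates $C$ \emph{exactly} on the empty input to obtain every diagonal entry $\bra{x} \tau(t)(1^n)(\varepsilon) \ket{x}$, $x \in \{0,1\}^{\ell(n)}$, and assembles $\mc{S}_{t,n}$. Finally it forms $\mc{S}_n = \bigcup_{t = 0}^{n} \mc{S}_{t,n}$ and outputs the lexicographically first element of $\{0,1\}^{\ell(n)} \setminus \mc{S}_n$, which is non-empty because $\abs{\mc{S}_n} < 2^{\ell(n)}$ by the proof of \cref{th:expo-ingenerable-gate-set}; this final search costs $2^{\ell(n)} \cdot n^{\mc{O}(1)} \le 2^{n^{\mc{O}(1)}}$. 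Summing the $n + 1$ iterations, the total running time is dominated by the exact circuit simulations.

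The step I expect to be the main obstacle, and the one requiring the most care, is bounding the cost of that exact simulation: the circuit $C$ can be double-exponentially large, and its amplitudes are algebraic, not rational. The key fact to make precise is that, because $\mc{G}$ is a \emph{fixed finite} gate set with algebraic entries, every amplitude that occurs during the simulation lies in one fixed number field $K$ --- the compositum of the fields generated by the entries of the gates of $\mc{G}$ --- whose degree is an absolute constant. Elements of $K$ are stored by their coordinates in a fixed integral basis, and field arithmetic (as well as comparing a real element of $K$ to the rational $q(n)$) is polynomial in the bit-lengths of those coordinates. Simulating $C$ then amounts to maintaining a density matrix on at most $D(n)$ qubits --- thus at most $4^{D(n)}$ entries of $K$ --- and applying at most $D(n)$ gate conjugations, $\ket{0}$-preparations, and single-qubit partial traces; each such step enlarges the coordinate bit-lengths by at most a constant factor, so they remain $2^{\mc{O}(D(n))}$ throughout, and the simulation runs in time $4^{D(n)} \cdot 2^{\mc{O}(D(n))} = 2^{\mc{O}(D(n))}$. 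Since $D(n) = 2^{2^{n \log(n+2)}}$ is double exponential, $2^{\mc{O}(D(n))}$ is bounded by $2^{2^{2^{q(n)}}}$ for a suitable polynomial $q$, i.e.\ triple exponential in $n$; hence so is the total running time of the machine. By \cref{th:expo-ingenerable-gate-set} and \cref{th:ingenerable-universal} the sequence it computes is exponential-time ingenerable, which completes the proof.
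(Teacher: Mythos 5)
Your proposal is correct and follows essentially the same route as the paper: part one is the same two-lemma combination (\cref{th:comput-ingenerable-gate-set} plus \cref{th:ingenerable-universal} with $p(n)=n+2$), and for the ``moreover'' part you likewise fix a computable, index-bounded $\Phi$ (your pairing bijection playing the role of the paper's sequences $(\alpha_n,\beta_n)$ and its time-bounded universal machine) and bound the deterministic implementation of the \cref{th:expo-ingenerable-gate-set} construction by the same three-layer analysis, with the triple exponential coming from exactly simulating circuits of double-exponential size. Your explicit treatment of exact arithmetic in a fixed number field is a welcome elaboration of a step the paper's proof only asserts.
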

\begin{proof}
	Let $\mc{G}$ be a universal gate set and let $p(n)= n + 2$. Then,
	by \cref{th:comput-ingenerable-gate-set}, there exists a
	computably $(p \cdot 2^{-\ell})$-ingenerable $\ell$-sequence with
	respect to $\mc{G}$. By \cref{th:ingenerable-universal}, this is a
	computably ingenerable $\ell$-sequence.

	Keeping the same $\mc{G}$ and $p$ and assuming that $\ell$ can be
	efficiently computed, we now show how to fix the map
	$\Phi$ in the proof of \cref{th:expo-ingenerable-gate-set} to
	obtain an $\ell$-sequence which is exponential-time ingenerable with
	respect to $\mc{G}$ but which can be
	computed in triple exponential-time by a classical deterministic
	Turing machine. By \cref{th:ingenerable-universal}, this is an
	exponential-time ingenerable sequence.
	
	Recall that
	$\mc{R} = \left\{r_k(n) = 2^{(n+2)^k}\right\}_{k \in \N}$.
	Let $\langle \cdot \rangle : \{0,1\}^* \to \mc{T}$ be a
	mapping from bit strings to Turing machines and let~$\ttt{U}$ be a
	Turing machine satisfying
	\begin{equation}
		\label{eq:ingen-turing-machine}
		\ttt{U}(s, x, 1^t)
		=
		\begin{cases}
			(1,\langle s \rangle(x)) &\text{if } \langle s \rangle \text{
				halts within } $t$ \text{ steps on input } x
			\\
			0 &\text{else}
		\end{cases}
	\end{equation}
	and which runs in polynomial time.\footnote{%
		Technically, $\ttt{U}$ should receive
		$(r(s),01,r(x),01,1^t)$ as input, where
		$r : \{0,1\}^* \to \{0,1\}^*$ is the map which repeats each
		element of the argument twice, i.e.~$r(010) = 001100$.
		This, with the help of the separators $01$, allows
		$\ttt{U}$ to unambiguously parse and separate $s$, $x$, and
		$1^t$.
		However, for simplicity, we omit these extra details in this
		proof.}
	In other words, $\ttt{U}$ is an efficient universal Turing machine
	with a time bound.\footnote{%
		Such a machine $\ttt{U}$ and encoding $\langle \cdot \rangle$
		can be found implicitly in \cite{NW06}.
		In particular, Theorem 7 of \cite{NW06} shows that the run time
		of their universal Turing machine $U_{3,11}$ is polynomial in
		the run time of the simulated machine $M$ and the number of
		states of $M$.
		Adding a mechanism to count the number of simulated steps can
		be done with at most a polynomial overhead.
		Moreover, they implicitly provide an injective encoding
		$e : \mc{T} \to \{0,1\}^*$ by, essentially, simply listing all
		transition rules of the Turing machine in a prescribed manner.
		From this encoding $e$, we can define the surjection
		$\langle \cdot \rangle : \mc{T} \to \{0,1\}$ by
		$\langle s \rangle = e^{-1}(s)$ if $s$ is in the image of $e$,
		and where $\langle s \rangle$ is otherwise the $2$ state Turing
		machine which immediately and always transitions from its
		initial state to its halting state in the first step.
		Note that this implies that the number of states of $\langle s
		\rangle$ is at most $\abs{s} + 2$.
		As a final note, the Turing machines of \cite{NW06} use a
		non-binary alphabet.
		However, this can be transformed to a Turing machine using the
		alphabet $\{0,1\}$ with a constant multiplicative overhead in
		time and space usage using standard techniques \cite{AB09}.
	}
	Further, let $b : \N \to \{0,1\}^*$ be the natural bijection between
	the non-negative integers and the list of all finite bit strings
	taken in lexicographic order, starting with
	the empty string, i.e.~$b(0) = \varepsilon$, $b(1) = 0$, $b(2) = 1$,
	and $b(3) = 00$.

	Let $(\alpha_n)_{n \in \N}$ and $(\beta_n)_{n \in \N}$ be two
	sequence of non-negative integers such that the following three
	conditions are satisfied. First, $\alpha_n, \beta_n \leq n$ for all
	$n \in \N$. Second, $n \mapsto (\alpha_n, \beta_n)$ can be computed
	efficiently in $n$. Third, for every $a \in \N$, there are
	infinitely many distinct $b \in \N$ such that $(a,b)$ is in the
	image of $n \mapsto (\alpha_n, \beta_n)$.\footnote{%
		As a concrete example, we can take $(\alpha_n)_{n \in \N}$ to be
		the sequence obtained by concatenating each finite sequence of
		the form $0, 1, 2, \ldots, n$ for all $n \in \N$ and take
		$(\beta_n)_{n \in \N}$ to be the sequence obtained by repeating,
		in increasing order, every element $n \in \N$ exactly $n + 1$
		times.}
	We define $\Phi$ by $n \mapsto (\langle b(\alpha_n) \rangle,
	r_{\beta_n})$.

	Note that by our assumptions on $(\alpha_n)_{n \in \N}$ and
	$(\beta_n)_{n \in \N}$, the map $\Phi$ we have constructed here
	satisfies the assumptions made of it in the proof of
	\cref{th:expo-ingenerable-gate-set}, namely that for every
	$\ttt{T} \in \mc{T}$ there are infinitely many $r \in \mc{R}$ such
	that $(\ttt{T}, r)$ is in the image of $\Phi$.

	Now that we have fixed $\Phi$, the $\ell$-sequence
	$(s_n)_{n \in \N}$ constructed in the proof of
	\cref{th:expo-ingenerable-gate-set} is fixed. It now suffices to
	show how it can be computed in triple exponential-time.

	Recall from the proof of \cref{th:expo-ingenerable-gate-set} that
	$s_n$ is the first element, in lexicographic order, of the non-empty
	set
	$\{0,1\}^{\ell(n)} \setminus \mc{S}_n$ where $\mc{S}_n = \cup_{t =
	0}^n \mc{S}_{t,n}$ and
	\begin{equation}
		\mc{S}_{t,n}
		=
		\begin{cases}
			\left\{
				x \in \{0,1\}^{\ell(n)}
				:
				\bra{x} \tau(t)(1^n)(\varepsilon)\ket{x}
				\geq
				q(n)
			\right\}
			&
			\parbox[t]{0.32\textwidth}{if, on input of $1^n$, $\tau(t)$
			halts in at most $\rho(t)(n)$ steps and yields a $(0,
			\ell(n))$-circuit.}
			\\
			\varnothing &\text{else.}
		\end{cases}
	\end{equation}
	where $\tau$ and $\rho$ are the projections on the first and second
	component of $\Phi$, respectively. Thus, to compute $s_n$, it
	suffices to be able to check the membership of a string in the sets
	$\mc{S}_{t,n}$ for all $t \leq n$.
	We describe a Turing machine $\ttt{T}'$ running in triple
	exponential-time which finds the first string, in lexicographic
	order, which is not in any of these sets.

	At a high-level, the triple exponential
	run time of $\ttt{T}'$ can be explained as follows.
	The first exponential is due to the need to simulated other Turing
	machine running in exponential-time.
	The second exponential is essentially due to the fact that
	$\ttt{T}'$ will be simulating more and more complex Turing machines
	as $n$ grows (specifically due to the fact that $n \mapsto r_n(n)$
	is not exponential in $n$, but is double exponential).
	The third exponential is due to the ``brute force'' numerical
	computation to analyse the quantum circuits produced by the
	simulated Turing machines.

	Formally, let $\ttt{T}'$ be a Turing machine which implements the
	following algorithm on input of $1^n$:
	\begin{enumerate}
		\item
			Compute $\ell(n)$.

			(By assumption, this can be done efficiently.)
		\item
			Initialize a string $s = 0^{\ell(n)}$ and a counter $t = 0$,
			both of which will be updated as the algorithm proceeds.

			(This can be done efficiently in $\ell(n)$.)
		\item
			If $t > n$, go to step 4. Else, do the following:
			\begin{enumerate}
				\item
					Compute $\alpha_t$ and $\beta_t$.

					(This can be done efficiently in $t$, and so
					efficiently in $n$ as $t \leq n$.)
				\item
					Run $\ttt{U}(b(\alpha_t), 1^n,
					1^{r_{\beta_t}(n)})$ and let $z$ denote the output.
				
					(By assumption, there exists a polynomial $p$, which
					we can assume to be non-decreasing, such that this
					takes at most
					$p(\abs{b(\alpha_t)} + n + r_{\beta_t}(n))$ steps.
					As $\alpha_t, \beta_t \leq t \leq n$,
					$\abs{b(n)} \leq n$, and $r_i \leq r_{i + 1}$, this
					then takes at most $p(2n + r_n(n))$ steps. Note that
					$r_n(n) = 2^{(n + 2)^n}$. Since $(n+2)^n \in
					\mc{O}(2^{n^2})$, we conclude that $r_n(n)$ and
					$p(2n + r_n(n))$ are at most double exponential in
					$n$.)
				\item
					If the first bit of $z$ is $0$, which is to say that
					the Turing machine being simulated did not
					terminate within the desired number of steps, do
					nothing. Else, let $z'$ be all but the first bit of
					$z$ and check that $z'$ encodes a $(0,\ell(n))$
					circuit. If it does not, do nothing. If it does,
					then do the following:

					(This check can be done efficiently in $\ell(n)$
					and the length of $z'$. Since $\ell(n)$ is
					at at most double exponential in $n$ and the
					length of $z'$ is at most double exponential
					in $n$, as it is the result of a simulation
					which ran for at most $r_n(n)$ simulated
					steps, then this check can be executed in
					time double exponential in $n$.)
					\begin{enumerate}
						\item
							Let $C$ be the $(0,\ell(n))$ circuit
							described by $z'$.
							Check if $\bra{s} C(\varepsilon) \ket{s} <
							p(n) \cdot 2^{-\ell(n)}$.
							If so, increment $t$ by $1$ and go back to
							step 3.
							If not, update $s$ to be the next string of
							length $\ell(n)$, in lexicographic order,
							and execute this step again.

							(The time complexity of this check is
							dominated by the complexity of computing the
							value $\bra{s} C(\varepsilon) \ket{s}$.
							Recall that $C$ is described by $z'$ which
							was produced after at most $r_n(n)$
							simulated steps. Hence,
							$\abs{z'} \leq r_n(n)$ and so
							$C$ can have at most $r_n(n)$ gates. Thus,
							we can compute $\bra{s} C(\varepsilon)
							\ket{s}$ in time exponential in $r_n(n)$,
							which is triple exponential in $n$ as
							$r_n(n)$ is double exponential in $n$.)
					\end{enumerate}
			\end{enumerate}
		\item
			Output $s$.
	\end{enumerate}
	The fact that $\ttt{T}'(1^n)$ indeed halts and outputs $s_n$ follows
	from the proof of \cref{th:expo-ingenerable-gate-set}. The
	time complexity is dominated by step 3.c.i, with each execution of
	that step taking time which is at most triple exponential in $n$.
	Furthermore, this step is executed at most $(n+1) 2^{\ell(n)}$
	times, which is itself at most triple exponential in $n$. We
	conclude that $\ttt{T}'$ runs in at most triple exponential time on
	input of $1^n$, which is the desired result.
\end{proof}

In \cref{sc:mlr}, we study some links between computably ingenerable
sequences and Martin-L\"of random sequences. Recall that a Martin-L\"of sequence is,
essentially, a fixed and infinite sequence of bits $w\in\{0,1\}^\infty$
which passes all possible computable tests of randomness. We show in
this appendix that every Martin-L\"of random sequence yields a weakly
computably ingenerable $\ell$-sequence following the application of a
natural bijection and that there exists computably ingenerable
$\ell$-sequences which are not Martin-L\"of random under the action of
the same bijection. We also show the existence of weakly computably
ingenerable sequences which are not computably ingenerable.

%~~~~~~~~~~~~~~~~~~~~~~~~~~~~~~~~~~~~~~~~~~~~~~~~~~~~~~~~~~~~~~~~~~~~~~%
\subsection{Replacing Uniformly Random Strings with Ingenerable Strings}
%~~~~~~~~~~~~~~~~~~~~~~~~~~~~~~~~~~~~~~~~~~~~~~~~~~~~~~~~~~~~~~~~~~~~~~%

The goal of this section is to prove \cref{th:random-to-ingenerable}
which is stated below. This theorem can be understood as stating that if
a given computation succeeds with sufficiently small probability over
uniformly random inputs, than applying that same computation on fixed
inputs determined by an ingenerable $\ell$-sequence also yields a
negligible success probability, provided that $\ell$ is large enough.

\begin{theorem}
\label{th:random-to-ingenerable}
	Let $\left(s_n\right)_{n \in \N}$ be a computably (respectively,
	exponential-time) ingenerable $\ell$-sequence for any $\ell : \N \to
	\N$ (respectively, $\ell \in \mc{O}(n^k)$ for some $k \in \N$). Then
	for any uniform (respectively, exponential-time) family
	$\left(C_n\right)_{n\in\N}$ of~$(\ell,1)$-circuits, there exists a
	polynomial $p$ such that for all sufficiently large values of $n$
	(where ``sufficiently large'' might depend on $(C_n)_{n\in\N}$) we
	have that
	\begin{equation}
		\bra{1} C_n(s_n) \ket{1}
		\leq
		p(n)
		\cdot
		\left(
			2^{-\ell(n)}
			+
			\E_{x \gets \{0,1\}^{\ell(n)}}
			\bra{1} C_n(x) \ket{1}
		\right).
	\end{equation}
	In particular, if $\ell \in \omega(\log)$ and $
		n
		\mapsto
		\E_{x \gets \{0,1\}^{\ell(n)}} \bra{1}C_n(x)\ket{1}
	$ is negligible, then so is
	$n \mapsto \bra{1}C_n(s_n) \ket{1}$.
\end{theorem}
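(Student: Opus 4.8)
The plan is to prove the contrapositive-flavored statement by a reduction: suppose, toward a contradiction, that for \emph{every} polynomial $p$ the displayed inequality fails for infinitely many $n$; I will then build a uniform (respectively, exponential-time) $(0,\ell)$-circuit family that generates $s_n$ with probability at least $p_\mc{G}(n)\cdot 2^{-\ell(n)}$ for some polynomial $p_\mc{G}$ and infinitely many $n$, contradicting ingenerability. The construction of the generator is the heart of the argument. Given the $(\ell,1)$-circuit $C_n$, the idea is to run $C_n$ ``in superposition over all inputs'': prepare the uniform superposition $\frac{1}{\sqrt{2^{\ell(n)}}}\sum_{x}\ket{x}$ on an input register (together with a copy of $x$ kept in a second register), coherently apply a purified dilation of $C_n$, and then postselect (via measurement) on the output qubit being $\ket{1}$. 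Conditioned on that outcome, the leftover state on the $x$-register is a state whose amplitude on $\ket{x}$ is proportional to $\sqrt{\bra{1}C_n(x)\ket{1}}$ (up to the details of purification), and in particular the probability of then measuring $s_n$ in the computational basis is $\bra{1}C_n(s_n)\ket{1}$ divided by the postselection normalization $\sum_x \bra{1}C_n(x)\ket{1} = 2^{\ell(n)}\cdot\E_x\bra{1}C_n(x)\ket{1}$.

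From here the arithmetic is the routine part. If the hypothesized failure holds with polynomial $p$, then $\bra{1}C_n(s_n)\ket{1} > p(n)\cdot(2^{-\ell(n)} + \E_x\bra{1}C_n(x)\ket{1})$, so in particular $\bra{1}C_n(s_n)\ket{1} > p(n)\cdot\E_x\bra{1}C_n(x)\ket{1}$, and the success probability of the postselected generator at producing $s_n$ is at least
\begin{equation}
	\frac{\bra{1}C_n(s_n)\ket{1}}{2^{\ell(n)}\cdot\E_x\bra{1}C_n(x)\ket{1}}
	>
	\frac{p(n)}{2^{\ell(n)}}.
\end{equation}
Choosing $p$ larger than the polynomial $p_\mc{G}$ guaranteed by ingenerability (for the universal gate set $\mc{G}$ underlying $(C_n)_n$, plus the gates needed for the superposition-preparation and measurement wrappers, absorbed via Solovay--Kitaev) then contradicts the definition of ingenerability, since the inequality $\bra{s_n}C'_n(\varepsilon)\ket{s_n} < p_\mc{G}(n)\cdot 2^{-\ell(n)}$ is supposed to hold for all but finitely many $n$. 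One must check that the generator family is uniform (resp.\ exponential-time): this holds because purifying, adding an input-superposition stage, and appending a computational-basis measurement with postselection are all uniform, efficient transformations of the circuit description, and in the exponential-time case the hypothesis $\ell\in\mc{O}(n^k)$ ensures the input register and the whole circuit stay of exponential size. The very last sentence of the theorem is then immediate: if $\ell\in\omega(\log)$ then $2^{-\ell(n)}$ is negligible, a sum of two negligible functions multiplied by a polynomial is negligible, and $\bra{1}C_n(s_n)\ket{1}$ is bounded above by it.

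The main obstacle I anticipate is making the ``postselect on $\ket{1}$'' step legitimate within the circuit model of the paper, which does not obviously include postselection as a primitive. Since $\E_x\bra{1}C_n(x)\ket{1}$ may be exponentially small, naive postselection-by-measurement succeeds only with exponentially small probability, and a single run of the generator would then have success probability $\bra{1}C_n(s_n)\ket{1}$ itself (measuring both the output flag \emph{and} the $x$-register), which is exactly what we are trying to lower-bound and gives no contradiction directly. The fix is to compare the two quantities on the \emph{same} run: a single circuit that measures the flag qubit and the $x$-register jointly produces $(1, s_n)$ with probability $\bra{1}C_n(s_n)\ket{1}$ and produces a $1$-flag at all with probability $2^{\ell(n)}\E_x\bra{1}C_n(x)\ket{1}$, and one argues there must be infinitely many $n$ where the \emph{conditional} probability is large; alternatively, and more cleanly, one handles the regime $\E_x\bra{1}C_n(x)\ket{1} \le 2^{-\ell(n)}$ separately (there the hypothesis forces $\bra{1}C_n(s_n)\ket{1} > p(n)2^{-\ell(n)}\ge p(n)\E_x\bra{1}C_n(x)\ket{1}$, so the flag-and-register measurement already wins) from the regime where $\E_x\bra{1}C_n(x)\ket{1}$ is not too small (there amplitude amplification, which is uniform and costs only a $\mathrm{poly}$-or-exponential-in-$\ell$ overhead, boosts the postselection to constant success while only polynomially shrinking the conditional weight on $s_n$). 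Getting the bookkeeping of these two regimes and the amplification overhead to land inside a single polynomial $p$ is where the care is needed; everything else is assembly.
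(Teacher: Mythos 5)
Your route is genuinely different from the paper's. The paper never runs $C_n$ in superposition and never postselects: it converts the decision circuit into a search circuit \emph{classically} (\cref{th:decision-to-search}) by iterating over all $x \in \{0,1\}^{\ell(n)}$, running and measuring $C_n(x)$, collecting the accepting strings into a set, and outputting a uniformly random element of that set; Jensen's inequality gives $\bra{s_n}\tilde{C}_n(\varepsilon)\ket{s_n} \geq \bra{1}C_n(s_n)\ket{1}\big/\bigl(1+\sum_{x \neq s_n}\bra{1}C_n(x)\ket{1}\bigr)$, and applying ingenerability to the family $(\tilde{C}_n)_n$ yields the stated bound directly --- no contradiction framing, no postselection normalization, no case split, and no Solovay--Kitaev (which you also do not need, since \cref{df:ingenerable} quantifies over every gate set, so you may simply enlarge the gate set to include your wrapper gates). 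Your quantum postselection idea can be pushed through, but not in the form you wrote it.

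The concrete gap is in your ``fix'' paragraph. For the single-shot circuit that prepares the uniform superposition with a kept copy of $x$, applies the dilation, and measures the flag and the $x$-register, the probability of the joint outcome $(1, s_n)$ is $2^{-\ell(n)}\bra{1}C_n(s_n)\ket{1}$, not $\bra{1}C_n(s_n)\ket{1}$ (indeed, if it were the latter, ingenerability would already finish the proof, contradicting your own remark that it ``gives no contradiction directly''). Consequently your regime-$1$ argument fails: when $\E_{x}\bra{1}C_n(x)\ket{1} \leq 2^{-\ell(n)}$, the single-shot generator outputs $s_n$ with probability at most $2^{-\ell(n)}$, which can never exceed the ingenerability threshold $p_\mc{G}(n)\cdot 2^{-\ell(n)}$, so ``the flag-and-register measurement already wins'' is false; that regime needs repetition or amplification just as much as the other. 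The repair exists and removes the case split: repeat the prepare-and-postselect trial $2^{\ell(n)}$ times independently and output the register of the first trial whose flag is $1$ (uniform fallback otherwise). Writing $\mu = \E_{x}\bra{1}C_n(x)\ket{1}$, this outputs $s_n$ with probability at least $\min\{\tfrac{1}{2},\, \tfrac{1}{2}\mu 2^{\ell(n)}\}\cdot \bra{1}C_n(s_n)\ket{1}/(2^{\ell(n)}\mu)$, and ingenerability then gives $\bra{1}C_n(s_n)\ket{1} \leq 2 p_\mc{G}(n)\bigl(2^{-\ell(n)} + \mu\bigr)$ in both regimes at once, with no amplitude amplification (and hence none of the unknown-angle/over-rotation issues of amplifying a dilated channel). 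Your uniformity and exponential-time bookkeeping (using $\ell \in \mc{O}(n^k)$) and the final negligibility step are fine as written.
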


The proof of \cref{th:random-to-ingenerable} is simply an application of
the following lemma which considers a naive and non-efficient way
to transform a circuit $C$ for a decision problem into a circuit
$\tilde{C}$ for a search problem. It then gives a lower bound on the
probability that $\tilde{C}$ outputs any given string.

\begin{lemma}
\label{th:decision-to-search}
	Let $n \in \N$ and let $C$ be an $(n,1)$-circuit. Now, let
	$\tilde{C}$ be a $(0,n)$-circuit which implements the following
	algorithm:
	\begin{enumerate}
		\item
			Initialize an empty set $\mc{S}$.
		\item
			Iterating over all strings $x \in \{0,1\}^n$, do the
			following: Run $C(x)$ and measure the output in the
			computational basis. If the result is $1$, add $x$ to
			$\mc{S}$. Else, do nothing.
		\item
			Sample uniformly at random a string $x$ from $\mc{S}$ and
			output it. If $\mc{S}$ is empty, sample $\tilde{x}$
			uniformly at random from $\{0,1\}^n$.
	\end{enumerate}
	Then, for any $s \in \{0,1\}^n$, we have that
	\begin{equation}
		\frac{
			\bra{1}C(s)\ket{1}
		}{
			1
			+
			\sum_{x \in \{0,1\}^n \setminus \{s\}}
			\bra{1}C(x)\ket{1}
		}
		\leq
		\bra{s}\tilde{C}(\varepsilon)\ket{s}
		.
	\end{equation}
\end{lemma}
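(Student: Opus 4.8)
The plan is to treat the random set $\mc{S}$ built in step~2 of $\tilde{C}$ as a random subset of $\{0,1\}^n$ and to lower-bound the probability that step~3 returns $s$. First I would write $p_x = \bra{1}C(x)\ket{1}$, i.e.\ the probability that a single run of $C(x)$, with its output qubit measured in the computational basis, yields $1$. Since $\tilde{C}$ invokes a fresh copy of $C$ on each input $x \in \{0,1\}^n$, the events ``$x \in \mc{S}$'' are mutually independent and each occurs with probability $p_x$; thus $\mc{S}$ has the law of the random set obtained by keeping every $x$ independently with probability $p_x$.

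Next I would compute the output probability by conditioning on $\mc{S}$: step~3 outputs $s$ with probability $1/\abs{\mc{S}}$ when $s \in \mc{S}$, with probability $2^{-n}$ when $\mc{S} = \varnothing$, and with probability $0$ otherwise. Discarding the nonnegative contribution of the event $\mc{S} = \varnothing$, and letting $T$ denote the number of strings $x \in \{0,1\}^n \setminus \{s\}$ that land in $\mc{S}$, I would use that $\abs{\mc{S}} = 1 + T$ whenever $s \in \mc{S}$ together with the fact that the event $s \in \mc{S}$ is independent of $T$ (which depends only on the coordinates $x \neq s$) to get
\[
	\bra{s}\tilde{C}(\varepsilon)\ket{s}
	\;\geq\;
	\Pr[\,s \in \mc{S}\,]\cdot \E\!\left[\frac{1}{1+T}\right]
	\;=\;
	p_s \cdot \E\!\left[\frac{1}{1+T}\right].
\]
Finally, since $t \mapsto 1/(1+t)$ is convex on $[0,\infty)$, Jensen's inequality gives $\E[1/(1+T)] \geq 1/(1+\E[T])$ with $\E[T] = \sum_{x \in \{0,1\}^n \setminus \{s\}} p_x$, and combining the two displays produces
\[
	\bra{s}\tilde{C}(\varepsilon)\ket{s}
	\;\geq\;
	\frac{p_s}{1 + \sum_{x \in \{0,1\}^n \setminus \{s\}} p_x}
	\;=\;
	\frac{\bra{1}C(s)\ket{1}}{1 + \sum_{x \in \{0,1\}^n \setminus \{s\}}\bra{1}C(x)\ket{1}},
\]
which is exactly the claimed inequality.

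I do not expect a serious obstacle here: the only genuine inequalities used are discarding the $\mc{S} = \varnothing$ term and the single appeal to Jensen's inequality. The one point that needs care is the probabilistic bookkeeping — arguing cleanly that the $\abs{\mc{S}}$ in the denominator splits as $1 + T$ once one conditions on $s \in \mc{S}$, and that $T$ is independent of that event so no conditional distribution actually changes — and the (routine) observation that the successive runs of $C$ inside $\tilde{C}$ are mutually independent Bernoulli trials. Everything else is immediate from the definition of $\tilde{C}$.
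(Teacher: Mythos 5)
Your argument is correct and is essentially the paper's own proof: the paper also discards the $\mc{S}=\varnothing$ contribution, factors out $\Pr[s\in\mc{S}]=\bra{1}C(s)\ket{1}$ by introducing the auxiliary random set $\mc{S}'$ of strings other than $s$ (your $T=\abs{\mc{S}'}$), applies Jensen's inequality to $\E\left[1/(1+\abs{\mc{S}'})\right]$, and evaluates $\E\abs{\mc{S}'}$ as the sum of the acceptance probabilities. The only difference is notational (a counting variable and an independence remark versus an explicit product-form factorization of $\Pr[\mc{S}=\mc{X}]$), so nothing of substance separates the two proofs.
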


We give the proof of the above lemma in
\cref{sc:decision-to-search}. The proof simply neglects the case
where $\mc{S}$ is empty by the time the algorithm executes step 3 and
notes that the left-hand side of the above inequality is essentially $
	\Pr\left[s \text{ is outputted.}\;|\; s \text{ is in } \mc{S}.\right]
	\cdot
	\Pr\left[s\text{ in } \mc{S}\right]
$, up to an application of Jensen's lemma.

We can now prove \cref{th:random-to-ingenerable}.

\begin{proof}[Proof of \cref{th:random-to-ingenerable}]
	Let $(\tilde{C}_n)_{n \in \N}$ be a $(0,\ell)$-circuit family such
	that $\tilde{C}_n$ implements the algorithm described in
	\cref{th:decision-to-search} with respect to $C_n$ for all $n\in\N$.

	Note that if $(C_n)_{n \in \N}$ is an exponential-time family and
	$\ell \in \mc{O}(n^k)$, then $(\tilde{C}_n)_{n \in \N}$ is an
	exponential-time circuit family. Otherwise, if $(C_n)_{n \in \N}$
	is a uniform family then so is $(\tilde{C}_n)_{n \in \N}$.

	We then have that
	\begin{equation}
		\bra{1}C_n(s_n)\ket{1}
		\leq
		\bra{s_n}\tilde{C}_n(\varepsilon)\ket{s_n}
		\left(
			1
			+
			\sum_{x \in \{0,1\}^{\ell(n)} \setminus \{s_n\}}
				\bra{1}C_n(x)\ket{1}
		\right)
	\end{equation}
	for all $n \in \N$. As $(s_n)_{n \in \N}$ is a computably
	(respectively, exponential-time) ingenerable $\ell$-sequence, there
	exists a polynomial $p$ such that, for all sufficiently large values
	of $n$, we have that $
		\bra{s_n} \tilde{C}_n(\varepsilon) \ket{s_n}
		\leq
		p(n) \cdot 2^{-\ell(n)}
	$. Thus,
	\begin{equation}
		\bra{1}C_n(s_n)\ket{1}
		\leq
		\frac{p(n)}{2^{\ell(n)}}
		\left(
			1
			+
			\sum_{x \in \{0,1\}^{\ell(n)} \setminus \{s_n\}}
				\bra{1}C_n(x)\ket{1}
		\right)
	\end{equation}
	for all sufficiently large values of $n$. Adding the
	$x = s_n$ term to the sum and distributing the
	$2^{-\ell(n)}$ factor yields the first desired result.

	Furthermore, if $\ell\in\omega(\log)$, then $2^{-\ell}$
	is negligible. If
	$\E_{x \gets \{0,1\}^{\ell(n)}} \bra{x} C(x) \ket{x}$ is
	also negligible, then the upper bound is a polynomial times the sum
	of two negligible functions and thus is itself negligible.
\end{proof}

%~~~~~~~~~~~~~~~~~~~~~~~~~~~~~~~~~~~~~~~~~~~~~~~~~~~~~~~~~~~~~~~~~~~~~~%
\subsection{Derandomizing a Cloning Task via Ingenerable Sequences}
\label{sc:derandomize-cloning-example}
%~~~~~~~~~~~~~~~~~~~~~~~~~~~~~~~~~~~~~~~~~~~~~~~~~~~~~~~~~~~~~~~~~~~~~~%

In this section, we prove that there exists a sequence of fixed states
$(\ketbra{\psi_n})_{n \in \N}$ which cannot be cloned by any
uniform family of quantum circuits~$(C_n)_{n \in \N}$. This
will be a simple demonstration of the ideas we will be using in our
constructions of uncloneable advice in \cref{sc:advice}.

The following lemma was first shown by Molina, Vidick, and Watrous
\cite{MVW13} to formalize the proof of security of Wiesner's quantum
money scheme \cite{Wie83}. It can be interpreted as a bound on how well
a party can clone a single Wiesner state $\ket*{x^\theta}$ selected
uniformly at random.

\begin{lemma}
\label{th:money}
	Let $n \in \N$ and let $C$ be an $(n,2n)$-circuit. Then,
	\begin{equation}
		\E_{\substack{x \gets \{0,1\}^n \\ \theta \gets \{0,1\}^n}}
				\bra*{x^\theta}\bra*{x^\theta}
					C\left(\ketbra*{x^\theta}\right)
				\ket*{x^\theta}\ket*{x^\theta}
		\leq
		\left(\frac{3}{4}\right)^n.
	\end{equation}
\end{lemma}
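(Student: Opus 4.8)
The plan is to reduce the averaged cloning bound to a statement about a single-qubit cloning channel, using the tensor-product structure of Wiesner states. First I would observe that, by linearity, the quantity on the left-hand side equals $\bra{\Psi} \Phi \ket{\Psi}$ evaluated in an appropriate sense, where $\Phi$ is the channel that takes $\ketbra{x^\theta}$, averaged over $x,\theta$, to the projector $\ketbra{x^\theta}\otimes\ketbra{x^\theta}$. Concretely, I would rewrite the average over $(x,\theta)\in\{0,1\}^n\times\{0,1\}^n$ as an $n$-fold tensor product of the corresponding single-qubit average: the ensemble $\{\ketbra{x^\theta}\}$ with $x,\theta$ uniform factorizes as $n$ independent copies of the single-qubit ensemble $\{\ketbra{0},\ketbra{1},\ketbra{+},\ketbra{-}\}$, each with probability $1/4$.

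The key step is then to bound the single-qubit quantity. Define the positive operator $Q = \frac14\sum_{a\in\{0,1\},\,b\in\{0,1\}} \ketbra{a^b}\otimes(\ketbra{a^b}\otimes\ketbra{a^b})$ acting on the input-plus-two-output registers (I am being slightly loose about which registers are which), or more precisely consider the task of maximizing $\sum_{a,b} \tfrac14 \bra{a^b}\bra{a^b} \Lambda(\ketbra{a^b}) \ket{a^b}\ket{a^b}$ over single-qubit channels $\Lambda$. I would argue this optimum is exactly $3/4$: it is achieved, e.g., by the trivial channel that measures in the computational basis and outputs two copies of the result (which succeeds on the two computational states and succeeds with probability $1/2$ on each of $\ket{+},\ket{-}$, giving $\tfrac14(1+1+\tfrac12+\tfrac12)=\tfrac34$ — wait, that gives $3/4$ on the nose only if the off-basis contributions are $1/2$; I should double-check, but the standard computation gives $3/4$). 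The upper bound on the optimum would follow either from an explicit SDP-duality argument (exhibiting a dual feasible point certifying $3/4$) or from the operator-norm bound $\|Q\|_\infty = 3/4$ after rewriting the objective as $\Tr(Q \cdot J(\Lambda))$ against the Choi state, since the Choi state has bounded overlap with $Q$. This single-qubit bound of $3/4$ is exactly the content of \cite{MVW13} and is where the real work lies.

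Finally I would lift the single-qubit bound to $n$ qubits by a tensor-power / subadditivity argument: since the $n$-qubit optimization is over \emph{all} $(n,2n)$-channels $C$ — not just tensor powers of single-qubit channels — I cannot simply say the optimum multiplies. Instead I would use the standard trick that the objective is a linear functional $\Tr(Q^{\otimes n} J(C))$ where $J(C)$ is (proportional to) a density operator and $Q$ is the single-register operator above; then the value is at most $\|Q^{\otimes n}\|_\infty \cdot (\text{normalization}) = \|Q\|_\infty^n$ times the normalization, which collapses to $(3/4)^n$. The main obstacle is getting this last multiplicativity step clean: one must check that the relevant normalization factors are exactly right so that the bound is $(3/4)^n$ rather than something weaker, and that the operator-norm argument genuinely applies to arbitrary channels rather than product channels. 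I expect the cleanest route is to cite the single-qubit SDP value $3/4$ from Molina--Vidick--Watrous and invoke their observation that the $n$-fold game value is the $n$-th power, i.e., to essentially quote \cite{MVW13} for both the single-qubit bound and its tensor-power behavior, with the contribution here being only the reformulation in the paper's circuit language.
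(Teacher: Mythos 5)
The paper gives no proof of this lemma at all: it is recalled as a known result of Molina, Vidick, and Watrous \cite{MVW13} (with only the remark that it holds for arbitrary channels and is stated for circuits merely for notational consistency), so your bottom line of simply quoting \cite{MVW13} is exactly the paper's approach. For what it's worth, your sketched Choi-operator route does go through even for arbitrary, non-product channels, since the relevant operator factorizes as $Q^{\otimes n}$ and $\Tr J(C) = 2^n$ for every trace-preserving map; but two side computations in your sketch are off --- the measure-and-copy channel attains $5/8$, not $3/4$ (the optimum $3/4$ requires a nontrivial cloner), and $\|Q\|_\infty = 3/8$, the value $3/4$ arising only after multiplying by the normalization $\Tr J(\Lambda) = 2$ --- neither of which affects the upper bound you need.
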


Note that the lemma holds for all quantum channels, but we restrict
ourselves to circuits to maintain consistency with the rest of this
work.

We can now ``derandomize'' \cref{th:money} by replacing the uniformly
random Wiesner state $\ket{x^\theta}$ with a fixed one chosen according
to an ingenerable sequence. It then suffices to invoke
\cref{th:random-to-ingenerable} to obtain the desired result.

\begin{theorem}
\label{th:money-derandomized}
	Let $(s_n)_{n \in \N}$ be a computably ingenerable
	$2n$-sequence and parse each $s_n$ as a pair of bit
	strings of length $n$, \textit{i.e.}: $
		s_n
		=
		(x_n, \theta_n)
	$ for $x_n, \theta_n \in \{0,1\}^n$ for all
	$n \in \N$. Then, for any uniform family
	of~$(n, 2n)$-circuits $(C_n)_{n \in \N}$, we
	have that
	\begin{equation}
		n
		\mapsto
		\bra*{x_n^{\theta_n}}
		\bra*{x_n^{\theta_n}}
		C_n\left(
			\ketbra*{x_n^{\theta_n}}
		\right)
		\ket*{x_n^{\theta_n}}
		\ket*{x_n^{\theta_n}}
	\end{equation}
	is a negligible function.
\end{theorem}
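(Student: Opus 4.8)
The plan is to derive this from \cref{th:random-to-ingenerable}, in exactly the same spirit as \cref{th:money-derandomized} ought to follow from \cref{th:money} by ``removing the randomness''. The obstacle is that \cref{th:random-to-ingenerable} speaks about circuit families whose input is a \emph{classical} string, whereas here $C_n$ is fed the quantum state $\ketbra*{x^\theta}$ and we compare its output against $\ket*{x^\theta}\tensor\ket*{x^\theta}$. So the first thing I would do is repackage the whole cloning experiment into a single bit-valued circuit whose classical input is the description $(x,\theta)$ of the Wiesner state.

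Concretely, fix a uniform family of $(n,2n)$-circuits $(C_n)_{n\in\N}$ and, for each $n$, let $D_n$ be the $(2n,1)$-circuit that on input $s=(x,\theta)\in\{0,1\}^n\times\{0,1\}^n$ does the following: (i) copy $x$ into a fresh $n$-qubit register and, reading $\theta$, apply a Hadamard to the $i$-th qubit of that register whenever $\theta_i=1$, thereby preparing $\ket*{x^\theta}$ while retaining copies of $x$ and $\theta$; (ii) run $C_n$ on this $n$-qubit register; (iii) apply $H^{\theta}$ (again controlled on the retained copy of $\theta$) to each of the two $n$-qubit output registers of $C_n$; (iv) measure both registers in the computational basis and output $1$ iff both outcomes equal $x$. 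Since $H^{\theta}\ket*{x^\theta}=\ket{x}$, one checks that for every $s=(x,\theta)$,
\begin{equation}
	\bra{1}D_n(s)\ket{1}
	=
	\bra*{x^\theta}\bra*{x^\theta}
		C_n\left(\ketbra*{x^\theta}\right)
	\ket*{x^\theta}\ket*{x^\theta}.
\end{equation}

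Next I would check that $(D_n)_{n\in\N}$ is a uniform $(2n,1)$-circuit family: steps (i), (iii), (iv) are fixed procedures built from CNOTs, controlled-Hadamard gates, and computational-basis measurements whose wiring is read off the input qubits, so a description of $D_n$ can be produced from $1^n$ in polynomial time by composing these around a description of $C_n$. Then, identifying $\{0,1\}^n\times\{0,1\}^n$ with $\{0,1\}^{2n}$ and a uniformly random $(x,\theta)$ with a uniformly random $2n$-bit string, \cref{th:money} gives
\begin{equation}
	\E_{s\gets\{0,1\}^{2n}}\bra{1}D_n(s)\ket{1}
	\leq
	(3/4)^n,
\end{equation}
which is negligible. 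Since $(s_n)_{n\in\N}$ is a computably ingenerable $2n$-sequence and $2n\in\omega(\log)$, \cref{th:random-to-ingenerable} applied to $(D_n)_{n\in\N}$ shows that $n\mapsto\bra{1}D_n(s_n)\ket{1}$ is negligible; writing $s_n=(x_n,\theta_n)$, this map is exactly the one in the statement, which finishes the argument.

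The one genuinely delicate point --- and the step I would be most careful about --- is the uniformity of $(D_n)_{n\in\N}$, since the state-preparation and Wiesner-basis measurement stages depend on the \emph{input string} and not merely on $n$; realizing them as controlled operations with the input qubits carrying $\theta$ as controls resolves this (and, I note, the same reduction with ``uniform'' replaced by ``exponential-time'' would prove the exponential-time analogue, using the exponential-time clause of \cref{th:random-to-ingenerable}). Everything else is bookkeeping on top of the two cited results \cref{th:money} and \cref{th:random-to-ingenerable}.
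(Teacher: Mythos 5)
Your proposal is correct and follows essentially the same route as the paper's proof: the paper likewise wraps the cloning experiment into a uniform $(2n,1)$-circuit family $\tilde{C}_n = R_n \circ (\mathrm{Id}\otimes\mathrm{Id}\otimes C_n)\circ S_n$ (a state-preparation stage and a Wiesner-basis verification stage sandwiching $C_n$, exactly your $D_n$), establishes the same identity $\bra{1}\tilde{C}_n(x\otimes\theta)\ket{1}=\bra*{x^\theta}\bra*{x^\theta}C_n(\ketbra*{x^\theta})\ket*{x^\theta}\ket*{x^\theta}$, and then applies \cref{th:money} followed by \cref{th:random-to-ingenerable}. Your extra care about uniformity of the controlled preparation/measurement stages matches the paper's implicit treatment of its $S_n$ and $R_n$ families.
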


\begin{proof}
	Conceptually, it suffices to apply
	\cref{th:random-to-ingenerable} to \cref{th:money}. However, a few
	technical details are needed to frame \cref{th:money} in a way where
	\cref{th:random-to-ingenerable} is applicable.

	Let $(S_n)_{n \in \N}$ be a uniform family of
	$(2n, 3n)$-circuits such that on input of
	$x \tensor \theta$ they output the state
	$x \tensor \theta \tensor \ketbra{x^n}$ for any
	$x,\theta \in \{0,1\}^n$. Let $(R_n)_{n \in \N}$
	be a uniform family of $(4n, 1)$-circuits which, on input of
	$x \tensor \theta \tensor \rho$ for any strings
	$x, \theta \in \{0,1\}^n$ and any state $\rho$ on $2n$
	qubits, applies the unitary $H^\theta \tensor H^\theta$ to $\rho$,
	measures the resulting qubits in the computational basis, and
	outputs $1$ if and only if two copies of $x$ are obtained. It
	outputs $0$ otherwise.

	Now, consider the uniform family of $(2n,1)$-circuits
	$(\tilde{C}_n)_{n \in \N}$ where $\tilde{C}_n$ is
	obtained by composing the $S_n$, $C_n$, and $R_n$
	circuits, in that order and where $C_n$ act on the last
	$n$ qubits produced by $S_n$. This is illustrated in
	\cref{fg:money-derandomized}. At the level of channels, we see that
	\begin{equation}
		\tilde{C}_n
		=
		R_n
		\circ
		\left(
			\Id_n \tensor \Id_n \tensor C_n
		\right)
		\circ
		S_n.
	\end{equation}
	In particular, note that
	\begin{equation}
	\label{eq:money-derandomized}
		\bra{1}\tilde{C}_n(x \tensor \theta) \ket{1}
		=
		\bra*{x^\theta}\bra*{x^\theta}
			C\left(\ketbra*{x^\theta}\right)
		\ket*{x^\theta}\ket*{x^\theta}
	\end{equation}
	for all $x, \theta \in \{0,1\}^n$. Thus, by \cref{th:money},
	\begin{equation}
		n
		\mapsto
		\E_{\substack{x\gets\{0,1\}^n\\\theta\gets\{0,1\}^n}}
			\bra{1}\tilde{C}_n(x \tensor \theta) \ket{1}
	\end{equation}
	is negligible. Hence, by \cref{th:random-to-ingenerable} and the
	fact that $(s_n)_{n \in \N}$ is computably ingenerable,
	\begin{equation}
		n
		\mapsto
		\bra{1}
			\tilde{C}_n(x_n \tensor \theta_n)
		\ket{1}
	\end{equation}
	is also negligible, which, by \cref{eq:money-derandomized}, is the
	desired result.
\end{proof}

\begin{figure}
	\begin{center}
			\begin{tikzpicture}[thick]

		\node (s) at (0,0) {$S_n$};
		\draw ($(s) + (0.5,1.75)$) rectangle ($(s) - (0.5,1.75)$);

		\node (c) at ($(s) + (2.5,-1)$) {$C_n$};
		\draw ($(c) + (0.5,0.75)$) rectangle ($(c) - (0.5,0.75)$);

		\node (r) at ($(s) + (5,0)$) {$R_n$};
		\draw ($(r) + (0.5,1.75)$) rectangle ($(r) - (0.5,1.75)$);

		\node (x) at ($(s) + (-2,0.5)$) {};
		\node (t) at ($(s) + (-2,-0.5)$) {};
		\node (i) at ($0.5*(x) + 0.5*(t) + (-2,0)$) {};

		\draw
			($(i)$)
			-- node [pos=1/4,above] {$s$}
			($(i) + (1.5,0)$)
			--
			($(x)$)
			-- node [midway, above] {\footnotesize$x$}
			($(s) + (-0.50,0.50)$);

		\draw
			($(i)$)
			--
			($(i) + (1.5,0)$)
			--
			($(t)$)
			-- node [midway, above] {\footnotesize$\theta$}
			($(s) + (-0.50,- 0.50)$);

		\draw
			($(s) + ( 0.50, 1.50)$)
			-- node [pos=1/6,above] {\footnotesize$x$}
			($(r) + (-0.50, 1.50)$);
		\draw
			($(s) + ( 0.50, 0.50)$)
			-- node [pos=1/6,above] {\footnotesize$\theta$}
			($(r) + (-0.50, 0.50)$);
		\draw
			($(s) + ( 0.50,-1.00)$)
			-- node [midway,above] {\footnotesize$\ketbra{x^\theta}$}
			($(c) + (-0.50, 0.00)$);
		\draw ($(c) + ( 0.50, 0.50)$) -- ($(r) + (-0.50,-0.50)$);
		\draw ($(c) + ( 0.50,-0.50)$) -- ($(r) + (-0.50,-1.50)$);

		\draw
			($(r) + (0.5,0)$)
			-- node [pos=0.75,above] {\footnotesize$0$ or $1$}
			($(r) + (2.25,0)$);

		\draw ($(s) + (-3,2)$) rectangle ($(r) + (1,-2)$);
		\node (d) at ($(s) + (-2.5,1.5)$) {$\tilde{C}_n$};

	\end{tikzpicture}
	\end{center}
	\caption{\label{fg:money-derandomized}%
		A schematic representation of the $\tilde{C}_n$ circuits
		constructed in the proof of \cref{th:money-derandomized}. The
		wires are labelled, when possible, with the states they are
		expected to carry in the context of the proof. Every wire
		represents $n$ qubits, except the initial and final wires
		which represent $2n$ and $1$ qubits, respectively.}
\end{figure}
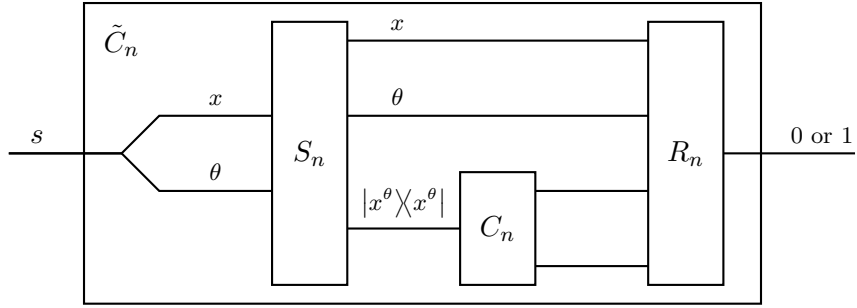

%======================================================================%
\section{State Complexity Classes and their Generalizations to Cloning}
\label{sc:cloning-complexity}
%======================================================================%

In this section, we define a notion of cloning complexity class
by generalizing the notion of a state complexity class, as
studied by Metger, Rosenthal, and Yuen \cite{RY21arxiv, MY23arxiv}.
Roughly speaking, these works define and study the complexity of
generating a given sequence $(\rho_n)_{n\in\N}$ of quantum
states. The main result of these works is showing that
$\mathbf{stateQIP} = \mathbf{statePSPACE}$, which is to say that the
sequences of states which can be efficiently generated with the help of
a possibly malicious prover are precisely those which can be generated
by polynomial-space quantum circuits. This is a state synthesis analogue
to the celebrated result of $\mathbf{QIP} = \mathbf{PSPACE}$
\cite{JJUW11}.

From our perspective, these prior works concerns the difficulty of creating one
copy of $\rho_n$ starting from no copies; we propose that it is
natural to also study the difficulty of creating two copies from one, or
more generally, $b$ copies starting from $a$ copies for
$a>b,a,b\in\N$.\footnote{%
	Recent work has also studied the quantum \emph{Kolmogorov}
	complexity of cloning a quantum state \cite{LFM+23arxiv}.
	The authors obtain results showing that, from this perspective,
	cloning a quantum state is almost always essentially as hard as
	creating it.}

\begin{definition}
\label{df:state-complexity}
	Let $\mc{C}$ be a class of circuit families, $\delta : \N \to \R$
	be a function, and $a,b \in \N$ be two non-negative integers. A
	sequence of states $(\rho_n)_{n\in\N}$ is in the state complexity
	class $\mathbf{state}^{a \to b}_\delta(\mc{C})$ if there exists a
	circuit family $C \in \mc{C}$ and an $n' \in \N$ such that
	\begin{equation}
		n \geq n'
		\implies
		\frac{1}{2}
		\norm{
			\rho_n^{\tensor b}
			-
			C_n(\rho_n^{\tensor a})
		}_\text{Tr}
		\leq
		\delta(n).
	\end{equation}
	We also define
	\begin{equation}
		\mathbf{state}^{a \to b}(\mc{C})
		=
		\bigcap_{k \in \N} \mathbf{state}_{n^{-k}}^{a \to b}(\mc{C})
		.
	\end{equation}
	If $a = 0$ and $b = 1$, we may omit the $a \to b$ superscript from
	all the above notation.
\end{definition}

Note that in full generality, we could also have $a$ and $b$ in the
definition be maps from $\N$ to $\N$ but this will not be necessary for
our observations.

Naturally occurring classes of circuit families which may be of interest
include the class of uniform circuit families, which we denote
$\mc{C}_\text{UNIF}$, and the class of polynomial-time circuit
families, which we denote $\mc{C}_\text{POLY}$. Letting
$\mc{C}_\text{PSPACE}$ denote the class of ``space-uniform
circuits'' as given in \cite[Definition 2.2]{MY23arxiv} and
$\mathbf{statePSPACE}$ be as given in \cite[Definition 2.4]{MY23arxiv},
we find, as expected, that
$\mathbf{state}(\mc{C}_\text{PSPACE}) = \mathbf{statePSPACE}$.%
\footnote{%
	Note that \cite{RY21arxiv, MY23arxiv} also require that a
	sequence $(\rho_n)_{n \in \N}$ in $\mathbf{statePSPACE}$ also
	satisfies the constraint that each
	$\rho_n$ is on precisely~$n$ qubits. However, they state that this
	is merely for convenience. We drop this requirement from our
	definition.}
We note that finding a class of circuit families, tentatively denoted
$\mc{C}_\text{QIP}$, such that $\mathbf{state}(\mc{C}_\text{QIP}) =
\mathbf{stateQIP}$, where the latter is as given in \cite[Definition
5.1]{MY23arxiv}, appears to be a bit more involved and non-trivial. In
short, it is unclear how to encode the required soundness property of
$\mathbf{stateQIP}$ into a class of circuit families. Nonetheless, we
believe \cref{df:state-complexity} to be a natural generalization of
state complexity classes in a wide range of interesting settings.

One interesting result of formalizing cloning complexity classes
is that it allows us to make more precise statements about the relations
between the complexity of generating and of cloning quantum states.
Specifically, we can now state and prove the three lemmas which were
sketched in \cref{fg:gen-vs-clone} from
\cref{sc:cloning-complexity-review}.

The first lemma formalizes the idea that, for essentially all classes of
circuit families, cloning a state cannot be more difficult than
generating it. For technical reasons, this lemmas does not apply to
certain pathological classes of circuit families (\emph{e.g.}: those
without the ability to apply the identity channel), or those representing
extremely limited computational power (\emph{e.g.}: those with a hard
bound on the number of qubits they can process). The idea is that if
one is able to generate a sequence of states, then one way to clone the
same sequence is to ignore the input and simply generate another copy.
\begin{lemma}
\label{th:state-inclusion}
\label{th:clone-gen-bound}
	Let $\mc{C}$ be a class of circuit families satisfying the following
	criteria:
	\begin{itemize}
		\item
			The class $\mc{C}$ can implement the identity on qubits it
			could generate. More precisely, if $\mc{C}$ includes a
			$(0,\ell)$-circuit family $C$ for some
			$\ell : \N \to \N$, then $\mc{C}$ also includes an
			$(\ell, \ell)$-circuit family $C^\text{id}$ where every
			circuit in this family implements the identity channel.
		\item
			The class $\mc{C}$ is closed under tensor products.
			More precisely, given two circuit families $(C_n)_{n \in
			\N}$ and $(C'_n)_{n \in \N}$ in $\mc{C}$, then the family
			$(C_n \tensor C'_n)_{n \in \N}$ is in $\mc{C}$.
	\end{itemize}
	Then, for all $\delta: \N \to \R$ and $a,b \in \N$ we have that
	\begin{equation}
		\mathbf{state}^{0 \to 1}_\delta(\mc{C})
		\subseteq
		\mathbf{state}^{1 \to 2}_\delta(\mc{C}).
	\end{equation}
\end{lemma}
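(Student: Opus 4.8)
The plan is to turn a generator for $(\rho_n)_{n \in \N}$ into a $1 \to 2$ cloner by the trivial trick of leaving the one given copy alone and synthesising the second copy from nothing. Suppose $(\rho_n)_{n \in \N} \in \mathbf{state}^{0 \to 1}_\delta(\mc{C})$, and let this membership be witnessed, as in \cref{df:state-complexity}, by a $(0,\ell)$-circuit family $C \in \mc{C}$ and a threshold $n' \in \N$, where $\ell : \N \to \N$ is the map recording the number of qubits carried by $\rho_n$. First I would invoke the two closure hypotheses on $\mc{C}$: the first yields an $(\ell,\ell)$-circuit family $C^\text{id} \in \mc{C}$ whose every circuit implements the identity channel, and the second then yields that the family $D = \left(C^\text{id}_n \tensor C_n\right)_{n \in \N}$ --- a family of $(\ell, 2\ell)$-circuits --- is again in $\mc{C}$.

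Next I would bound the cloning error of $D$. For every $n$ we have $D_n(\rho_n) = C^\text{id}_n(\rho_n) \tensor C_n(\varepsilon) = \rho_n \tensor C_n(\varepsilon)$, whence, for all $n \geq n'$,
\begin{equation}
	\frac{1}{2}\norm{\rho_n^{\tensor 2} - D_n(\rho_n)}_\text{Tr}
	=
	\frac{1}{2}\norm{\rho_n \tensor \left(\rho_n - C_n(\varepsilon)\right)}_\text{Tr}
	=
	\frac{1}{2}\norm{\rho_n}_\text{Tr}\cdot\norm{\rho_n - C_n(\varepsilon)}_\text{Tr}
	\leq
	\delta(n),
\end{equation}
using the multiplicativity of the trace norm across a tensor factor, the fact that $\norm{\rho_n}_\text{Tr} = 1$ since $\rho_n$ is a density operator, and the defining inequality of $\mathbf{state}^{0 \to 1}_\delta(\mc{C})$ for $n \geq n'$. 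Thus $D \in \mc{C}$ together with the threshold $n'$ witnesses $(\rho_n)_{n \in \N} \in \mathbf{state}^{1 \to 2}_\delta(\mc{C})$, as required.

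I do not expect any real obstacle; the argument is essentially bookkeeping. The two points to be careful about are (i) that the identity circuit is applied to exactly $\ell(n)$ qubits, so that $C^\text{id}_n \tensor C_n$ is genuinely an $(\ell, 2\ell)$-circuit --- which is precisely the content of the first closure hypothesis, and is why that hypothesis is phrased relative to circuits $\mc{C}$ could already generate --- and (ii) that one invokes multiplicativity, rather than mere subadditivity, of the trace norm, so that the error stays at $\delta(n)$ rather than degrading to $2\delta(n)$. The same ``keep the input, generate the rest'' idea in fact gives $\mathbf{state}^{a \to b}_\delta(\mc{C}) \subseteq \mathbf{state}^{a \to b+1}_\delta(\mc{C})$ for general $a,b \in \N$ under the analogous hypotheses, but only the stated $0 \to 1 \subseteq 1 \to 2$ instance is needed here.
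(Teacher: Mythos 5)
Your proof is correct and follows essentially the same route as the paper's: take the generator family $C$, form $\left(C^\text{id}_n \tensor C_n\right)_{n \in \N}$ using the two closure hypotheses, and bound the cloning error by $\frac{1}{2}\norm{\rho_n \tensor \rho_n - \rho_n \tensor C_n(\varepsilon)}_\text{Tr} \leq \delta(n)$. The only cosmetic difference is that you justify the last bound via multiplicativity of the trace norm over the tensor factor, while the paper states the inequality directly; both are valid.
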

\begin{proof}
	Let $
		(\rho_n)_{n \in \N}
		\in
		\mathbf{state}^{0 \to 1}_\delta(\mc{C})
	$. Let $C \in \mc{C}$ and $n' \in \N$ be such that
	\begin{equation}
		n \geq n'
		\implies
		\frac{1}{2}
		\norm{
			\rho_n
			-
			C_n(\varepsilon)
		}_\text{Tr}
		\leq
		\delta(n).
	\end{equation}
	Consider now the family $C' = (C^\text{id}_n\tensor C_n)_{n \in \N}$
	which, by our assumptions, is in $\mc{C}$. For all $n \geq n'$, we
	have that
	\begin{equation}
		\frac{1}{2}
		\norm{
			\rho_n^{\tensor 2}
			-
			C'_n(\rho_n)
		}_\text{Tr}
		=
		\frac{1}{2}
		\norm{
			\rho_n \tensor \rho_n
			-
			\rho_n \tensor C_n(\varepsilon)
		}_\text{Tr}
		\leq
		\frac{1}{2}
		\norm{
			\rho_n
			-
			C_n(\varepsilon)
		}_\text{Tr}
		\leq
		\delta(n)
	\end{equation}
	and so $(\rho_n)_{n\in\N}\in\mathbf{state}_\delta^{1\to 2}(\mc{C})$.
\end{proof}

The second lemma formalizes the idea that there exists sequences of
states which cannot be generated but which can be cloned, even with less
computational power. It suffices here to consider sequences of classical
information which cannot be consistently produced with high probability.
Weakly computably ingenerable sequences satisfy these properties.

\begin{lemma}
\label{th:hard-generate-easy-clone}
	Let $(s_n)_{n \in \N}$ be a weakly computably ingenerable
	$n$-sequence. Then, the sequence $(\ketbra{s_n})_{n \in \N}$ is in
	$\mathbf{state}^{1\to2}(\mc{C}_\text{POLY})$ but not in
	$\mathbf{state}(\mc{C}_\text{UNIF})$.
\end{lemma}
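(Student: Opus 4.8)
The plan is to prove the two assertions separately. For membership in $\mathbf{state}^{1\to2}(\mc{C}_\text{POLY})$, observe that $\ketbra{s_n}$ is a computational-basis state, so cloning it from one copy to two copies is just the classical operation of reading the $n$ input qubits in the computational basis and writing a fresh copy — implemented by $n$ applications of a CNOT gate (after appending $n$ fresh $\ket{0}$ qubits via the state-preparation map). This circuit is fixed in structure and trivially generated by a polynomial-time Turing machine on input $1^n$, so it defines a polynomial-time $(n,2n)$-circuit family $C = (C_n)_{n\in\N}$ with $C_n(\ketbra{s_n}) = \ketbra{s_n}^{\tensor 2}$ exactly. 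Hence the trace-distance error is $0 \leq n^{-k}$ for every $k$ and every $n$, so $(\ketbra{s_n})_{n\in\N} \in \mathbf{state}^{1\to2}_{n^{-k}}(\mc{C}_\text{POLY})$ for all $k$, giving membership in the intersection.

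For the non-membership, I would argue by contradiction. Suppose $(\ketbra{s_n})_{n\in\N} \in \mathbf{state}(\mc{C}_\text{UNIF}) = \bigcap_k \mathbf{state}^{0\to1}_{n^{-k}}(\mc{C}_\text{UNIF})$. Then for, say, $k=1$ there is a uniform $(0,n)$-circuit family $(C_n)_{n\in\N}$ and an $n'$ such that $\frac{1}{2}\norm{\ketbra{s_n} - C_n(\varepsilon)}_\text{Tr} \leq n^{-1}$ for all $n \geq n'$. By the standard relation between trace distance and fidelity for a pure target state, $\bra{s_n}C_n(\varepsilon)\ket{s_n} \geq 1 - \left(\tfrac{1}{2}\norm{\ketbra{s_n} - C_n(\varepsilon)}_\text{Tr}\right) \geq 1 - n^{-1}$ — more carefully, $1 - F^2 \leq \frac12\|\cdot\|_\text{Tr}$ where $F^2 = \bra{s_n}C_n(\varepsilon)\ket{s_n}$, so $\bra{s_n}C_n(\varepsilon)\ket{s_n} \geq 1 - n^{-1}$ for all large $n$. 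But $(s_n)_{n\in\N}$ is weakly computably ingenerable: for any gate set there is a polynomial $p_\mc{G}$ such that for infinitely many $n$ we have $\bra{s_n}C_n(\varepsilon)\ket{s_n} < p_\mc{G}(n)\cdot 2^{-n}$. Since $p_\mc{G}(n)\cdot 2^{-n} < 1 - n^{-1}$ for all sufficiently large $n$, these two conditions contradict each other at any sufficiently large $n$ in the infinite set guaranteed by weak ingenerability. Hence no such uniform circuit family exists, so the sequence is not in $\mathbf{state}(\mc{C}_\text{UNIF})$.

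The only mildly delicate point — and the one I would state carefully — is the fidelity/trace-distance conversion, since the definition of $\mathbf{state}$ is phrased with trace distance $\leq \delta(n)$ but weak ingenerability is phrased via the overlap $\bra{s_n}C_n(\varepsilon)\ket{s_n}$. The inequality $1 - \bra{s}\sigma\ket{s} \leq \frac{1}{2}\norm{\ketbra{s} - \sigma}_\text{Tr}$ (valid because $\ketbra{s} - \bra{s}\sigma\ket{s}$ appears in a $2\times2$ block and the trace norm dominates any single matrix element, or more simply because $\bra{s}\sigma\ket{s} = 1 - \Tr[(I-\ketbra{s})\sigma] \geq 1 - \frac12\|\ketbra{s}-\sigma\|_\text{Tr}$) is exactly what bridges the gap, and it is the load-bearing estimate. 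Everything else is bookkeeping with quantifiers (the "infinitely often" of weak ingenerability is enough to contradict the "for all large $n$" supplied by membership, so the weaker notion suffices, which is presumably why the lemma is stated with \emph{weakly} computably ingenerable sequences).
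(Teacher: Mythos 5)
Your proposal is correct and follows essentially the same route as the paper: membership in $\mathbf{state}^{1\to2}(\mc{C}_\text{POLY})$ via the CNOT-copying circuit family, and non-membership in $\mathbf{state}(\mc{C}_\text{UNIF})$ because weak ingenerability forces the overlap $\bra{s_n}C_n(\varepsilon)\ket{s_n}$ to be negligible infinitely often. The only difference is that you make explicit the bound $1 - \bra{s_n}\sigma\ket{s_n} \leq \frac{1}{2}\norm{\ketbra{s_n} - \sigma}_\text{Tr}$ linking trace distance to overlap, which the paper's one-line argument leaves implicit.
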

\begin{proof}
	By definition, $(\ketbra{s_n})_{n \in \N}$ is not in
	$\mathbf{state}(\mc{C}_\text{UNIF})$ since every uniform circuit
	family will produce the string $s_n$ with negligible probability for
	infinitely many values of $n$.

	On the other hand, let $(C_n)_{n \in \N}$ be an $(n, 2n)$-circuit
	family where $C_n$ simply
	appends $n$ qubits in the $\ket{0}$ state and then applies a
	CNOT gate on the $n + k$ qubit conditioned on the $k$
	qubit for all $k \in \{1, ..., n\}$. We see that $
		C_n(\ketbra{s_n})
		=
		\ketbra{s_n} \tensor \ketbra{s_n}
	$. Since $(C_n)_{n \in \N} \in \mc{C}_\text{POLY}$, we have that $
		(\ketbra{s_n})_{n \in \N}
		\in
		\mathbf{state}^{1\to2}(\mc{C}_\text{POLY})
	$.
\end{proof}

The third and final lemma states that there are sequences of states
which simply cannot be cloned by uniform circuit families. It is
obtained as a direct corollary from \cref{th:money-derandomized} and can
be interpreted as a cloning complexity analogue to the existence of
uncomputable functions.

\begin{lemma}
\label{th:hard-generate-hard-clone}
	Let $(s_n)_{n \in \N}$ be a computably ingenerable
	$2n$-sequence where we parse each
	$s_n$ as $(x_n, \theta_n)$ for two strings
	$x_n, \theta_n \in \{0,1\}^n$. Then,
	$
		(\ketbra*{x_n^{\theta_n}})_{n \in \N}
		\not\in
		\mathbf{state}^{1 \to 2}(\mc{C}_\text{UNIF})
	$. By \cref{th:clone-gen-bound}, this also implies that the sequence
	is not in $\mathbf{state}(\mc{C}_\text{UNIF})$.
\end{lemma}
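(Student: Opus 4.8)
The plan is to derive this lemma as an immediate corollary of \cref{th:money-derandomized}. Suppose, for contradiction, that $(\ketbra*{x_n^{\theta_n}})_{n \in \N} \in \mathbf{state}^{1 \to 2}(\mc{C}_\text{UNIF})$. By \cref{df:state-complexity}, this means that for every $k \in \N$ there is a uniform $(2n, 4n)$-circuit family $(C_n)_{n \in \N}$ and an $n' \in \N$ such that $\frac{1}{2}\norm{\ketbra*{x_n^{\theta_n}}^{\tensor 2} - C_n(\ketbra*{x_n^{\theta_n}})}_\text{Tr} \leq n^{-k}$ for all $n \geq n'$; in particular, we may fix $k = 1$ and obtain a single uniform circuit family whose output is inverse-polynomially close in trace distance to two perfect copies of $\ket*{x_n^{\theta_n}}$.

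Next I would convert the trace-distance bound into a fidelity bound. Since $\ket*{x_n^{\theta_n}}^{\tensor 2}$ is a pure state, the overlap $\bra*{x_n^{\theta_n}}\bra*{x_n^{\theta_n}} C_n(\ketbra*{x_n^{\theta_n}}) \ket*{x_n^{\theta_n}}\ket*{x_n^{\theta_n}}$ equals the squared fidelity between $C_n(\ketbra*{x_n^{\theta_n}})$ and the target, and by the standard Fuchs--van de Graaf inequalities it is at least $1 - \frac{1}{2}\norm{\cdot}_\text{Tr}$ applied appropriately; in any case it is at least $1 - n^{-1}$ for all sufficiently large $n$, and hence it is certainly \emph{not} a negligible function of $n$. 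But \cref{th:money-derandomized} asserts precisely that for \emph{any} uniform $(n, 2n)$-circuit family this overlap \emph{is} negligible --- there is a minor bookkeeping mismatch here, namely that \cref{th:money-derandomized} is stated for $(n,2n)$-circuits whereas a cloner in $\mathbf{state}^{1 \to 2}$ is a $(2n, 4n)$-circuit, so I would first note that the Wiesner state $\ket*{x_n^{\theta_n}}$ lives on $n$ qubits and a ``$1 \to 2$'' cloner for it is by definition an $(n, 2n)$-circuit, matching \cref{th:money-derandomized} exactly. This contradiction establishes that $(\ketbra*{x_n^{\theta_n}})_{n \in \N} \notin \mathbf{state}^{1 \to 2}(\mc{C}_\text{UNIF})$.

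Finally, the second sentence of the lemma is immediate: \cref{th:clone-gen-bound} (with $\mc{C} = \mc{C}_\text{UNIF}$, which satisfies the two closure hypotheses) gives $\mathbf{state}^{0 \to 1}_\delta(\mc{C}_\text{UNIF}) \subseteq \mathbf{state}^{1 \to 2}_\delta(\mc{C}_\text{UNIF})$ for every $\delta$, hence $\mathbf{state}(\mc{C}_\text{UNIF}) \subseteq \mathbf{state}^{1 \to 2}(\mc{C}_\text{UNIF})$, so membership in the former would contradict what we just proved.

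I do not anticipate a genuine obstacle here, as essentially all the work is already done in \cref{th:money-derandomized}; the only points requiring care are (i) translating between the trace-distance formulation of \cref{df:state-complexity} and the fidelity/overlap formulation of \cref{th:money-derandomized}, making sure the direction of the Fuchs--van de Graaf bound is the one we need, and (ii) checking that $\mc{C}_\text{UNIF}$ satisfies the hypotheses of \cref{th:clone-gen-bound} (it clearly does, since uniform circuit families are closed under tensor products and include identity channels). A subtlety worth a sentence is that \cref{df:state-complexity} only requires the closeness to hold for all $n \geq n'$ and the ingenerability-based negligibility in \cref{th:money-derandomized} likewise is an eventual statement, so the two ``eventually'' quantifiers align without trouble.
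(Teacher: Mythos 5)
Your proof is correct and takes the same route as the paper, which simply declares the lemma a direct corollary of \cref{th:money-derandomized}; you have merely filled in the routine details (reading off \cref{df:state-complexity} at a fixed inverse-polynomial error, converting trace distance to the overlap $\bra*{x_n^{\theta_n}}\bra*{x_n^{\theta_n}} C_n(\ketbra*{x_n^{\theta_n}})\ket*{x_n^{\theta_n}}\ket*{x_n^{\theta_n}}$, and checking the hypotheses of \cref{th:clone-gen-bound}), all of which are handled correctly, including the observation that the cloner is an $(n,2n)$-circuit family so the parameters of \cref{th:money-derandomized} match exactly.
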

\begin{proof}
	This is a direct corollary of \cref{th:money-derandomized}.
\end{proof}

Note that the corollary is a strictly weaker statement than the theorem.
Indeed, the theorem demonstrates that it is impossible to clone with
\emph{non-negligible} fidelity, via a uniform family of circuits, the
Wiesner states described by a computably ingenerable sequence. The
corollary only states that it is impossible to clone these states with
an \emph{overwhelming} fidelity. We emphasize that our results in
\cref{sc:advice} concerning uncloneable quantum advice will require the
stronger guarantee of the form given by the theorem.

%======================================================================%
\section{Uncloneable Advice}
\label{sc:advice}
%======================================================================%

The main results of this section are to give a definition for the
complexity class of problems which can be solved by polynomial-time
quantum computations with negligible errors in the presence of advice
that is uncloneable and to give examples of problems in this class. We
denote this complexity class $\mathbf{neglQP}\text{/upoly}$.

As we briefly discussed in~\cref{sc:upoly-review}, we do not denote this
class $\mathbf{BQP}/\text{upoly}$; this is because we lack a generic
error reduction technique which can reduce bounded errors to negligible
errors, all the while maintaining the uncloneability property of the
advice states. In other words, the current state-of-the-art is that it
is possible that
$\mathbf{neglQP}/\text{upoly} \subseteq \mathbf{BQP}/\text{upoly}$ is a
strict containment.

\paragraph{Section overview.}
This section is organized as follows. We begin
by reviewing the basic notions of quantum copy-protection in
\cref{sc:copy-protection}. In \cref{sc:advice-df}, we formally define
the notion of uncloneable advice and give some basic remarks on this
definition. We also emphasize some parallels between quantum
copy-protection and uncloneable advice. In \cref{sc:promise}, we
describe a promise problem which unconditionally admits uncloneable
advice. In \cref{sc:language}, we describe a language with
uncloneable advice, assuming the feasibility of copy-protecting any one
sequence of maps satisfying fairly mild assumptions. In
\cref{sc:language-2}, we discuss one possible instantiation of the
construction presented in \cref{sc:language}.

%~~~~~~~~~~~~~~~~~~~~~~~~~~~~~~~~~~~~~~~~~~~~~~~~~~~~~~~~~~~~~~~~~~~~~~%
\subsection{Quantum Copy-Protection}
\label{sc:copy-protection}
%~~~~~~~~~~~~~~~~~~~~~~~~~~~~~~~~~~~~~~~~~~~~~~~~~~~~~~~~~~~~~~~~~~~~~~%

As discussed in \cref{sc:intro}, quantum copy-protection is broadly the
task of encoding a given function~$f$ as a quantum state~$\rho_f$ such
that the following two conditions are satisfied:
\begin{itemize}
	\item
		Correctness: A party with access to the quantum state $\rho_f$
		can correctly evaluate $f$ on any input by interacting with this
		state in a prescribed manner.
	\item
		Security: A party with access to the quantum state $\rho_f$
		cannot create and share a bipartite quantum state with two other
		separated parties such that both of these parties could
		correctly compute $f$ using any means available to them.
\end{itemize}

While correctness is required to hold for all inputs, security is
defined with respect to a sequence of random variables
$D = (D_n)_{n \in \N}$ jointly distributed on all functions considered
by the copy-protection scheme and pairs of inputs to these
functions. Security is attained if no efficient attacker can
split the program state for a function $f$, allowing both subsequent
parties to evaluate $f(x_B)$ and $f(x_C)$ with more than a negligible
advantage when $(f,x_B,x_C)$ is sampled from $D$.

We formalize the syntax and correctness guarantee of a copy-protection
scheme in \cref{df:cp-scheme} below. Security will then be formalized in
\cref{df:cp-secure}. Up to technical details of presentation, our
definition of the syntax and correctness of a copy-protection scheme
coincides with the definition given in \cite{CLLZ21}.\footnote{%
	The definitions given in \cite{CLLZ21} states that the map $f$
	should be pseudorandom function. However, they are applicable to
	any such family of maps, not only pseudorandom functions.}

\begin{definition}
\label{df:cp-scheme}
	Let
	\begin{equation}
	\label{eq:cp-maps}
		f = \left(
			f_n :
			\{0,1\}^{\kappa(n)} \times \{0,1\}^{d(n)}
			\to
			\{0,1\}^{c(n)}
		\right)_{n \in \N}
	\end{equation}
	be a sequence of maps whose domains and codomains are parameterized
	by maps $\kappa, d, c : \N \to \N$.

	A \emph{copy-protection scheme} for $f$ with program length
	$q : \N \to \N$ is a pair $(G, E)$ of efficient~$(\kappa,q)$- and
	$(q + d,c)$-circuit families, respectively.

	A copy-protection scheme is \emph{correct for $f$} if
	there exists a negligible function $\eta$ such that
	\begin{equation}
	\label{df:cp-scheme-eq}
		\bra{f_n(k,x)}
			E_n(G_n(k) \tensor x)
		\ket{f_n(k,x)}
		\geq
		1 - \eta(n)
	\end{equation}
	for all $n \in \N$, all $k \in \{0,1\}^{\kappa(n)}$, and
	all $x \in \{0,1\}^{c(n)}$.
\end{definition}

Note that the existence of a copy-protection scheme for a sequence of
maps $f$ as defined above implies that the maps $\kappa$, $d$, and $c$
are polynomially bounded and efficiently computable. If this was not the
case, there would not exist efficient $(\kappa, q)$- and
$(q + d, c)$-circuit families.

\begin{remark}
	The above definition can also be easily adapted to sequences of
	maps $(f_n)_{n \in \N}$ where, for all $n \in \N$, the domain of
	$f_n$ is a subset
	$S_n \subseteq \{0,1\}^{\kappa(n)} \times \{0,1\}^{d(n)}$.
	To do so, it suffices to only require the correctness condition,
	\cref{df:cp-scheme-eq}, to hold for pairs $(k,x) \in S_n$.
\end{remark}

\begin{remark}
	It is possible for a copy-protection scheme to be correct for two
	different sequences of maps.
	Indeed, assume $f = (f_n)_{n \in \N}$ and $f' = (f'_n)_{n \in \N}$
	are sequences of maps such that $f_n \not= f'_n$ for a non-zero but
	finite number of values of $n$. Then, it is straightforward to show
	that a copy-protection scheme $(G, E)$ is correct for $f$ if and
	only if it is correct for $f'$.
	In short, this is due to the fact that correctness is an asymptotic
	property and that $f_n = f'_n$ for all sufficiently large values of
	$n$.
	
	However, this is the maximal possible discrepancy between $f$
	and $f'$ in the sense that if $(G, E)$ is correct for $f$ and $f'$,
	then there must exist a $\tilde{n} \in \N$ such that $n \geq
	\tilde{n} \implies f_n = f'_n$.
	Indeed, by the assumed correctness of the scheme for $f$ and $f'$,
	the triangle inequality for the trace distance, the
	Fuchs-van de Graaf inequalities, and basic properties of negligible
	functions, there exists a negligible function $\eta$ such that
	\begin{equation}
	\begin{aligned}
		\frac{1}{2}\norm{f_n(k,x) - f'_n(k,x)}_1
		&\leq
		\sqrt{1 - \bra{f_n(k,x)} \rho_{n,k,x} \ket{f_n(k,x)}}
		+
		\sqrt{1 - \bra{f'_n(k,x)} \rho_{n,k,x} \ket{f'_n(k,x)}}
		\\&\leq
		\eta(n)
	\end{aligned}
	\end{equation}
	where $\rho_{n,k,x} = E_n(G_n(k) \tensor x)$ for all $n \in \N$, all
	$k \in \{0,1\}^{\kappa(n)}$, and all $x \in \{0,1\}^{d(n)}$.
	It then follows that $f_n = f'_n$ for all sufficiently large values
	of $n$ since $\eta(n) < 1 \implies f_n(k,x) = f'_n(k,x)$.
\end{remark}

An attack against a copy-protection scheme is a triplet of
efficient circuit families $(A, B, C)$.
We will sometimes refer to the first, $A$, as the \emph{splitting
adversary} and the later two, $B$ and $C$, as the \emph{guessing
adversaries.}
This reflects their respective tasks in the game, described below, used
to define and quantify the notion of security for a copy-protection
scheme.

\paragraph{The Copy-Protection Game}
Let $f = (f_n)_{n \in \N}$ be a sequence of maps as in
\cref{df:cp-scheme}, let~$(G, E)$ be a copy-protection scheme for $f$,
and let $D = (D_n)_{n \in \N}$ be a sequence of random variables where,
for all $n \in \N$, each $D_n$ is distributed on the set
$\{0,1\}^{\kappa(n)}\times\{0,1\}^{d(n)}\times\{0,1\}^{d(n)}$.
The copy-protection security game, for a given $n \in \N$ in addition to
the parameters described above, is played by a Referee against
collaborating Alice, Bob, and Charlie --- collectively known as the
adversaries --- as follows:
\begin{enumerate}
	\item
		The Referee samples a triplet $(k, x_B, x_C) \gets D_n$.
	\item
		The Referee prepares the state $\rho = E_n(k)$ and gives it to
		Alice.
	\item
		Alice prepares a bipartite quantum state $\rho$ and gives one
		part to Bob and the other to Charlie.
		This is the only communication between Alice, Bob, and Charlie
		that occurs during the game.
	\item
		The Referee gives $x_B$ to Bob and $x_C$ to Charlie.
		Note that Bob does not receive $x_C$ and Charlie does not
		receive $x_B$.
	\item
		Bob, with what they have received from the Referee and Alice,
		outputs a string $y_B$. Similarly, Charlie outputs a string
		$y_C$.
	\item
		Alice, Bob, and Charlie win if and only if $f_n(k,x_B) = y_B$
		\emph{and} $f_n(k,x_C) = y_C$.
\end{enumerate}

Formally, we will model Alice, Bob, and Charlie as efficient circuit
families $A = (A_n)_{n \in \N}$, $B = (B_n)_{n \in \N}$, and
$C = (C_n)_{n \in \N}$ respectively.
Note that, at this point, we do not impose any other type of
computational restraints on the Referee. In particular, we do not
require the random variables $D = (D_n)_{n \in \N}$ to admit an
efficient sampling procedure.
We formalize this game and the winning probability of the adversaries in
the next definition.

\begin{definition}
\label{df:cp-attack}
	Let $(G, E)$ be a copy-protection scheme with program lengths
	$q : \N \to \N$ for a sequence of maps $f = (f_n)_{n \in \N}$ as
	given in \cref{df:cp-scheme}.
	An attack against this scheme is a triplet $(A, B, C)$ of efficient
	$(q, q_B + q_C)$-, $(d + q_B, c)$-, and $(q_C + d, c)$- circuit
	families, respectively, for two maps~$q_B, q_C : \N \to \N$.
	For any $n \in \N$ and triplet $
		(k,x_B,x_C)
		\in
		\{0,1\}^{\kappa(n)} \times \{0,1\}^{d(n)} \times \{0,1\}^{d(n)}
	$, we define the state
	\begin{equation}
		\rho^{(A,B,C)}_{(G,E),n}(k,x_B,x_C)
		=
		\left(B_n \tensor C_n\right)
		\left(
			x_B
			\tensor
			\left(A_n \circ G_n\right)(k)
			\tensor
			x_C
		\right).
	\end{equation}

	Let $D = (D_n)_{n \in \N}$ be a sequence of random variables where,
	for each $n \in \N$, $D_n$ is distributed over the set
	$\{0,1\}^{\kappa(n)} \times \{0,1\}^{d(n)} \times \{0,1\}^{d(n)}$.
	For any attack $(A,B,C)$ against $(G,E)$, we define its
	winning probability with respect to $f$ and $D$, denoted
	$w^{(A,B,C)}_{(G,E),f,D} : \N \to \R$, as the function
	\begin{equation}
	\label{df:cp-attack-eq}
		n
		\mapsto
		\sum_{(k,x_B,x_C) \in S_n'}
		\Pr\left[D_n = (k,x_B,x_C)\right]
		\cdot
		\bra{f_n(k,x_B), f_n(k,x_C)}
			\rho_{(G,E),n}^{(A,B,C)}(k,x_B,x_C)
		\ket{f_n(k,x_B), f_n(k,x_C)}
	\end{equation}
	where $S_n' \subseteq \{0,1\}^{\kappa(n)} \times \{0,1\}^{d(n)}
	\times \{0,1\}^{d(n)}$ is precisely the set of all triplets
	$(k,x_B,x_C)$ where both $f_n(k,x_B)$ and $f_n(k,x_C)$ are defined.
\end{definition}

If every map $f_n$ in $f$ is defined on the entire
set $\{0,1\}^{\kappa(n)} \times \{0,1\}^{d(n)}$, then the summation over
$S'_n$ and the $\Pr\left[D_n = (k, x_B, x_C)\right]$ factor in
\cref{df:cp-attack-eq} is simply an alternative expression for the
expectation over sampling $(k, x_B, x_C)$ from $D_n$.
Such an expectation is typically how the winning probability of the
adversaries is presented in the literature.
However, our definition is also applicable to cases where some maps
$f_n$ are defined only on a strict subset of
$\{0,1\}^{\kappa(n)} \times \{0,1\}^{d(n)}$ and where the support of
$D_n$ is a strict superset of $S'_n$.
In these cases, the expectation would not be well defined.

It may seem unnatural to consider $w^{(A,B,C)}_{(G,E),f,D}$ for
sequences $f$ and $D$ where $D_n$ occasionally yields pairs $(k,x)$
outside of the domain of $f_n$, but our generalization
allows us to easily make sense of the winning probability of the
adversaries when~$f_n : \varnothing \to \{0,1\}^{c(n)}$ is the empty
map.
Note that it is impossible to even define a random variable distributed
over the domain of $f_n$ or, more precisely, over $S'_n = \varnothing$,
in this case.
Note that we can interpret appearances of the empty map in a sequence
$(f_n)_{n \in \N}$ as corresponding to cases where the map is
undefined.\footnote{%
	While it may be tempting to simply remove all instances of empty
	maps in a sequence $(f_n)$ and then re-index it, yielding a new
	sequence $(g_n)$, this can can have a material implications to the
	study of copy-protection for these maps as it impacts the scaling of
	the computational power permitted to schemes and adversaries.
	Indeed, suppose that the map $f_n$ is defined if and only if
	$n = 2^k$ for some $k \in \N$.
	Under the proposed re-indexing scheme, $g_n = f_{2^n}$ for all $n
	\in \N$.
	In particular, if we subject the circuit families of an efficient
	copy-protection scheme $(G, E)$ for $(f_n)_{n \in \N}$ to the same
	re-indexing, which would be the naive way to try to obtain a scheme
	for $(g_n = f_{2^n})_{n \in \N}$, they may become exponential-time
	circuit families.
	}
This will be useful when we later relate the notions of copy-protection
and uncloneable advice.

In general, we will only be interested in sequences $f$ and $D$ where
the support of $D_n$ is a subset of $S'_n$, unless $f_n$ is the empty
map.

\begin{remark}
	Let $n \in \N$ be such that $f_n : \varnothing \to \{0,1\}^{c(n)}$
	is the empty map, i.e.: $S'_n = \varnothing$.
	Then, adopting the convention that a sum over an empty set is $0$,
	we have that $\omega^{(A,B,C)}_{(G,E),f,D}(n) = 0$.
\end{remark}

The fact that $\omega^{(A,B,C)}_{(G,E),f,D}(n) = 0$ whenever $f_n$ is
the empty map, meaning that it is ``in practice undefined'', implies
that an adversary $(A, B, C)$ breaking the potential security of a
copy-protection scheme must do so for values of $n$ where $f_n$ is ``in
practice defined''.
We see this as a desirable property of our definition.

Following a standard cryptographic paradigm, we now wish to define the
security for a copy-protection scheme as the property that all efficient
adversaries have at most a negligible advantage over some trivially
attainable success probability.
The issue here is to identify what precisely is the trivial success
probability in the above described game.

There are a few attacks against copy-protection schemes which cannot
reasonably be prevented.
One such attacks, for all $n \in \N$, consists of having the splitting
adversary $A_n$ give the complete and unmodified program state $G_n(k)$
to the guessing party $B_n$ and giving nothing to $C_n$.
When the adversary $B_n$ receives their challenge $x_B$, they can then
honestly evaluate the program state $G_n(k)$ on $x_B$ to obtain, with
overwhelming probability assuming that the scheme $(G, E)$ is correct
for $f$, the proper answer.
The $C_n$ adversary, for their part, will simply output a uniformly
random string sampled from $\{0,1\}^{c(n)}$, the codomain of the maps
under consideration.
It is easy to see that the winning probability of this trivial attack is
$n \mapsto 2^{-c(n)} - \eta(n)$ for some negligible function $\eta$
accounting for the possibility of an incorrect evaluation by $B$.
Thus, we take $2^{-c(n)}$ as our trivial success probability.

\begin{definition}
\label{df:cp-secure}
	Let $f = (f_n)_{n \in \N}$ be a sequence of maps as given in
	\cref{df:cp-scheme}, $(G, E)$ be a copy-protection scheme for $f$,
	and $D = (D_n)_{n \in \N}$ be a sequence of random
	variables as given in \cref{df:cp-attack}.
	The copy-protection scheme $(G,E)$ is secure with respect to $f$ and
	$D$ if for every attack $(A, B, C)$ against it, the function
	\begin{equation}
		n \mapsto w^{(A,B,C)}_{(G, E), f, D}(n) - 2^{-c(n)}
	\end{equation}
	is negligible.
\end{definition}

Note that other works present definitions based on alternate notions of trivial success probabilities. 
For example, \cite{CMP20arxiv} permits the splitting adversary $A_n$ to
optimally choose which of $B_n$ or $C_n$ should received the unmodified
program state and further allows the non-receiving guessing adversary to
learn their challenge value~$x$ before making their guess for~$f_n(k, x)$. As this can only increase the value of the trivial winning
probability above $2^{-c(n)}$, the security notion we present here is
stronger.
As another example, \cite{CLLZ21} defines security with respect to a
trivial winning probability which is $0$ or negligible. While this is in
agreement with our definition if $n \mapsto 2^{-c(n)}$ is negligible, it
is an impractical definition to use when this is not the case, such as
if $c$ is constant function.

We leave to future work a more detailed comparison of existing security notions together with their notions of trivial success probabilities.

%~~~~~~~~~~~~~~~~~~~~~~~~~~~~~~~~~~~~~~~~~~~~~~~~~~~~~~~~~~~~~~~~~~~~~~%
\subsection{Defining Uncloneable Advice}
\label{sc:advice-df}
%~~~~~~~~~~~~~~~~~~~~~~~~~~~~~~~~~~~~~~~~~~~~~~~~~~~~~~~~~~~~~~~~~~~~~~%

In this section, we briefly review the necessary preliminaries on
complexity theory and then define the complexity classes
$\mathbf{neglQP}$ and $\mathbf{neglQP}/\text{upoly}$. We generally
follow the definitions and conventions given in Watrous' survey of
quantum complexity theory \cite{Wat09}.

First, we recall the definitions of a promise problem and of a language.

\begin{definition}
\label{df:problem}
	A \emph{promise problem} $P = (P_0, P_1)$ is a pair of disjoint
	subsets $P_0, P_1 \subseteq \{0,1\}^*$. The elements of
	$P_1$ are called the \emph{yes instances} and those of $P_0$ are the
	\emph{no instances}. Both \emph{yes} and \emph{no} instances are
	called \emph{problem instances}. If $P_0 \cup P_1=\{0,1\}^*$, we say
	that $P$ is a \emph{language} and the elements of $P_1$ are the
	\emph{words} of this language. We also establish some additional
	notation by overloading the symbol $P$ twice: first as a set, then
	as a map.

	First, we let $P = P_0 \cup P_1$ be the set of all problem
	instances.
	For all $n \in \N$, we also define the sets $P^n = P \cap \{0,1\}^n$,
	$P_0^n = P_0 \cap \{0,1\}^n$, and $P_1^n = P_1 \cap \{0,1\}^n$ to be
	all problem instances of length $n$, all \emph{no}
	instances of length $n$, and all \emph{yes} instances of length $n$.

	Second, we let $P : P_0 \cup P_1 \to \{0,1\}$ be the unique map
	satisfying $P(x) = 1 \iff x \in P_1$.
	We identify computing the
	map $P$ with the ability to \emph{solve} the problem $P$.
	For all $n \in \N$, we also define
	$P^n : P_0^n \cup P_1^n \to \{0,1\}$ to be the unique map
	satisfying $P^n(x) = 1 \iff x \in P_1^n$.
\end{definition}

Note that, with the notation of \cref{df:problem}, a problem $P = (P_0,
P_1)$ is completely characterized by the resulting sequence of maps
$(P^n : P_0^n \cup P_1^n \to \{0,1\})_{n \in \N}$.

Before proceeding to defining our novel complexity classes, we recall
the classes of problems which can be solved in quantum polynomial-time
with bounded errors, with or without quantum advice. As usual, we denote
these classes by~$\BQP$ and $\BQP/\text{qpoly}$.

\begin{definition}
	Let $a,b : \N \to \R$ be two maps. A promise problem $P$ is in the
	class $\BQP(a,b)$ if there exists a polynomial-time
	$(n,1)$-circuit family $(C_n)_{n \in \N}$ such
	that the following hold:
	\begin{enumerate}
		\item
			For all $x \in P_1$, we have that
			$\bra{1} C_\abs{x}(x) \ket{1} \geq a(\abs{x})$.
		\item
			For all $x \in P_0$, we have that
			$\bra{1} C_\abs{x}(x) \ket{1} \leq b(\abs{x})$.
	\end{enumerate}
\end{definition}

\begin{definition}
	Let $a,b : \N \to \R$ be a map. A promise problem $P$ is in the
	class $\BQP(a,b)\text{/qpoly}$ if there exists a
	sequence of states $(\rho_n)_{n \in \N}$, each on $q(n)$ qubits for
	a polynomially bounded $q : \N \to \N$, and a polynomial-time
	$(q + n,1)$-circuit family $(C_n)_{n \in \N}$ such that
	the following hold:
	\begin{enumerate}
		\item
			For all $x \in P_1$, we have that
			$\bra{1} C_\abs{x}(\rho_\abs{x} \tensor x) \ket{1}
			\geq a(\abs{x})$.
		\item
			For all $x \in P_0$, we have that
			$\bra{1} C_\abs{x}(\rho_\abs{x} \tensor x) \ket{1} \leq
			b(\abs{x})$.
	\end{enumerate}
\end{definition}

By a standard error-reduction-by-repetition argument, we can show that
there is a large flexibility in the choices of $a$ and $b$ without
changing the underlying class, so long as these are sufficiently
bounded. Thus, we define $\mathbf{BQP} = \mathbf{BQP}(2/3,1/3)$ and
$\mathbf{BQP}/\text{qpoly} = \mathbf{BQP}(2/3,1/3)/\text{qpoly}$.

We are now ready to define our novel complexity classes. We begin by
defining the class of problems which can be solved by polynomial-time
quantum computations with negligible errors. We denote this class
$\mathbf{neglQP}$ and immediately note that it is equivalent
to $\mathbf{BQP}$ due to standard error reduction techniques for this
latter class.

\begin{definition}
	A problem $P$ is in the complexity class $\mathbf{neglQP}$ if there
	exists a polynomial-time circuit family
	$(C_n)_{n \in \N}$ and a negligible function $\eta$ such
	that
	\begin{equation}
		x \in P \implies
		\bra{P(x)} C_\abs{x}(x) \ket{P(x)}
		\geq
		1 - \eta(\abs{x}).
	\end{equation}
\end{definition}

\begin{lemma}
	$\mathbf{neglQP} = \mathbf{BQP}$.
\end{lemma}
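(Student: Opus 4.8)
The plan is to prove the two inclusions separately; the forward inclusion is essentially immediate, and the reverse one is the textbook error-reduction-by-majority-vote argument.

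First I would show $\mathbf{neglQP} \subseteq \mathbf{BQP}$. Suppose $P \in \mathbf{neglQP}$, witnessed by a polynomial-time $(n,1)$-circuit family $(C_n)_{n \in \N}$ and a negligible function $\eta$. Since $\eta$ is negligible it is eventually smaller than the constant $1/3$, so there is an $n_0 \in \N$ with $1 - \eta(n) \geq 2/3$ for all $n \geq n_0$; hence both conditions defining $\mathbf{BQP}(2/3,1/3)$ already hold for all problem instances of length at least $n_0$. For the finitely many lengths below $n_0$ there are only finitely many strings in total, so I would modify the family by replacing $C_n$, for $n < n_0$, with a circuit that hard-codes the correct output bit $P^n(x)$ for each such problem instance $x$ (outputting, say, $\ket{0}$ on non-instances). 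A finite modification of a polynomial-time uniform family is still polynomial-time and uniform, so this places $P$ in $\mathbf{BQP}(2/3,1/3) = \mathbf{BQP}$.

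For the reverse inclusion $\mathbf{BQP} \subseteq \mathbf{neglQP}$, let $P \in \mathbf{BQP} = \mathbf{BQP}(2/3,1/3)$ with witnessing polynomial-time family $(C_n)_{n \in \N}$. I would build $(C'_n)_{n \in \N}$ where $C'_n$, on classical input $x$ with $\abs{x} = n$, first copies $x$ into $t(n) := 2n+1$ input registers (free, as $x$ is a classical bit string), runs an independent fresh-ancilla instance of $C_n$ on each copy, measures the $t(n)$ output qubits in the computational basis, and outputs their majority (odd $t(n)$ avoids ties). Since $(C_n)_{n \in \N}$ is polynomial-time and $t$ is a polynomial, and the majority-of-$t(n)$-bits subcircuit is a uniform polynomial-size classical circuit, the family $(C'_n)_{n \in \N}$ is polynomial-time. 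To analyse correctness, fix a problem instance $x$ of length $n$; the $\mathbf{BQP}$ guarantee gives $p := \bra{P(x)} C_n(x) \ket{P(x)} \geq 2/3$ regardless of whether $x$ is a yes- or no-instance. The $t(n)$ measurement outcomes are i.i.d.\ Bernoulli random variables, each equal to $P(x)$ with probability $p \geq 2/3$, so by a Chernoff--Hoeffding bound the probability that at most half of them equal $P(x)$ is at most $e^{-2 t(n)(p - 1/2)^2} \leq e^{-2 t(n) (1/6)^2} = e^{-t(n)/18}$. Thus $\bra{P(x)} C'_n(x) \ket{P(x)} \geq 1 - e^{-(2n+1)/18}$, and $n \mapsto e^{-(2n+1)/18} = 2^{-(2n+1)/(18 \ln 2)}$ is negligible since $(2n+1)/(18\ln 2) \in \omega(\log)$, using the recalled fact that $2^{-f}$ is negligible iff $f \in \omega(\log)$. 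Taking $\eta(n) = e^{-(2n+1)/18}$ shows $P \in \mathbf{neglQP}$.

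I do not expect a real obstacle here: this is a standard fact, and the only points needing minor care are (i) hard-coding the finitely many short inputs in the first inclusion, and (ii) verifying that the amplified family stays genuinely polynomial-time and uniform. The closest thing to a ``choice'' is the repetition count $t(n)$: any polynomial (indeed anything in $\omega(\log)$) makes the Chernoff tail negligible while keeping the circuit size polynomial, and $t(n) = 2n+1$ already does both.
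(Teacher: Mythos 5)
Your proposal is correct and follows essentially the same route as the paper: the inclusion $\mathbf{neglQP} \subseteq \mathbf{BQP}$ via hard-coding the finitely many instances of length below the threshold where $\eta(n) \leq 1/3$, and the inclusion $\mathbf{BQP} \subseteq \mathbf{neglQP}$ via the standard repetition-and-majority-vote amplification, which the paper merely cites and you spell out (correctly) with a Chernoff--Hoeffding bound.
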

\begin{proof}
	The inclusion $\mathbf{BQP} \subseteq \mathbf{neglQP}$ follows
	directly by the standard error reduction technique for
	$\mathbf{BQP}$. It then suffices to show that the inclusion
	$\mathbf{neglQP} \subseteq \mathbf{BQP}$ also holds.
	
	Consider a problem $P \in \mathbf{neglPQ}$ which is solved by a
	polynomial-time circuit family $(C_n)_{n \in \N}$
	with negligible error $\eta$. In particular, there exists an
	$n_0 \in \N$ such that $n \geq n_0 \implies \eta(n)\leq\frac{1}{3}$.
	Thus, we can consider a new circuit family $(C'_n)_{n}$
	where each $C'_n$ for $n < n_0$ simply ``hard codes''
	each solution to each problem instance. Otherwise, if
	$n \geq n_0$, we simply take $C'_n = C_n$.
	It is easy to see that $(C'_n)_{n \in \N}$ is a polynomial-time
	circuit family which solves $P$ with error at most~$\frac{1}{3}$.
	Hence, we have that $P \in \mathbf{BQP}$ and so that
	$\mathbf{neglQP} \subseteq \mathbf{BQP}$.
\end{proof}

We can now state our definition for the complexity class of problems
which can be solved with negligible error by polynomial-time quantum
computation with the help of \emph{uncloneable} advice, which we denote
$\mathbf{neglQP}\text{/upoly}$.

As we have previously discussed in \cref{sc:upoly-review}, a problem $P$
is in $\mathbf{neglQP}\text{/upoly}$ if there exists a sequence of
advice states $(\rho_{n})_{n \in \N}$ satisfying the
following two criteria. First, an honest user must be able to solve
problem instances in $P$ with negligible errors when given a single copy
of the advice state. Second, a malicious user given a single copy of the
advice state cannot share this state between two other non-communicating
malicious users such that they could both solve problem instances in $P$
with more than a negligible advantage. We highlight once again that this
is analogous to the security criteria of a copy-protection scheme as set
out in \cref{df:cp-secure}.

The following definition formalizes the above.

\begin{definition}
\label{df:neglqp/upoly}
	A problem $P = (P_0, P_1)$ is in the complexity class
	$\mathbf{neglQP}/\text{upoly}$ if there exists a
	sequence of quantum states $(\rho_n)_{n \in \N}$,
	each on $q(n)$ qubits respectively for a polynomially bounded map
	$q : \N \to \N$, such that the following two conditions hold:
	\begin{enumerate}
		\item
			\emph{Correctness.}
			There exists a polynomial-time
			$(q + n,1)$-circuit family
			$(C_n)_{n \in \N}$
			and a negligible function $\eta$ such that
			\begin{equation}
				x \in P
				\implies
				\bra{P(x)}
					C_\abs{x}\left(
						\rho_\abs{x} \tensor x
					\right)
				\ket{P(x)}
				\geq
				1 - \eta(\abs{x})
				.
			\end{equation}
		\item
			\emph{Uncloneability.}
			For each $n \in \N$, let
			\begin{equation}
				k_n
				=
				\begin{cases}
					1 & \text{if either } P_0^n \text{ or } P_1^n \text{ is
					empty}
					\\
					\frac{1}{2} & \text{else.}
				\end{cases}
			\end{equation}
			and let $D_n$ be a random
			variable distributed on $\{0,1\}^n$ such that for
			both $b \in \{0,1\}$ we have that
			\begin{equation}
				x \in P_b^n
				\implies
				\Pr[D_n = x]
				=
				k_n
				\cdot
				\frac{1}{\abs{P_b^n}}.
			\end{equation}
			Then, for all triplets $
				\left(
					A,
					B,
					C
				\right)
			$ of polynomial-time
			$(q ,q_B + q_C)$-,
			$(n + q_B, 1)$-, and
			$(q_C + n, 1)$-circuit families,
			respectively, there exists a negligible function~$\eta'$
			such that for all $n \in \N$ satisfying
			$P^n \not= \varnothing$ we have that
			\begin{equation}
			\label{eq:neglqp/upoly}
				\E_{(x_B,x_C) \gets D_n \times D_n}
				\bra{P(x_B), P(x_C)}
					\left(B_n \tensor C_n\right)
					\left(
						x_b
						\tensor
						A_n(\rho_n)
						\tensor
						x_c
					\right)
				\ket{P(x_B), P(x_C)}
				\leq
				\frac{1}{2}
				+
				\eta'(n).
			\end{equation}
	\end{enumerate}
\end{definition}

Note that $\frac{1}{2}$ in the equation above plays the same role as
the $2^{-c(n)}$ term in the definition of security for
copy-protection: it captures the maximal winning probability of
strategies which give the advice unmodified to one guessing party and
has the other guessing party output an element uniformly at random from
the appropriate codomain.

We give a few other remarks on this definition.

\begin{remark}
	It would be possible to strengthen \cref{df:neglqp/upoly} by
	weakening the computational constraints on the adversaries
	considered in uncloneability criterion. While we do not formally
	define these variations here, we note that the advice we construct
	in the proof of \cref{th:promise} for the particular promise problem
	defined in that theorem would fulfill the uncloneability criterion
	against any uniform adversaries, not only polynomial-time
	adversaries.
\end{remark}

\begin{remark}
	For a given problem $P$, the sequence of distributions
	$(D_n)_{n \in \N}$ considered in point 2 of
	\cref{df:neglqp/upoly} is not uniquely defined. Indeed, for a
	particular $n \in \N$, the random variable $D_n$
	is uniquely defined if $P\cap\{0,1\}^n\not=\varnothing$ and is
	unconstrained otherwise.
	However, our definition is insensitive to $D_n$ for the values of
	$n$ where $P \cap \{0,1\}^n = \varnothing$ and so
	this ambiguity is not an issue.
\end{remark}

\begin{remark}
	Our definition of the sequence of distributions $D$ in point 2 of
	\cref{df:neglqp/upoly} could be changed, leading to possibly
	distinct complexity classes.

	Note that while we define random variables $D_n$ over
	$P^n$, in \cref{eq:neglqp/upoly} we actually use the
	random variables $D_n \times D_n$ over
	$\{0,1\}^n \times \{0,1\}^n$. In particular, any other
	definition of a sequence of distributions
	$(D'_n)_{n \in \N}$ on $\{0,1\}^n \times \{0,1\}^n$, provided that
	it satisfy
	\begin{equation}
	\label{eq:neglqp/upoly-consistency}
		\forall n \in \N
		\qq{}
		P^n \not= \varnothing
		\implies
		\Pr[
			D'_n
			\in
			P^n \times P^n
		]
		=
		1,
	\end{equation}
	could replace the sequence $(D_n \times D_n)_{n
	\in \N}$ in \cref{df:neglqp/upoly} and the definition would remain
	coherent. We note, however, that the choice of $k_n$ on the
	right-hand side of \cref{eq:neglqp/upoly} may no longer be the most
	appropriate for certain choices of distributions. Requiring the
	distributions to satisfy \cref{eq:neglqp/upoly-consistency} simply
	enforces the condition that the adversaries should be challenged on
	elements of $P$ with certainty.

	We emphasize here that the distributions $D_n$ depend on the
	problem $P$. If we wish to consider promise problems which are not
	languages, then it seems natural that the distributions depend on
	$P$. Indeed, if the distributions did not depend on $P$, it is
	unclear what we should ask of the adversaries when they are
	challenged on inputs which are not in $P$. While there are sensible
	answers to this question, such as considering any output to be
	valid or not consider these cases when computing the success
	probability of the adversaries, we leave further discussions along
	these lines to future work.

	We highlight three other natural ways in which we could have defined
	the distributions from which the problem instances $(x_0,x_1)$ are
	sampled. These emerge from the four possible ways to answer the
	following two questions:
	\begin{itemize}
		\item
			Should $x_0$ and $x_1$ be equal, or sampled independently?
		\item
			Should the challenges be sampled uniformly at random from
			$P^n$, or in such a way which makes $0$
			and $1$ equally likely to be the correct answers (when
			possible)?
	\end{itemize}
	Our choice for the distributions results from taking the latter
	answer to both questions.

	Finally, we note that this discussion on which distributions to use
	when challenging the adversaries for the uncloneability criterion
	echoes extremely similar discussions pertaining to secure software
	leasing \cite{ALP21,BJL+21} and copy-protected functions
	\cite{CMP20arxiv}.
\end{remark}

With \cref{df:neglqp/upoly} in hand, we can now formalize the relation
between copy-protection and uncloneable advice.
Let $(G, E)$ be a pair of circuit families forming a copy-protection
scheme as defined in \cref{df:cp-scheme}, except that $G$ need not be
efficient, or even uniform: $G$ can be an arbitrary circuit family.
Call such a scheme a \emph{copy-protection scheme with unconstrained
generation}. Correctness and security for copy-protection schemes with
unconstrained generation is exactly as defined for usual
copy-protection schemes.
The next theorem states that a problem $P$ is in
$\mathbf{neglQP}\text{/upoly}$ if and only if there exists a
copy-protection with unconstrained generation $(G, E)$ which is correct
and secure for the sequence of maps
$(P^n : P^n_0 \cup P^n_1 \to \{0,1\})_{n \in \N}$ with respect to random
variables $D = (D_n)_{n \in \N}$ as described in \cref{df:neglqp/upoly}.

\begin{theorem}
	Let $P$ be a problem and let $D = (D_n)_{n \in \N}$ be a sequence of
	random variables as described in \cref{df:neglqp/upoly}.
	For all $n \in \N$, let $
		\tilde{P}^n :
		\{0,1\}^0 \times \left(P_0^n \cup P_1^n\right) \to \{0,1\}
	$ be the map defined by $\tilde{P}^n(\varepsilon, x) = P^n(x)$ for
	all $x \in P_0^n \cup P_1^n$ and
	let $\tilde{D}_n$ be the random variable defined by
	\begin{equation}
		\Pr\left[\tilde{D}_n = (\varepsilon, x_B, x_C)\right]
		=
		\Pr\left[D_n \times D_n = (x_B, x_C)\right].
	\end{equation}
	Then, $P$ is in $\mathbf{neglQP}\text{/upoly}$ if and only if there
	exists a copy-protection scheme with unconstrained generation $(G,
	E)$ for
	the sequence of maps $\tilde{P} = (\tilde{P}^n)_{n \in \N}$ which is
	correct and secure with respect to the distributions
	$\tilde{D} = (\tilde{D}_n)_{n \in \N}$.
\end{theorem}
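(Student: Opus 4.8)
The plan is to treat this theorem as a dictionary between \cref{df:neglqp/upoly} and the triple of definitions \cref{df:cp-scheme}, \cref{df:cp-attack}, \cref{df:cp-secure}, and to prove both implications by exhibiting the matching objects and checking that each correctness/security clause translates on the nose. The relevant parameter substitution is $\kappa \equiv 0$ (so the key space $\{0,1\}^{\kappa(n)}$ is the singleton $\{\varepsilon\}$), $d(n) = n$, and $c \equiv 1$ (so the ``trivial'' success probability $2^{-c(n)}$ is exactly $\frac12$), together with the identifications $\rho_n \leftrightarrow G_n(\varepsilon)$ of the advice state with the generated program state and $C_n \leftrightarrow E_n$ of the honest evaluator with the copy-protection evaluator.

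For the forward direction I would take a witness $(\rho_n)_{n\in\N}$, $(C_n)_{n\in\N}$, $\eta$ for $P \in \mathbf{neglQP}/\text{upoly}$, set $E = (C_n)_{n\in\N}$, and let $G = (G_n)_{n\in\N}$ be a $(0,q)$-circuit family with $G_n(\varepsilon) = \rho_n$ — permissible precisely because unconstrained generation imposes no computability, uniformity, or efficiency restriction on $G$. Correctness of $(G,E)$ for $\tilde P$ then reads $\bra{\tilde P^n(\varepsilon,x)}\,E_n(G_n(\varepsilon)\tensor x)\,\ket{\tilde P^n(\varepsilon,x)} \ge 1-\eta(n)$ for $x \in P_0^n\cup P_1^n$; since $\tilde P^n(\varepsilon,x)=P(x)$ and $G_n(\varepsilon)=\rho_n$, re-indexing by $n=\abs{x}$ is exactly the correctness clause of \cref{df:neglqp/upoly}. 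An attack $(A,B,C)$ against $(G,E)$ is, after substituting $d\equiv n$ and $c\equiv 1$ into \cref{df:cp-attack}, a triple of efficient $(q,q_B+q_C)$-, $(n+q_B,1)$-, and $(q_C+n,1)$-circuit families — the same shape as in the uncloneability clause — and $\rho^{(A,B,C)}_{(G,E),n}(\varepsilon,x_B,x_C) = (B_n\tensor C_n)(x_B\tensor A_n(\rho_n)\tensor x_C)$ matches the state appearing in \cref{eq:neglqp/upoly}. When $P^n\neq\varnothing$ the support of $\tilde D_n$ lies inside $S'_n = \{\varepsilon\}\times P^n\times P^n$, so the weighted sum defining $w^{(A,B,C)}_{(G,E),\tilde P,\tilde D}(n)$ in \cref{df:cp-attack} collapses to $\E_{(x_B,x_C)\gets D_n\times D_n}$ of the inner product in \cref{eq:neglqp/upoly}; since $2^{-c(n)}=\frac12$, the statement ``$n\mapsto w(n)-2^{-c(n)}$ is negligible'' is — using that negligibility here is a one-sided asymptotic upper bound and that $w$ is bounded by $1$ (so one may take $\eta' = \max(w-\frac12,0)$) — precisely the inequality \cref{eq:neglqp/upoly}. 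Hence $(G,E)$ is correct and secure for $\tilde P$ with respect to $\tilde D$.

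The reverse direction runs the same identifications backwards: from such a $(G,E)$ set $\rho_n = G_n(\varepsilon)$ and $C_n = E_n$; the program length $q$ is automatically polynomially bounded because $E$ is efficient and takes $q(n)+n$ qubits of input, so $(\rho_n)_{n\in\N}$ is a legal advice sequence; and the correctness and security conditions of $(G,E)$ read off as the two clauses of \cref{df:neglqp/upoly}. The step I would be most careful about — and the main, though minor, obstacle — is reconciling the ``edge'' conventions rather than any calculation: first, that the advice in \cref{df:neglqp/upoly} is an \emph{a priori} arbitrary state with no generation constraint, which is matched here exactly because $G$ in a scheme \emph{with unconstrained generation} is itself arbitrary, so $G_n(\varepsilon)=\rho_n$ is admissible in both directions; second, the indices $n$ with $P^n=\varnothing$, where $\tilde P^n$ is the empty map, $S'_n=\varnothing$, $w^{(A,B,C)}_{(G,E),\tilde P,\tilde D}(n)=0$ by the empty-sum convention, and \cref{df:neglqp/upoly} asks nothing, so both sides are vacuous; and third, the passage between ``$w-2^{-c}$ negligible'' and ``$w\le\frac12+\eta'$ for a negligible $\eta'$'', using that negligibility is an asymptotic property insensitive to finitely many indices. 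Everything else is the substitution $\kappa\equiv 0$, $d(n)=n$, $c\equiv 1$ into \cref{df:cp-scheme}, \cref{df:cp-attack}, \cref{df:cp-secure} and term-by-term comparison with \cref{df:neglqp/upoly}.
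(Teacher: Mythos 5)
Your translation argument is, in structure, exactly the paper's: identify $\rho_n$ with the generated program state, the honest evaluator with $E$, attacks in \cref{df:cp-attack} (with $\kappa\equiv 0$, $d(n)=n$, $c\equiv 1$) with the adversary triples of \cref{df:neglqp/upoly}, note that the winning probabilities coincide whenever $P^n\not=\varnothing$ and that $w^{(A,B,C)}_{(G,E),\tilde P,\tilde D}(n)=0$ when $P^n=\varnothing$ by the empty-sum convention, and pass between ``$w-2^{-c}$ negligible'' and ``$w\le\frac12+\eta'$''. Your handling of these edge conventions (including $\eta'=\max(w-\frac12,0)$) is if anything more explicit than the paper's sketch.

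There is, however, one concrete step that does not go through as you state it: in the forward direction you take $G$ with $G_n(\varepsilon)=\rho_n$ \emph{exactly}, justifying this by the absence of computability, uniformity, or efficiency constraints on $G$. That is not the obstruction. Even an unconstrained circuit family is still a sequence of circuits over a gate set in the sense of \cref{sc:prelims}, so $G_n(\varepsilon)$ ranges only over the (countable, algebraic-amplitude) states reachable by such circuits; a general advice state $\rho_n$ need not be exactly preparable by \emph{any} circuit, so the family you posit may simply not exist. The paper's proof anticipates precisely this: it chooses $G$ with $\frac{1}{2}\norm{\rho_n - G_n(\varepsilon)}_1 \leq 2^{-n}$ and observes that this negligible perturbation is harmless, since correctness degrades by at most $2^{-n}$ and, because trace distance is contractive under the adversary's channels, the winning probability in the security game changes by at most $2^{-n}$ as well — both absorbed into the negligible slack already present in \cref{df:neglqp/upoly} and \cref{df:cp-secure}. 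With that one-line approximation argument inserted, your proof matches the paper's; the reverse direction, where you set $\rho_n = G_n(\varepsilon)$, is unaffected.
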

\begin{proof}[Prood sketch]
	Assume that $P$ is in $\mathbf{nelgQP}\text{/upoly}$ by using the
	sequence of advice states $(\rho_n)_{n \in \N}$.
	By the correctness condition of $\mathbf{neglQP}\text{/upoly}$,
	there exists an efficient circuit family $C = (C_n)_{n \in \N}$ and
	a negligible function $\eta$ such that $x \in P \implies \bra{P(x)}
	C_n(\rho_{\abs{x}} \tensor x) \ket{P(x)} \geq 1 - \eta(\abs{x})$.
	Moreover, there exists a circuit family $G = (G_n)_{n \in \N}$ such
	that $\frac{1}{2}\norm{\rho_n - G_n(\varepsilon)}_1 \leq
	2^{-n}$.\footnote{It may be that it is impossible to
	achieve $\rho_n = G_n(\varepsilon)$ for all $n$ due to the
	underlying gate set from which $G$ is constructed. This negligible
	error is not an issue as both correctness and security of
	uncloneable advice is defined up to negligible errors.}
	Consider now $(G,C)$ as a copy-protection scheme with unconstrained
	generation for $(\tilde{P}^n)_{n \in \N}$.
	We note two things:
	\begin{enumerate}
		\item
			The correctness of the advice directly implies the
			correctness of $(G,C)$ for $(\tilde{P}^n)_{n \in \N}$.
		\item
			The uncloneability of the advice implies that the scheme
			$(G, C)$ is
			secure for $(\tilde{P}^n)_{n \in \N}$ with respect to
			$(\tilde{D}_n)_{n \in \N}$.

			Indeed, adversaries against the scheme $(G, C)$ are
			precisely adversaries against the advice states
			$(\rho_n)_{n \in \N}$ for the uncloneability criterion of
			$\mathbf{neglQP}\text{/upoly}$ and the winning probability
			of such an adversary is identical in both scenarios for
			those values of $n$ satisfying $P^n \not= \varnothing$.
			As the advice is uncloneable, we have
			that for all adversaries $(A, B, C)$ there exists a
			negligible function $\eta$ such that
			$P^n \not= 0 \implies
			\omega^{(A,B,C)}_{(G,C),\tilde{P},\tilde{D}}(n) \leq
			\frac{1}{2} + \eta(n)$.
			Moreover, by definition, we have that
			$\omega^{(A,B,C)}_{(G,C),\tilde{P},\tilde{D}}(n) = 0$
			whenever $P^n = \varnothing$.
			It follows that $n \mapsto
			\omega^{(A,B,C)}_{(G,C),\tilde{P},\tilde{D}} - \frac{1}{2}$
			is negligible and so $(G,C)$ is secure for $\tilde{P}$ with
			respect to $\tilde{D}$.
	\end{enumerate}

	For the other direction, assume that $(G, E)$ is a copy-protection
	scheme with unconstrained generation for $(\tilde{P}^n)_{n \in \N}$
	which is correct and secure with respect to $(\tilde{D}_n)_{n \in
	\N}$. Now, for all $n \in \N$ define the states $\rho_n =
	G_n(\varepsilon)$. It follows that we can show that $P$ is in
	$\mathbf{neglQP}\text{/upoly}$ with the help of the advice states
	$(\rho_n)_{n \in \N}$ since correctness and security of the
	copy-protection scheme directly imply correctness and security of
	the uncloneable advice.
\end{proof}

As a final remark for this section, we note that the intersection of
$\BQP$ and $\mathbf{neglQP}/\text{upoly}$ can only contain essentially
trivial problems.

\begin{proposition}
	A problem $P$ is in
	$\mathbf{BQP} \cap \mathbf{neglQP}/\text{upoly}$ if and only if
	$\abs{P}$ is finite.
\end{proposition}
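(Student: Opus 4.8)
The plan is to prove both implications directly, the forward direction via a trivial ``discard the advice'' attack.

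For the easy direction, suppose $\abs{P}$ is finite. Then $P \in \mathbf{BQP}$ because a polynomial-time circuit family can hard-code the correct answer to each of the finitely many problem instances. For $P \in \mathbf{neglQP}/\text{upoly}$, take the trivial single-qubit advice states $\rho_n = \ketbra{0}$, a correctness circuit family that again hard-codes all answers (so the error $\eta$ is identically $0$), and note that only finitely many $n$ satisfy $P^n \neq \varnothing$. The uncloneability criterion of \cref{df:neglqp/upoly} only constrains such $n$, and on each of them the left-hand side of \cref{eq:neglqp/upoly} is a probability, hence at most $1$; so for \emph{every} triplet $(A,B,C)$ the choice of $\eta'$ equal to $\tfrac12$ on those finitely many values of $n$ and $0$ elsewhere is negligible (it is eventually zero) and satisfies the requirement. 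Hence $P \in \mathbf{neglQP}/\text{upoly}$.

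For the forward direction, suppose $P \in \mathbf{BQP}\cap\mathbf{neglQP}/\text{upoly}$ and, for contradiction, that $\abs{P}$ is infinite. Since $\mathbf{BQP}=\mathbf{neglQP}$, fix a polynomial-time $(n,1)$-circuit family $(E_n)_{n\in\N}$ and a negligible $\eta$ with $\bra{P(x)}E_{\abs{x}}(x)\ket{P(x)}\geq 1-\eta(\abs{x})$ for all $x\in P$. Let $(\rho_n)_{n\in\N}$, on $q(n)$ qubits, witness $P\in\mathbf{neglQP}/\text{upoly}$. Consider the attack in which the splitting adversary $A_n$ simply traces out all $q(n)$ qubits of the advice (so $q_B=q_C=0$) and the guessing adversaries are $B_n=C_n=E_n$, which is a valid polynomial-time triplet. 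Feeding it into the uncloneability condition gives a negligible $\eta'$ such that, for every $n$ with $P^n\neq\varnothing$,
\[
  \left(
    \E_{x\gets D_n}\bra{P(x)}E_n(x)\ket{P(x)}
  \right)^{2}
  \;\leq\;
  \tfrac12+\eta'(n),
\]
since with $A_n$ producing the empty system the left-hand side of \cref{eq:neglqp/upoly} factorises, as $D_n$-sampled challenges are independent, into the square of $\E_{x\gets D_n}\bra{P(x)}E_n(x)\ket{P(x)}$. As $D_n$ is supported on $P^n\subseteq P$, each term inside the expectation is at least $1-\eta(n)$, so the left-hand side is at least $(1-\eta(n))^2\geq 1-2\eta(n)$. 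Combining the two bounds yields $\tfrac12\leq 2\eta(n)+\eta'(n)$ for every $n$ with $P^n\neq\varnothing$. But $2\eta+\eta'$ is negligible, hence eventually strictly below $\tfrac12$, while $\abs{P}$ infinite forces $P^n\neq\varnothing$ for infinitely many $n$ (only finitely many strings have each length). This contradiction shows $\abs{P}$ is finite.

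No step is genuinely hard here; the only care needed is in matching circuit signatures — the splitting adversary, a $(q,q_B+q_C)$-circuit family, becomes a $(q,0)$-circuit family realised by iterated single-qubit traces, whence the guessers become $(n,1)$-circuit families, exactly the type of the $\mathbf{neglQP}$ solver — and in the bookkeeping around the ``for all $n$ with $P^n\neq\varnothing$'' clause of \cref{df:neglqp/upoly}, which is precisely the point where finiteness of $P$ enters on both sides of the equivalence.
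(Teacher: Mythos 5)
Your proposal is correct and follows essentially the same route as the paper's proof: the finite case is handled by hard-coding answers and using trivial advice with a negligible function that absorbs the finitely many nonempty lengths, and the converse uses exactly the discard-the-advice attack with the $\mathbf{BQP}=\mathbf{neglQP}$ solver as both guessing adversaries, yielding $\tfrac12\leq 2\eta(n)+\eta'(n)$ on all $n$ with $P^n\neq\varnothing$ and hence finiteness. The only cosmetic differences are your explicit factorization into a square (the paper just states the $1-2\eta(n)$ lower bound) and the particular choice of the absorbing negligible function.
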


The proof of this proposition reduces to two ideas.
First, if $P \in \mathbf{BQP}$ then it can solved with the
empty state $1 \in \mc{D}(\C)$ as advice.
Second, the empty state is perfectly cloneable.
Or, more precisely, two copies of the empty state can be easily produced
from any other state.

\begin{proof}

	Assume that $P$ is a problem such that $\abs{P}$ is finite.
	Then, it is trivially true that $P \in \mathbf{BQP}$.
	This can be shown by considering a circuit family $(C_n)_{n \in \N}$
	where each $C_n$ has a ``hard-coded'' list of the elements of
	$P_0^n$ which can then be used to solve all problem instances of
	length $n$ by simply checking if the instance is in $P_0^n$ or not.
	As $\abs{P}$ is finite, this can be implemented efficiently.
	On the other hand, let $m \in \N$ be the maximal length of the
	elements in $P$, i.e.: $x \in P \implies \abs{x} \leq m$.
	Consider the negligible function $\eta : \N \to \R$ defined by
	$\eta(n) = 1$ if $n \leq m$ and $\eta(n) = 0$ if $n > m$ and the
	sequence of empty states $(\rho_n = 1)_{n \in \N}$.
	Correctness and uncloneability as given in \cref{df:neglqp/upoly}
	holds with respect to this sequence of states and this negligible
	function.
	Thus, $P \in \mathbf{neglQP}\text{/upoly}$.

	Assume that $P \in \mathbf{BQP}$ and $P \in
	\mathbf{neglQP}\text{/upoly}$ where the second inclusion is obtained
	with respect to the sequence of advice states $(\rho_n)_{n \in \N}$.
	Let $C = (C_n)_{n \in \N}$ be an efficient $(n,1)$-circuit family and
	$\eta$ be a negligible function such that
	$x \in P \implies \bra{P(x)} C_\abs{x}(x) \ket{P(x)} \geq 1 -
	\eta(\abs{x})$.
	Such a negligible function and circuit family exists by virtue of
	$P$ being in $\mathbf{BQP}$.
	Consider now the triplet of efficient circuit families $(A, C, C)$
	where each circuit $A_n$ simply discards its input.
	Finally, let $(D_n)_{n \in \N}$ be a sequence of random variables as
	defined in \cref{df:neglqp/upoly}.
	As the circuit family $C$ solves $P$ with negligible error on all
	inputs and $P$ is assumed to be in $\mathbf{neglQP}\text{/upoly}$,
	there exists another negligible function $\eta'$ such that for all
	$n \in \N$ satisfying $P^n \not= \varnothing$ we have
	\begin{equation}
		1 - 2\eta(n)
		\leq
		\E_{(x_B, x_C) \gets D_n \times D_n}
		\bra{P(x_B), P(x_C)}
		\left(C_n \tensor C_n\right)
		\left(x_B \tensor A_n(\rho_n) \tensor x_C\right)
		\ket{P(x_B), P(x_C)}
		\leq \frac{1}{2} + \eta'(n),
	\end{equation}
	implying that $\frac{1}{2} \leq \eta'(n) + 2\eta(n)$.
	Since $\eta'$ and $\eta$ are negligible, this inequality can hold
	for at most finitely many values of $n$.
	Thus, $P^n = \varnothing$ for all sufficiently large values of $n$,
	implying that $\abs{P}$ is finite.
\end{proof}

%~~~~~~~~~~~~~~~~~~~~~~~~~~~~~~~~~~~~~~~~~~~~~~~~~~~~~~~~~~~~~~~~~~~~~~%
\subsection{A Promise Problem with Uncloneable Advice, Unconditionally}
\label{sc:promise}
%~~~~~~~~~~~~~~~~~~~~~~~~~~~~~~~~~~~~~~~~~~~~~~~~~~~~~~~~~~~~~~~~~~~~~~%

In this section, we formally describe the promise problem with
uncloneable advice which was discussed in \cref{sc:problem-review}. The
goal is to prove the following theorem.

\begin{theorem}
\label{th:promise}
	Let $(s_\lambda)_{\lambda \in \N}$ be an exponential-time
	ingenerable $2\lambda$-sequence and parse each $s_\lambda$ as a pair of strings
	of length $\lambda$, \emph{i.e.:}
	$s_\lambda = (x_\lambda, \theta_\lambda)$ for $x_\lambda,
	\theta_\lambda \in \{0,1\}^\lambda$ for all $\lambda \in \N$.
	For each $b \in \{0,1\}$, let
	\begin{equation}
		P_b
		=
		\bigcup_{\lambda \in \N}
		\left\{
			(\theta_\lambda, y)
			\;:\;
			y \in \{0,1\}^\lambda
			\land
			x_\lambda \cdot y = b
		\right\}.
	\end{equation}
	Then, the promise problem $P = (P_0, P_1)$ is in
	$\mathbf{neglQP}\text{/upoly}$.
\end{theorem}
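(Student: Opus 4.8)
The plan is to give, for an instance of length $n=2\lambda$, the Wiesner state $\ketbra{x_\lambda^{\theta_\lambda}}$ as advice, and to establish uncloneability by chaining three results: the monogamy-of-entanglement bound of \cite{TFKW13} (\cref{th:tfkw}), the ingenerable-derandomization theorem (\cref{th:random-to-ingenerable}), and the simultaneous Goldreich--Levin lemma of Kundu and Tan (\cref{th:kundu-tan}). Throughout, $\lambda$ is the security parameter; since all instances have even length and $P^n=\varnothing$ for odd $n$, it suffices to treat $n=2\lambda$, and negligibility in $\lambda$ coincides with negligibility in $n$. One preliminary normalization: we may assume $x_\lambda\neq 0^\lambda$ for every $\lambda$ --- otherwise $P_1^{2\lambda}=\varnothing$ and no advice whatsoever could satisfy the uncloneability requirement at that length --- and this is arranged by restricting the counting argument behind \cref{th:ingenerable} to strings with nonzero first half, which still leaves enough candidates for all but finitely many $\lambda$. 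We then set $\rho_n=\ketbra{x_{n/2}^{\theta_{n/2}}}$ on $n/2$ qubits for even $n$ and $\rho_n=\ketbra{0}^{\tensor\lceil n/2\rceil}$ otherwise, so $q(n)=\lceil n/2\rceil$ is polynomially bounded. The honest circuit $C_n$, on input $\rho_n\tensor(\theta',y)$ with $\abs{\theta'}=\abs{y}=n/2$, applies $H^{\theta'}$ qubitwise to $\rho_n$ controlled on the bits of $\theta'$, measures the result to get $\hat x$, and outputs $\hat x\cdot y$. On a genuine instance $(\theta_\lambda,y)\in P$ this returns $x_\lambda$ with certainty (up to a negligible gate-approximation error), so $C_n$ outputs $x_\lambda\cdot y=P((\theta_\lambda,y))$, giving the correctness clause of \cref{df:neglqp/upoly}; the family $(C_n)$ is clearly polynomial-time.

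For uncloneability, fix $n=2\lambda$ and observe that, because $x_\lambda\neq 0^\lambda$, the distribution $D_n\times D_n$ of \cref{df:neglqp/upoly} is exactly: sample $y_B,y_C\gets\{0,1\}^\lambda$ independently and uniformly, with correct answers $x_\lambda\cdot y_B$ and $x_\lambda\cdot y_C$. Hence the left-hand side of \cref{eq:neglqp/upoly} for an attack $(A,B,C)$ is the probability that two non-communicating guessers --- given $\theta_\lambda$ (contained in their challenges), their halves of $A_n(\rho_n)$, and uniform $y_B,y_C$ --- both output $x_\lambda\cdot y$. We bound this with \cref{th:kundu-tan}, taking $\xi_\lambda$ the point mass on $(x_\lambda,x_\lambda)$ and, for a fixed polynomial-time $A$, taking the state to be $\ketbra{\theta_\lambda}\tensor A_\lambda(\ketbra{x_\lambda^{\theta_\lambda}})\tensor\ketbra{\theta_\lambda}$ with the middle factor split between the two parties; this is a legitimate, if inefficiently preparable, density-operator family, which is all \cref{th:kundu-tan} requires (matching the registers of the advice adversaries up to a wire reordering). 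By the ``moreover'' clause of that lemma it then suffices to verify its hypothesis in the exponential-time regime: for every exponential-time pair $(B',C')$, the probability of both recovering $x_\lambda$ from this state is negligible.

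But for fixed polynomial-time $A$ this last quantity is precisely the winning probability of the monogamy-of-entanglement attack $(A,B',C')$ against the \emph{fixed} Wiesner state $\ket{x_\lambda^{\theta_\lambda}}$. Analogously to the proof of \cref{th:money-derandomized}, package this game into a single $(2\lambda,1)$-circuit $\widetilde C_\lambda$ that, on input $(x,\theta)\in\{0,1\}^{2\lambda}$, prepares $\ket{x^\theta}$, runs $A_\lambda$, hands $\theta$ and the two halves to $B'_\lambda$ and $C'_\lambda$, and outputs $1$ iff both return $x$; if $(A,B',C')$ are exponential-time then so is $(\widetilde C_\lambda)$. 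By \cref{th:tfkw}, $\lambda\mapsto\E_{x,\theta\gets\{0,1\}^\lambda}\bra{1}\widetilde C_\lambda(x,\theta)\ket{1}$ is negligible; since $(s_\lambda)$ is an exponential-time ingenerable $2\lambda$-sequence and $\lambda\mapsto 2\lambda$ lies in $\omega(\log)\cap\mc{O}(\lambda)$, \cref{th:random-to-ingenerable} gives that $\lambda\mapsto\bra{1}\widetilde C_\lambda(x_\lambda,\theta_\lambda)\ket{1}$ is negligible --- which is exactly the hypothesis of \cref{th:kundu-tan} we needed, holding for all exponential-time (hence all polynomial-time) $(B',C')$ and every fixed polynomial-time $A$. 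Invoking \cref{th:kundu-tan} then yields, for every polynomial-time $(A,B,C)$, a negligible $\eta'$ with the left-hand side of \cref{eq:neglqp/upoly} at most $\frac{1}{2}+\eta'(n)$ for every even $n$; as $P^n=\varnothing$ for odd $n$, the uncloneability condition holds, and with correctness we conclude $P\in\mathbf{neglQP}/\text{upoly}$.

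The technical heart --- and the main obstacle --- is making these three black boxes compose: \cref{th:tfkw} and \cref{th:random-to-ingenerable} are indexed by the security parameter of the Wiesner state, whereas uncloneability is indexed by the instance length $n=2\lambda$ and one must recognize $D_n\times D_n$ as uniform $(y_B,y_C)$ with paired answers; and \cref{th:kundu-tan} presupposes a \emph{fixed} family of states, so the splitting adversary $A$ and the (only triple-exponentially computable) string $\theta_\lambda$ both have to be absorbed into that family --- legitimate only because the lemma imposes no efficiency constraint on the states themselves. The normalization $x_\lambda\neq 0^\lambda$ is a small but genuine point that must be dispatched up front.
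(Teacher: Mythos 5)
Your chain of reductions --- packaging the monogamy-of-entanglement game of \cref{th:tfkw} together with the fixed splitter $A$ into a single circuit, derandomizing via \cref{th:random-to-ingenerable}, and then applying \cref{th:kundu-tan} with the point-mass $\xi_\lambda$ on $(x_\lambda,x_\lambda)$ and the $A$-dependent state family $\theta_\lambda\tensor A_\lambda(\ketbra*{x_\lambda^{\theta_\lambda}})\tensor\theta_\lambda$, plus the explicit conditional-Hadamard honest circuit --- is exactly the paper's argument (its \cref{th:tfkw-derandomized,th:tfkw-kundu-tan-derandomized}), and that part of your write-up is sound.

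The gap is in your preliminary normalization. \Cref{th:promise} quantifies over an \emph{arbitrary} exponential-time ingenerable $2\lambda$-sequence, so you are not free to ``arrange'' $x_\lambda\neq 0^\lambda$ by re-running the counting argument behind \cref{th:ingenerable} on strings with nonzero first half: that proves only an existential weakening (the statement for one specially constructed sequence), not the theorem as stated. The missing observation is that ingenerability itself already rules out the bad case: the polynomial-time family that samples $y\gets\{0,1\}^\lambda$ uniformly and outputs $(0^\lambda,y)$ produces $s_\lambda$ with probability $2^{-\lambda}$ whenever $x_\lambda=0^\lambda$, while exponential-time ingenerability caps this probability at $p(\lambda)\cdot 2^{-2\lambda}$ for all sufficiently large $\lambda$; hence $x_\lambda=0^\lambda$ can occur only finitely often (this is the paper's \cref{th:ingenerable-prefix}). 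With that in hand, your identification of $D_{2\lambda}\times D_{2\lambda}$ with independent uniform $(y_B,y_C)$ holds for all sufficiently large $\lambda$, which suffices because finitely many exceptional lengths can be absorbed into the negligible function $\eta'$ of \cref{df:neglqp/upoly}. For the same reason, your justification that a single length with $P_1^{2\lambda}=\varnothing$ would already doom any advice is too strong --- negligible functions tolerate finitely many exceptions, so only infinitely many such lengths would be fatal, and the point of \cref{th:ingenerable-prefix} is precisely that this cannot happen.
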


We prove this theorem by first collecting three lemmas, each
encapsulating one step of the argument we made above.

We begin by recalling a lemma concerning the monogamy-of-entanglement
game of \cite{TFKW13}.\footnote{%
	This particular form of the lemma was first stated, essentially, in
	\cite{BL20}. However, it is an immediate corollary of a more general
	result found in \cite{TFKW13} obtained by exploiting the operational
	relation between measuring half an EPR pair \cite{EPR35} and sending
	a random Wiesner state.}

\begin{lemma}[{\cite{TFKW13}}]
\label{th:tfkw}
	Let $n,a_B,a_C \in \N$ be three non-negative integers and let
	$(A, B, C)$ be a triplet of $(n,a_B+a_C)$-, $(a_B+n, n)$-, and
	$(a_C+n, n)$-circuits, respectively. Then, we have that
	\begin{equation}
		\E_{\substack{x \gets \{0,1\}^n \\ \theta \gets \{0,1\}^n}}
		\bra{x,x}
			\left(B \tensor C\right)
			\left(
				\theta
				\tensor
				A\left(\ketbra*{x^\theta} \right)
				\tensor
				\theta
			\right)
		\ket{x,x}
		\leq
		\left(\frac{1}{2} + \frac{1}{2\sqrt{2}}\right)^n
		<
		0.86^n.
	\end{equation}
\end{lemma}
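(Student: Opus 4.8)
The plan is to reproduce the argument of Tomamichel, Fehr, Kaniewski, and Wehner \cite{TFKW13}: bound the value of this monogamy-of-entanglement game by an operator norm, then show that norm decays geometrically. First I would put the adversary in canonical form. By Stinespring dilation I may take $A$ to be an isometry $V$ (absorbing its environment into, say, Bob's register, which does not change the winning probability), and by deferring measurements I may assume that Bob, on receiving $\theta$, performs a projective measurement $\{B^x_\theta\}_{x \in \{0,1\}^n}$ on his register and reports $x$ (similarly $\{C^x_\theta\}_x$ for Charlie); the auxiliary dimensions $a_B, a_C$ play no role. Naimark dilation shows this is without loss of generality, and the winning probability becomes $w = \E_{x,\theta} \bra*{x^\theta} V^\dag (B^x_\theta \tensor C^x_\theta) V \ket*{x^\theta}$.

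Next I would pass to the purified (EPR) picture. Introduce a reference register $R \cong (\C^2)^{\tensor n}$ holding half of $n$ EPR pairs, set $\ket{\Psi} = (\Id_R \tensor V)\ket{\mathrm{EPR}}^{\tensor n}$, and use that every Wiesner state has real amplitudes, so that measuring $R$ in the $\theta$-Wiesner basis and obtaining $x$ leaves the other half in $\ket*{x^\theta}$ with no complex-conjugation twist. One then gets $w = \E_\theta \bra{\Psi} \Pi_\theta \ket{\Psi}$ with $\Pi_\theta := \sum_{x} \ketbra*{x^\theta}_R \tensor B^x_\theta \tensor C^x_\theta$. For fixed $\theta$ the summands over $x$ are mutually orthogonal projectors — orthogonal because the reference factors $\ketbra*{x^\theta}$ are, and each summand is a tensor product of commuting projectors — so $\Pi_\theta$ is itself a projector, and hence $w \le \norm{\frac{1}{2^n}\sum_\theta \Pi_\theta}$, which is just the top eigenvalue of a positive operator.

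It remains to bound this operator norm, which is where TFKW's ``overlap lemma'' enters. I would first prove the general inequality $\norm{\sum_\theta \Pi_\theta} \le \sum_{\phi \in \{0,1\}^n} \max_\theta \norm{\Pi_\theta \Pi_{\theta \oplus \phi}}$ for any projectors $\{\Pi_\theta\}$: write $X = \sum_\theta \ket{\theta} \tensor \Pi_\theta$, so $\norm{\sum_\theta \Pi_\theta} = \norm{X^\dag X} = \norm{X X^\dag}$; decompose $X X^\dag = \sum_\phi D_\phi$ where $D_\phi = \sum_\theta \ketbra{\theta}{\theta \oplus \phi} \tensor \Pi_\theta \Pi_{\theta\oplus\phi}$ groups blocks by their $\oplus$-shift (the group structure of $\{0,1\}^n$ makes the shifts cover each ordered pair exactly once); and observe that $D_\phi D_\phi^\dag = \sum_\theta \ketbra{\theta} \tensor \Pi_\theta \Pi_{\theta\oplus\phi} \Pi_\theta$ is block-diagonal, whence $\norm{D_\phi} = \max_\theta \norm{\Pi_\theta \Pi_{\theta\oplus\phi}}$. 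Combining with the previous step, $w \le \frac{1}{2^n}\sum_\phi \max_\theta \norm{\Pi_\theta \Pi_{\theta\oplus\phi}}$.

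The crux, and the step I expect to be the main obstacle, is the pairwise overlap estimate $\norm{\Pi_\theta \Pi_{\theta'}} \le 2^{-d/2}$, where $d$ is the Hamming distance of $\theta,\theta'$; this is the technical heart of \cite{TFKW13}. Since $\Pi_\theta$ is a projector this is equivalent to $\bra{w}\Pi_{\theta'}\ket{w} \le 2^{-d}$ for every unit $\ket w \in \mathrm{range}\,\Pi_\theta$. Writing $\ket w = \sum_x \ket*{x^\theta}_R \tensor \ket{w_x}$ with the $\ket{w_x}$ mutually orthogonal and in the ranges of $B^x_\theta \tensor C^x_\theta$, one expands $\bra w \Pi_{\theta'}\ket w$ using two facts: on the $d$ coordinates where $\theta,\theta'$ differ, $\braket*{x^\theta}{x'^{\theta'}}$ equals $2^{-d/2}$ times a $\pm 1$ character (a Hadamard transform over those coordinates), and the families $\{B^x_\theta \tensor C^x_\theta\}_x$ and $\{B^{x'}_{\theta'} \tensor C^{x'}_{\theta'}\}_{x'}$ are each mutually orthogonal. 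A Pythagorean-type cancellation between these — already visible in the $n=1$ identity $\abs{a_0 p + b_0 q}^2 + \abs{a_0 q - b_0 p}^2 = (\abs{a_0}^2 + \abs{b_0}^2)(p^2 + q^2)$ — collapses the naive combinatorial factor $2^{d}$ down to $1$, leaving $2^{-d}$. Granting this, $w \le \frac{1}{2^n}\sum_{\phi} 2^{-\abs{\phi}/2} = \frac{1}{2^n}\sum_{k=0}^{n}\binom{n}{k} 2^{-k/2} = \frac{1}{2^n}(1 + 2^{-1/2})^n = \left(\frac{1}{2} + \frac{1}{2\sqrt2}\right)^n$, and since $\frac{1}{2} + \frac{1}{2\sqrt2} = 0.8535\ldots < 0.86$, the claimed bound follows.
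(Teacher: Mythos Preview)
The paper does not prove this lemma; it simply recalls it from \cite{TFKW13} (noting in a footnote that this operational form first appears in \cite{BL20}). Your outline correctly reconstructs the architecture of the TFKW argument --- the reduction to projective strategies, the EPR purification giving $w \le \norm{\frac{1}{2^n}\sum_\theta \Pi_\theta}$, the shift decomposition $\norm{\sum_\theta \Pi_\theta} \le \sum_\phi \max_\theta \norm{\Pi_\theta \Pi_{\theta\oplus\phi}}$, and the final binomial sum --- so the only issue is the overlap estimate $\norm{\Pi_\theta \Pi_{\theta'}} \le 2^{-d/2}$, which you correctly flag as the crux but whose sketch does not go through.

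Your argument for that step invokes only two facts: the Hadamard-character structure of $\braket*{x^\theta}{x'^{\theta'}}$ and the mutual orthogonality within each family $\{B^x_\theta \tensor C^x_\theta\}_x$. These alone are insufficient. The identical hypotheses hold for the single-register variant $\Pi_\theta = \sum_x \ketbra*{x^\theta} \tensor D^x_\theta$ with any projective measurements $\{D^x_\theta\}_x$, yet taking $n=1$ and $D^x_\theta = \ketbra*{x^\theta}$ on a qubit gives $\Pi_0$ and $\Pi_1$ sharing the unit eigenvector $\tfrac{1}{\sqrt2}(\ket{00}+\ket{11})$, so $\norm{\Pi_0\Pi_1} = 1 \not\le 2^{-1/2}$. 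Concretely, your ``Pythagorean cancellation'' fails because applying $B^{x'}_{\theta'}\tensor C^{x'}_{\theta'}$ to the vectors $\ket{w_x}$ need not preserve their orthogonality: the $B^{x'}_{\theta'}$ factor disturbs the $B^x_\theta$-orthogonality that made the $\ket{w_x}$ orthogonal in the first place. The missing idea --- precisely what makes this a \emph{monogamy} bound --- is to exploit that $B$ and $C$ live on disjoint tensor factors: relax $\Pi_\theta \le P^B_\theta := \sum_x \ketbra*{x^\theta} \tensor B^x_\theta \tensor I$ and $\Pi_{\theta'} \le P^C_{\theta'} := \sum_{x'} \ketbra*{x'^{\theta'}} \tensor I \tensor C^{x'}_{\theta'}$, so that $\norm{\Pi_\theta\Pi_{\theta'}} \le \norm{P^B_\theta P^C_{\theta'}}$. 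Now the $C$-projectors in $P^C_{\theta'}$ commute with the $B$-structure underlying the orthogonality of the $\ket{w_x}$, the Pythagorean step becomes valid, and one obtains $P^B_\theta P^C_{\theta'} P^B_\theta \le 2^{-d}\, P^B_\theta$ directly.
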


As for \cref{th:money}, this lemma holds for any triplet of channels,
but we frame it in terms of exponential-time circuits to maintain
consistency with the rest of this work and, in particular, this section.
Now, we ``derandomize'' the above, analogously to how we previously
obtained \cref{th:money-derandomized}.

\begin{lemma}
\label{th:tfkw-derandomized}
	Let $(s_\lambda)_{\lambda \in \N}$ be an exponential-time
	ingenerable $2\lambda$-sequence and parse each $s_\lambda$ as a pair
	of strings of length $\lambda$, \emph{i.e.:} $s_\lambda = (x_\lambda, \theta_\lambda)$
	for $x_\lambda, \theta_\lambda \in \{0,1\}^\lambda$ for all $\lambda
	\in \N$.

	Then, for any triplet $
		(
			(A_\lambda)_{\lambda \in \N},
			(B_\lambda)_{\lambda \in \N},
			(C_\lambda)_{\lambda \in \N}
		)
	$ of exponential-time $(\lambda, b + c)$-,
	$(\lambda + b, \lambda)$-, and
	$(c + \lambda, \lambda)$-circuit families for maps $b,c : \N\to\N$,
	respectively, we have that
	\begin{equation}
		\lambda
		\mapsto
		\bra{x_\lambda, x_\lambda}
			\left(B_\lambda \tensor C_\lambda\right)
			\left(
				\theta_\lambda
				\tensor
				A_\lambda\left(\ketbra*{x_\lambda^{\theta_\lambda}}\right)
				\tensor
				\theta_\lambda
			\right)
		\ket{x_\lambda, x_\lambda}
	\end{equation}
	is negligible.
\end{lemma}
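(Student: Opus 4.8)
The plan is to reduce to \cref{th:tfkw} via \cref{th:random-to-ingenerable}, in exactly the same manner that \cref{th:money-derandomized} was obtained from \cref{th:money}. Concretely, I would fold the entire monogamy game --- preparation of the Wiesner state, the splitting adversary $A_\lambda$, the two guessing adversaries $B_\lambda, C_\lambda$, and the final comparison --- into a single circuit family acting on the ``raw'' input $x \tensor \theta$, and then invoke the fact that $(s_\lambda)_{\lambda\in\N}$ is ingenerable.

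First I would build a polynomial-time $(2\lambda, 4\lambda)$-circuit family $(S_\lambda)_{\lambda \in \N}$ which, on input $x \tensor \theta$ with $x,\theta \in \{0,1\}^\lambda$, outputs $x \tensor \theta \tensor \ketbra*{x^\theta} \tensor \theta$: the first block retains $x$ for the later comparison, the second and fourth blocks carry the two copies of $\theta$ that the guessers need, and the third block carries the Wiesner state handed to $A_\lambda$. Next I would build a polynomial-time $(3\lambda, 1)$-circuit family $(R_\lambda)_{\lambda \in \N}$ which, on input $x \tensor y_B \tensor y_C$, measures the last two $\lambda$-qubit registers in the computational basis and outputs $1$ if and only if both results equal $x$. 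Composing these with the adversaries --- $A_\lambda$ acting on the third block produced by $S_\lambda$, then $B_\lambda \tensor C_\lambda$ acting respectively on $(\theta, A_\lambda\text{'s first }b(\lambda)\text{ output qubits})$ and $(A_\lambda\text{'s last }c(\lambda)\text{ output qubits}, \theta)$, then $R_\lambda$ on $(x, B_\lambda\text{'s output}, C_\lambda\text{'s output})$ --- yields a $(2\lambda, 1)$-circuit family $(\tilde{C}_\lambda)_{\lambda \in \N}$. Since $S_\lambda$ and $R_\lambda$ are polynomial-time and $A_\lambda, B_\lambda, C_\lambda$ are exponential-time, $(\tilde{C}_\lambda)_{\lambda \in \N}$ is exponential-time, and by construction
\begin{equation}
	\bra{1}\tilde{C}_\lambda(x \tensor \theta)\ket{1}
	=
	\bra{x,x}
		\left(B_\lambda \tensor C_\lambda\right)
		\left(
			\theta \tensor A_\lambda\left(\ketbra*{x^\theta}\right) \tensor \theta
		\right)
	\ket{x,x}
\end{equation}
for all $x,\theta \in \{0,1\}^\lambda$ and all $\lambda \in \N$.

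With this in hand, \cref{th:tfkw} gives $\E_{x,\theta \gets \{0,1\}^\lambda}\bra{1}\tilde{C}_\lambda(x \tensor \theta)\ket{1} \leq \left(\tfrac12 + \tfrac{1}{2\sqrt2}\right)^\lambda$, which is negligible. Since $(s_\lambda)_{\lambda \in \N}$ is an exponential-time ingenerable $2\lambda$-sequence and $\lambda \mapsto 2\lambda$ lies in $\mc{O}(\lambda^k)$ as well as in $\omega(\log)$, the exponential-time form of \cref{th:random-to-ingenerable} applies to $(\tilde{C}_\lambda)_{\lambda \in \N}$ and yields that $\lambda \mapsto \bra{1}\tilde{C}_\lambda(s_\lambda)\ket{1}$ is negligible. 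As $s_\lambda = (x_\lambda, \theta_\lambda)$, the displayed equality above (with $x = x_\lambda$, $\theta = \theta_\lambda$) identifies this quantity with the expression in the statement, completing the proof.

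I do not expect a serious obstacle here: the argument is structurally identical to the proof of \cref{th:money-derandomized}, the only genuine content being the routine verification that the composed family $(\tilde{C}_\lambda)_{\lambda \in \N}$ is exponential-time (a composition of constantly many exponential-time families, glued by polynomial-time families, is exponential-time) and the bookkeeping of which wires are routed to $A_\lambda$, $B_\lambda$, and $C_\lambda$. The one point worth stating carefully is that \cref{th:random-to-ingenerable} is invoked in its exponential-time form, which is why exponential-time ingenerability suffices and why it matters that the length function $2\lambda$ is polynomially bounded.
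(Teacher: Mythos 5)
Your proposal is correct and follows essentially the same route as the paper's proof: the same wrapper circuits $S_\lambda$ and $R_\lambda$, the same composition into an exponential-time $(2\lambda,1)$-family $\tilde{C}_\lambda$ satisfying the key identity, and the same two-step invocation of \cref{th:tfkw} followed by the exponential-time form of \cref{th:random-to-ingenerable}. The points you flag as needing care (exponential-time uniformity of the composed family and the polynomial bound on $\ell(\lambda)=2\lambda$) are exactly the technicalities the paper handles.
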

\begin{figure}
	\begin{center}
		% Units (u) correspond to 0.72em.
% Wires should be vertically seperated by 4u.
% Gates should vertically overshoot their wires by 1u.
% Gates should be 4u wide.
% Gates should be horizontally seperated by 6us.

\begin{tikzpicture}[thick,x=0.72em,y=0.72em]

	\node(a) at (0,0)             {$A_\lambda$};
	\draw ($(a) + (2,3)$) rectangle ($(a) - (2,3)$);

	\node(b) at ($(a) + ( 10,4)$) {$B_\lambda$};
	\draw ($(b) + (2,3)$) rectangle ($(b) - (2,3)$);

	\node(c) at ($(a) + ( 10,-4)$) {$C_\lambda$};
	\draw ($(c) + (2,3)$) rectangle ($(c) - (2,3)$);

	\node(r) at ($(a) + ( 20, 2)$) {$R_\lambda$};
	\draw ($(r) + (2,9)$) rectangle ($(r) - (2,9)$);

	\node(s) at ($(a) + (-10, 2)$) {$S_\lambda$};
	\draw ($(s) + (2,9)$) rectangle ($(s) - (2,9)$);

	\draw
		($(s) + ( 2,-2)$)
		to node[pos=1/2, above] {\footnotesize$\ketbra{x^\theta}$}
		($(a) + (-2,0)$);

	\draw
		($(s) + ( 2, 4)$)
		to node[pos=3/16, above] {\footnotesize$\theta$}
		($(b) + (-2, 2)$);

	\draw
		($(s) + ( 2,-8)$)
		to node[pos=3/16, above] {\footnotesize$\theta$}
		($(c) + (-2,-2)$);

	\draw
		($(s) + ( 2, 8)$)
		to node[pos=3/26, above] {\footnotesize$x$}
		($(r) + (-2, 8)$);

	\draw
		($(a) + ( 2, 2)$)
		to
		($(b) + (-2,-2)$);

	\draw
		($(a) + ( 2,-2)$)
		to
		($(c) + (-2, 2)$);

	\draw
		($(b) + ( 2, 0)$)
		to
		($(r) + (-2, 2)$);

	\draw
		($(c) + ( 2, 0)$)
		to
		($(r) + (-2,-6)$);

	\draw
		($(r) + ( 2, 0)$)
		to node[pos=5/7, above] {\footnotesize$0$ or $1$}
		($(r) + ( 8, 0)$);

	\node (x) at ($(s) + (-8, 2)$) {};
	\node (t) at ($(s) + (-8,-2)$) {};
	\node (split) at ($(s) + (-10,0)$) {};

	\draw
		($(x)$)
		to node[pos=1/2, above] {\footnotesize$x$}
		($(s) + (-2, 2)$);

	\draw
		($(t)$)
		to node[pos=1/2, above] {\footnotesize$\theta$}
		($(s) + (-2,-2)$);

	\draw
		($(split) - (6,0)$)
		to node[pos=1/6, above] {\footnotesize$s$}
		($(split)$);
	\draw
		($(split)$)
		to
		($(x)$);
	\draw
		($(split)$)
		to
		($(t)$);

	\draw ($(s) - (2+6+2+2,9+2)$) rectangle ($(r) + (2+2,9+2)$);

	\node (f) at ($(s) - (2+6+2,-9)$) {$\tilde{C}_\lambda$};

\end{tikzpicture}
	\end{center}
	\caption{\label{fg:tfkw-derandomized}%
		A schematic representation of the $\tilde{C}_\lambda$ circuit
		constructed in the proof of \cref{th:tfkw-derandomized}. The
		wires are labelled, when possible, with the states they are
		expected to carry in the context of the proof. Every wire
		represents $\lambda$ qubits, except the initial wire, final
		wire, and those between the $A_\lambda$, $B_\lambda$, and
		$C_\lambda$ circuits. These represent, respectively, $2\lambda$,
		$1$, $b_\lambda$, and $c_\lambda$ qubits.
	}
\end{figure}
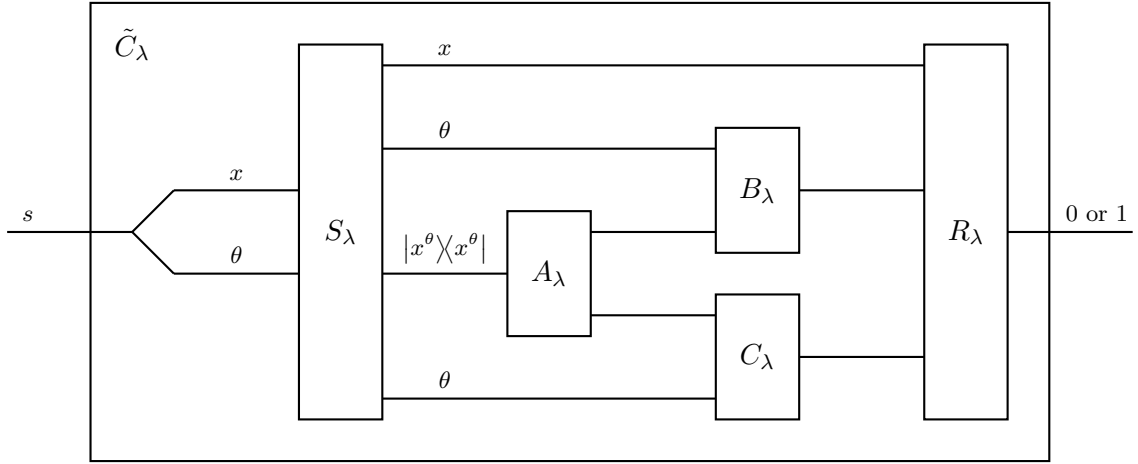
\begin{proof}
	It suffices to apply \cref{th:random-to-ingenerable} to
	\cref{th:tfkw}. We follow the same general ideas as in the proof of
	\cref{th:money-derandomized}.

	Let $(S_\lambda)_{\lambda \in \N}$ be a polynomial-time family of
	$(2\lambda,4\lambda)$-circuits such that each circuit, on input of
	$x \tensor \theta$, outputs the state
	$x \tensor \theta \tensor \ketbra{x^\theta} \tensor \theta$ for
	any $x,\theta \in \{0,1\}^\lambda$. Let $(R_\lambda)_{\lambda\in\N}$
	be a polynomial-time family of $(3\lambda,1)$-circuits where each
	circuit measures all of its input qubits in the computational bases,
	parses it as three strings of length $\lambda$, and outputs $1$ if
	all three are equal and $0$ otherwise.

	We now consider the exponential-time family of $(2\lambda,1)$-circuits
	$(\tilde{C}_\lambda)_{\lambda \in \N}$ such that, for all
	$\lambda \in \N$, $\tilde{C}_\lambda$ is the composition of the
	$A_\lambda$, $B_\lambda$, $C_\lambda$, $S_\lambda$, and $R_\lambda$
	circuits such that the resulting channel is given by
	\begin{equation}
		\tilde{C}_\lambda
		=
		R_\lambda
		\circ
		\left(
			\text{Id}_{\lambda}
			\tensor
			B_\lambda
			\tensor
			C_\lambda
		\right)
		\circ
		\left(
			\text{Id}_{\lambda}
			\tensor
			\text{Id}_{\lambda}
			\tensor
			A_\lambda
			\tensor
			\text{Id}_{\lambda}
		\right)
		\circ
		S_\lambda.
	\end{equation}
	This composition is illustrated in \cref{fg:tfkw-derandomized}.
	Trivial calculations shows that for all
	$x, \theta \in \{0,1\}^\lambda$ we have that
	\begin{equation}
	\label{eq:tfkw-derandomized}
		\bra{1} \tilde{C}_\lambda(x \tensor \theta)\ket{1}
		=
		\bra{x, x}
			\left(B_\lambda \tensor C_\lambda\right)
			\left(
				\theta
				\tensor
				A_\lambda(\ketbra*{x^\theta})
				\tensor
				\theta
			\right)
		\ket{x, x}
	\end{equation}
	In particular, by \cref{th:tfkw},
	\begin{equation}
		\E_{
			\substack{
				x \gets \{0,1\}^\lambda\\
				\theta \gets \{0,1\}^\lambda
			}
		}
		\bra{1}
			\tilde{C}_\lambda(x \tensor \theta)
		\ket{1}
		\leq
		\left(\frac{1}{2} + \frac{1}{2\sqrt{2}}\right)^\lambda
	\end{equation}
	is a negligible function. Thus, by
	\cref{th:random-to-ingenerable} and the fact that
	$(s_\lambda)_{\lambda \in \N}$ is an exponential-time ingenerable
	$2\lambda$-sequence, the function
	$
		\lambda
		\mapsto
		\bra{1}
			\tilde{C}_\lambda(x_\lambda \tensor \theta_\lambda)
		\ket{1}
	$ is also negligible which, by \cref{eq:tfkw-derandomized}, is the
	desired result.
\end{proof}

Finally, we can apply \cref{th:kundu-tan} to show that any
exponential-time adversaries will have a negligible advantage in determining the inner
product of $x_\lambda$ with uniformly random strings $y$.

\begin{lemma}
\label{th:tfkw-kundu-tan-derandomized}
	Let $(s_\lambda)_{\lambda \in \N}$ be an exponential-time
	ingenerable $2\lambda$-sequence and parse each $s_\lambda$ as a
	pair of strings of length
	$\lambda$, \emph{i.e.:} $s_\lambda = (x_\lambda,\theta_\lambda)$
	for two strings $x_\lambda, \theta_\lambda \in \{0,1\}^\lambda$ for all $\lambda
	\in \N$. Then, for any triplet~$
		(
			(A_\lambda)_{\lambda \in \N},
			(B_\lambda)_{\lambda \in \N},
			(C_\lambda)_{\lambda \in \N}
		)
	$ of exponential-time $(\lambda, b + c)$-, $(2\lambda + b, 1)$-,
	and $(c + 2\lambda, 1)$-circuit families, respectively and for maps
	$b, c : \N \to \N$, there exists a negligible function $\eta$ such
	that
	\begin{equation}
		\E_{\substack{y_B \gets \{0,1\}^\lambda\\y_C \gets
		\{0,1\}^\lambda}}
		\bra{x_\lambda \cdot y_B, x_\lambda \cdot y_C}
			\left(B_\lambda \tensor C_\lambda\right)
			\left(
				y_B
				\tensor
				\theta_\lambda
				\tensor
				\sigma_\lambda
				\tensor
				\theta_\lambda
				\tensor
				y_C
			\right)
		\ket{x_\lambda \cdot y_B, x_\lambda \cdot y_C}
		=
		\frac{1}{2} + \eta(\lambda)
	\end{equation}
	where $
		\sigma_\lambda
		=
		A_\lambda\left(\ketbra*{x_\lambda^{\theta_\lambda}}\right)
	$.
\end{lemma}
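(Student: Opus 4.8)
The plan is to obtain this lemma as a direct instantiation of the simultaneous quantum Goldreich--Levin theorem, \cref{th:kundu-tan}, in its exponential-time form, using \cref{th:tfkw-derandomized} to discharge the hypothesis. The one conceptual point is that \cref{th:kundu-tan} is phrased for two guessers sharing a single state $\rho_{n,x_B,x_C}$ and receiving only the inner-product challenges $y_B,y_C$, with no separate ``basis'' register; I would accommodate the string $\theta_\lambda$ by folding a copy of it into the shared state for each guesser, which is harmless since $\theta_\lambda$ is classical and fixed.

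Concretely, first I would instantiate \cref{th:kundu-tan} with $\ell(\lambda) = \lambda$, with $\xi_\lambda$ the point-mass distribution on the pair $(x_\lambda, x_\lambda)$ (legitimate, as \cref{th:kundu-tan} imposes no constraint on the $\xi_n$ beyond their supports), and with the shared state
\[
	\rho_{\lambda, x_\lambda, x_\lambda}
	=
	\theta_\lambda \tensor \sigma_\lambda \tensor \theta_\lambda,
	\qquad
	\sigma_\lambda = A_\lambda\!\left(\ketbra*{x_\lambda^{\theta_\lambda}}\right),
\]
a state on $b'(\lambda) + c'(\lambda)$ qubits with $b'(\lambda) = \lambda + b(\lambda)$ the register routed to Bob (his copy of $\theta_\lambda$ together with his $b(\lambda)$-qubit share of $\sigma_\lambda$) and $c'(\lambda) = c(\lambda) + \lambda$ the register routed to Charlie. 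Under this identification, a pair $(B', C')$ of exponential-time $(b', \ell)$- and $(c', \ell)$-circuit families together with the fixed exponential-time family $A$ is exactly a triplet of the type appearing in \cref{th:tfkw-derandomized}, and since $\xi_\lambda$ is deterministic the quantity $\lambda \mapsto \E_{(x_B,x_C) \gets \xi_\lambda}\bra{x_B,x_C}(B'_\lambda \tensor C'_\lambda)(\rho_{\lambda,x_B,x_C})\ket{x_B,x_C}$ is literally the function that \cref{th:tfkw-derandomized} shows to be negligible. So the hypothesis of \cref{th:kundu-tan} holds.

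Reading off the conclusion of \cref{th:kundu-tan}, and using $\ell + b' = 2\lambda + b$ and $c' + \ell = c + 2\lambda$ so that the circuit types match those in the statement, I get that for every exponential-time pair $(B, C)$ of $(2\lambda + b, 1)$- and $(c + 2\lambda, 1)$-circuit families,
\[
	\lambda
	\mapsto
	\E_{\substack{y_B \gets \{0,1\}^\lambda \\ y_C \gets \{0,1\}^\lambda}}
	\bra{x_\lambda \cdot y_B, x_\lambda \cdot y_C}
		(B_\lambda \tensor C_\lambda)
		\left(y_B \tensor \rho_{\lambda,x_\lambda,x_\lambda} \tensor y_C\right)
	\ket{x_\lambda \cdot y_B, x_\lambda \cdot y_C}
	-
	\frac{1}{2}
\]
is negligible; substituting $\rho_{\lambda,x_\lambda,x_\lambda} = \theta_\lambda \tensor \sigma_\lambda \tensor \theta_\lambda$ yields exactly the displayed equation, with $\eta$ taken to be this negligible function. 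I do not expect a real obstacle here: the work is bookkeeping --- checking that the tensor orderings align (the first $\lambda$ qubits of $\rho_{\lambda,x_\lambda,x_\lambda}$ are the copy of $\theta_\lambda$ that lands on Bob's side, the last $\lambda$ the copy that lands on Charlie's side), that identifying the string $\theta_\lambda$ with the density operator $\ketbra{\theta_\lambda}$ is harmless, and that the exponential-time bookkeeping in \cref{th:tfkw-derandomized} and \cref{th:kundu-tan} composes. The only mildly delicate choice is the degenerate $\xi_\lambda$, and that is permitted by \cref{th:kundu-tan} as stated.
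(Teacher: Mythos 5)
Your proposal is correct and follows essentially the same route as the paper's own proof: instantiate \cref{th:kundu-tan} with the point-mass $\xi_\lambda$ on $(x_\lambda, x_\lambda)$ and the shared state $\theta_\lambda \tensor \sigma_\lambda \tensor \theta_\lambda$, discharge its hypothesis via \cref{th:tfkw-derandomized} by pairing each candidate guessing pair $(B', C')$ with the fixed splitting family $A$, and read off the conclusion. Your explicit bookkeeping of the register sizes ($b' = \lambda + b$, $c' = c + \lambda$, so $\ell + b' = 2\lambda + b$ and $c' + \ell = c + 2\lambda$) matches the paper's implicit accounting.
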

\begin{proof}
	In the notation of \cref{th:kundu-tan}, we let $\xi_\lambda$ be the
	random variable distributed on
	$\{0,1\}^\lambda \times \{0,1\}^\lambda$ such that
	$\Pr[\xi_\lambda = (x_\lambda, x_\lambda)] = 1$ and we let $
		\rho_{\lambda, x_\lambda, x_\lambda}
		=
		\theta_\lambda
		\tensor
		\sigma_{\lambda}
		\tensor
		\theta_\lambda
	$.

	\Cref{th:tfkw-derandomized} implies that for every pair
	$((B'_\lambda)_{\lambda \in \N}, (C'_\lambda)_{\lambda \in \N})$ of
	exponential-time $(\lambda + b, \lambda)$- and $(c + \lambda,
	\lambda)$-circuits, respectively, the function
	\begin{equation}
		\lambda
		\mapsto
		\bra{x_\lambda, x_\lambda}
			(B'_\lambda \tensor C'_\lambda)
			(\rho_{\lambda, x_\lambda, x_\lambda})
		\ket{x_\lambda, x_\lambda}
	\end{equation}
	is negligible. Since the support of $\xi_\lambda$ is
	$\{(x_\lambda,x_\lambda)\}$, the premise of \cref{th:kundu-tan} is
	satisfied. Thus,
	\begin{equation}
		\lambda
		\mapsto
		\E_{\substack{y_B \gets \{0,1\}^\lambda\\y_C \gets
		\{0,1\}^\lambda}}
		\bra{x_\lambda \cdot y_B, x_\lambda \cdot y_C}
			\left(B_\lambda \tensor C_\lambda\right)
			\left(
				y_B
				\tensor
				\rho_\lambda
				\tensor
				y_C
			\right)
		\ket{x_\lambda \cdot y_B,x_\lambda \cdot y_C}
		-
		\frac{1}{2}
	\end{equation}
	is also negligible. Taking $\eta$ to be this map and expanding the
	definition of $\rho_\lambda$ yields the result.
\end{proof}

We give a final technical lemma concerning exponential-time ingenerable
sequences before proving the theorem.

\begin{lemma}
\label{th:ingenerable-prefix}
	Let $(s_\lambda)_{\lambda \in \N}$ be an exponential-time
	ingenerable $2\lambda$-sequence and parse each $s_\lambda$ as
	$(x_\lambda, \theta_\lambda)$ for two
	strings~$x_\lambda, \theta_\lambda \in \{0,1\}^\lambda$. Then,
	there exists a $\tilde{\lambda}$ such that $\lambda \geq
	\tilde{\lambda}$ implies that
	$x_\lambda \not= 0^\lambda$.
\end{lemma}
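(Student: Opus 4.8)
The plan is a proof by contradiction: I will show that if $x_\lambda = 0^\lambda$ held for infinitely many $\lambda$, then the prefix $0^\lambda$ of $s_\lambda$ would come ``for free'' and a uniformly random guess for the suffix $\theta_\lambda$ would already produce $s_\lambda$ with probability $2^{-\lambda}$, which is too large for an ingenerable $2\lambda$-sequence.

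Concretely, I would fix the universal gate set $\mc{G}$ from the end of \cref{sc:prelims-circuits} (any gate set containing the $\ket 0$ state preparation map and Hadamard conjugation would do). Since $(s_\lambda)_{\lambda \in \N}$ is exponential-time ingenerable, \cref{df:ingenerable} provides a polynomial $p_\mc{G} : \N \to \R$ such that the sequence is exponential-time $(p_\mc{G} \cdot 2^{-2\lambda})$-ingenerable with respect to $\mc{G}$; that is, every exponential-time $(0,2\lambda)$-circuit family $(C_\lambda)_{\lambda \in \N}$ built from $\mc{G}$ satisfies $\bra{s_\lambda} C_\lambda(\varepsilon) \ket{s_\lambda} < p_\mc{G}(\lambda) \cdot 2^{-2\lambda}$ for all but finitely many $\lambda$.

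Next I would exhibit one such circuit family violating this bound infinitely often under the contradiction hypothesis. Let $(C_\lambda)_{\lambda \in \N}$ be the polynomial-time (hence exponential-time) $(0,2\lambda)$-circuit family built from $\mc{G}$ in which $C_\lambda$ prepares $2\lambda$ fresh qubits in the state $\ket 0$ and then applies a Hadamard to each of the last $\lambda$ of them; this implements the channel $c \mapsto c\,\ketbra{0^\lambda} \tensor \tfrac{1}{2^\lambda}I$. Hence for any $s_\lambda = (x_\lambda,\theta_\lambda)$ with $x_\lambda = 0^\lambda$ we get, regardless of the value of $\theta_\lambda$, $\bra{s_\lambda} C_\lambda(\varepsilon)\ket{s_\lambda} = \langle 0^\lambda|0^\lambda\rangle^2 \cdot \bra{\theta_\lambda}\tfrac{1}{2^\lambda}I\ket{\theta_\lambda} = 2^{-\lambda}$.

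Finally, since $p_\mc{G}$ is a polynomial, $p_\mc{G}(\lambda)\cdot 2^{-2\lambda} < 2^{-\lambda}$ for all sufficiently large $\lambda$. Thus, if $x_\lambda = 0^\lambda$ held for infinitely many $\lambda$, then $\bra{s_\lambda} C_\lambda(\varepsilon)\ket{s_\lambda} = 2^{-\lambda} \geq p_\mc{G}(\lambda)\cdot 2^{-2\lambda}$ for infinitely many $\lambda$, contradicting exponential-time $(p_\mc{G}\cdot 2^{-2\lambda})$-ingenerability with respect to $\mc{G}$. Therefore $x_\lambda = 0^\lambda$ for only finitely many $\lambda$, which yields the desired $\tilde\lambda$. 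There is no genuine obstacle here; the only points requiring care are (i) checking that the witnessing family has exactly the type required by the definition — a $(0,2\lambda)$-circuit family built from the fixed gate set, trivially computable within any time bound — and (ii) quantifying the ``free suffix guess'' so that the lower bound $2^{-\lambda}$ holds uniformly over the unknown $\theta_\lambda$.
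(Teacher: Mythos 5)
Your proposal is correct and follows essentially the same route as the paper: assume $x_\lambda = 0^\lambda$ infinitely often, exhibit a trivial polynomial-time $(0,2\lambda)$-circuit family that outputs $0^\lambda$ together with a uniform guess of the suffix, note that it then hits $s_\lambda$ with probability $2^{-\lambda}$ infinitely often, and contradict the $(p\cdot 2^{-2\lambda})$-ingenerability bound since $2^{-\lambda} \geq p(\lambda)\cdot 2^{-2\lambda}$ fails only for finitely many $\lambda$. One cosmetic slip: preparing $\ket{0}$'s and applying Hadamards yields the pure state $\ketbra{0^\lambda} \tensor \ketbra{+}^{\tensor \lambda}$, not $\ketbra{0^\lambda} \tensor 2^{-\lambda} I$, but the diagonal overlap with $\ket{s_\lambda}$ is $2^{-\lambda}$ in either case, so the argument is unaffected.
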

\begin{proof}
	Assume this was not the case and $x_\lambda = 0^\lambda$ infinitely often.
	Consider now the polynomial-time circuit family $(C_\lambda)_{\lambda \in \N}$ such
	that $C_\lambda$ samples a uniformly random $y \in \{0,1\}^\lambda$
	and outputs $(0^\lambda, y)$. Then, $\bra{s_\lambda}
	C_\lambda(\varepsilon) \ket{s_\lambda} = 2^{-\lambda}$
	infinitely often. But, since $(s_\lambda)_{\lambda \in \N}$ is
	exponential-time ingenerable, there exits a polynomial $p$ such that $
		\bra{s_\lambda} C_\lambda(\varepsilon) \ket{s_\lambda}
		<
		p(n) 2^{-2\lambda}
	$ for all sufficiently large values of $\lambda$.
	This implies that $2^{\lambda} < p(\lambda)$ infinitely often, a
	contradiction. Hence, $x_\lambda = 0^\lambda$ only finitely often.
\end{proof}

We can now give the proof of \cref{th:promise}. Note that we will use
$\nu$ as our main indexing variable and set $\nu = 2\lambda$ whenever
$\nu$ is even. This will make the notation pertaining to $\lambda$ a bit
more consistent when invoking the previous lemma.

\begin{proof}[Proof of \cref{th:promise}.]
	Define the sequence of states $(\rho_{\nu})_{\nu \in \N}$ as
	\begin{equation}
		\rho_{\nu}
		=
		\begin{cases}
			\ketbra*{x_\lambda^{\theta_\lambda}}
			&
			\qq{if} \nu = 2 \lambda \text{ for } \lambda \in \N\\
			\varepsilon
			&
			\qq{else.}
		\end{cases}
	\end{equation}
	for all $\nu \in \N$. We will use these states as the advice for
	$P$. Let $a : \N \to \N$ be such that $a(\lambda)$ is the number of
	qubits in the advice state $\rho_\nu$. Note that $a(\nu) = \lambda$
	if $\nu = 2\lambda$ and $q(\nu) = 0$ otherwise.

	\emph{Correctness.}
	To show the correctness criterion, we give an explicit
	polynomial-time circuit family which solves $P$ with certainty when
	given these advice states. Let $(C_\nu)_{\nu \in \N}$ be the
	polynomial-time~$(a + \nu, 1)$-circuit family which is described
	below:
	\begin{itemize}
		\item
			If $\nu$ is odd, $C_\nu$ discards the input and
			outputs a single qubit in the $0$ state.
		\item
			If $\nu = 2\lambda$ is even, $C_\nu$ does the following:
			\begin{enumerate}
				\item
					Parse the $q(\nu) + \nu = 3 \lambda$ input
					qubits as three registers
					$\tsf{A} \tensor \tsf{B} \tensor \tsf{S}$ of
					$\lambda$ qubits each.
				\item
					For each $i \in [\lambda]$, apply a Hadamard gate
					$H$ on the $i$-th qubit of $\tsf{A}$ conditioned on
					the $i$-th qubit of $\tsf{B}$.
				\item
					Using an extra qubit initialized in the $0$ state,
					which we will call the $\tsf{R}$ register, compute
					the inner product of the $\tsf{A}$ and $\tsf{S}$
					registers and store the result in the $\tsf{R}$
					register. We do this by applying $\lambda$ Toffoli
					gates on $\tsf{R}$, each conditioned on the $i$-th
					qubits of $\tsf{A}$ and $\tsf{S}$, respectively, for
					each $i \in [\lambda]$.
				\item
					Discard the $\tsf{A}$, $\tsf{B}$, and $\tsf{S}$
					registers and output the $\tsf{R}$ register.
			\end{enumerate}
	\end{itemize}
	Examples of these circuits, for $\nu = 3$ and $\nu = 4$, are
	illustrated in \cref{fg:promise-circuits}. For all $\lambda \in \N$
	and all string $y \in \{0,1\}^\lambda$, a trivial calculation yields
	\begin{equation}
		C_{2\lambda}\left(
			\rho_{2\lambda} \tensor \theta_\lambda \tensor y
		\right)
		=
		C_{2\lambda}\left(
			\ketbra*{x_\lambda^{\theta_\lambda}}
			\tensor
			\theta_\lambda
			\tensor y
		\right)
		=
		x_\lambda \cdot y
		=
		P(\theta_\lambda, y)
	\end{equation}
	Thus, for any $z \in P$, which must be of the form
	$z = (\theta_\lambda, y)$ for some $y \in \{0,1\}^\lambda$, we
	have that $C_{\abs{z}}$ correctly computes $P(z)$ with certainty when
	also given the advice state $\rho_{\abs{z}}$. Hence, the correctness
	criterion is satisfied.

	\begin{figure}
		\begin{center}
			\begin{subfigure}{0.45\textwidth}
				\begin{center}
					\begin{tikzpicture}[thick]

	\tikzset{
		cross/.style={
			path picture={
				\draw[thick,black]
				(path picture bounding box.north)
				--
				(path picture bounding box.south)
				(path picture bounding box.west)
				--
				(path picture bounding box.east);
			}
		}
	}

	\tikzstyle{operator} = [draw,fill=white,minimum size=1.5em]
	\tikzstyle{phase} = [draw,fill,shape=circle,minimum size=0.5em,inner sep=0pt]
	\tikzstyle{xor} = [draw,cross,shape=circle,minimum size=0.75em,inner
	sep=0pt]
	\matrix[row sep=0.5em, column sep=1em] (circuit) {

	% 1st row.
    \node           (i1)  {};          &
	\node[operator] (T11) {$\Tr$};     \\

	% 2nd row.
	\node           (i2)  {};    &
	\node[operator] (T21) {$\Tr$}; & \\

	% 3rd row.
	\node           (i3)  {}; &
	\node[operator] (T31) {$\Tr$};     \\

	% EMPTY
	\\

	% 4th row.
	\node[operator] (s4)  {$\ket{0}$}; &
	\coordinate (o4);       \\
};
    \begin{pgfonlayer}{background}
        % Draw lines.
        \draw[thick]
			% Qubits
			(i1) -- (T11)
			(i2) -- (T21)
			(i3) -- (T31)
			(s4) -- (o4);
			% Controls
			;
    \end{pgfonlayer}
\end{tikzpicture}
				\end{center}
				\subcaption{The circuit $C_3$.}
			\end{subfigure}
			\begin{subfigure}{0.45\textwidth}
				\begin{center}
					\begin{tikzpicture}[thick]

	\tikzset{
		cross/.style={
			path picture={
				\draw[thick,black]
				(path picture bounding box.north)
				--
				(path picture bounding box.south)
				(path picture bounding box.west)
				--
				(path picture bounding box.east);
			}
		}
	}

	\tikzstyle{operator} = [draw,fill=white,minimum size=1.5em]
	\tikzstyle{phase} = [draw,fill,shape=circle,minimum size=0.5em,inner sep=0pt]
	\tikzstyle{xor} = [draw,cross,shape=circle,minimum size=0.75em,inner
	sep=0pt]
	\matrix[row sep=0.5em, column sep=1em] (circuit) {

	% 1st row.
    \node           (i1)  {};          &
    \node[operator] (H11) {$H$};       &
                                       &
    \node[phase]    (P13) {};          &
                                       &
	\node[operator] (T15) {$\Tr$};     \\

	% 2nd row.
	\node           (i2)  {};    &
	                             &
	\node[operator] (H22) {$H$}; &
	                             &
	\node[phase] (P24) {};       &
	\node[operator] (T25) {$\Tr$};     \\

	% EMPTY
	\\

	% 3rd row.
	\node           (i3)  {}; &
	\node[phase]    (P31) {}; &
	                          &
	                          &
	                          &
	\node[operator] (T35) {$\Tr$};     \\

	% 4th row.
	\node           (i4)  {}; &
	                          &
	\node[phase]    (P42) {}; &
	                          &
	                          &
	\node[operator] (T45) {$\Tr$};\\

	% EMPTY
	\\

	% 5th row.
	\node           (i5)  {}; &
	                          &
	                          &
	\node[phase]    (P53) {}; &
	                          &
	\node[operator] (T55) {$\Tr$};\\

	% 6th row.
	\node           (i6)  {}; &
	                          &
	                          &
	                          &
	\node[phase]    (P64) {}; &
	\node[operator] (T65) {$\Tr$};\\

	% EMPTY
	\\

	% 7th row.
	\node[operator] (s7)  {$\ket{0}$}; &
	                          &
	                          &
	\node[xor]    (P73) {}; &
	\node[xor]    (P74) {}; &
	\coordinate (o7);       \\
    };
    \begin{pgfonlayer}{background}
        % Draw lines.
        \draw[thick]
			% Qubits
			(i1) -- (T15)
			(i2) -- (T25)
			(i3) -- (T35)
			(i4) -- (T45)
			(i5) -- (T55)
			(i6) -- (T65)
			(s7) -- (o7)
			% Controls
			(H11) -- (P31)
			(H22) -- (P42)
			(P13) -- (P53) -- (P73)
			(P24) -- (P64) -- (P74)
			;

    \draw[decorate,decoration={brace},thick]
		($(i2) + (-0.25em,-0.25em)$) to
		node[midway,left] {$\tsf{A}$}
		($(i1) + (-0.25em,0.25em)$);

	\draw[decorate,decoration={brace},thick]
		($(i4) + (-0.25em,-0.25em)$) to
		node[midway,left] {$\tsf{B}$}
		($(i3) + (-0.25em,0.25em)$);

	\draw[decorate,decoration={brace},thick]
		($(i6) + (-0.25em,-0.25em)$) to
		node[midway,left] {$\tsf{S}$}
		($(i5) + (-0.25em,0.25em)$);
    \end{pgfonlayer}
\end{tikzpicture}
				\end{center}
				\subcaption{%
					The circuit $C_4$. The advice state is inputted in
					the $\tsf{A}$ register and the problem instance in
					the other registers.
				}
			\end{subfigure}
		\end{center}
		\caption{%
			\label{fg:promise-circuits}%
			The circuits $C_\nu$ for $\nu = 3,4$ used in
			the proof of \cref{th:promise} to demonstrate correctness.
		}
	\end{figure}

	\emph{Uncloneability.} Our first step in showing that these
	advice states satisfy the uncloneability criterion is to gain a
	better understanding of the sequence of distributions
	$(D_\nu \times D_\nu)_{\nu \in \N}$, as given in
	\cref{df:neglqp/upoly}, for this promise problem.

	Recall that by \cref{th:ingenerable-prefix}, there are only finitely
	many instances where $x_\lambda = 0^\lambda$. This implies that for each
	$b \in \{0,1\}$ and all sufficiently large values of $\lambda$, the
	set~$P_b^{2\lambda}$ is exactly half of
	$P^{2\lambda}$. Indeed, this follows from the fact that
	$\abs{\{y\in\{0,1\}^\lambda \;:\; x\cdot y=b\}}$ is $2^{\lambda-1}$
	precisely when $x \not= 0^\lambda$. This is to say that, for all
	sufficiently large $\lambda$, there are always as many yes
	instances as no instances of length $2\lambda$. In particular, for
	all sufficiently large values of $\lambda$, the random variable
	$D_{2\lambda}$ is uniformly random over $P^{2\lambda} =
	\{\theta_\lambda\} \times \{0,1\}^\lambda$.

	Now, consider a triplet $
		(
			(A_\nu)_{\nu \in \N},
			(B_\nu)_{\nu \in \N},
			(C_\nu)_{\nu \in \N}
		)
	$ of polynomial-time $(q, b + c)$-, $(2\lambda + b, 1)$-, and
	$(c + 2\lambda, 1)$-circuit families, respectively and for maps
	$b,c : \N \to \N$. Let $v : \N \to \R$ be the function given by
	\begin{equation}
		v(\lambda)
		=
		\E_{(z_B, z_C) \gets D_{2\lambda} \times D_{2\lambda}}
		\bra{P(z_B), P(z_C)}
		\left(B_{2\lambda} \tensor C_{2\lambda}\right)
		\left(
			z_B
			\tensor
			A_{2\lambda}\left(\rho_{2\lambda}\right)
			\tensor
			z_C
		\right)
		\ket{P(z_B), P(z_C)}.
	\end{equation}
	In other words, $v(\lambda)$ is the probability that the triplet of
	circuit families split and successfully using the advice state to
	solve problem instances in $P^{2\lambda}$ sampled according to
	$D_{2\lambda} \times D_{2\lambda}$. To show that the
	uncloneability criterion is satisfied, it suffices so show the
	existence of a negligible function $\eta$ such
	that~$v(\lambda) \leq \frac{1}{2} + \eta(\lambda)$ as, in the
	notation of \cref{df:neglqp/upoly}, it will then suffice to take
	$\eta'$ to be function such that $\eta'(\nu) = 0$ if $\nu$ is odd
	and $\eta'(\lambda)$ if $\nu = 2\lambda$ is even. Clearly, this
	$\eta'$ will be negligible if $v$ is negligible.

	For $L \in \{A, B, C\}$, let $L'_\lambda = L_{2\lambda}$ and note
	that $(L_\lambda)_{\lambda \in \N}$ is a polynomial-time family of
	circuits.
	Now, since the random variable $D_{2\lambda}$ is uniformly
	distributed on $\{(\theta_\lambda, y) : y \in \{0,1\}^\lambda\}$
	for all sufficiently large values of $\lambda$, we have that
	\begin{equation}
		v(\lambda)
		=
		\E_{\substack{y_B\gets\{0,1\}^\lambda\\y_C\gets\{0,1\}^\lambda}}
		\bra{x_\lambda \cdot y_B, x_\lambda \cdot y_C}
			(B'_\lambda \tensor C'_\lambda)
			\left(
				\theta_\lambda
				\tensor
				y_B
				\tensor
				A'_\lambda(\rho_{2\lambda})
				\tensor
				\theta_\lambda
				\tensor
				y_C
			\right)
		\ket{x_\lambda \cdot y_B, x_\lambda \cdot y_C}
	\end{equation}
	for all sufficiently large values of $\lambda$. Up to performing the
	swapping map
	$\theta_\lambda \tensor y_B \mapsto y_B \tensor \theta_\lambda$ on
	the first $2\lambda$ qubits given to $B'_\lambda$, an operation
	which can easily be incorporated into this circuit,
	\cref{th:tfkw-kundu-tan-derandomized} implies the existence of a
	negligible function $\tilde{\eta}$ such that the right-hand
	side of the above equation is precisely $\frac{1}{2} +
	\tilde{\eta}(\lambda)$. Since $v(\lambda)$ is equal to
	$\frac{1}{2} + \tilde{\eta}(\lambda)$ for all sufficiently large
	values of $\lambda$, there exists a function
	$\eta : \N \to \R$ which differs from $\tilde{\eta}$ on only
	finitely many inputs such that~$v = \frac{1}{2} + \eta$. Since
	$\eta$ differs only finitely often from a negligible function, it is
	itself negligible.
\end{proof}

\begin{remark}
	If the language in \cref{th:promise} is instantiated with an
	exponential-time ingenerable sequence $(s_n)_{n \in \N}$ which
	satisfies the stronger notion of being \emph{computably}
	ingenerable, then the advice states given in the proof also satisfy
	uncloneability against uniform adversaries.

	On the other hand, if the language is instantiated with an
	exponential time ingenerable sequences which can be computed in
	triple-exponential time, then we note that the advice states can be
	generated by quantum circuits described by a Turing machine running
	in triple-exponential time. By \cref{th:ingenerable}, such sequences
	exist.
\end{remark}

Note that the advice states which we give in the proof of
\cref{th:promise} are quantum but, in some sense, the advice is actually
the \emph{classical} strings $(x_n)_{n \in \N}$. We encode this
classical advice into a quantum state only to achieve uncloneability,
not additional computational power.
With this in mind, the following proposition formalizes the idea that
uncloneable advice need not be more powerful than classical advice.

\begin{proposition}
\label{th:neglqpupoly-ppoly}
	The intersection of $\textbf{neglQP}/\text{upoly}$ and
	$\textbf{P}/\text{poly}$, where the latter is the class of promise
	problems which can be solved in polynomial-time by a classical
	deterministic Turing machine with access to polynomially many
	classical bits of advice, is non-empty.
\end{proposition}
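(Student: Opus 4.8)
The plan is to reuse the very promise problem $P=(P_0,P_1)$ constructed in \cref{th:promise}, observing that the classical strings $(x_\lambda)_{\lambda\in\N}$ which are ``hidden'' inside the quantum advice states of that proof are themselves a perfectly good piece of polynomial-size \emph{classical} advice. Concretely, I would first fix an exponential-time ingenerable $2\lambda$-sequence $(s_\lambda)_{\lambda\in\N}$ --- which exists by \cref{th:ingenerable}, and may even be taken to be computable in triple exponential time --- and parse each $s_\lambda$ as $(x_\lambda,\theta_\lambda)$ with $x_\lambda,\theta_\lambda\in\{0,1\}^\lambda$. Let $P$ be the promise problem of \cref{th:promise} built from this sequence. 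Then \cref{th:promise} immediately gives $P\in\mathbf{neglQP}/\text{upoly}$, so the only remaining task is to show $P\in\mathbf{P}/\text{poly}$.

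For that, I would take as advice, for each input length $n\in\N$, the string $a_n=x_{n/2}$ when $n$ is even and $a_n=\varepsilon$ when $n$ is odd; since $\abs{a_n}\leq n/2$, the advice is polynomially bounded. The deterministic polynomial-time machine is the natural one: on input $z$ of length $n$ together with advice $a_n$, if $n$ is odd it outputs $0$ --- this is irrelevant, as $P^n=\varnothing$ for odd $n$ --- and if $n=2\lambda$ it parses $z=(\theta,y)$ with $\theta,y\in\{0,1\}^\lambda$ and outputs the inner product $x_\lambda\cdot y$ modulo $2$, which is computable in time polynomial in $n$. Correctness follows directly from the definition of $P$: every problem instance of length $2\lambda$ has the form $(\theta_\lambda,y)$ with $y\in\{0,1\}^\lambda$, and it lies in $P_b$ precisely when $x_\lambda\cdot y=b$, so the machine computes $P(z)$ on all of $P$. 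Hence $P\in\mathbf{P}/\text{poly}$, and the claimed intersection is non-empty.

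There is no real obstacle here: all of the difficulty is already packed into \cref{th:promise}, and the only conceptual point is to recognize that the non-uniform advice model of $\mathbf{P}/\text{poly}$ places no computability restriction on the advice strings, so the (possibly uncomputable) strings $x_\lambda$ are admissible --- mirroring the earlier observation that uncloneable advice need not be more powerful than classical advice. The only minor items requiring care are the bookkeeping for odd input lengths, where the promise set is empty and the machine's behaviour is therefore unconstrained, and the observation that the advice length $\lambda=n/2$ is genuinely polynomial in the instance length $n=2\lambda$.
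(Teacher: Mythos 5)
Your proposal is correct and follows essentially the same route as the paper: instantiate the promise problem of \cref{th:promise} with an exponential-time ingenerable $2\lambda$-sequence, note that \cref{th:promise} already places it in $\mathbf{neglQP}/\text{upoly}$, and then use the classical strings $x_\lambda$ (empty advice on odd lengths) as $\mathbf{P}/\text{poly}$ advice, with the deterministic machine simply outputting the inner product $x_\lambda\cdot y$. Your write-up is in fact slightly more explicit than the paper's proof sketch (correct indexing $a_n=x_{n/2}$, the handling of odd lengths where $P^n=\varnothing$, and the observation that uncomputability of the advice is no obstacle in a non-uniform model), but the underlying argument is the same.
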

\begin{proof}[Proof sketch.]
	The promise problems constructed in \cref{th:promise}, which states
	that these problems are in $\mathbf{neglQP}/\text{upoly}$, can also
	be seen to be in $\textbf{P}/\text{poly}$. Indeed, if the problem is
	instantiated with the exponential-time ingenerable $2n$-sequence
	$(s_n = (\theta_n, x_n))_{n \in \N}$, then the classical advice
	strings~$(s'_n)_{n \in \N}$ defined by
	\begin{equation}
		s'_n
		=
		\begin{cases}
			x_n & \qq{if} n\text{ is even} \\
			\varepsilon & \qq{else}
		\end{cases}
	\end{equation}
	are sufficient to show that the problem constructed in
	\cref{th:promise} is also in $\textbf{P}/\text{poly}$.
\end{proof}

Similarly, it is clear that if a classical party has enough
computational power to generate the sequence $(s_n)_{n \in \N}$ or, more
precisely, the sequence $(x_n)_{n \in \N}$, then it can solve the
promise problem constructed in \cref{th:promise} from $(s_n)_{n\in\N}$.

\begin{proposition}
\label{th:neglqpupoly-eeexp}
	The intersection of $\textbf{neglQP}/\text{upoly}$ and
	$\textbf{EEEXP}$, where the latter is the class of promise problems
	which can be solved in triple-exponential time by a classical
	deterministic Turing machine, is non-empty.
\end{proposition}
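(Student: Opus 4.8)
The plan is to reuse the promise problem from \cref{th:promise}, but to instantiate it with an exponential-time ingenerable sequence that is itself computable in triple exponential time, and then to observe that the resulting promise problem is decidable in triple exponential time by a classical deterministic Turing machine.

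First I would note that $\lambda \mapsto 2\lambda$ is efficiently computable, so \cref{th:ingenerable} supplies an exponential-time ingenerable $2\lambda$-sequence $(s_\lambda)_{\lambda \in \N}$ together with a classical deterministic Turing machine $\ttt{T}$ that, on input $1^\lambda$, outputs $s_\lambda$ within triple exponential time in $\lambda$. Parsing $s_\lambda = (x_\lambda, \theta_\lambda)$ with $x_\lambda, \theta_\lambda \in \{0,1\}^\lambda$, let $P = (P_0, P_1)$ be the promise problem built from this sequence exactly as in \cref{th:promise}; that theorem immediately gives $P \in \mathbf{neglQP}/\text{upoly}$. Next I would exhibit a triple-exponential-time decider for $P$ (its behaviour outside $P_0 \cup P_1$ being irrelevant): on input $z$ of length $n$, output $0$ immediately if $n$ is odd, since every element of $P$ has the form $(\theta_\lambda, y)$ and hence even length; if $n = 2\lambda$, run $\ttt{T}(1^\lambda)$ to recover $(x_\lambda, \theta_\lambda)$, parse $z = (z', y)$ with $z', y \in \{0,1\}^\lambda$, output $0$ if $z' \neq \theta_\lambda$ (as then $z \notin P$), and otherwise output $x_\lambda \cdot y$, which by the definition of $P$ in \cref{th:promise} equals $P(z)$. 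The running time of this machine is dominated by the single call to $\ttt{T}$, so it is at most triple exponential in $n$, placing $P$ in $\mathbf{EEEXP}$ and therefore witnessing that $\mathbf{EEEXP} \cap \mathbf{neglQP}/\text{upoly}$ is non-empty.

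The only step that requires any care is the asymptotic bookkeeping: one checks that a quantity that is triple exponential in $\lambda$ remains triple exponential in $n = 2\lambda$, and that the auxiliary operations — forming $1^\lambda$ from $z$, comparing the prefix of $z$ with $\theta_\lambda$, and computing the inner product — are polynomial in $n$ and so do not dominate the cost. I do not expect any genuine obstacle beyond combining \cref{th:promise} with the triple-exponential-time computability guarantee of \cref{th:ingenerable}; indeed, the remark immediately following \cref{th:promise} already flags that this is the intended instantiation.
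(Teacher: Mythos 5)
Your proposal is correct and follows essentially the same route as the paper: instantiate the promise problem of \cref{th:promise} with an exponential-time ingenerable sequence computable in triple exponential time (guaranteed by \cref{th:ingenerable}) and observe the result lies in $\mathbf{EEEXP}$. The paper only sketches this, whereas you additionally spell out the explicit triple-exponential-time decider; that extra detail is sound and changes nothing essential.
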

\begin{proof}[Proof sketch.]
	Some instantiations of the promise problems constructed in
	\cref{th:promise}, which states that these problems are in
	$\mathbf{neglQP}/\text{upoly}$, can be seen to be in
	$\mathbf{EEEXP}$. Indeed, if the exponential-time ingenerable
	sequence $(s_n)_{n \in \N}$ used in the construction is computable
	in triple-exponential time by a classical deterministic Turing
	machine, then the resulting promise problem is in
	$\mathbf{EEEXP}$. By \cref{th:ingenerable}, such sequences exist.
\end{proof}

\Cref{th:neglqpupoly-eeexp,th:neglqpupoly-ppoly} immediately point to
an open question which we believe is of interest: Are there non-trivial
problems, i.e.: with infinitely many \emph{yes} and \emph{no} instances,
in the intersection of $\mathbf{neglQP}/\text{upoly}$ and
$\mathbf{EEXP}$, the class of problems which can be solved by classical
deterministic Turing machines in double exponential time? What about in
the intersection of $\mathbf{neglQP}/\text{upoly}$ and $\mathbf{EXP}$ or
any other classical deterministic super-polynomial time complexity
classes? These questions appear to reflect those considered in
\cref{sc:cloning-complexity} about gaps in the complexity of generating
and cloning sequences of states.

%~~~~~~~~~~~~~~~~~~~~~~~~~~~~~~~~~~~~~~~~~~~~~~~~~~~~~~~~~~~~~~~~~~~~~~%
\subsection{A Language with Uncloneable Advice, Under Certain
Assumptions}
\label{sc:language}
%~~~~~~~~~~~~~~~~~~~~~~~~~~~~~~~~~~~~~~~~~~~~~~~~~~~~~~~~~~~~~~~~~~~~~~%

In this section, we describe a class of languages which admit
uncloneable advice under the assumption that it is possible to
copy-protect certain families of maps. We proceed
in two steps. First, we give a general template in \cref{df:language}
for the class of languages we consider. Each language from this template
is parameterized by a sequence of maps and is illustrated in
\cref{fg:language}. Second, in \cref{th:language}, we
establish sufficient conditions on the parametrizing maps which ensure
that the resulting language admits uncloneable advice. We comment, in
the next section, on certain specific instantiations of this language
admitting uncloneable advice based on existing
constructions for the copy-protection of pseudorandom functions.

\begin{definition}
\label{df:language}
	Let $g = (g_n : \{0,1\}^{d(n)} \to \{0,1\}^{c(n)})_{n \in \N}$ be a
	sequence of maps whose domains and codomains are parameterized by
	$d, c : \N \to \N$. We define the language $L^g = (L_0^g, L_1^g)$ as
	follows:
	\begin{itemize}
		\item
			If $w \in \{0,1\}^*$ is such that
			$\abs{w} \leq (d + c)(\abs{w})$, then
			\begin{equation}
				w \in L_b^g \iff w \cdot w = b.
			\end{equation}
		\item
			If $w \in \{0,1\}^*$ is such that
			$\abs{w} > (d + c)(\abs{w})$, we parse $w$ as $(x,y,z)$
			where $x$ is composed of the first $d(\abs{w})$ bits of
			$w$, $y$ of the next $c(\abs{w})$ bits, and $z$
			of the remaining bits. Then,
			\begin{equation}
				w = (x,y,z) \in L_b^g
				\iff
				(g_\abs{w}(x) \cdot y) \xor (z \cdot z) = b.
			\end{equation}
	\end{itemize}
\end{definition}

\begin{figure}[H]
		\begin{tikzpicture}
		\node (w) at (0,0) {$w$};
		\draw[thick]
		($(w) + (1.25,0.4)$) rectangle ($(w) - (1.25,0.4)$);

		\node (x) at ($(w) + (8.1,-1)$) {$x$};
		\draw[thick]
		($(x) + (0.6,0.4)$) rectangle ($(x) - (0.6,0.4)$);
		\node (y) at ($(x) + (1,0)$) {$y$};
		\draw[thick]
		($(y) + (0.4,0.4)$) rectangle ($(y) - (0.4,0.4)$);
		\node (z) at ($(y) + (0.65,0)$) {$z$};
		\draw[thick]
		($(z) + (0.25,0.4)$) rectangle ($(z) - (0.25,0.4)$);

		\node (resyes) at ($(w)+(13,-1)$)
			{$(g_\abs{w}(x)\cdot y)\xor(z\cdot z)$};

		\node (w2) at ($(w) + (8.75,1)$) {$w$};
		\draw[thick]
		($(w2) + (1.25,0.4)$) rectangle ($(w2) - (1.25,0.4)$);
		
		\node (resno) at ($(w) +(13,1)$)
			{$w \cdot w$};

		\draw[thick,->]
		($(w) + (1.35,0)$)
		to node[above]
		{$\abs{w} \stackrel{?}{>} (c + d)(\abs{w})$}
		($(w) + (5.5,0)$)
		to
		($(w) + (6,-1)$)
		to node[above] {Yes}
		($(w) + (7.4,-1)$);
		\draw[thick,->] ($(z) + (0.35,0)$) to (resyes);

		\draw[thick,->]
		($(w) + (5.5,0)$)
		to
		($(w) + (6,1)$)
		to node[above] {No}
		($(w) + (7.4,1)$);
		\draw[thick,->] ($(w2) + (1.35,0)$) to (resno);
	\end{tikzpicture}
	\caption{\label{fg:language}%
		An illustration of an algorithm which determines if a given
		string $w \in \{0,1\}^*$ is in the set $L^g_0$ or the
		set $L^g_1$ where these are as given in
		\cref{df:language}. The single bit produced, at the right of the
		diagram, determines to which set $w$ belongs. In the case where
		the criterion $\abs{w} > (d + c)(\abs{w})$ is met,
		we parse $w$ as a triplet of strings $(x,y,z)$, each of the
		unique length ensuring that $(g_{\abs{w}}(x) \cdot y)
		\xor (z \cdot z)$ is well defined.}
\end{figure}
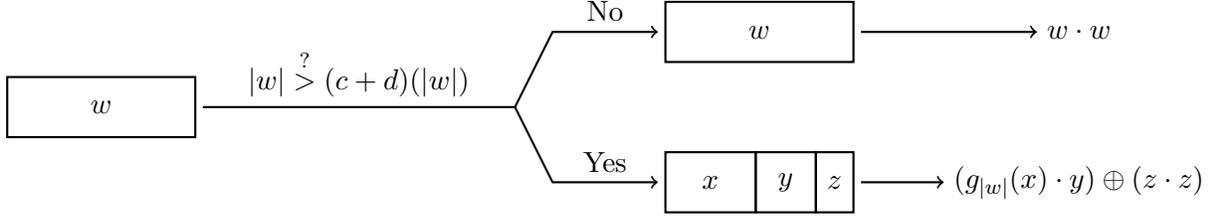

We note a nice property of $L^g$ which is independent of the choice of
$g$, namely that exactly half of all bit strings of a given non-zero
length are in the language.

\begin{lemma}
\label{th:balanced}
	Let $L^{g} = (L_0^{g}, L_1^{g})$ be as given
	in \cref{df:language}. Then, for all $n \geq 1$ we have that
	\begin{equation}
		\abs{L_0^{g} \cap \{0,1\}^n}
		=
		\abs{L_1^{g} \cap \{0,1\}^n}
		=
		2^{n - 1}.
	\end{equation}
	In other words, for any given non-zero length $n$, there are
	as many \emph{yes} instances as there are \emph{no} instances of
	length $n$ in $L^{g}$.
\end{lemma}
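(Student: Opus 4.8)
The plan is to prove the claim by a direct counting argument, splitting into two cases according to whether $n \leq (d+c)(n)$ or $n > (d+c)(n)$, exactly mirroring the case split in \cref{df:language}. In both cases the key observation is the following elementary fact about inner products modulo $2$: for any fixed nonzero string $u \in \{0,1\}^m$, exactly half of the strings $v \in \{0,1\}^m$ satisfy $u \cdot v = 0$ and the other half satisfy $u \cdot v = 1$; more generally, the map $v \mapsto u \cdot v$ is a surjective group homomorphism $(\{0,1\}^m, \xor) \to (\{0,1\}, \xor)$ whenever $u \neq 0^m$, hence its two fibres have equal size $2^{m-1}$. I would state this as a one-line sub-observation at the start.

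\textbf{Case 1: $n \leq (d+c)(n)$.} Here, by \cref{df:language}, $w \in L_b^g \iff w \cdot w = b$ for $w \in \{0,1\}^n$. So I need to count strings $w \in \{0,1\}^n$ with $w \cdot w = 0$ versus $w \cdot w = 1$. Note $w \cdot w = \sum_i w_i^2 = \sum_i w_i \pmod 2$ equals the parity (Hamming weight mod $2$) of $w$. Since $n \geq 1$, exactly $2^{n-1}$ strings of length $n$ have even weight and exactly $2^{n-1}$ have odd weight (e.g.\ by pairing $w$ with $w \xor e_1$, which flips the parity, a fixed-point-free involution on $\{0,1\}^n$). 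This gives $\abs{L_0^g \cap \{0,1\}^n} = \abs{L_1^g \cap \{0,1\}^n} = 2^{n-1}$.

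\textbf{Case 2: $n > (d+c)(n)$.} Here every $w \in \{0,1\}^n$ is parsed uniquely as $(x,y,z)$ with $\abs{x} = d(n)$, $\abs{y} = c(n)$, and $\abs{z} = n - (d+c)(n) \geq 1$, and $w \in L_b^g \iff (g_n(x) \cdot y) \xor (z \cdot z) = b$. I would count by first fixing $x$ and $y$ arbitrarily (there are $2^{d(n)} \cdot 2^{c(n)}$ such choices) and then, for each such choice, counting the $z \in \{0,1\}^{n - (d+c)(n)}$ making the total bit equal to $b$. Writing $\beta = g_n(x) \cdot y \in \{0,1\}$, the condition becomes $z \cdot z = b \xor \beta$, and again $z \cdot z$ is the parity of $z$; since $\abs{z} \geq 1$, exactly $2^{\abs{z} - 1}$ choices of $z$ give parity $0$ and exactly $2^{\abs{z}-1}$ give parity $1$. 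Thus for every fixed $(x,y)$ there are exactly $2^{\abs{z}-1}$ good values of $z$, regardless of $b$, so $\abs{L_b^g \cap \{0,1\}^n} = 2^{d(n)} \cdot 2^{c(n)} \cdot 2^{\abs{z}-1} = 2^{(d+c)(n)} \cdot 2^{n - (d+c)(n) - 1} = 2^{n-1}$ for each $b \in \{0,1\}$, completing the proof.

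\textbf{Main obstacle.} There is essentially no hard step; the only thing requiring a little care is the bookkeeping on string lengths in Case 2 — in particular verifying that the "remainder" block $z$ is always nonempty, which is exactly what the strict inequality $n > (d+c)(n)$ guarantees, and this nonemptiness is precisely what makes the parity of $z$ balanced. I would make sure to flag that it is the $z\cdot z$ term, not the $g_n(x)\cdot y$ term, that does the balancing: we make no assumption that $g_n$ is balanced or even non-constant, so the argument must route through $z$.
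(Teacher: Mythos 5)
Your proof is correct and is essentially the paper's argument: the paper's proof is a one-liner invoking the same fact that for any length $m \geq 1$ exactly $2^{m-1}$ strings $s$ satisfy $s \cdot s = b$, applied to $w$ itself in the first case and (implicitly) to the block $z$ in the second. You have simply spelled out the case split and the fibre count over $(x,y)$ that the paper leaves implicit.
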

\begin{proof}
	This follows immediately from the fact that for all $n \geq 1$ and
	$b \in \{0,1\}$ we have that
	\begin{equation}
		\abs{\left\{s \in \{0,1\}^n \;:\; s \cdot s = b\right\}}
		=
		2^{n-1}.
	\end{equation}
\end{proof}

We now establish conditions under which the language $L^g$ defined above
admits uncloneable advice.

\begin{theorem}
\label{th:language}
	Let $
		f
		=
		(
			f_n :
			\{0,1\}^{\kappa(n)} \times \{0,1\}^{d(n)}
			\to \{0,1\}^{c(n)}
		)_{n \in \N}$ be
	a sequence of maps whose domains and codomains are parameterized by
	maps $\kappa,d,c : \N \to \N$ such that the following conditions are
	satisfied:
	\begin{itemize}
		\item
			There exists an $n' \in \N$ such that $n \geq n' \implies
			d(n) + c(n) < n$ and $c \in \omega(\log)$.
		\item
			There exists a copy-protection scheme $(G,E)$ for $f$ which
			is correct and secure.
	\end{itemize}
	Let $k = (k_n)_{n \in \N}$ be an exponential-time
	ingenerable~$\kappa$-sequence and let $g = (g_n)_{n \in \N}$ be the
	sequence of maps where $g_n = f_n(k_n,\cdot)$ for all~$n \in \N$.
	Then, the language $L^g$ is in
	$\textbf{neglQP}\text{/upoly}$.
\end{theorem}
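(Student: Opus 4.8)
The plan is to mirror the two-step strategy of \cref{th:promise} (and of \cref{th:money-derandomized}): derandomize the security of the copy-protection scheme by substituting the uniformly random key with the fixed ingenerable key $k_n$, and then apply the simultaneous Goldreich--Levin theorem \cref{th:kundu-tan} to descend from recovering $f_n(k_n,x)$ to guessing the bit $f_n(k_n,x)\cdot y$. The advice states are simply the copy-protected programs $\rho_n=G_n(k_n)$; they live on $q(n)$ qubits for the polynomially bounded program length $q$ of the scheme, matching the shape required by \cref{df:neglqp/upoly}. (Note that the hypotheses force $\kappa\in\omega(\log)$: a scheme with only polynomially many keys cannot be secure, since a splitting adversary can pin down a function-equivalent key by comparing its program with regenerated programs $G_n(k')$ on a few test points and then hand regenerated copies to both guessing parties; hence $2^{-\kappa(n)}$, like $2^{-c(n)}$, is negligible.) For correctness, on input of an instance $w$ together with $\rho_{\abs w}$ the honest circuit computes $\abs w$, uses the efficient computability of $d$ and $c$ to decide which clause of \cref{df:language} applies, outputs $w\cdot w$ exactly in the short case, and otherwise parses $w=(x,y,z)$, runs the honest evaluation $E_{\abs w}(\rho_{\abs w}\tensor x)$ to obtain $g_{\abs w}(x)=f_{\abs w}(k_{\abs w},x)$ up to negligible error, and outputs $(g_{\abs w}(x)\cdot y)\xor(z\cdot z)$; this is a polynomial-time family with negligible error.

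For uncloneability, \cref{th:balanced} gives $\abs{L_0^g\cap\{0,1\}^n}=\abs{L_1^g\cap\{0,1\}^n}=2^{n-1}$ for all $n\ge1$, so the distribution $D_n$ of \cref{df:neglqp/upoly} is uniform on $\{0,1\}^n$ and $D_n\times D_n$ draws two independent uniform instances, which for $n\ge n'$ parse into independent uniform blocks $(x,y,z)$ of lengths $d(n),c(n),n-d(n)-c(n)$. Fix any (exponential-time) attack $(A,B,C)$ and let $(\hat C_n)$ be the exponential-time $(\kappa,1)$-circuit family that, on input of a key $k$, prepares $G_n(k)$, runs the copy-protection game against $(A,B,C)$ with $x_B,x_C$ uniform in $\{0,1\}^{d(n)}$, and accepts iff both guessing parties are correct, where correctness is checked via $E_n$. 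Then $\E_{k}\bra{1}\hat C_n(k)\ket{1}$ equals, up to a negligible additive error, the scheme's winning probability under a uniform key, hence is at most $2^{-c(n)}$ plus a negligible term by security, hence negligible. Applying \cref{th:random-to-ingenerable} to the exponential-time ingenerable $\kappa$-sequence $(k_n)$ (here we use $\kappa\in\omega(\log)$) shows $\bra{1}\hat C_n(k_n)\ket{1}$ is negligible: for every exponential-time attack, the copy-protection game played with the \emph{fixed} key $k_n$ has negligible winning probability.

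Next I would invoke \cref{th:kundu-tan} in this fixed-key game. For a splitting circuit $A_n$ write $\tau_n=A_n(G_n(k_n))$, let $\xi_n$ be the law of $(f_n(k_n,x_B),f_n(k_n,x_C))$ for uniform $x_B,x_C\in\{0,1\}^{d(n)}$, and let $\rho_{n,s_B,s_C}$ be $\tau_n$ with $x_B$ adjoined to Bob's side and $x_C$ to Charlie's side, each drawn uniformly conditioned on $f_n(k_n,\cdot)$ mapping it to the corresponding secret. Then ``sample $(s_B,s_C)\gets\xi_n$, prepare $\rho_{n,s_B,s_C}$'' reproduces the fixed-key game, so the previous paragraph is exactly the premise of \cref{th:kundu-tan}, uniformly over $A_n$; its conclusion says every exponential-time $B,C$, given $\rho_{n,s_B,s_C}$ and uniform $y_B,y_C\in\{0,1\}^{c(n)}$, jointly output $f_n(k_n,x_B)\cdot y_B$ and $f_n(k_n,x_C)\cdot y_C$ with probability at most $\frac{1}{2}$ plus a negligible term. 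Finally, reduce the advice game to this: given $(A,B,C)$ attacking $(\rho_n)$, define $(A,\tilde B,\tilde C)$ where $\tilde B$ on input $(x_B,y_B)$ samples $z_B$ uniformly, runs $B_n$ on $(x_B,y_B,z_B)$, and XORs the outcome with $z_B\cdot z_B$; since $z_B$ is an independent uniform block in the real game, the success probability of $(A,B,C)$ equals that of $(A,\tilde B,\tilde C)$ in the inner-product game, hence is at most $\frac{1}{2}$ plus a negligible term. Taking $\eta'$ to be this negligible bound for $n\ge n'$ and $\eta'(n)=1$ for the finitely many smaller $n$ (where instances are decided by $w\cdot w=b$, and the bound is vacuous) verifies the uncloneability clause, so $L^g\in\mathbf{neglQP}/\text{upoly}$ (in fact even against exponential-time adversaries).

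The main obstacle is the Kundu--Tan step. Because $f_n(k_n,\cdot)$ need not be injective, the secret $f_n(k_n,x)$ does not determine the input $x$ that the guessing parties must also hold, so the shared state cannot be indexed by the secret alone as \cref{th:kundu-tan} demands. The remedy is the averaging above: letting $\rho_{n,s_B,s_C}$ be the mixture over inputs consistent with the secret, which is a legitimate density operator and for which the resampling identity holds. Checking that identity, and doing the register-count and time-bound bookkeeping carefully so that \cref{th:random-to-ingenerable} and \cref{th:kundu-tan} genuinely apply (the simulator $\hat C_n$ stays exponential time, the ingenerable sequence has polynomially bounded length, the strip-off reduction preserves the adversary's resources, and the negligible slippages coming from the imperfect evaluation $E_n$ are harmless), is where the real work lies.
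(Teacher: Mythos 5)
Your proposal is correct and follows essentially the same route as the paper's proof: advice $\rho_n = G_n(k_n)$, correctness from the scheme's correctness, and uncloneability by stripping off the $z$-block with a randomize-and-XOR reduction, applying \cref{th:kundu-tan} for the fixed key, and establishing its premise from the scheme's security combined with the derandomization of \cref{th:random-to-ingenerable}; your explicit conditional-mixture treatment of the non-injectivity issue, and the $E_n$-based referee in the simulator, merely spell out steps the paper leaves implicit. The one weak spot is your parenthetical argument that $\kappa\in\omega(\log)$ --- comparing programs on a few test points is both unnecessary and shaky; the simpler argument (a splitting adversary guesses a key uniformly at random and both guessers evaluate honestly, winning with probability at least $2^{-\kappa(n)}(1-2\eta(n))$, which would exceed $2^{-c(n)}$ non-negligibly if $2^{-\kappa}$ were non-negligible) is the one the paper uses in a footnote.
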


At a high level, the proof proceeds as follows: The
advice states will be the program states generated by the
copy-protection scheme for the maps in the sequence $g$. Correctness
follows trivially from the correctness of the copy-protection scheme.
Uncloneability is a bit more involved. Consider only the cases
where $n \geq n'$ and so any string $w$ will be parsed as $(x,y,z)$. The
first step is to argue that the $z$ component does not really contribute
to the difficulty of determining if $w$ is in the language. The only
difficult part is for both guessers to be able to correctly and
simultaneously compute $g_n(x) \cdot y$. Since the $y$'s are
independently sampled for both guessers, the result of Kundu and Tan
(\cref{th:kundu-tan}) tells us that if both parties can only compute
$g_n(x)$ with a negligible probability of success, then they can
simultaneously correctly compute this inner product with at most a
negligible advantage above $\frac{1}{2}$. How can we show that this
probability of simultaneously guessing~$g_n(x)$ on independent values of
$x$ is negligible? It suffices to derandomize the copy-protection game
with the exponential-time ingenerable sequence $(k_n)_{n \in \N}$.
More precisely, since the advice states are produced from a secure
copy-protection scheme for maps where the trivial guessing probability
$2^{-c}$ is trivial (as $c \in \omega(\log)$),
we know that any efficient adversaries will fail to simultaneously and
correctly compute $f_n(k',x)$ with more than a negligible probability
if the key $k'$ is uniformly random (but identical for both adversaries) and
where $x$ is uniformly random and independent for both adversaries. By
\cref{th:random-to-ingenerable}, the probability will remain negligible
even when we do not sample $k'$ uniformly at random, but rather fix it
to the $n$-th element of an exponential-time ingenerable $\kappa$-sequence.

\begin{proof}
	Assume the copy-protection scheme $(G,E)$ has program length $q$.
	We will consider the sequence of states
	\begin{equation}
		\rho
		=
		\left(
			\rho_n
			=
			G_n(k_{n})
		\right)_{n \in \N},
	\end{equation}
	which are the program states produced by the copy-protection scheme
	$(G, E)$ for the maps $(g_n)_{n \in \N}$, as our advice for this
	language.

	\emph{Correctness.} We first show correctness. This follows directly
	from the assumed correctness of the copy-protection scheme $(G, E)$.
	In short, we implement the algorithm shown in \cref{fg:language}
	while using $\rho_n$ and $E_n$ to evaluate $g_n$. More explicitly,
	let $C = (C_n)_{n \in \N}$ be the efficient $(q+n,1)$-circuit family
	where each $C_n$ implements the following algorithm on input of
	$\rho \tensor w$:
	\begin{enumerate}
		\item
			Verify if $n \leq d(n) + c(n)$.
			\begin{itemize}
				\item
					If this inequality holds, output the bit $w \cdot w$
					and terminate.
				\item
					If this inequality does not hold, execute steps $2$
					to $4$ below.
			\end{itemize}
		\item
			Parse $w$ as a triplet $(x,y,z)$ where $x$ is the
			first $d(n)$ bits, $y$ the
			subsequence $c(n)$ bits, and $z$
			the remaining $n - d(n) + c(n)$ bits.
		\item
			Run $E_{n}(\rho_n \tensor x)$ and
			measure the output in the computational basis. Let $y'$
			denote this result.
		\item
			Output the bit $(y' \cdot y) \xor (z \cdot z)$ and
			terminate.
	\end{enumerate}
	If $n \leq d(n) + c(n)$, then $C_n(\rho_n \tensor w)$ outputs
	the correct bit, namely $w \cdot w$, with probability $1$.
	Else,~$C_n(\rho_n \tensor w)$ outputs the correct bit with a
	probability at least
	\begin{equation}
		\bra{g_{n}(x)}
			E_{n}(\rho_n \tensor x)
		\ket{g_{n}(x)}
		=
		\bra{f_{n}(k_{n}, x)}
			E_{n}\left(
				G_{n}(k_{n}) \tensor x
			\right)
		\ket{f_{n}(k_{n}, x)}
	\end{equation}
	which is the probability that step 3 correctly evaluates to
	$g_{n}(x)$. Since $(G, E)$ is correct, there exists a
	negligible function $\eta_c$ such that this probability is at least
	$1 - \eta_c(n)$. Hence, the correctness criterion is satisfied.

	\emph{Uncloneability.}
	First, let's set up an adversary and state what is sufficient to
	show to demonstrate the uncloneability criterion.

	Let $(A, B, C)$ be a triplet of polynomial-time
	$(q,q_B+q_C)$-, $(d+q_B,1)$-, and $(q_C+d,1)$-circuit
	families, respectively, for maps $q_B,q_C : \N \to \N$. Let
	$D_n$ be the random variable distributed on the set $\{0,1\}^n$ as
	defined in \cref{df:neglqp/upoly} for the language $L^{g}$.
	Note that, by \cref{th:balanced}, $D_n$ is uniformly
	distributed. Let
	\begin{equation}
		v(n)
		=
		\E_{w_B, w_C \gets D_n}
		\bra{L^{g}(w_B), L^{g}(w_C)}
			(B_n \tensor C_n)
			\left(
				w_B \tensor A_n(\rho_n) \tensor w_C
			\right)
		\ket{L^{g}(w_B), L^{g}(w_C)}
	\end{equation}
	be the probability that the adversaries $(A, B, C)$ succeed in
	sharing the advice state $\rho_n$ and use it to correctly
	determine the membership of two independent problem instances of
	length $n$ sampled according to $D_n$. It suffices to show the
	existence of a negligible function $\eta$ such that
	$v \leq \frac{1}{2} + \eta$. In fact, it suffices to show
	that this inequality holds for all $n \geq n'$ where, we recall,
	$n'$ is a threshold after which $n < d(n) + c(n)$ always holds.
	Assume from now on that $n \geq n'$ is always satisfied and parse
	every $w \in \{0,1\}^n$ as a triplet $(x,y,z)$ where $x$ is the
	first $d(n)$ bits, $y$ the subsequent $c(n)$ bits, and $z$ the
	remaining bits.

	Now, as a first step, we argue that the $z$ component of a problem
	instances essentially does not matter. In the process, we will
	express $v(n)$ in a form which will be closer to
	something to which we can apply \cref{th:kundu-tan}.

	Let $B'$ be a circuit family such that for all $n \geq n'$,
	the circuit $B'_n$ implements the following algorithm on input
	of $\rho \tensor x \tensor y$:
	\begin{itemize}
		\item
			Sample a uniformly random
			$z \gets \{0,1\}^{n-d(n)-c(n)}$.
		\item
			Run $B_n(\rho \tensor x \tensor y \tensor z)$ and
			measure the output in the computational basis. Let $b$
			denote the result of this measurement.
		\item
			Output the bit $b \xor (z \cdot z)$ and terminate.
	\end{itemize}
	(For completeness, we can assume that if $n < n'$ then $B'_n$
	simply discards its input and outputs $0$. These cases do not matter
	in our analysis.) Note that $B'$ is an efficient family of circuits.
	Define the circuit family $C' = (C'_n)_{n \in \N}$ analogously with
	respect to the family $C$. Note that $B'_n$
	outputs the bit $b'$ if and only if its execution of $B_n$
	outputs the bit $b' \xor (z \cdot z)$ and similarly for $C'_n$.
	Hence, since $D_n$ is uniformly distributed, we have that $v(n)$ can
	also be expressed as
	\begin{equation}
	\label{eq:language-1}
		\E_{\substack{
			x_B, x_C \gets \{0,1\}^{d(n)}\\
			y_C, y_C \gets \{0,1\}^{c(n)}
		}}
		\bra{
			g_{n}(x_B) \cdot y_B,
			g_{n}(x_C) \cdot y_C
		}
		(B'_n \tensor C'_n)(
			y_B \tensor \sigma_n^{x_B, x_C} \tensor y_C
		)
		\ket{
			g_{n}(x_B) \cdot y_B,
			g_{n}(x_C) \cdot y_C
		}
	\end{equation}
	where
	\begin{equation}
		\sigma_{n}^{x_B, x_C}
		=
			x_B
			\tensor
			A_n(\rho_n)
			\tensor
			x_C.
	\end{equation}
	To show that $v$ is at most negligibly more than $\frac{1}{2}$, it
	now suffices by \cref{th:kundu-tan} to show that for all pairs
	$(B'', C'')$ of efficient $(d + q_B, c)$- and
	$(q_C + d, c)$-circuit families, respectively, we have that
	\begin{equation}
		n \mapsto
		\E_{x_B, x_C \gets \{0,1\}^d(n)}
		\bra{g_n(x_B), g_n(x_C)}
			(B''_n \tensor C''_n)(\sigma_n^{x_B, x_C})
		\ket{g_n(x_B), g_n(x_C)}
	\end{equation}
	is a negligible map.

	Since $(G, E)$ is a secure copy-protection scheme and $2^{-c}$ is
	negligible, we have that
	\begin{equation}
		\E_{k \gets \{0,1\}^{\kappa(n)}}
		\E_{x_B, x_C \gets \{0,1\}^d(n)}
		\bra{f_n(k,x_B), f_n(k,x_C)}
			(B''_n \tensor C''_n)
			(x_B \tensor A_n(G_n(k)) \tensor x_C)
		\ket{f_n(k,x_B), f_n(k,x_C)}
	\end{equation}
	is negligible in $n$ and so, by \cref{th:random-to-ingenerable} and
	the fact that $\kappa \in \omega(\log)$,\footnote{
		If $\kappa \not\in \omega(\log)$, then $2^{-\kappa}$ is
		non-negligible. Thus, a triplet of adversaries
		can break the assumed security of the copy-protection scheme
		$(G, E)$ by sampling a key universally at random and using it,
		with the honest generation and evaluation algorithms, to
		correctly evaluate the function correctly with non-negligible
		probability. Indeed, they will have a non-negligible probability
		of having guessed the correct key.}
	we can fix the keys $k$ to be the elements of the exponential-time
	ingenerable sequence $(k_n)_{n \in \N}$ and obtain that
	\begin{equation}
		n
		\mapsto
		\E_{x_B, x_C \gets \{0,1\}^{d(n)}}
		\bra{f_n(k_n,x_B),f_n(k_n,x_C)}
			(B''_n \tensor C''_n)
			(x_B \tensor A_n(G_n(k_n)) \tensor x_C)
		\ket{f_n(k_n,x_B), f_n(k_n,x_C)}
	\end{equation}
	is also negligible. Recalling that $f_n(k_n, \cdot) = g_n$, this is
	precisely what is needed.

	Note that we neglect in this proof to give an explicit
	construction of a $(\kappa,1)$-circuit which models the complete
	set-up and attack of the copy-protection scheme on input of a given
	key $k$, as is technically required to apply
	\cref{th:random-to-ingenerable}. However, this construction would
	simply follow the same ideas as the ones presented in the proofs of
	\cref{th:money-derandomized} and \cref{th:tfkw-derandomized}: The
	attacking circuits from $B''$ and $C''$ are placed between an
	initial set-up circuit, which creates and distributes the program
	state and challenges, and a final referee circuit, which receives
	the guesses and outputs $1$ if and only if they are correct. This
	composition would yield the necessary circuits.

	Recalling that $\sigma^{x_B, x_C}_n = x_B \tensor A_n(G_n(k'))
	\tensor x_C$ is then sufficient to complete the proof.
\end{proof}

%----------------------------------------------------------------------%
\subsection{%
	Instantiating our Construction of a Language with Uncloneable Advice}
\label{sc:language-2}
%----------------------------------------------------------------------%

Our construction of a language with uncloneable advice in the previous
section did not consider an important question: Is it possible to
copy-protect a sequence of functions satisfying the necessary
conditions? The following theorem gives one answer to this question.
\begin{theorem}
\label{th:language-from-prf}
	If there exists a correct and secure copy-protection scheme for a
	pseudorandom function $f' = (f'_n)_{n \in \N}$ with outputs of
	super-logarithmic length, then there exists a sequence of maps $f =
	(f_n)_{n \in \N}$ satisfying the assumptions of \cref{th:language}.
\end{theorem}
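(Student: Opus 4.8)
The plan is to take $f$ to be the pseudorandom function $f'$ itself, reparametrised so that the combined input–output length drops below the index while the output length stays super-logarithmic. Write $f' = (f'_m : \{0,1\}^{\kappa'(m)} \times \{0,1\}^{d'(m)} \to \{0,1\}^{c'(m)})_{m \in \N}$. Since a copy-protection scheme for $f'$ exists, the maps $\kappa', d', c'$ are efficiently computable and polynomially bounded, so I would fix $a \in \N$ with $d'(m) + c'(m) \leq m^a$ for all large $m$ and set $m(n) = \lfloor n^{1/(a+1)} \rfloor$, which is non-decreasing, efficiently computable, satisfies $m(n) \leq n$, has finite fibres, and tends to infinity. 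Then define $f_n := f'_{m(n)}$, so that $\kappa(n) = \kappa'(m(n))$, $d(n) = d'(m(n))$, and $c(n) = c'(m(n))$.

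Next I would check the two hypotheses of \cref{th:language}. For the parameter condition, $d(n) + c(n) \leq m(n)^a \leq n^{a/(a+1)} < n$ for all large $n$; and since $\log m(n) \in [\tfrac{1}{a+1}\log n - 1,\ \tfrac{1}{a+1}\log n]$ and $c' \in \omega(\log)$, it follows that $c = c' \circ m \in \omega(\log)$, so in particular $2^{-c}$ is negligible. For the copy-protection scheme, I would take $G_n := G'_{m(n)}$ and $E_n := E'_{m(n)}$; composing the generator of $(G',E')$ with the computation of $m(n)$ from $1^n$ (legitimate since $m(n) \leq n$ and $m$ is efficient) shows that $(G,E)$ is an efficient $(\kappa,q)$- and $(q+d,c)$-circuit pair with $q(n) := q'(m(n))$, and its correctness for $f$ is inherited from the correctness of $(G',E')$ for $f'$ because $m(n) \to \infty$ and correctness is an asymptotic condition.

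The substantive step is transferring security, and this is where I expect the only real subtlety. Security of $(G',E')$ is with respect to the distribution sampling a uniform key and two independent uniform inputs, and under the reindexing the winning probability (and the trivial bound $2^{-c}$) of a circuit triple attacking $(G,E)$ at index $n$ coincides exactly with that of the same triple attacking $(G',E')$ at index $m(n)$. The difficulty is that $m$ is heavily non-injective — the fibre $m^{-1}(j)$ is the interval of the $\mc{O}(j^a)$ consecutive integers $n$ with $\lfloor n^{1/(a+1)}\rfloor = j$ — so an attack on $(G,E)$ with non-negligible advantage for infinitely many $n$ does not obviously give a \emph{uniform} attack on $(G',E')$ by selecting, fibre by fibre, the best $n$, since that selection is not computable. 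I would resolve this by averaging over fibres: all of $G_n$, $E_n$, and the attack circuits for $n$ in a fixed fibre act on input registers of the same sizes, since $\kappa,d,c,q$ are constant on the fibre, so the attack on $(G',E')$ at index $j$ can uniformly sample $n$ from the polynomially small, explicitly computable set $m^{-1}(j)$, have Alice broadcast the chosen $n$ to Bob and Charlie using $\mc{O}(\log j)$ extra qubits (padding output registers within the fibre to a common size), and then run the $n$-th attack; its advantage at index $j$ is the average of the per-$n$ advantages over $m^{-1}(j)$, which is at least $1/\mathrm{poly}(j)$ whenever that fibre contains even one bad $n$. Since the fibres are finite and partition a tail of $\N$, infinitely many bad $n$ force infinitely many bad $j$, contradicting the security of $(G',E')$. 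Everything else is routine unwinding of \cref{df:cp-scheme,df:cp-attack,df:cp-secure}.
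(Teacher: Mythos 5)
Your proposal is correct and follows essentially the same route as the paper: the paper's \cref{th:slowdown-exists} supplies exactly your polynomial-root reindexing map, and \cref{th:slowdown} performs the same security transfer, with the adversary against $(G',E')$ uniformly sampling an index from the (efficiently computable, polynomially sized) fibre, broadcasting it from Alice to Bob and Charlie, and averaging the per-index advantages over the fibre. The only difference is presentational — the paper modularizes into two lemmas and argues in the direct (rather than contrapositive) direction — so there is nothing substantive to add.
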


Before proving this theorem, we recall the following result from
\cite{CLLZ21}. It establishes a set of sufficient conditions, which we
do not formalize in this work, for the copy-protection of a certain
construction of pseudorandom functions presented in that work. By the
previous theorem, it follows that these are also sufficient conditions
for the existence of a language with uncloneable advice.
\begin{theorem}[{\cite{CLLZ21}}]
	\label{th:CLLZ21}
	Assuming the existence of post-quantum indistinguishable
	obfuscation, one-way functions, and compute-and-compare obfuscation
	for the class of unpredictable distributions, there exists a secure
	and correct copy-protection scheme for a specific construction of
	pseudorandom functions where the length of the keys, inputs, and
	outputs are non-constant polynomials.
\end{theorem}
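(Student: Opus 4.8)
The plan is to verify the two hypotheses of \cref{th:language} for a sequence $f$ obtained by ``slowing the clock'' on the given scheme for $f'$. Write $f' = (f'_m : \{0,1\}^{\kappa'(m)}\times\{0,1\}^{d'(m)}\to\{0,1\}^{c'(m)})_{m\in\N}$ with $c'\in\omega(\log)$, and let $(G',E')$ be the assumed correct and secure copy-protection scheme for $f'$; here ``secure'' is read, as in the proof of \cref{th:language}, with respect to the natural random variables $D'_m$ sampling the key uniformly over $\{0,1\}^{\kappa'(m)}$ and the two inputs uniformly and independently over $\{0,1\}^{d'(m)}$, so that since $c'\in\omega(\log)$ the trivial winning probability $2^{-c'(m)}$ is negligible and \cref{df:cp-secure} applies. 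As $(G',E')$ is an efficient pair of circuit families, the length maps $\kappa',d',c'$ and the program length $q'$ are polynomially bounded and efficiently computable; fix an integer $b\ge 1$ with $\kappa'(j)+d'(j)+c'(j)+q'(j)\le j^{b}$ for all large $j$, and fix a rational $\epsilon\in(0,1/b)$.

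First I would define the reparametrization $m(n)=\max\{1,\lfloor n^{\epsilon}\rfloor\}$, which is non-decreasing, unbounded, efficiently computable, and lies in $\Omega(n^{\epsilon})$; its fibres $I_{\mu}=\{n:m(n)=\mu\}$ are non-empty finite intervals of size $\mathrm{poly}(\mu)$ with efficiently computable endpoints. Set $f_{n}:=f'_{m(n)}$ (so $\kappa(n)=\kappa'(m(n))$, $d(n)=d'(m(n))$, $c(n)=c'(m(n))$) and $G_{n}:=G'_{m(n)}$, $E_{n}:=E'_{m(n)}$, with program length $q(n)=q'(m(n))$. Since $m(n)\le n$ and $m$ is efficiently computable, $(G,E)$ is again an efficient pair of circuit families. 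Now $d(n)+c(n)\le m(n)^{b}\le n^{\epsilon b}<n$ for all large $n$ because $\epsilon b<1$, giving the threshold $n'$; and $c(n)=c'(m(n))$ with $c'(m(n))/\log m(n)\to\infty$ and $\log m(n)\ge\tfrac{\epsilon}{2}\log n$ for large $n$, so $c\in\omega(\log)$. Correctness of $(G,E)$ for $f$ is inherited: if $\eta'$ witnesses correctness of $(G',E')$ then $\bra{f_{n}(k,x)}E_{n}(G_{n}(k)\tensor x)\ket{f_{n}(k,x)}\ge 1-\eta'(m(n))$ for all $n,k,x$, and $n\mapsto\eta'(m(n))$ is negligible by the closure of negligible functions under composition with non-decreasing, polynomially lower-bounded maps (recalled in \cref{sc:prelims}), as $m$ is non-decreasing and $m\in\Omega(n^{\epsilon})$. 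Also $\kappa=\kappa'\circ m$ is efficiently computable, as needed to instantiate the $\kappa$-sequence inside \cref{th:language}.

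It remains to show $(G,E)$ is secure for $f$ with respect to $D$, where $D_{n}$ samples the key uniformly over $\{0,1\}^{\kappa(n)}$ and $x_{B},x_{C}$ uniformly and independently over $\{0,1\}^{d(n)}$ (this is exactly the distribution used in the proof of \cref{th:language}). Suppose not: some efficient attack $(A,B,C)$ on $(G,E)$ has non-negligible advantage $\alpha(n):=w^{(A,B,C)}_{(G,E),f,D}(n)-2^{-c(n)}$. I would build an efficient attack $(A',B',C')$ on $(G',E')$ in which, at index $\mu$, Alice$'$ samples $n$ uniformly from $I_{\mu}$, runs $A_{n}$ on the program state $G'_{\mu}(k)=G_{n}(k)$, and forwards to Bob$'$ and Charlie$'$ the appropriate part of the resulting state together with $n$ in a fixed-width field; Bob$'$ and Charlie$'$ then read $n$ and run $B_{n}$, resp.\ $C_{n}$, on that part together with their challenge (which has length $d'(\mu)=d(n)$). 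The step ``run the circuit indexed by the sampled $n\in I_{\mu}$'' is realized by a multiplexer over the $\mathrm{poly}(\mu)$ many $n\in I_{\mu}$, with $\ket 0$-padding to make qubit widths uniform; since $|I_{\mu}|$ and each $A_{n},B_{n},C_{n}$ have size $\mathrm{poly}(\mu)$ and the families $A,B,C$ are uniform, $(A',B',C')$ is an efficient attack. Because $D'_{\mu}=D_{n}$ for every $n\in I_{\mu}$, conditioning on the sampled $n$ reduces exactly to the $(G,E)$-game at index $n$, so $w^{(A',B',C')}_{(G',E'),f',D'}(\mu)=|I_{\mu}|^{-1}\sum_{n\in I_{\mu}}w^{(A,B,C)}_{(G,E),f,D}(n)$, and (using $c(n)=c'(\mu)$ on $I_{\mu}$) its advantage is $\alpha'(\mu)=|I_{\mu}|^{-1}\sum_{n\in I_{\mu}}\alpha(n)$.

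The hard part — and the reason a naive reduction fails — is that this average could in principle hide a large positive $\alpha(n)$ behind negative contributions from other $n$ in the same fibre, and a uniform circuit cannot ``select the best $n$''. This is resolved by observing that $w^{(A,B,C)}_{(G,E),f,D}(n)\ge 0$ forces $\alpha(n)\ge -2^{-c(n)}=-2^{-c'(m(n))}$, which is negligible since $c'\in\omega(\log)$; hence $\alpha'(\mu)\ge|I_{\mu}|^{-1}\max_{n\in I_{\mu}}\alpha(n)-2^{-c'(\mu)}$. Non-negligibility of $\alpha$ yields a $c_{0}>0$ and infinitely many $n$ with $\alpha(n)>n^{-c_{0}}$; these $n$ have infinite image under $m$ (each fibre being finite), and for such $n$ with $\mu=m(n)$ one has $n\le\mu^{2/\epsilon}$ for large $\mu$, so $\max_{n'\in I_{\mu}}\alpha(n')>\mu^{-2c_{0}/\epsilon}$ and thus $\alpha'(\mu)\ge\mu^{-2c_{0}/\epsilon}/\mathrm{poly}(\mu)-2^{-c'(\mu)}>\mu^{-c_{1}}$ for some constant $c_{1}$ and all large $\mu$. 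Hence $\alpha'$ is non-negligible, contradicting the security of $(G',E')$. Therefore $(G,E)$ is a correct and secure copy-protection scheme for $f$, and with $d(n)+c(n)<n$ eventually and $c\in\omega(\log)$ the sequence $f$ satisfies all hypotheses of \cref{th:language}; in particular the corresponding language $L^{g}$ lies in $\mathbf{neglQP}/\text{upoly}$. (Note that pseudorandomness of $f'$ plays no role beyond guaranteeing the premise; only $c'\in\omega(\log)$ and the existence of a correct and secure scheme are used.)
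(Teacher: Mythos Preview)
You have not proved \cref{th:CLLZ21}. That theorem is a result quoted from \cite{CLLZ21}; the present paper does not prove it and explicitly says it does not even formalize the assumptions appearing in its statement. Its content is the \emph{construction} of a copy-protection scheme for a specific PRF under iO-type assumptions, and a proof would require opening up \cite{CLLZ21}, not the machinery of this paper.

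What you have actually written is a proof of \cref{th:language-from-prf}: starting from the \emph{conclusion} of \cref{th:CLLZ21} (a correct and secure scheme $(G',E')$ for some $f'$ with $c'\in\omega(\log)$), you produce a sequence $f$ meeting the hypotheses of \cref{th:language}. For that target your argument is correct and follows essentially the same route as the paper, which factors it into \cref{th:slowdown-exists} (choose a slowdown $\gamma$) and \cref{th:slowdown} (the $\gamma$-reindexing preserves correctness and security via the same ``sample $n$ uniformly in the fibre $\gamma^{-1}(\mu)$ and multiplex'' reduction you describe). Your $m(n)=\lfloor n^{\epsilon}\rfloor$ is a concrete instance of the map supplied by \cref{th:slowdown-exists}. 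One simplification relative to your write-up: the paper works directly with the winning probability $w^{(A,B,C)}$, which is nonnegative, rather than with the signed advantage $\alpha$. Since $2^{-c'}$ is negligible here, security of $(G',E')$ gives that the winning probability itself is negligible; then the fibre-average bound $\E_{n\in I_\mu} w(n)\le\eta(\mu)$ together with $w(n)\ge 0$ immediately yields $w(n)\le|I_\mu|\cdot\eta(\mu)$, and your concern about ``hiding a large positive $\alpha(n)$ behind negative ones'' never arises.
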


For completion, we recall the definition of a pseudorandom function. We
will not be using this definition in our work.
Up to specifying efficient circuit families as our computational model
and restricting the domains and codomains to be sets of bit strings,
this is the definition given by Zhandry \cite{Zha12}.
In short, a keyed function is pseudorandom if, when given oracle access,
it cannot be distinguished from a truly random function.
\begin{definition}
	A sequence of maps $
		f
		=
		(
			f_n :
			\{0,1\}^{\kappa(n)}
			\times
			\{0,1\}^{d(n)}
			\to
			\{0,1\}^{c(n)}
		)_{n \in \N}
	$ is a pseudorandom function if for all efficient $(0,1)$-circuit
	families $C^{O_g}$ having access to an oracle computing maps of
	the form $g : \{0,1\}^{d(n)} \to \{0,1\}^{c(n)}$ it
	holds that
	\begin{equation}
		\lambda
		\mapsto
		\abs{
			\E_{g}
				\bra{1} C_n^{O_g}(\varepsilon) \ket{1}
			-
			\E_{k \in \{0,1\}^{\kappa(n)}}
				\bra{1} C_n^{O_{f(k, \cdot)}} \ket{1}
		}
	\end{equation}
	is a negligible function and where we understand $\E_g$ to represent
	the expectation over the uniformly random choice of a map
	$g : \{0,1\}^{d(n)} \to \{0,1\}^{c(n)}$. Moreover, we
	require that there exists an efficient circuit family $F$ such that
	$F_n(k \tensor x) = f_n(k,x)$ for all $n \in \N$ and that $\kappa$
	be efficiently computable.
\end{definition}

Note that the PRF $f'$ assumed in \cref{th:language-from-prf} do
not necessarily satisfy the first condition of \cref{th:language}, which
is to say that there is no guarantee that $d(n) + c(n)$ will eventually
always be strictly smaller than $n$. In fact, this is never satisfied
for the PRF considered in \cite{CLLZ21}. Thus, we need to show that we
can ``slowdown'' the growth of this value, while keeping the
correctness and, more importantly, the security of the scheme.
The way we will do this is by repeating certain elements of the sequence
$f' = (f'_n)_{n \in \N}$ multiple times in succession to obtain a new
sequence of maps $f = (f_n)_{n \in \N}$. This will be parametrized by a
re-indexing map $\gamma : \N \to \N$ which will, in general, \emph{not}
be injective and where $f_n = f'_{\gamma(n)}$. In some sense, $\gamma$
``slows down'' the rate of growth of the security parameter for the
maps.
The content of the following lemma establishes conditions where this
transformation preserves correctness and security.

\begin{lemma}
\label{th:slowdown}
	Let $(G, E)$ be a copy-protection scheme for a sequence of maps $f$,
	as defined in \cref{eq:cp-maps}, which is correct and secure and
	where $2^{-c}$ is a negligible map.
	Let $\gamma : \N \to \N$ be a non-decreasing computable map in
	$\Omega(\lambda^{r_l})$ and $\mc{O}(\lambda^{r_u})$ for some
	$r_l, r_u \in \R^+$. Let
	$G^\gamma =(G^\gamma_\lambda = G_{\gamma(\lambda)})_{\lambda\in\N}$,
	$E^\gamma =(E^\gamma_\lambda = E_{\gamma(\lambda)})_{\lambda\in\N}$,
	and
	$f^\gamma =(f^\gamma_\lambda = f_{\gamma(\lambda)})_{\lambda\in\N}$.
	Then, $(G^\gamma, E^\gamma)$ is a copy-protection scheme for
	$f^\gamma$ which is correct and secure.

	Moreover, $2^{-c \circ \gamma}$ is also negligible, as it
	is simply $2^{-c} \circ \gamma$ and $\gamma \in
	\Omega(\lambda^{r_l})$.
\end{lemma}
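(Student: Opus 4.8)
The plan is to obtain each of the three assertions --- correct syntax, correctness, and security --- by transporting the corresponding property of $(G,E)$ along $\gamma$, with security being the only one that takes real work. Throughout, let $D = (D_n)_{n\in\N}$ be the sequence of distributions with respect to which $(G,E)$ is secure, and put $D^\gamma := (D^\gamma_\lambda = D_{\gamma(\lambda)})_{\lambda\in\N}$ and $q^\gamma := q\circ\gamma$. First I would settle the routine parts. Since $\gamma$ is efficiently computable and polynomially bounded (being in $\mc{O}(\lambda^{r_u})$), composing the Turing machine generating any uniform, efficient circuit family $(C_n)_{n\in\N}$ with the computation of $\gamma$ again yields a uniform, efficient circuit family $(C_{\gamma(\lambda)})_{\lambda\in\N}$ of the reindexed dimensions; applied to $G$ and $E$ this gives efficient $(\kappa\circ\gamma, q^\gamma)$- and $(q^\gamma + d\circ\gamma, c\circ\gamma)$-circuit families, so $(G^\gamma, E^\gamma)$ is a copy-protection scheme for $f^\gamma$. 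For correctness, if $\eta$ witnesses correctness of $(G,E)$, then $\bra{f^\gamma_\lambda(k,x)} E^\gamma_\lambda(G^\gamma_\lambda(k)\tensor x)\ket{f^\gamma_\lambda(k,x)} = \bra{f_{\gamma(\lambda)}(k,x)} E_{\gamma(\lambda)}(G_{\gamma(\lambda)}(k)\tensor x)\ket{f_{\gamma(\lambda)}(k,x)} \ge 1-\eta(\gamma(\lambda))$, so $\eta\circ\gamma$ witnesses correctness of $(G^\gamma,E^\gamma)$, and it is negligible by the fact recalled in \cref{sc:prelims} that composing a negligible function with a non-decreasing function in $\Omega(\lambda^r)$, $r>0$, is again negligible (using $\gamma\in\Omega(\lambda^{r_l})$). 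The ``moreover'' is the same fact applied to $2^{-c}$, giving that $2^{-c\circ\gamma} = 2^{-c}\circ\gamma$ is negligible.

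The crux is security, and here I would convert an arbitrary efficient attack $(A',B',C')$ against $(G^\gamma,E^\gamma)$ into an efficient attack $(\hat A,\hat B,\hat C)$ against $(G,E)$ that ``averages over the fibre of $\gamma$''. Write $w'(\lambda) := w^{(A',B',C')}_{(G^\gamma,E^\gamma),f^\gamma,D^\gamma}(\lambda)$ and $\hat w(n) := w^{(\hat A,\hat B,\hat C)}_{(G,E),f,D}(n)$, in the notation of \cref{df:cp-attack}. Because $\gamma$ is non-decreasing and unbounded, each fibre $\gamma^{-1}(n)$ is a finite interval of consecutive integers, and from $\gamma\in\Omega(\lambda^{r_l})\cap\mc{O}(\lambda^{r_u})$ it lies, for all large $n$, in $[(n/C)^{1/r_u},\,(n/c')^{1/r_l}]$ for suitable constants $c',C>0$; hence $|\gamma^{-1}(n)|$ and $\max\gamma^{-1}(n)$ are $\mc{O}(\mathrm{poly}(n))$ and $\gamma^{-1}(n)$ is computable from $1^n$ in polynomial time. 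For $n$ in the image of $\gamma$ I define: $\hat A_n$ samples $j$ uniformly (or sufficiently close to uniformly) from $\gamma^{-1}(n)$, runs $A'_j$ on the program state --- legitimate since $G_n = G_{\gamma(j)} = G^\gamma_j$ --- appends a classical register recording $j$ to each of the two halves of $A'_j$'s output, and pads both halves to fixed dimensions; $\hat B_n$ and $\hat C_n$ read $j$ off their shares (the splitting party may broadcast it, so no communication between the guessers is needed) and then run $B'_j$ and $C'_j$ on their challenge and the unpadded share. For $n$ outside the image of $\gamma$ take trivial circuits. All invoked indices $j$ are $\mathrm{poly}(n)$, so $(\hat A,\hat B,\hat C)$ is an efficient attack against $(G,E)$; and since $D^\gamma_j = D_n$ and the winning predicate $f^\gamma_j$ equals $f_n$ whenever $\gamma(j)=n$, one checks that for every $n$ in the image of $\gamma$,
\[
  \hat w(n)
  =
  \frac{1}{|\gamma^{-1}(n)|}
  \sum_{j\in\gamma^{-1}(n)} w'(j).
\]

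Finally, I would read off the security bound. By security of $(G,E)$ with respect to $f$ and $D$ (\cref{df:cp-secure}), the function $n\mapsto \hat w(n) - 2^{-c(n)}$ is bounded above by some negligible $\mu$. Each $w'(j)$ is a probability, hence nonnegative, so for any $\lambda$ with $n=\gamma(\lambda)$,
\[
  w'(\lambda)
  \;\le\;
  \sum_{j\in\gamma^{-1}(n)} w'(j)
  \;=\;
  |\gamma^{-1}(n)|\,\hat w(n)
  \;\le\;
  |\gamma^{-1}(n)|\bigl(\mu(n) + 2^{-c(n)}\bigr),
\]
and therefore $w'(\lambda) - 2^{-c(\gamma(\lambda))} \le |\gamma^{-1}(\gamma(\lambda))|\bigl(\mu(\gamma(\lambda)) + 2^{-c(\gamma(\lambda))}\bigr)$. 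The right-hand side is at most a polynomial in $\lambda$ --- since $|\gamma^{-1}(\gamma(\lambda))| \le \mathrm{poly}(\gamma(\lambda)) \le \mathrm{poly}(\lambda)$ using $\gamma\in\mc{O}(\lambda^{r_u})$ --- times the sum of $\mu\circ\gamma$ and $2^{-c}\circ\gamma$, both negligible in $\lambda$ by the composition fact (using $\gamma\in\Omega(\lambda^{r_l})$ and the hypothesis that $2^{-c}$ is negligible); hence the right-hand side, and so also $\lambda\mapsto w'(\lambda) - 2^{-c(\gamma(\lambda))}$, is negligible, which is exactly security of $(G^\gamma,E^\gamma)$ with respect to $f^\gamma$ and $D^\gamma$. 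I expect the main obstacle to be precisely the non-injectivity of $\gamma$: a single attack against $(G,E)$ at index $n$ cannot simultaneously carry all the distinct circuits $\{A'_j : \gamma(j)=n\}$ of the reindexed adversary, which is what forces the averaging construction, and the $\mathrm{poly}(n)$ loss incurred by that averaging is exactly what the hypothesis ``$2^{-c}$ is negligible'' is there to absorb.
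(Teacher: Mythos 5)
Your proposal is correct and follows essentially the same route as the paper's proof: the routine transport of syntax and correctness along $\gamma$, and, for security, the same reduction in which the adversary against $(G,E)$ samples an index from the (efficiently computable, polynomially sized) fibre $\gamma^{-1}(n)$, runs the reindexed attack, forwards the sampled index to both guessers, and then the per-index winning probability is bounded by the fibre size times a negligible function, with the hypothesis that $2^{-c}$ is negligible absorbing the trivial-success term exactly as in the paper. The only differences are cosmetic (you keep the challenge distributions generic and reindex them, handle indices outside the image of $\gamma$ with trivial rather than filler circuits, and note the padding of shares explicitly), so no further comparison is needed.
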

\begin{proof}
	First, note that since $\gamma$ is efficiently computable and
	upper-bounded by a polynomial, as it is in $\mc{O}(\lambda^{r_u})$,
	then $G^\gamma$ and $E^\gamma$ are efficient circuit families.
	The correctness of $(G^\gamma, E^\gamma)$ follows trivially from the
	correctness of $(G, E)$ and the fact that the map
	$\eta \circ \gamma$ is negligible if $\eta$ is negligible as $\gamma
	\in \Omega(\lambda^{r_l})$. We move on to showing the security.

	Let $(A', B', C')$ be an attack against $(G^\gamma, E^\gamma)$. We
	construct an attack $(A, B, C)$ against the original scheme
	$(G, E)$ as follows. For completion, assume we have an existing
	attack $(\tilde{A}, \tilde{B}, \tilde{C})$ against $(G, E)$. The
	specifics of this attack do not matter for this proof; it
	will only be used to fill in some gaps and ensure that all of our
	objects are well defined. We use $n$ to index the security parameter
	of $(A', B', C')$ and $\lambda$ to index the one of $(A, B, C)$.

	For all $\lambda \in \N$, define $A_\lambda$ as follows:
	\begin{itemize}
		\item
			If $\gamma^{-1}(\lambda) = \varnothing$, then
			$A_\lambda = \tilde{A}_\lambda$.
		\item
			If $\gamma^{-1}(\lambda) \not= \varnothing$, then
			$A_\lambda$ is a circuit which samples uniformly at
			random an $n' \in \gamma^{-1}(\lambda)$, executes
			$A'_{n'}$ and then gives both $B_\lambda$ and
			$C_\lambda$ a copy of $n'$ in addition to their
			respective share of the output of $A'_{n'}$.
	\end{itemize}
	For all $\lambda \in \N$, we define $B_\lambda$ as follows:
	\begin{itemize}
		\item
			If $\gamma^{-1}(\lambda) = \varnothing$, then $B_\lambda =
			\tilde{B}_\lambda$.
		\item
			If $\gamma^{-1}(\lambda) \not= \varnothing$, then
			$B_\lambda$ is a circuit which reads the $n'$ value
			received from $A_\lambda$ and then executes the
			$B'_{n'}$ circuit on the remainder of the input
			received from $A_\lambda$.
	\end{itemize}
	The $C_\lambda$ circuits are defined analogously to the $B_\lambda$
	circuits. Before proceeding, we should ensure that the $(A, B, C)$
	defined here is a triplet of \emph{efficient} circuit families.

	First, we show that the set $\gamma^{-1}(\lambda)$ is efficiently
	computable from $\lambda$, in the sense that there exists a
	polynomial-time Turing machine which, on input of $1^{\lambda}$,
	outputs an encoding of $\gamma^{-1}(\lambda)$.

	By our assumptions on the map $\gamma$, there exists a
	$n_\gamma$ and $k_u, k_l \in \R^+$ such that
	$n \geq n_\gamma$ implies that $
		k_l n^{r_l} \leq \gamma(n) \leq k_u n^{r_u}
	$. Thus, for any $n \geq n_\gamma$,
	\begin{equation}
		n \in \gamma^{-1}(\lambda)
		\implies
		k_l \lambda^{r_l}
		\leq
		\gamma(n)
		=
		\lambda
		\leq
		k_u n^{r_u}
		\implies
		(\lambda/k_u)^{\frac{1}{r_u}}
		\leq n \leq
		(\lambda/k_l)^{\frac{1}{r_l}}.
	\end{equation}
	This implies that there are at most $
		\max\{n_\gamma, (\lambda/k_l)^{\frac{1}{\lambda_\ell}}\}
		\in
		\mc{O}(\gamma^{\frac{1}{r_\ell}})
	$ values in $\gamma^{-1}(\lambda)$. Specifically, the possible
	elements of $\gamma^{-1}(\lambda)$ are the non-negative integers no
	greater than $\max\{n_\gamma, (\lambda/k_l)^{\frac{1}{r_\ell}}\}$.
	Thus, one efficient way to compute $\gamma^{-1}(\lambda)$ is to
	simply compute $\gamma$ on each of these candidate integers and verify if the
	result is $n$. Since each computation can be done in polynomial-time
	and there are at most a polynomial number of candidate values, this
	is an efficient computation of $\gamma^{-1}(\lambda)$. Note that this
	reasoning also implies that $\abs{\gamma^{-1}(\lambda)}$ is
	upper bounded by a polynomial in $\lambda$.

	It then follows that the circuit families $(A, B, C)$ described
	above are efficient. Indeed, checking if $\gamma^{-1}(\lambda)$ is
	empty or not can be done efficiently. If it is empty, $A_\lambda$,
	$B_\lambda$ and $C_\lambda$ are simply circuits which are already
	assumed to be from an efficient family. If $\gamma^{-1}(\lambda)$
	is not empty, then $A_\lambda$, $B_\lambda$, and $C_\lambda$ is
	simply $\abs{\gamma^{-1}(\lambda)}$ circuits in parallel with a
	minimal overhead to route the inputs to the right circuit. Since
	$\abs{\gamma^{-1}(\lambda)}$ is polynomially bounded, this remains
	efficient.

	Thus, by the security of $(G, E)$ and the fact that
	$2^{-c}$ is negligible, there exists a negligible
	function $\eta$ such that
	\begin{equation}
		\E_{\substack{k \gets \{0,1\}^{\kappa(\lambda)}\\x_B, x_C \gets
		\{0,1\}^{d(\lambda)}}}
		\bra{f_\lambda(k,x_B), f_\lambda(k,x_C)}
		\left(B_\lambda \tensor C_\lambda\right)
		\left(x_B \tensor A_\lambda(G_\lambda(k)) \tensor x_C\right)
		\ket{f_\lambda(k,x_B), f_\lambda(k,x_C)}
		\leq
		\eta(\lambda)
	\end{equation}
	Let $v(\lambda)$ denote the left-hand side of the above. By our
	construction, the attack $(A, B, C)$ averages multiple instances of
	the attack $(A', B', C')$ when possible. More precisely, if we let
	$v'(n)$ denote the success probability of the attack $(A', B', C')$
	for the security parameter $n$, we have that
	\begin{equation}
		\E_{n \in \gamma^{-1}(\lambda)} v'(n) = v(\lambda) \leq
		\eta(\lambda)
	\end{equation}
	if $\gamma^{-1}(\lambda) \not= \varnothing$. In particular, this
	implies that $
		v'(n) \leq \abs{\gamma^{-1}(\gamma(n))} \cdot
		\eta\circ\gamma(n)
	$. By our assumption on the map $\gamma$, $\gamma(n)$ is upper bounded by a
	polynomial in $n$, which by our previous remarks also implies that
	$\abs{\gamma^{-1}(\gamma(n))}$ is upper bounded by a polynomial in
	$n$. Since $\gamma$ is in $\Omega(n^{r_l})$ and is non-decreasing,
	we have that
	$\eta \circ \gamma$ is negligible. It follows that $v'(n)$ is
	negligible, which is the desired result.
\end{proof}

While the previous lemma gives sufficient condition for when a
``$\gamma$ slowdown'' maintains the security and correctness guarantees
of copy-protected functions, it does not establish the existence of a
suitable $\gamma$. This is done, implicitly, by the following lemma.

\begin{lemma}
\label{th:slowdown-exists}
	Let $g : \N \to \N$ be a map such that $g\in \mc{O}(n^d)$ for
	a $d \in \N^+$. Then, there exists an
	efficiently computable non-decreasing map $f : \N \to \N$ such that
	$g \circ f$ is eventually strictly smaller than $n$ and
	$f \in \Omega(\lambda^r) \cap \mc{O}(\lambda^r)$ for a~$r \in \R^+$.
\end{lemma}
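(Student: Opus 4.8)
The plan is to let $f$ be essentially the $(d+1)$-st root function. Concretely, set $r = \frac{1}{d+1} \in \R^+$ and define $f : \N \to \N$ by letting $f(\lambda)$ be the largest non-negative integer $m$ with $m^{d+1} \le \lambda$ (equivalently, $f(\lambda) = \lfloor \lambda^{1/(d+1)} \rfloor$). I then need to check the three required properties of $f$ and the stated bound on $g \circ f$.

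First I would record the structural facts about $f$. It is non-decreasing because $\lambda \mapsto \lambda^{1/(d+1)}$ is non-decreasing and the floor preserves monotonicity. It is efficiently computable: on input $1^\lambda$ a Turing machine can compute $f(\lambda)$ by successively testing $m = 0, 1, 2, \ldots$, computing the integer $m^{d+1}$ at each step, and halting at the first $m$ with $(m+1)^{d+1} > \lambda$; this runs for at most $\lambda^{1/(d+1)} + 1 \le \lambda + 1$ iterations, each manipulating integers of $\mc{O}(\log \lambda)$ bits, hence in time polynomial in the input length $\lambda$, after which it prints $1^{f(\lambda)}$. Finally, $f(\lambda) \le \lambda^{1/(d+1)}$ for all $\lambda$, while $f(\lambda) \ge \lambda^{1/(d+1)} - 1 \ge \frac{1}{2}\lambda^{1/(d+1)}$ whenever $\lambda \ge 2^{d+1}$, so $f \in \Omega(\lambda^{r}) \cap \mc{O}(\lambda^{r})$.

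Next I would check that $g \circ f$ is eventually strictly smaller than $n$. Since $g \in \mc{O}(n^d)$, fix $C \in \R^+$ and $n_1 \in \N$ with $g(n) \le C\, n^{d}$ for all $n \ge n_1$. Because $f(\lambda) \to \infty$ with $\lambda$, we have $f(\lambda) \ge n_1$ for every $\lambda \ge (n_1 + 1)^{d+1}$, and for such $\lambda$,
\[
	g(f(\lambda)) \le C \, f(\lambda)^{d} \le C \, \lambda^{d/(d+1)}.
\]
The right-hand side is strictly less than $\lambda$ as soon as $C < \lambda^{1/(d+1)}$, i.e.\ whenever $\lambda > C^{d+1}$. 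Hence $g(f(\lambda)) < \lambda$ for all $\lambda > \max\left\{(n_1+1)^{d+1}, C^{d+1}\right\}$, which is exactly what is required.

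No step here is genuinely hard; the only point requiring a little care is the order of the ``eventually'' thresholds --- the polynomial bound $g(n) \le C\, n^{d}$ is only available once $n \ge n_1$, so one must first wait for $f(\lambda)$ to grow past $n_1$ before the estimate on $g(f(\lambda))$ applies, and only afterwards impose $\lambda > C^{d+1}$. The efficiency bookkeeping is likewise routine, since the input $1^\lambda$ already has length $\lambda$, making polynomial running time quite generous.
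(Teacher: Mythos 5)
Your proof is correct and follows essentially the same route as the paper's: choose $f$ to be the floor of a fixed fractional power of $\lambda$ (you use exponent $\frac{1}{d+1}$, the paper uses $\frac{1}{2d}$ with the constant absorbed into $f$ itself rather than into a later threshold), then verify monotonicity, efficient computability, $f \in \Theta(\lambda^r)$, and that $g\circ f$ is eventually below the identity. Your handling of the two ``eventually'' thresholds (first waiting for $f(\lambda)\geq n_1$, then for $\lambda > C^{d+1}$) is careful and sound.
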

\begin{proof}
	Let $k, n_g \in \N$ be such that
	$n \geq n_g \implies g(\lambda) \leq k n^d$.
	Now, let $f$ be defined by
	\begin{equation}
		f(n)
		=
		\begin{cases}
			0
			& \qq{if} n < \sqrt[d]{k} n_g^{2d}
			\\
			\left\lfloor
				\frac{1}{\sqrt[d]{k}} n^{\frac{1}{2d}}
			\right\rfloor
			& \qq{else.}
		\end{cases}
	\end{equation}
	Trivially, $f$ is non-decreasing and efficiently computable.
	Then, for all $n > \max\{\sqrt[d]{k}n_g^{2d},1\}$, we
	have that
	\begin{equation}
		g \circ f(n)
		=
		g\left(
			\left\lfloor
				\frac{1}{\sqrt[d]{k_u}} n^{\frac{1}{2d}}
			\right\rfloor
		\right)
		\leq
		n^\frac{1}{2}
		<
		n.
	\end{equation}
	It now remains to show that there exists an $r \in \R^+$ such that
	$f \in \Omega(\lambda^r)$. Note that for all sufficiently large
	values of $\lambda$ we have that $f(\lambda) \geq
	\frac{1}{\sqrt[d]{k}} \lambda^\frac{1}{2d} - 1 \geq
	\frac{1}{2\sqrt[d]{k}} \lambda^\frac{1}{2d}$. Thus, $f \in
	\Omega(\lambda^\frac{1}{2d})$. We similarly find that $f \in
	\mc{O}(\lambda^\frac{1}{2d})$.
\end{proof}

Pulling these lemmas together, we can prove \cref{th:language-from-prf}.

\begin{proof}[Proof of \cref{th:language-from-prf}]
	Let $f' = (f'_n : \{0,1\}^{\kappa(n)} \times \{0,1\}^{d(n)} \to
	\{0,1\}^{c(n)})_{n \in \N}$ be the copy-protected PRF. We know that
	$d + c \in \mc{O}(n^r)$ for some $r \in \R^+$ else it would not be
	possible to efficiently compute this PRF.
	Then, by \cref{th:slowdown-exists}, there exists
	a non-decreasing efficiently computable $\gamma$ in both
	$\Omega(\lambda^\frac{1}{2d})$ and $\mc{O}(\lambda^\frac{1}{2d})$
	such that $(c + d)\circ \gamma(n)$ is eventually strictly smaller
	than $n$.

	Thus, by applying the transformation of \cref{th:slowdown} with this
	$\gamma$ to $f'$, we obtain a sequence of maps $f$ satisfying the
	assumptions of \cref{th:language}.
\end{proof}

\appendix

%~~~~~~~~~~~~~~~~~~~~~~~~~~~~~~~~~~~~~~~~~~~~~~~~~~~~~~~~~~~~~~~~~~~~~~%
\section{Comparing Ingenerable and Martin-L\"of Random Sequences}
\label{sc:mlr}
%~~~~~~~~~~~~~~~~~~~~~~~~~~~~~~~~~~~~~~~~~~~~~~~~~~~~~~~~~~~~~~~~~~~~~~%

The goal of this section is to show that the concepts of ingenerable
sequences and of Martin-L\"of random sequences are distinct but related.
For the rest of the appendix, ``ingenerable'' is understood to mean
``computably ingenerable''.

\begin{itemize}
	\item
		For reasonable choices of $\ell$, there exists
		weakly ingenerable $\ell$-sequences which are not
		ingenerable (\cref{th:ingen-not-weak-ingen}).
	\item
		For any non-constant polynomial $\ell$, every Martin-L\"of
		random sequence yields a weakly ingenerable $\ell$-sequence via
		a natural bijection (\cref{th:martin-lof-is-ingenerable}).
	\item
		There exists ingenerable $\ell$-sequences which do not
		correspond to Martin-L\"of random sequences under the same
		natural bijection mentioned above
		(\cref{th:ingenerable-is-not-martin-lof}).
\end{itemize}

For any non-constant polynomial $\ell$, \cref{fg:ingenerable-martin-lof}
shows the relation obtained between the sets of weakly
ingenerable $\ell$-sequences, ingenerable $\ell$-sequences, and
Martin-L\"of random sequences obtained from the results above.

\begin{figure}[H]
	\begin{center}
	\begin{tikzpicture}
		\node[align=center] (wig) at (0,0) {Weakly Computably Ingenerable
		$\ell$-sequences};
		\node[align=center] (sig) at (-3.2,-1.5) {Computably Ingen-\\erable
		$\ell$-sequences};
		\node[align=center] (mlr) at (3,-1.5) {Martin-L\"of\\Random};

		\draw[thick, rounded corners=5]
			($(wig) + (6,0.5)$)
			rectangle
			($(wig) - (6,2.75)$);

		\draw[thick] ($(sig)+(1,0)$) ellipse (3.5 and 1);
		\draw[thick] ($(mlr)-(1,0)$) ellipse (3.5 and 1);
		
		%\node (a1) at ($(mlr)+ (-2.25,0.25)$) {$\ast$};
		%\node (a2) at ($(a1) + ( 0.5,0.5)$) {$\ast$};

	\end{tikzpicture}
	\end{center}
	\caption{\label{fg:ingenerable-martin-lof}%
		The relation between the sets of
		weakly computably ingenerable $\ell$-sequences, ingenerable
		$\ell$-sequences, and Martin-L\"of random sequences, via the
		$\text{cut}_\ell$ bijection, for any non-constant polynomial
		$\ell : \N \to \N$. All three sets are distinct, but the
		precise relation between ingenerable $\ell$-sequences and
		Martin-L\"of random sequences remains unknown.}
\end{figure}
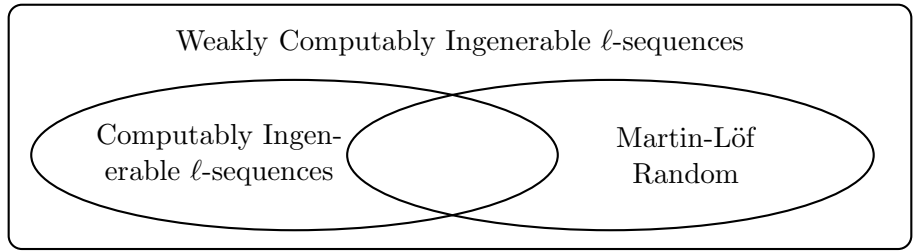

Before moving on to considering Martin-L\"of randomness, we separate the
notions of ingenerability and weak ingenerability, except in the trivial
cases where all $\ell$-sequences are ingenerable.

\begin{theorem}\label{th:ingen-not-weak-ingen}
	Let $\ell : \N \to \N$ be a computable map not in $\mc{O}(\log)$.
	Then, there exists a weakly ingenerable $\ell$-sequence which is not
	ingenerable.
\end{theorem}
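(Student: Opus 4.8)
The plan is to build a weakly ingenerable $\ell$-sequence $(s_n)_{n\in\N}$ by taking an arbitrary computably ingenerable $\ell$-sequence $(t_n)_{n\in\N}$ (which exists by \cref{th:ingenerable}) and then deliberately ``contaminating'' it on a sparse, but computably identifiable, set of indices so that some uniform circuit family generates the sequence correctly infinitely often — thereby destroying full ingenerability — while still failing infinitely often, preserving weak ingenerability. First I would fix a computable infinite set $I\subseteq\N$ with infinite complement, chosen so that on $I$ the length function behaves innocuously; concretely, since $\ell$ is computable and not in $\mc{O}(\log)$, I would pick $I$ to be a computable infinite set on which $\ell$ is still ``large'' in a usable sense, but any computable co-infinite $I$ works for the core argument. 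Define $s_n = 0^{\ell(n)}$ for $n\in I$ and $s_n = t_n$ for $n\notin I$.

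The next step is to check the two halves of the claim. For the failure of ingenerability: the $(0,\ell)$-circuit family $(C_n)_{n\in\N}$ that simply prepares $\ell(n)$ qubits in state $\ket{0}$ — which is uniform because $\ell$ is computable — satisfies $\bra{s_n}C_n(\varepsilon)\ket{s_n} = 1$ for every $n\in I$, hence for infinitely many $n$. So $(s_n)_{n\in\N}$ is \emph{not} computably ingenerable: there is no polynomial $p$ for which $\bra{s_n}C_n(\varepsilon)\ket{s_n} < p(n)2^{-\ell(n)}$ holds for all but finitely many $n$, because $\ell\notin\mc{O}(\log)$ forces $p(n)2^{-\ell(n)}<1$ eventually (this is exactly the computation in the proof of \cref{th:ingenerable-trivial}). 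For weak ingenerability: given any uniform $(0,\ell)$-circuit family $(C_n)_{n\in\N}$ built from a gate set $\mc{G}$, I invoke the fact that $(t_n)_{n\in\N}$ is computably ingenerable to get a polynomial $p_\mc{G}$ with $\bra{t_n}C_n(\varepsilon)\ket{t_n} < p_\mc{G}(n)2^{-\ell(n)}$ for all large $n$. Since $s_n = t_n$ for all $n\notin I$ and $\N\setminus I$ is infinite, the inequality $\bra{s_n}C_n(\varepsilon)\ket{s_n} < p_\mc{G}(n)2^{-\ell(n)}$ holds for infinitely many $n$, which is precisely weak computable ingenerability with respect to $\mc{G}$; since this holds for every $\mc{G}$ with a polynomial depending only on $\mc{G}$, $(s_n)_{n\in\N}$ is weakly computably ingenerable in the sense of \cref{df:ingenerable} extended to weak ingenerability.

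The one subtlety — and the step I expect to need the most care — is making sure the ``contamination'' does not accidentally already follow from $0^{\ell(n)}$ being one of the strings $t_n$ could equal, and more importantly ensuring that inserting $0^{\ell(n)}$ at indices in $I$ genuinely costs full ingenerability rather than being absorbed. This is where $\ell\notin\mc{O}(\log)$ is essential: it guarantees $2^{-\ell(n)}\to 0$ fast enough that no fixed polynomial $p$ can dominate the constant $1$ infinitely often, so the single explicit all-zeros circuit family is a genuine witness against ingenerability. If instead $\ell\in\mc{O}(\log)$, then \cref{th:ingenerable-trivial} says \emph{every} $\ell$-sequence is ingenerable, so there is nothing to separate — which is exactly why the hypothesis excludes that case. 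A secondary point to verify is that the all-zeros circuit family is uniform for the \emph{given} universal gate set (or any gate set containing the $\ket{0}$-preparation map), which follows from computability of $\ell$ together with the Solovay--Kitaev discussion in \cref{sc:prelims-circuits}; this is routine but worth a sentence in the write-up.
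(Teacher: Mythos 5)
Your proposal has a genuine gap at its key step. The non-ingenerability half rests on the claim that ``$\ell\notin\mc{O}(\log)$ forces $p(n)\cdot 2^{-\ell(n)}<1$ eventually,'' but this is false: $\ell\notin\mc{O}(\log)$ only says that for every $c$ there are \emph{infinitely many} $n$ with $\ell(n)>c\log n$; it is perfectly compatible with $\ell(n)=0$ on, say, all even $n$. Consequently your assertion that ``any computable co-infinite $I$ works'' is wrong. Concretely, take $\ell(n)=n$ for odd $n$, $\ell(n)=0$ for even $n$, and $I$ the even numbers: on $I$ the threshold is $p(n)\cdot 2^{-\ell(n)}=p(n)\ge 1$, so the all-zeros circuit family is not a witness against ingenerability there, and in fact the resulting sequence $(s_n)$ \emph{is} ingenerable (use $p_\mc{G}+2$, where $p_\mc{G}$ witnesses ingenerability of $(t_n)$), so your construction fails to prove the theorem. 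To make your route work, the indices at which $\ell$ exceeds every logarithmic bound must be steered \emph{into} $I$: since $\ell$ is computable one can build, by a greedy search, a computable co-infinite $I$ containing, for each $c$, infinitely many $n$ with $2^{\ell(n)}>(n+2)^{c}$, and then your argument goes through verbatim. But that construction is the missing idea, not the routine afterthought your write-up treats it as; your hedge about choosing $I$ so that $\ell$ is ``large in a usable sense'' points at exactly this, and then discards it.

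The paper's proof is designed precisely to sidestep this issue. It interleaves: starting from an ingenerable $\ell$-sequence $(s_n)_{n\in\N}$, it forms $s^0$ (all-zeros on even indices, $s_n$ on odd) and $s^1$ (the complementary interleaving), shows both are weakly ingenerable exactly as you do on the untouched indices, and then argues \emph{non-constructively} that at least one of the two is not ingenerable: if both were, then against the uniform all-zeros circuit family one would get $1<p(n)\cdot 2^{-\ell(n)}$ for \emph{all} sufficiently large $n$ (since at every $n$ one of $s^0_n,s^1_n$ is $0^{\ell(n)}$), forcing $\ell\in\mc{O}(\log)$ as in \cref{th:ingenerable-trivial} — a contradiction. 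This disjunctive trick needs only the eventual bound to clash with the infinitely-often largeness of $\ell$, which is exactly the point where your version slips; the price is that the paper does not identify which of the two sequences is the witness, whereas your (repaired) construction would exhibit one explicitly.
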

\begin{proof}
	Let $(s_n)_{n\in\N}$ be an ingenerable $\ell$-sequence. For all
	$n \in \N$, we define
	\begin{equation}
		s^0_n
		=
		\begin{cases}
			0^{\ell(n)}
			& \qq{if} n \text{ is even}
			\\
			s_n & \qq{if} n \text{ is odd}
		\end{cases}
		\qq{and}
		s^1_n
		=
		\begin{cases}
			s_n
			& \qq{if} n \text{ is even}
			\\
			0^{\ell(n)}
			& \qq{if} n \text{ is odd}
		\end{cases}
	\end{equation}

	We first show that at least one of $(s^0_n)_{n \in \N}$ or
	$(s^1_n)_{n \in \N}$ is not ingenerable. Aiming for a contradiction,
	assume both are ingenerable. Let $(C_n)_{n \in \N}$ be a uniform
	$(0,\ell)$-circuit family such that $C_n$ simply outputs
	$\ketbra{0^{\ell(n)}}$ for all $n \in \N$. As we have assumed that
	both sequences are ingenerable, there exists polynomials $p_b$ and
	integers $\tilde{n}_b$ such that
	\begin{equation}
	\label{eq:ingen=>weak-ingen}
		n \geq \tilde{n}_b
		\implies
		\bra*{s^b_n} C_n(\varepsilon) \ket*{s^b_n}
		<
		p_b(n) \cdot 2^{-\ell(n)}.
	\end{equation}
	for both values of $b \in \{0,1\}$. Taking $\tilde{n} = \tilde{n}_0
	+ \tilde{n}_1$ and $p = p_0 + p_1$, this implies that
	\begin{equation}
		n \geq \tilde{n}
		\implies
		\bra*{s^b_n} C_n(\varepsilon) \ket*{s^b_n}
		<
		p(n) \cdot 2^{-\ell(n)}
	\end{equation}
	for both $b \in \{0,1\}$. Since at least one of $s_n^0$ or $s_n^1$
	is always the all-zero string, which precisely what is produced by
	$C_n$, this implies that

	\begin{equation}
		n \geq \tilde{n}
		\implies
		1
		<
		p(n) \cdot 2^{-\ell(n)}
	\end{equation}
	from which we can conclude that $\ell \in \mc{O}(\log)$, a
	contradiction. Hence, at least one of the two sequences is not
	ingenerable.

	Next, we show that $(s^b_n)_{n \in \N}$ is weakly ingenerable for
	both values of $b \in \{0,1\}$. Since $(s_n)_{n \in \N}$ is
	ingenerable, there exist a polynomial $p$ such that for any uniform
	$(0,\ell)$-circuit family $(C_n)_{n \in \N}$ there is a 
	$\tilde{n} \in \N$ such that
	\begin{equation}
		n \geq \tilde{n}
		\implies
		\bra{s_n} C_n(\varepsilon) \ket{s_n}
		<
		p(n)\cdot2^{-\ell(n)}.
	\end{equation}
	Thus, for both $b \in \{0,1\}$,
	\begin{equation}
		\bra*{s^b_n} C_n(\varepsilon) \ket*{s^b_n}
		<
		p(n)\cdot2^{-\ell(n)}
	\end{equation}
	infinitely often. Hence, $(s^b_n)_{n \in \N}$ is weakly ingenerable.
\end{proof}

%~~~~~~~~~~~~~~~~~~~~~~~~~~~~~~~~~~~~~~~~~~~~~~~~~~~~~~~~~~~~~~~~~~~~~~%
\subsection{Preliminaries}
%~~~~~~~~~~~~~~~~~~~~~~~~~~~~~~~~~~~~~~~~~~~~~~~~~~~~~~~~~~~~~~~~~~~~~~%

We review here the basics of Kolmogorov complexity and Martin-L\"of
randomness as well as establish some extra notation pertaining to bit
strings. We generally follow the textbook of Li and
Vit\'anyi~\cite{LV19}.

\subsubsection{More on Bit Strings and the $\text{cut}_\ell$ Map}
We denote by $\{0,1\}^\infty$ be the set of all infinite sequences of
bits. When we wish to emphasize that a sequence of $0$'s and $1$'s be
interpreted as a bit string and not a number, we will write them as
sequences of $\ttt{0}$'s and $\ttt{1}$'s.

For a Turing machine $\ttt{T}$, we let $\langle\ttt{T}\rangle \in
\{0,1\}^*$ denote a reasonable binary encoding of $\ttt{T}$.
The specifics of this encoding are not needed for this work. Similarly,
for any $n \in \N$ we let $\langle n \rangle \in \{0,1\}^*$ denote the
shortest representation of $n$ in binary. Here, we adopt the same
convention as in \cite{LV19} and represent the number $0$ with the empty
string and continue from there. Thus, $\langle 0 \rangle =
\varepsilon$, $\langle 1 \rangle = \ttt{0}$,
and $\langle3\rangle=\ttt{10}$. This yields a bijection between
$\{0,1\}^*$ and $\N$ as well as the relation
\begin{equation}
	\abs{\langle n \rangle}
	=
	\floor{\log(n+1)}.
\end{equation}
We also define $\langle n \rangle_m$ to be the binary representation of
$n$ that is left-padded by $0$'s to be of length at least $m$,
\emph{e.g.} $\langle 5 \rangle_6 = \ttt{000100}$.

For any string $x \in \{0,1\}^*$, we let $_\text{d}x$ be the string
which repeats every bit of $x$ twice in succession. For example,
$_\text{d}\langle 5 \rangle = _\text{d}\ttt{101} = \ttt{110011}$.

Let $s \in \{0,1\}^* \cup \{0,1\}^\infty$ be a finite or infinite string
and let $n, m \in \N^+$ be two strictly positive integers. If $n \leq m
\leq \abs{s}$, we let $s_{[n,m]} \in \{0,1\}^{m-n+1}$ be the
string composed precisely of the $n$-th to $m$-th bits, inclusively, of
$s$. Otherwise, we set $s_{[n,m]} = \varepsilon$.

For any map $\ell : \N \to \N$, we denote the set of $\ell$-sequences by
 $\{0,1\}^*_\ell$.  
We now define a map $\text{cut}_\ell : \{0,1\}^\infty \to \{0,1\}^*_\ell$.
Intuitively, $\text{cut}_\ell$ will ``cut'' an infinitely sequence
$s \in \{0,1\}^\infty$ into finite strings $s_\lambda$ of length
$\ell(\lambda)$. The string $s_0$ will be first $\ell(0)$ bits of $s$,
the string $s_1$ will be the next~$\ell(1)$ bits, $s_2$ will be the
next~$\ell(2)$ bits, and so on and so forth. If the map $\ell$ is
non-zero infinitely often, then $\text{cut}_\ell$ is a bijection and the
inverse map $\text{cut}^{-1}_\ell : \{0,1\}^*_\ell \to \{0,1\}^\infty$
is simply the infinite concatenation, in order, of all strings in a
sequence $(s_\lambda)_{\lambda \in \N}$. More formally, let
$L : \N \to \N$ be defined by $n \mapsto \sum_{i = 0}^n \ell(i)$.
Then,~$\text{cut}_\ell(s) = (s_n)_{n \in \N}$ where
\begin{equation}
	s_n
	=
	\begin{cases}
		s_{[1,L(0)]} & \qq{if} n = 0
		\\
		s_{[1+L(n-1),L(n)]} & \qq{else.}
	\end{cases}
\end{equation}

\subsubsection{Kolmogorov Complexity}

We now move on to defining the Kolmogorov complexity of a string. This
is also known as the \emph{plain} Kolmogorov complexity when we wish to
explicitly distinguish it from other variations of this definition, such
as the prefix Kolmogorov complexity.\footnote{%
	Note that all our references on this topic use $C$ for the plain
	Kolmogorov complexity and reserve $K$ for the prefix Kolmogorov
	complexity. We use $K$ for the Kolmogorov complexity as we typically
	use $C$ to denote circuits.}

\begin{definition}
\label{df:kolmogorov}
	Let $\ttt{T}$ be a Turing machine. The Kolmogorov complexity of a
	string $x \in \{0,1\}^*$ with respect to the machine $\texttt{T}$,
	denoted $K_\ttt{T}(x)$, is defined by
	\begin{equation}
		K_\texttt{T}(x)
		=
		\begin{cases}
			\infty
			& \qq{if}
			\ttt{T}(y) \not= x \text{ for all } y \in \{0,1\}^*
			\\
			\min\left\{ \abs{y} \;:\; \ttt{T}(y) = x\right\}
			& \qq{else.}
	\end{cases}
\end{equation}
\end{definition}

If $\ttt{U}$ is a universal Turing machine, in the sense that
$\ttt{U}\left(\langle\ttt{T}\rangle , x\right) = \ttt{T}(x)$ for all\
strings $x \in \{0,1\}^*$ where $\ttt{T}(x)$ halts, then there exists a
constant $c \in \N$ such
that $K_{\ttt{U}}(x) \leq K_{\ttt{T}}(x) + c$. Indeed, it suffices to
take $c = \abs{\langle\ttt{T}\rangle}$. It follows that for any two
universal Turing machines $\ttt{U}$ and $\ttt{U}'$ there exists a
constant $c$ such that~$K_\ttt{U}(x) \leq K_{\ttt{U}'}(x) + c$ for
all strings $x \in \{0,1\}^*$.

With the above in mind, we fix for the remainder of this section a
universal Turing machine $\ttt{U}$ and set $K = K_\ttt{U}$. Up a to an
additive constant, the particular choice of $\ttt{U}$ does not matter.

As an immediate result, we note that there exists a constant $c$ such
that $K(x) \leq \abs{x} + c$ for all strings $x \in \{0,1\}^*$. Indeed,
let $\ttt{Id}$ be the Turing machine which immediately halts. We then
have that $\ttt{U}(\langle{Id}\rangle , x) = x$ for all
$x \in \{0,1\}^*$ and so $K(x)\leq\abs{\langle\ttt{Id}\rangle}+\abs{x}$.

Finally, we also define the \emph{conditional} Kolmogorov
complexity. In short, we assume here that the Turing machine also has
access to another string $z$ when trying to compute $x$ and the length
of $z$ is not counted in the resulting measure of complexity.

\begin{definition}
	Let $\ttt{T}$ be a Turing machine. The Kolmogorov complexity of a
	string  $x \in \{0,1\}^*$ conditioned on $z \in \{0,1\}^*$ with respect
	to $\tt{T}$ is defined by
	\begin{equation}
		K_\ttt{T}(x | z)
		=
		\begin{cases}
			\infty & \qq{if} \ttt{T}(_\text{d}z , 01 , y)\not= x
			\text{ for all } y \in \{0,1\}^*
			\\
			\min\{\abs{y} \;:\; \ttt{T}(_\text{d}z , \ttt{01} , y)=x\}
			&\qq{else.}
		\end{cases}
	\end{equation}
\end{definition}
Note that the double encoding of $z$ followed by $\ttt{01}$ allows the
Turing machine to unambiguously distinguish between $z$ and $y$.

We again fix a single universal Turing machine $\ttt{U}$ and always
measure the conditional Kolmogorov complexity with respect to this
machine.

\subsubsection{Martin-L\"of Random Sequences}

A very succinct characterization of Martin-L\"of randomness can be given
in terms of prefix Kolmogorov complexity. We take this characterization
as our definition. We do not formally define prefix Kolmogorov
complexity as we will not directly be using this characterization, but
it roughly corresponds to restricting the plain Kolmogorov complexity to
only consider Turing machines whose behaviours are completely
characterized by their actions on a prefix set.

\begin{definition}[{\cite[Theorem 3.5.1]{LV19}}]
	A string $w \in \{0,1\}^\infty$ is Martin-L\"of random if there
	exists a constant $c$ such that the prefix Kolmogorov complexity of
	$w_{[1,n]}$ is at least $n - c$ for all $n \in \N$.
\end{definition}

At a high level, this characterization captures the idea that there
should be a limit on how much the prefixes of a Martin-L\"of random
sequence can be compressed. For our needs, the following two theorems
will be sufficient. The first, due to Miller and
Yu \cite{MY08} (although we use a restricted version of the formulation
given in \cite[Theorem 2.5.4]{LV19}) is expressed in terms of the
conditional plain Kolmogorov complexity. The second is due to Schnorr
\cite{Sch71} and is based on the concept of computable
martingales.\footnote{%
	Slightly relaxing the conditions on $f$ in this theorem, namely only
	requiring it to be \emph{weakly computable} and letting its
	codomain be $\R^+_0$, actually yields a characterization of
	non-Martin-L\"of randomness.}

\begin{theorem}
\label{th:miller-yu}
	For all Martin-L\"of random $w \in \{0,1\}^\infty$, there exists a
	$c \in \N$ such that for all $n \in \N$
	\begin{equation}
		K(w_{[1:n]} | \langle n \rangle) \geq n - 2\log(n) - c.
	\end{equation}
\end{theorem}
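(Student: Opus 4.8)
The plan is to deduce the theorem from the Martin-L\"of \emph{test} characterization of randomness rather than from the prefix-complexity formulation directly. The definition of Martin-L\"of randomness used in the excerpt, \cite[Theorem 3.5.1]{LV19}, is --- up to unwinding --- the classical equivalence between that prefix-complexity bound and the statement that $w$ avoids $\bigcap_{c \in \N}[V_c]$ for every Martin-L\"of test $(V_c)_{c \in \N}$; here each $V_c \subseteq \{0,1\}^*$, the family is uniformly computably enumerable, $[V_c] \subseteq \{0,1\}^\infty$ is the set of sequences having some prefix in $V_c$, and $\mu([V_c]) \leq 2^{-c}$ for the uniform measure $\mu$ on $\{0,1\}^\infty$. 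So it suffices to build one such test tailored to the inequality we want and then read the bound off the fact that a random $w$ passes it.

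Concretely, for each $c \in \N$ I would set
\begin{equation}
	U_c
	=
	\left\{
		z \in \{0,1\}^*
		\;:\;
		\abs{z} \geq 1
		\ \text{ and }\
		K\!\left(z \mid \langle \abs{z} \rangle\right)
		\leq
		\abs{z} - 2\log\abs{z} - c
	\right\},
\end{equation}
and carry out two verifications. First, $(U_c)_{c \in \N}$ is uniformly computably enumerable: dovetail over all pairs $(p,n)$ with $p \in \{0,1\}^*$ and $n \geq 1$, run the fixed universal machine defining conditional Kolmogorov complexity on program $p$ with $\langle n\rangle$ in the conditioning register, and whenever it halts with an output $z$ of length exactly $n$ such that $\abs{p} \leq n - 2\log n - c$, enumerate $z$ into $U_c$. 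Second --- and this is the crux --- $\mu([U_c])$ is small: for each fixed $n \geq 1$ the number of $z \in \{0,1\}^n$ with $K(z \mid \langle n\rangle) \leq n - 2\log n - c$ is at most the number of programs of length $\leq n - 2\log n - c$, hence at most $2^{\,n-2\log n-c+1} = 2 \cdot 2^{-c}\, n^{-2}\, 2^{n}$, so such strings occupy at most a $2\cdot 2^{-c} n^{-2}$ fraction of $\{0,1\}^n$; by countable subadditivity over the length-$n$ cylinders,
\begin{equation}
	\mu\!\left([U_c]\right)
	\leq
	2 \cdot 2^{-c} \sum_{n \geq 1} \frac{1}{n^2}
	<
	2^{2-c}.
\end{equation}
Hence $(U_{c+2})_{c \in \N}$ is a genuine Martin-L\"of test. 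To finish, let $w$ be Martin-L\"of random; then $w$ passes this test, so there is a $c_0 \in \N$ with $w \notin [U_{c_0+2}]$, which by the definition of $[\cdot]$ says exactly that no prefix $w_{[1,n]}$ lies in $U_{c_0+2}$, i.e.\ $K(w_{[1,n]} \mid \langle n\rangle) > n - 2\log n - (c_0+2)$ for all $n \geq 1$. Setting $c = c_0 + 2$ yields the claim (the $n=0$ instance being vacuous under any reasonable convention for $\log 0$).

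The step I expect to carry the content --- everything else being bookkeeping --- is the measure estimate, and in particular the coefficient $2$ in ``$2\log n$'' is exactly what makes $\sum_n n^{-2}$ converge; with $\log n$ in its place one would be summing the harmonic series, the sets $[U_c]$ would have infinite measure, and no test would result. The only other, more presentational, obstacle is that the excerpt \emph{defines} Martin-L\"of randomness via prefix complexity, so one must invoke its equivalence with the test formulation --- which is precisely the content of the cited \cite[Theorem 3.5.1]{LV19}. One could instead stay inside the prefix-complexity picture, bounding the prefix complexity of $w_{[1,n]}$ above by $K(w_{[1,n]} \mid \langle n\rangle)$ plus the cost of re-encoding $n$ and a self-delimiting length marker, but those encoding overheads total $2\log n + O(\log\log n)$ rather than a clean $2\log n$, so I would use the test route to match the bound as stated.
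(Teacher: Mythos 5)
Your proposal is correct, and the comparison here is somewhat degenerate: the paper does not prove this statement at all, but imports it as a known result of Miller and Yu via the formulation in Li--Vit\'anyi, so you have supplied a self-contained argument where the paper supplies a citation. Your route is the standard elementary one for this restricted form and it goes through: the set $U_c$ of strings $z$ with $K(z \mid \langle\abs{z}\rangle) \leq \abs{z} - 2\log\abs{z} - c$ is uniformly computably enumerable by dovetailing, the counting bound (at most $2^{m+1}$ programs of length at most $m$) gives the per-length fraction $2\cdot 2^{-c}n^{-2}$, and the convergence of $\sum_{n\geq 1} n^{-2}$ --- which, as you correctly isolate, is exactly what the coefficient $2$ in $2\log n$ buys --- yields $\mu([U_c]) < 2^{2-c}$, so a shift in the index produces a genuine Martin-L\"of test; your sets are even nested automatically, so no further massaging is needed. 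Invoking the equivalence between the prefix-complexity definition the paper adopts and the test characterization is legitimate, since that equivalence is precisely the content of the reference the paper cites when defining Martin-L\"of randomness, and the $n=0$ instance is vacuous as you note. The one caveat worth keeping in mind is that this argument proves only the weakened statement as quoted; the full Miller--Yu theorem (with $n - K(n) - O(1)$, or $n - g(n) - O(1)$ for an arbitrary computable $g$ with $\sum_n 2^{-g(n)} < \infty$) needs more than this test construction, but nothing in the paper requires that stronger form.
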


\begin{theorem}
\label{th:schnorr}
	Let $f : \{0,1\}^* \to \N$ be a computable map such that for all $x
	\in \{0,1\}^*$ we have that
	\begin{equation}
		f(x) = \frac{f(x , 0) + f(x , 1)}{2}.
	\end{equation}
	Any $w \in \{0,1\}^\infty$ such that $\limsup_{n \to \infty}
	f(w_{[1:n]}) = \infty$ is not Martin-L\"of random.
\end{theorem}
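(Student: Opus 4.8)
The plan is to turn the computable martingale $f$ into a Martin-L\"of test that swallows every sequence on which $f$ ``succeeds'', and then to invoke the equivalence --- underlying the prefix-complexity characterisation of Martin-L\"of randomness that we have taken as our definition --- between failing some Martin-L\"of test and having prefix Kolmogorov complexity that dips arbitrarily far below $n-c$. Write $c_0 = f(\varepsilon)$; since $\limsup_{n} f(w_{[1:n]}) = \infty$ rules out $f \equiv 0$, we have $c_0 \geq 1$.

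First I would record the mass-conservation identity of a martingale: by induction on $n$, the relation $f(x) = \frac{1}{2}\left(f(x,0) + f(x,1)\right)$ gives $\sum_{x \in \{0,1\}^n} f(x)\, 2^{-n} = c_0$, and more generally $\sum_{x \in A} f(x)\, 2^{-\abs{x}} \leq c_0$ for any prefix-free set $A \subseteq \{0,1\}^*$ (sum over the length-$n$ descendants of each $x \in A$, use disjointness, and let $n \to \infty$). For each $k \in \N$ define the first-crossing set
\[
	V_k = \left\{\, x \in \{0,1\}^* \;:\; f(x) \geq 2^k \text{ and } f(y) < 2^k \text{ for every proper prefix } y \text{ of } x \,\right\}.
\]
This set is prefix-free, so restricting the inequality above to it and using $f(x) \geq 2^k$ on $V_k$ yields $\sum_{x \in V_k} 2^{-\abs{x}} \leq c_0\, 2^{-k}$; equivalently, the open set $[V_k]$ of infinite sequences extending some element of $V_k$ has Lebesgue measure at most $c_0\, 2^{-k}$. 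Since $f$ is computable, membership in $V_k$ is decidable uniformly in $k$, so after a harmless shift of the index (by $\lceil \log c_0 \rceil$) the family $([V_k])_k$ becomes a uniformly computably enumerable sequence of open sets whose $k$-th member has measure at most $2^{-k}$ --- that is, a Martin-L\"of test.

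Next I would check that $w$ is trapped by this test. Because $\limsup_n f(w_{[1:n]}) = \infty$, for each $k$ there is a least $n_k$ with $f(w_{[1:n_k]}) \geq 2^k$, and minimality forces $w_{[1:n_k]} \in V_k$, hence $w \in [V_k]$; so $w$ lies in every level of the test and is therefore not Martin-L\"of random. If one prefers to stay literally within the prefix-complexity phrasing of the definition, the same data gives the conclusion directly: feeding the requests $\bigl(\abs{x} - k + 2\lceil\log(k+1)\rceil + \lceil\log c_0\rceil + O(1),\, x\bigr)$ for $x \in V_k$ into the Kraft--Chaitin (machine existence) theorem --- whose total weight is at most $\sum_k O\bigl((k+1)^{-2}\bigr) \leq 1$ thanks to the measure bound just established --- shows that the prefix Kolmogorov complexity of any $x \in V_k$ is at most $\abs{x} - k + O(\log k)$. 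Applied to $x = w_{[1:n_k]}$ this gives complexity at most $n_k - k + O(\log k)$, which for large $k$ falls below $n_k - c$ for every fixed constant $c$, contradicting Martin-L\"of randomness.

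The main obstacle is the quantitative martingale step: proving $\sum_{x \in V_k} 2^{-\abs{x}} \leq c_0\, 2^{-k}$ (a discrete form of Ville's, or Kolmogorov's, maximal inequality for nonnegative martingales) and then the bookkeeping that certifies $([V_k])_k$ as a genuine uniformly c.e.\ test of the correct measure, which is where computability of $f$ is used. Everything after that --- the reduction from ``$f$ succeeds on $w$'' to ``$w \in \bigcap_k [V_k]$'', and the passage from ``fails some Martin-L\"of test'' to ``not Martin-L\"of random'' (the Levin--Schnorr equivalence legitimising our choice of definition) --- is routine.
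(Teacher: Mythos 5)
Your argument is correct; note first that the paper itself contains no proof of this statement --- it is imported from Schnorr \cite{Sch71} as a known result --- so there is no internal argument to compare against, and your writeup is in essence the standard martingale proof. The two load-bearing steps both check out: the first-crossing sets $V_k$ are prefix-free, so mass conservation (valid because $f \geq 0$, being $\N$-valued) gives the Ville-type bound $\sum_{x \in V_k} 2^{-\abs{x}} \leq f(\varepsilon)\, 2^{-k}$, and exact computability of the integer-valued $f$ makes membership in $V_k$ decidable uniformly in $k$, so after the index shift the sets $[V_k]$ do form a Martin-L\"of test; minimality of $n_k$ correctly places $w_{[1:n_k]}$ in $V_k$, trapping $w$ in every level. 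The one point needing care is that the paper \emph{defines} Martin-L\"of randomness via the prefix-complexity characterization, so the test route needs the Levin--Schnorr equivalence as an external input; you flag this and also supply the direct Kraft--Chaitin variant, whose weight bookkeeping ($2^{k}$ gain against the measure bound $f(\varepsilon)2^{-k}$, damped by the $2\lceil\log(k+1)\rceil$ term to give total weight $O\!\left(\sum_k (k+1)^{-2}\right) \leq 1$ after a constant shift) is sound and yields $K(w_{[1:n_k]}) \leq n_k - k + O(\log k)$, defeating every constant $c$ in the definition. Either route constitutes a complete proof of the theorem as stated.
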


%----------------------------------------------------------------------%
\subsection{Ingenerable Sequences Need Not be Martin-L\"of Random}
%----------------------------------------------------------------------%

We prove in this section \cref{th:ingenerable-is-not-martin-lof} which
states that, under the action of the $\text{cut}^{-1}_\ell$ bijection
for a computable map $\ell$, ingenerable sequences may not yield
Martin-L\"of random sequences. The core of our argument is that we can
find ingenerable sequences such that their concatenation yields an
infinite sequence of bits with $0$'s at infinitely many locations which
can be computed. This is sufficient to ensure non-Martin-L\"of
randomness as it will allow us to construct a computable martingale, in
the sense of \cref{th:schnorr}, who's value will grow to infinity.

In practice, we will use the following corollary of \cref{th:schnorr}.

\begin{corollary}
\label{th:schnorr-2}
	Let $L : \N \to \N^+$ be a computable strictly monotone map.
	If a string $s \in \{0,1\}^\infty$ is such that $s_{[L(i):L(i)]} = 0$
	for all $i \in \N$, then $s$ is not Martin-L\"of random.
\end{corollary}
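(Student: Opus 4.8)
The plan is to invoke Schnorr's characterization, \cref{th:schnorr}: it suffices to exhibit a computable map $f : \{0,1\}^* \to \N$ satisfying the averaging identity $f(x) = \frac{1}{2}\left(f(x,0) + f(x,1)\right)$ and with $\limsup_{n\to\infty} f(s_{[1:n]}) = \infty$. I would take $f$ to be the capital of the gambler who, whenever the next bit sits in a position of the form $L(i)$, bets \emph{all} of its current capital on that bit being $\ttt{0}$, and otherwise abstains. Concretely, set $f(\varepsilon) = 1$, and for a string $x$ of length $m$ let $c(x) = \abs{\{\, i \in \N : L(i) \le m \,\}}$ and put $f(x) = 2^{c(x)}$ if the bit of $x$ in every position $L(i) \le m$ equals $\ttt{0}$, and $f(x) = 0$ otherwise.

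First I would check the averaging identity. Fix $x$ with $\abs{x} = m$ and consider position $m+1$. If $m+1$ is not of the form $L(i)$, then $c(x0) = c(x1) = c(x)$ and the new bit is unconstrained, so $f(x0) = f(x1) = f(x)$ and the identity is immediate. If $m+1 = L(i)$ (this $i$ is unique since $L$ is strictly monotone) and $f(x) = 0$, then both extensions still violate the zero-constraint, so $f(x0) = f(x1) = 0$; if instead $f(x) = 2^{c(x)}$, then $f(x0) = 2^{c(x)+1}$ (the constraint is still met and the count increases by one) while $f(x1) = 0$, so $\frac{1}{2}\left(f(x0)+f(x1)\right) = 2^{c(x)} = f(x)$. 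Computability is routine: given $x$ of length $m$, since $L$ is strictly monotone on $\N$ we have $L(i) > m$ for $i > m$, so we can compute $L(0), L(1), \dots$ until the value exceeds $m$; this enumerates all $L(i) \le m$, after which inspecting the corresponding bits of $x$ determines $c(x)$ and whether to output $2^{c(x)}$ or $0$. In particular $f$ is total and $\N$-valued, as \cref{th:schnorr} requires.

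Finally, by hypothesis $s_{[L(i):L(i)]} = 0$ for every $i$, so for each $n$ the prefix $s_{[1:n]}$ satisfies the zero-constraint and hence $f(s_{[1:n]}) = 2^{\abs{\{\, i : L(i) \le n \,\}}}$. Since $L$ is strictly monotone on the infinite domain $\N$, its image is infinite, so $\abs{\{\, i : L(i) \le n \,\}} \to \infty$ as $n \to \infty$; thus $\limsup_{n\to\infty} f(s_{[1:n]}) = \infty$ and \cref{th:schnorr} gives that $s$ is not Martin-L\"of random.

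The only point requiring care --- the \emph{hard part}, such as it is --- is keeping the codomain equal to $\N$ rather than $\R^+_0$, which is exactly why the gambler bets its \emph{entire} capital at each flagged position (so that $f$ only ever takes the values $0$ and powers of $2$) rather than a fixed fraction. I note in passing that a Kolmogorov-complexity argument instead, compressing $s_{[1:L(k)]}$ by omitting the $k+1$ forced zeros and appealing to \cref{th:miller-yu}, would only defeat maps $L$ growing slower than roughly $2^{k/2}$, whereas the martingale argument above is insensitive to the growth rate of $L$.
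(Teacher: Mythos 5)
Your proof is correct and is essentially the paper's own argument: both construct the computable martingale that doubles its capital on a $\ttt{0}$ at each position $L(i)$ (and is wiped out by a $\ttt{1}$ there), note that strict monotonicity of $L$ gives computability and an infinite image, and conclude via \cref{th:schnorr} since $f(s_{[1:n]}) = 2^{\abs{L(\N)\cap[n]}} \to \infty$. The only difference is presentational — the paper specifies the same function through a Turing-machine description rather than a closed-form definition.
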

\begin{proof}
	Consider a Turing $\ttt{T}$ machine which, on input of a string $x
	\in \{0,1\}^*$,  implements the following algorithm:
	\begin{enumerate}
		\item
			Initialize a counter $v$ with value $1$ and a counter $i$
			with value $0$.
		\item
			Compute $L(i)$. If $L(i) > \abs{x}$, then output the value
			of $v$ and halt. Else:
			\begin{enumerate}
				\item
					Check if $x_{[L(i),L(i)]}$ is $0$. If it is,
					double the value of $v$. If it is not, set $v$ to be
					$0$.
				\item
					Increment $i$ by setting it to $i + 1$ and return to
					step 2.
			\end{enumerate}
	\end{enumerate}
	Note that $\ttt{T}$ halts on every input since $L$ is strictly
	monotone. Let $f : \{0,1\}^* \to \N$ be the function computed by
	$\ttt{T}$ and note that $f(x) = \frac{f(x,0)+f(x,1)}{2}$ for
	all strings~$x \in \{0,1\}^*$. By \cref{th:schnorr}, it suffices to
	show that $\lim_{n \to \infty}f(s_{0:n}) = \infty$ to obtain
	that~$s$ is not Martin-L\"of random. To do this, we simply note
	that
	\begin{equation}
		f(s_{[1,n]}) = 2^{\abs{L(\N) \cap [n]}}
	\end{equation}
	and that since $\abs{L(\N)}$ is infinite, this value tends to
	infinity as $n$ grows. Indeed, consider the value of the register
	$v$ maintained by the Turing machine described above when it is run
	on $s_{[1:n]}$. Step~2(a) will be executed exactly
	$\abs{L(\N)\cap[n]}$ times. By hypothesis, $s_{[L(i),L(i)]}$ is
	always $0$ and thus the value of~$v$ is doubled every time. This
	yields the desired result.
\end{proof}

At this point, it essentially suffices to argue that there are
ingenerable sequences with infinitely many zeros at precisely computable
locations.

\begin{lemma}
\label{th:zero-leading-ingenerable}
	Let $(s_\lambda)_{\lambda \in \N}$ be an ingenerable
	$\ell$-sequence. For all $\lambda \in \N$, let $s'_\lambda$ be $s_\lambda$,
	except with the first bit replaced by $0$, unless
	$s_\lambda = \varepsilon$ in which case no change is made. Then the
	sequence $(s'_\lambda)_{\lambda \in \N}$ is also ingenerable.
\end{lemma}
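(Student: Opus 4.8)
The plan is to reduce, by an almost trivial circuit transformation, the problem of generating $(s'_\lambda)_{\lambda\in\N}$ to that of generating the original sequence $(s_\lambda)_{\lambda\in\N}$, using that flipping a single qubit is an involution and is easy to implement after enlarging the gate set slightly. First I would dispose of the degenerate case via \cref{th:ingenerable-trivial}: if $\ell$ is uncomputable (or $\ell\in\mc{O}(\log)$), then \emph{every} $\ell$-sequence — in particular $(s'_\lambda)_{\lambda\in\N}$ — is computably ingenerable, and there is nothing to prove. So from now on assume that $\ell$ is computable.

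To verify \cref{df:ingenerable} for $(s'_\lambda)_{\lambda\in\N}$, fix an arbitrary gate set $\mc{G}$ and let $\mc{G}^+ = \mc{G}\cup\{\chi\}$, where $\chi\colon L\mapsto XLX$ is conjugation by $X$ on a single qubit. Since $\chi\circ\chi$ is the single-qubit identity channel, for every $m\geq 1$ the channel $F^{(m)} := \chi\tensor(\chi\circ\chi)^{\tensor(m-1)}$ — which flips the first of $m$ qubits and leaves the rest untouched — is implemented by a circuit built from $\mc{G}^+$. Because $(s_\lambda)_{\lambda\in\N}$ is computably ingenerable, \cref{df:ingenerable} gives a polynomial $p\colon\N\to\R$ such that $(s_\lambda)_{\lambda\in\N}$ is computably $(p\cdot 2^{-\ell})$-ingenerable with respect to $\mc{G}^+$. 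I claim that the same polynomial $p$ witnesses that $(s'_\lambda)_{\lambda\in\N}$ is computably $(p\cdot 2^{-\ell})$-ingenerable with respect to $\mc{G}$; as $\mc{G}$ was arbitrary, this proves the lemma.

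To prove the claim, let $(C_\lambda)_{\lambda\in\N}$ be any uniform $(0,\ell)$-circuit family built from $\mc{G}$, which is in particular a family built from $\mc{G}^+$. Let $F_\lambda$ be the circuit implementing $F^{(\ell(\lambda))}$ when $\ell(\lambda)\geq 1$, the empty circuit otherwise, and set $D_\lambda = F_\lambda\circ C_\lambda$. Since $\ell$ is computable and $(C_\lambda)_{\lambda\in\N}$ is uniform, $(D_\lambda)_{\lambda\in\N}$ is a uniform $(0,\ell)$-circuit family built from $\mc{G}^+$: a Turing machine computes $\ell(\lambda)$, obtains a description of $C_\lambda$, and prepends the description of $F_\lambda$. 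Applying $(p\cdot 2^{-\ell})$-ingenerability of $(s_\lambda)_{\lambda\in\N}$ to the two families $(C_\lambda)_{\lambda\in\N}$ and $(D_\lambda)_{\lambda\in\N}$ yields an $N\in\N$ such that for all $\lambda\geq N$ both $\bra{s_\lambda}C_\lambda(\varepsilon)\ket{s_\lambda}<p(\lambda)2^{-\ell(\lambda)}$ and $\bra{s_\lambda}D_\lambda(\varepsilon)\ket{s_\lambda}<p(\lambda)2^{-\ell(\lambda)}$. Now fix $\lambda\geq N$ and split on the first bit of $s_\lambda$. If $\ell(\lambda)=0$ or the first bit of $s_\lambda$ is $0$, then $s'_\lambda=s_\lambda$ and the bound on $C_\lambda$ already gives $\bra{s'_\lambda}C_\lambda(\varepsilon)\ket{s'_\lambda}<p(\lambda)2^{-\ell(\lambda)}$. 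Otherwise $\ket{s'_\lambda}=X_1\ket{s_\lambda}$ with $X_1$ denoting $X$ acting on the first qubit, so $D_\lambda(\varepsilon)=X_1 C_\lambda(\varepsilon)X_1$ and hence $\bra{s'_\lambda}C_\lambda(\varepsilon)\ket{s'_\lambda}=\bra{s_\lambda}X_1 C_\lambda(\varepsilon)X_1\ket{s_\lambda}=\bra{s_\lambda}D_\lambda(\varepsilon)\ket{s_\lambda}<p(\lambda)2^{-\ell(\lambda)}$. In all cases the required inequality holds for all $\lambda\geq N$.

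The argument is short; the one point I would be careful about is the gate-set bookkeeping — enlarging $\mc{G}$ to $\mc{G}^+$ so that the qubit flip (and the identity on the other qubits) is \emph{exactly} implementable, and checking that the post-composed family $(D_\lambda)_{\lambda\in\N}$ is still genuinely uniform, which is precisely where computability of $\ell$ enters and why the $\ell$-uncomputable case must be handled first by \cref{th:ingenerable-trivial}. Everything else is the observation that $X$ is an involution, so generating $s'_\lambda$ with a circuit $C_\lambda$ is equivalent to generating $s_\lambda$ with either $C_\lambda$ or its post-composition with a qubit flip.
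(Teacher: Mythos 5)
Your proof is correct, and it follows the same broad strategy as the paper --- reduce generating $(s'_\lambda)_{\lambda\in\N}$ to generating $(s_\lambda)_{\lambda\in\N}$ by a trivial post-composition of the candidate circuit family --- but the mechanism at the key step is different. The paper takes a single uniform family $(C'_\lambda)_{\lambda\in\N}$ attempting to produce $s'_\lambda$ and post-composes it with the channel that replaces the first output qubit by the maximally mixed state; conditioned on the remaining $\ell(\lambda)-1$ bits being correct, the fresh qubit matches the first bit of $s_\lambda$ with probability $\frac{1}{2}$, giving $\frac{1}{2}\bra{s'_\lambda}C'_\lambda(\varepsilon)\ket{s'_\lambda}\leq\bra{s_\lambda}C_\lambda(\varepsilon)\ket{s_\lambda}$ and hence ingenerability of $(s'_\lambda)_{\lambda\in\N}$ with polynomial $2p$ in place of $p$. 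You instead flip the first qubit deterministically, invoke the ingenerability of $(s_\lambda)_{\lambda\in\N}$ twice --- once for $(C_\lambda)_{\lambda\in\N}$ and once for its post-composition with the first-qubit flip --- and case-split on the first bit of $s_\lambda$; this costs a second application of the definition but no factor of $2$, and, a genuine plus, you are explicit about the gate-set bookkeeping that the paper's proof silently elides (its depolarizing step likewise needs gates outside the original set, and both arguments ultimately rest on the fact that \cref{df:ingenerable} quantifies over all gate sets). Your dispatch of the degenerate case via \cref{th:ingenerable-trivial} is also fine, though the paper's shortcut --- if no uniform $(0,\ell)$-circuit family exists, the claim is vacuous --- would suffice. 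One cosmetic remark: the expression $\chi\tensor(\chi\circ\chi)^{\tensor(m-1)}$ is not literally of the layered circuit form $\Circ_{i}\Tensor_{j}\Psi_{i,j}$, and padding with $\chi\circ\chi$ does not by itself convert it into one; but since the paper's encoding convention allows gate positions in a layer to be empty (i.e.\ identity wires), the flip-the-first-qubit map is already a one-layer circuit over $\mc{G}^+$, so nothing is lost.
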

\begin{proof}
	Let $(C'_\lambda)_{\lambda \in \N}$ be a uniform $(0,\ell)$-circuit
	family. (If no such family exists, such as if $\ell$ is
	uncomputable, then the proof is complete.) For all $\lambda \in \N$,
	let $C_\lambda$ be the circuit which first runs the circuit
	$C'_\lambda$ and then
	replaces the first qubit with the maximally mixed state, unless
	$C'_\lambda$ is a~$(0,0)$-circuit in which case
	$C_\lambda = C'_\lambda$. Clearly, $(C_\lambda)_{\lambda \in \N}$ is
	also a uniform $(0, \ell)$-circuit family. Note that
	\begin{equation}
		\frac{1}{2}
		\bra*{s'_\lambda} C'_\lambda(\varepsilon) \ket*{s_\lambda}
		\leq
		\bra{s_\lambda} C_\lambda(\varepsilon) \ket{s_\lambda}
	\end{equation}
	since, conditioned on $C_\lambda$ correctly outputting the last
	$\ell(\lambda)-1$ bits of $s_\lambda$, $C'_\lambda$ has a
	probability $\frac{1}{2}$ of outputting $s'_\lambda$. Note that if
	$\ell(\lambda) = 0$, the inequality still holds as the left-hand
	side is $\frac{1}{2}$ and the right hand side is $1$. Now, since
	$(s_\lambda)_{\lambda \in \N}$ is ingenerable, there exists a
	polynomial $p$ such that
	\begin{equation}
		\bra{s_\lambda} C_\lambda(\varepsilon) \ket{s_\lambda}
		\leq
		p(\lambda) \cdot 2^{-\ell(\lambda)}
	\end{equation}
	for all sufficiently large values of $\lambda$. Combining this with
	the previous inequality yields the desired result as $2p$ is a
	polynomial.
\end{proof}

We now pull everything together to state and prove the theorem.

\begin{theorem}
\label{th:ingenerable-is-not-martin-lof}
	Let $\ell : \N \to \N$ be a computable function which is nonzero
	infinitely often. Then, there exists an ingenerable $\ell$-sequence
	$(s_\lambda)_{\lambda \in \N}$ such that
	$\text{cut}^{-1}_\ell\left((s_\lambda)_{\lambda \in \N}\right)$ is
	not Martin-L\"of random.
\end{theorem}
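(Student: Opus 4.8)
The plan is to combine \cref{th:ingenerable}, \cref{th:zero-leading-ingenerable}, and \cref{th:schnorr-2}. First I would apply \cref{th:ingenerable} to obtain a computably ingenerable $\ell$-sequence $(t_\lambda)_{\lambda \in \N}$, and then apply \cref{th:zero-leading-ingenerable} to it, obtaining an ingenerable $\ell$-sequence $(s_\lambda)_{\lambda \in \N}$ in which every nonempty $s_\lambda$ has first bit $0$. This $(s_\lambda)_{\lambda\in\N}$ will be the sequence claimed by the theorem; since $\ell$ is nonzero infinitely often, $\text{cut}^{-1}_\ell$ is defined on it, and what remains is to show that $w = \text{cut}^{-1}_\ell\bigl((s_\lambda)_{\lambda \in \N}\bigr)$ is not Martin-L\"of random.

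For that, I would exhibit a computable strictly increasing map $L' : \N \to \N^+$ such that $w_{[L'(i),L'(i)]} = 0$ for all $i \in \N$, so that \cref{th:schnorr-2} applies directly. Writing $L(n) = \sum_{i=0}^n \ell(i)$ with the convention $L(-1) = 0$, the definition of $\text{cut}_\ell$ places the block $s_\lambda$ at positions $L(\lambda-1)+1, \dots, L(\lambda)$ of $w$; hence, whenever $\ell(\lambda) > 0$, the bit of $w$ at position $L(\lambda-1)+1$ is the first bit of $s_\lambda$, which is $0$ by construction. Since $\ell$ is nonzero infinitely often, the set $Z = \{\lambda \in \N : \ell(\lambda) > 0\}$ is infinite, and since $\ell$ is computable one can enumerate $Z$ in increasing order as $\lambda_0 < \lambda_1 < \cdots$; set $L'(i) = L(\lambda_i - 1) + 1$. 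This map is computable (as $\ell$, hence $L$, is), takes values in $\N^+$, and is strictly increasing: since $\lambda_i \leq \lambda_{i+1}-1$ we have $L(\lambda_{i+1}-1) \geq L(\lambda_i) = L(\lambda_i-1) + \ell(\lambda_i) > L(\lambda_i-1)$, so $L'(i+1) > L'(i)$. By the preceding observation $w$ vanishes at every position $L'(i)$, so \cref{th:schnorr-2} yields that $w$ is not Martin-L\"of random, completing the argument.

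The proof is mostly bookkeeping, so I do not expect a serious obstacle; the one place that needs care is the possibility that $\ell$ vanishes on infinitely many indices. This makes some blocks of $w$ empty, so one cannot simply read off ``the first bit of the $\lambda$-th block'' for all $\lambda$; restricting to the infinite set $Z$ of indices where $\ell$ is positive, and verifying that the induced position map $L'$ is still strictly increasing despite the plateaus of $L$, resolves this. The other small point to check is the off-by-one convention in $\text{cut}_\ell$ — that position $L(\lambda-1)+1$ really is the first coordinate of the $\lambda$-th block — but this is immediate from the formula defining the map.
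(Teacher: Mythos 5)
Your proposal is correct and follows essentially the same route as the paper's proof: obtain an ingenerable $\ell$-sequence via \cref{th:ingenerable}, force the leading bit of each nonempty block to $0$ via \cref{th:zero-leading-ingenerable}, and then feed the computable, strictly increasing map of block-start positions restricted to the indices where $\ell$ is nonzero into \cref{th:schnorr-2}. Your map $L'(i) = L(\lambda_i - 1) + 1$ coincides with the paper's position map (the paper just writes the sum over the nonzero-$\ell$ indices directly), and your handling of the indices where $\ell$ vanishes matches the paper's restriction to the set $N$ of nonzero values.
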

\begin{proof}
	By \cref{th:ingenerable}, there exists an ingenerable
	$\ell$-sequence $(s_\lambda)_{\lambda \in \N}$ and by
	\cref{th:zero-leading-ingenerable} we can assume that the first bit
	of every non-empty $s_\lambda$ in this sequence is $0$. Let
	$N = \{n_0, n_1, n_2, \ldots\} \subseteq \N$ be the set of values on
	which $\ell$ is nonzero. We index the values of $N$ such that
	$n_j < n_{j + 1}$ for all~$j \in \N$. Now,
	define the map~$L : \N \to \N^+$ by
	\begin{equation}
		i \mapsto 1 + \sum_{j = 0}^{i-1} \ell(n_j).
	\end{equation}
	and note that it is computable and strictly increasing. By
	construction, the $L(i)$-th bit of
	$\text{cut}^{-1}_\ell\left(s_\lambda\right)$ is $0$ for all
	$i \in \N$. Indeed, $L(i)$ identifies the location where the first
	bit of $s_{n_i}$ will be situated in the infinite string
	$\text{cut}^{-1}_\ell\left((s_\lambda)_{\lambda \in \N}\right)$.
	Thus, by \cref{th:schnorr-2}, this string is not Martin-L\"of
	random.
\end{proof}

%----------------------------------------------------------------------%
\subsection{Martin-L\"of Random Sequences give Weakly Ingenerable Sequences}
%----------------------------------------------------------------------%

The key observation in this section is that if we already given the
description of a circuit $C$ which generates a string $s$ with at least
some probability $p$, \emph{i.e.:} $\bra{s}C(\varepsilon)\ket{s}\geq p$,
then we can describe $s$ with at most $\approx\log(p^{-1})$ extra bits.
Indeed, there are at most $p^{-1}$ strings satisfying the previous
inequality and and we can identify $s$ by by specifying it's location,
in lexicographic order, in the set of such strings. If $p$ is large
enough, this will be shorter than $\abs{s}$.

Using this idea, we can show that if
$\text{cut}_\ell(s) = (s_\lambda)_{\lambda \in \N}$ is not weakly
ingenerable if $\ell$ is a polynomial, as in that case all but some initial
set of segments of $s$ can be compressed by the above idea. Moreover,
the $C$ circuits can in fact be taken from a uniform family, which is to
say that they can all be generate by a single Turing machine and thus
they only have a constant contribution to the Kolmogorov complexity.
This, and a few extra technical considerations, is enough to show that
$s$ is not Martin-L\"of random. Taking the contrapositive yields the
following theorem.

\begin{theorem}
\label{th:martin-lof-is-ingenerable}
	Let $s \in \{0,1\}^\infty$ be a Martin-L\"of random sequence and let
	$\ell : \N \to \N$ be a polynomial. Then, $\text{cut}_\ell(s)$ is
	weakly ingenerable.
\end{theorem}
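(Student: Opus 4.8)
The plan is to prove the contrapositive: assuming $\text{cut}_\ell(s)=(s_n)_{n\in\N}$ is \emph{not} weakly (computably) ingenerable, I will compress arbitrarily long prefixes of $s$ and conclude, via \cref{th:miller-yu}, that $s$ is not Martin-L\"of random. First I would dispose of the degenerate case: if $\ell\in\mc{O}(\log)$ (in particular if $\ell$ is a constant polynomial), then by \cref{th:ingenerable-trivial} every $\ell$-sequence is computably ingenerable, hence weakly ingenerable, and there is nothing to show. So assume $\ell$ is a non-constant polynomial; then $\ell\in\omega(\log)$, $\ell$ is eventually strictly increasing, and $L\colon N\mapsto\sum_{i=0}^N\ell(i)$ is eventually strictly increasing with $L(N)$ polynomially bounded in $N$, so $\log L(N)\in\mc{O}(\log N)$. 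Note also that $s_0s_1\cdots s_N=s_{[1:L(N)]}$ by the definition of $\text{cut}_\ell$.

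Unwinding the weak form of \cref{df:ingenerable}, the assumption yields a gate set $\mc{G}$ for which \emph{no} polynomial witnesses weak $(p\cdot 2^{-\ell})$-ingenerability with respect to $\mc{G}$. Applying this with $p(n)=n^2$, I obtain a uniform $(0,\ell)$-circuit family $(C_n)_{n\in\N}$ built from $\mc{G}$, generated by a single Turing machine $\ttt{T}$, together with a threshold $n_1$, such that $\bra{s_n}C_n(\varepsilon)\ket{s_n}\geq n^2\cdot 2^{-\ell(n)}$ for all $n\geq n_1$. For such $n$ set $S_n=\{x\in\{0,1\}^{\ell(n)}:\bra{x}C_n(\varepsilon)\ket{x}\geq n^2\cdot 2^{-\ell(n)}\}$. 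Since gate sets are finite with algebraic entries, $\bra{x}C_n(\varepsilon)\ket{x}$ can be computed exactly and the defining comparison is decidable, so $S_n$ is computable from $n$ and (a description of) $\ttt{T}$; moreover $s_n\in S_n$ and $\abs{S_n}\leq 2^{\ell(n)}/n^2$ because the outcome probabilities sum to $1$. Hence each $s_n$ with $n\geq n_1$ is pinned down by its lexicographic rank in $S_n$, an integer storable in $m_n:=\lceil\log\abs{S_n}\rceil\leq\ell(n)-2\log n+1$ bits; for $n<n_1$ put $m_n=\ell(n)$ and store $s_n$ verbatim. The crucial observation is that a \emph{fixed} decoder, with $\ttt{T},\ell,p,n_1$ hard-wired, can recompute $m_n$ and $S_n$ from $n$, so the concatenation of fixed-width codewords $w_N:=\langle i_0\rangle_{m_0}\langle i_1\rangle_{m_1}\cdots\langle i_N\rangle_{m_N}$ is an unambiguous description of $s_0s_1\cdots s_N$, with $\abs{w_N}=\sum_{n=0}^N m_n\leq L(N)-2\sum_{n=n_1}^N\log n+(N-n_1+1)$.

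Given the conditioning string $\langle L(N)\rangle$, this decoder recovers $L(N)$, then $N$ (using that $L$ is computable and eventually strictly increasing), then reads $w_N$ as above; hence there is a constant $c$ (depending only on the fixed data) with
\begin{equation}
	K(s_{[1:L(N)]} \mid \langle L(N) \rangle)
	\;\leq\;
	L(N) - 2\sum_{n = n_1}^{N} \log n + N + c
\end{equation}
for all large $N$. On the other hand, if $s$ were Martin-L\"of random, then \cref{th:miller-yu} at index $L(N)$ gives $K(s_{[1:L(N)]}\mid\langle L(N)\rangle)\geq L(N)-2\log L(N)-c'$. Combining the two bounds yields $2\sum_{n=n_1}^N\log n-N\leq 2\log L(N)+c+c'$ for all large $N$; but the left-hand side is, up to a bounded additive term, at least $\sum_{n=n_1}^N\log n$, which grows like $N\log N$, while the right-hand side is $\mc{O}(\log N)$ --- a contradiction. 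Therefore $s$ is not Martin-L\"of random, which is exactly the contrapositive of the theorem.

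The step I expect to require the most care is the compression bookkeeping of the second paragraph: one must use \emph{fixed-width} codewords $\langle i_n\rangle_{m_n}$ so that the decoder parses $w_N$ with no per-segment length markers (such markers would cost on the order of $\sum_n\log\ell(n)$ extra bits and could erase the gain), and one must check that the per-segment ceiling overhead contributes only $\mc{O}(N)$ in total, which is comfortably dominated by the $\approx 2\sum_{n\le N}\log n$ bits saved thanks to the factor $p(n)=n^2$. A secondary technical point is justifying that $S_n$ is genuinely computable, which is where finiteness of the gate set and algebraicity of its entries enter. It is worth noting that, unlike in \cref{th:ingenerable-universal}, no reduction to a universal gate set via Solovay--Kitaev is needed here, since the compression argument applies directly to whatever gate set $\mc{G}$ the failure of weak ingenerability hands us.
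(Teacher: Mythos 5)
Your proposal is correct and follows essentially the same route as the paper's own proof: argue the contrapositive, use the uniform circuit family witnessing the failure of weak ingenerability to encode each segment $s_n$ by its lexicographic rank (padded to a width the decoder can recompute) inside the computable set of strings output with probability at least $p(n)\cdot 2^{-\ell(n)}$, and contradict Martin-L\"of randomness via \cref{th:miller-yu}. The only deviations are technical bookkeeping choices: the paper takes $p(\lambda)=2\ell(\lambda+1)^3$ so the savings beat $2\log n$ at \emph{every} prefix length $n$ (copying a partial final segment verbatim), and it avoids your exact algebraic-amplitude comparison by computing $\bra{x}C_\lambda(\varepsilon)\ket{x}$ only to additive error $2^{-\ell(\lambda)}$ with a $p$ versus $p+1$ versus $p+2$ threshold gap, whereas your variant ($p(n)=n^2$, Miller--Yu applied only at the lengths $L(N)$, exact comparison justified by the finiteness and algebraicity of the gate set) works just as well.
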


\begin{proof}
	Assume that $\ell(\lambda)$ is not constant. If it was, then
	$\text{cut}_\ell(s)$ would be trivially ingenerable, and
	hence weakly ingenerable, by virtue of
	$\ell \not\in \omega(\log)$ and the proof would be done. Note that
	since $\ell$ is never negative, this implies that
	$\lim_{\lambda \to \infty} \ell(\lambda) = \infty$.

	Aiming for a contradiction, assume that the $\ell$-sequence
	$\text{cut}_\ell(s) = \left(s_\lambda\right)_{\lambda \in \N}$ is
	not weakly ingenerable. Thus, there exists a uniform family
	of~$(0,\ell)$-circuits $(C_\lambda)_{\lambda \in \N}$ such that
	\begin{equation}
		\bra{s_\lambda}C_\lambda(\varepsilon)\ket{s_\lambda}
		\geq
		\left(p(\lambda) + 2\right) \cdot 2^{-\ell(\lambda)}
		\qq{where}
		p(\lambda) = 2\ell(\lambda+1)^3
	\end{equation}
	for all but finitely many values of~$\lambda$. Let $\ttt{C}$ be a
	Turing machine which generates this circuit family.

	Our goal is now to describe a Turing machine $\ttt{GEN}$ and strings
	$a_n\in\{0,1\}^*$ such that for all constants $c$, there exists an
	$n_c$ such that
	\begin{equation}
		\ttt{GEN}(_\text{d}\langle n_c \rangle,01, a_{n_c})
		=
		s_{[1,{n_c}]}
		\qq{and}
		\abs{a_{n_c}} < n_c - 2\log(n_c) - c
	\end{equation}
	Taking the contrapositive of \cref{th:miller-yu}, this is sufficient
	to conclude that $s$ is not Martin-L\"of random, yielding our
	contradiction.

	We construct $\ttt{GEN}$ from a few subroutines, which we will also
	describe as Turing machines. First, let $\ttt{SIM}_\ttt{C}$ be a
	Turing machine which, on input of
	$\langle\lambda\rangle$ and a string $y\in \{0,1\}^{\ell(\lambda)}$,
	computes the value of $\bra{y}C_\lambda(\varepsilon)\ket{y}$ within
	an additive error of $2^{-\ell(\lambda)}$. Next, let
	$\ttt{SET}_\ttt{C}$ be a Turing machine which on input of
	$\langle\lambda\rangle$ implements the following algorithm:
	\begin{enumerate}
		\item
			Initialize an empty set $\mc{S}$.
		\item
			Iterating over all strings $y \in \{0,1\}^{\ell(\lambda)}$,
			do the following:
			\begin{enumerate}
				\item
					Run $
						\ttt{SIM}_\ttt{C}(
							_\text{d}\langle\lambda\rangle
							, \texttt{01} , y)
					$.
				\item
					If this determines that $
						\bra{y}C_\lambda(\varepsilon)\ket{y}
						\geq
						(p(\lambda) + 1) \cdot 2^{-\ell(\lambda)}
					$, add $y$ to $\mc{S}$.
			\end{enumerate}
		\item
			Output a description of the set $\mc{S}$ and halt.
	\end{enumerate}

	Let's establish a bit of notation for the sets $\mc{S}$ computed by
	$\ttt{SET}_\ttt{C}$ and examine some of their properties. This will
	also allow us to define the strings $a_n$ before
	moving on to describing $\ttt{GEN}$.

	For all $\lambda \in \N$, let $\mc{S}_\lambda$ be the set ultimately
	computed by $\ttt{SET}_\ttt{C}(\langle\lambda\rangle)$. Note that,
	by the additive error bound we imposed on $\ttt{SIM}_\ttt{C}$, this
	set contains every $y$ satisfying $
		\bra{y} C_\lambda(\varepsilon)\ket{y}
		\geq
		(p(\lambda) + 2)\cdot2^{-\ell(\lambda)}
	$ and none satisfying $
		\bra{y} C_\lambda(\varepsilon) \ket{y}
		<
		p(\lambda) \cdot 2^{-\ell(\lambda)}
	$. This implies that for every $\lambda$ such that
	$p(\lambda) \not= 0$, the set $\mc{S}_\lambda$ contains at most
	$p(\lambda)^{-1} \cdot 2^{\ell(\lambda)}$ elements and that
	$s_\lambda \in \mc{S}_\lambda$ for all but finitely many values of
	$\lambda$. If $s_\lambda \in \mc{S}_\lambda$, we say that
	$s_\lambda$ is compressible. If $s_\lambda$ is compressible, we let
	$r_\lambda \in \N$ denote it's position, in the lexicographic order,
	among the elements of $\mc{S}_\lambda$. We let $n_\lambda$ be the
	maximal number of bits required to express $r_\lambda$. If
	$p(\lambda) \not= 0$, then
	\begin{equation}
		n_\lambda
		=
		\abs{\langle \abs{\mc{S}_\lambda} \rangle}
		\leq
		\log(\abs{\mc{S}_\lambda} + 1)
		\leq
		\log(2\abs{\mc{S}_\lambda})
		\leq
		\log\left(2 \cdot \frac{2^{\ell(\lambda)}}{p(\lambda)}\right)
		=
		\ell(\lambda) - 3\log(\ell(\lambda+1)).
	\end{equation}
	We let
	\begin{equation}
		\tilde{\lambda}
		=
		\max\left\{
			\left\{\lambda \;:\; s_\lambda\not\in\mc{S}_\lambda\right\}
			\cup
			\left\{\lambda \;:\; p(\lambda) = 0\right\}
		\right\}
	\end{equation}
	be the largest integer such that $s_{\tilde{\lambda}}$ is not
	compressible or that $\ell(\tilde{\lambda}) = 0$. Recall that $\ell$
	tends to infinity and so such a $\tilde{\lambda}$ exists. If
	$\lambda > \tilde{\lambda}$, then $s_\lambda \in \mc{S}_\lambda$ and
	it takes at most $\ell(\lambda) - 2\log(\ell(\lambda + 1))$ bits to
	give its location in the set.

	We now define $a_n$ for all $n \in \N$. Let $L : \N \to \N$ be
	defined by $\lambda \mapsto \sum_{i = 0}^\lambda \ell(i)$ and let
	$\lambda_n \in \N$ be the largest integer such that
	$L(\lambda_n) \leq n$, if such an integer exists. In other words,
	$\lambda_n$ is the number of complete strings from the sequence
	$(s_\lambda)_{\lambda \in \N}$ which appear in $s_{[1:n]}$. Then,
	\begin{equation}
		a_n
		=
		\begin{cases}
			s_{[1,n]}
			& \qq{if} n \leq L(\tilde{\lambda})
			\\
			\left(
			s_{[1,L(\tilde{\lambda})]}
			,
			\langle
				r_{\tilde{\lambda} + 1}
			\rangle_{n_{\tilde{\lambda} +1}}
			,
			\langle
				r_{\tilde{\lambda} + 2}
			\rangle_{n_{\tilde{\lambda} +2}}
			,
			\ldots
			,
			\langle
				r_{\lambda_n}
			\rangle_{n_{\lambda_n}}
			,
			s_{[L(\lambda_n)+1,n]}
			\right)
			& \qq{else.}
		\end{cases}
	\end{equation}
	In general (\emph{i.e.}: the second case in the above equation),
	$a_n$ includes the following information:
	\begin{itemize}
		\item
			A prefix $s_{[1,L(\tilde{\lambda})]}$ of $s_{[1,n]}$ which
			may not be compressible by our method.
		\item
			For every $s_\lambda$ which appears in whole in
			$s_{[1,n]}$ and which can be compressed by our method,~$a_n$
			includes the value of $r_\lambda$ in binary and padded to a
			predictable length.
		\item
			The suffix $s_{[L(\lambda_n)+1,n]}$, representing the
			incomplete part of $s_{\lambda_n+1}$ which is present in
			$s_{[1,n]}$.
	\end{itemize}
	For all $n$ such that $\lambda_n > \tilde{\lambda}$ we have that
	\begin{equation}
	\begin{aligned}
		\abs{a_n}
		&=
		L(\tilde{\lambda})
		+
		\left(
			\sum_{\lambda = \tilde{\lambda} + 1}^{\lambda_n}
			n_\lambda
		\right)
		+
		(n-L(\lambda_n))
		\\
		&\leq
		L(\tilde{\lambda})
		+
		\left(
			\sum_{\lambda = \tilde{\lambda} + 1}^{\lambda_n}
			\ell(\lambda) - 3\log(\ell(\lambda+1))
		\right)
		+
		(n-L(\lambda_n))
		\\
		&\leq
		n
		-
		\sum_{\lambda = \tilde{\lambda} + 1}^{\lambda_n}
			3\log(\ell(\lambda+1))
		\\
		&=
		n
		-
		\sum_{\lambda = \tilde{\lambda} + 2}^{\lambda_n+1}
			3\log(\ell(\lambda))
		\\
		&\leq
		n
		-
		3\log\left(
		\sum_{\lambda = \tilde{\lambda} + 2}^{\lambda_n+1}
			\ell(\lambda)
		\right)
		\\&=
		n
		-
		3\log\left(
			L(\lambda_n + 1)
			-
			L(\tilde{\lambda} + 1)
		\right)
	\end{aligned}
	\end{equation}
	Using the facts that $n < L(\lambda_n +1)$ and that if $x \in \N$
	and $y \geq x + 1$ then $\log(y - x) \geq \log(y) - x$, we obtain
	\begin{equation}
	\begin{aligned}
		\abs{a_n}
		&\leq
		n
		-
		3\log(L(\lambda_n + 1))
		+
		3
		L(\tilde{\lambda} + 1)
		\\&\leq
		n - 3\log(n) + 3L(\tilde{\lambda} + 1)
		\\&=
		n - 2\log(n) + 3L(\tilde{\lambda} + 1) - \log(n)
	\end{aligned}
	\end{equation}
	Since $3L(\tilde{\lambda} + 1)$ is a constant, we have that
	\begin{equation}
		\lim_{n \to \infty}
			3L(\tilde{\lambda} + 1) - \log(n)
		=
		-\infty
	\end{equation}
	and so for every constant $c$ there exists an $n_c \in \N$ such that
	$\abs{a_{n_c}} < n_c - 2 \log(n_c) - c$.
\end{proof}

\section{Supplementary Proofs}

%~~~~~~~~~~~~~~~~~~~~~~~~~~~~~~~~~~~~~~~~~~~~~~~~~~~~~~~~~~~~~~~~~~~~~~%
\subsection{Proof of \cref{th:decision-to-search}}
\label{sc:decision-to-search}
%~~~~~~~~~~~~~~~~~~~~~~~~~~~~~~~~~~~~~~~~~~~~~~~~~~~~~~~~~~~~~~~~~~~~~~%

Before proving \cref{th:decision-to-search}, we highlight a small
technical fact. If a subset $\mc{S}$ of a finite set $\mc{X}$ is
constructed by iterating over all elements $x \in \mc{X}$ and including
this element in $\mc{S}$ with probability $p_x$, independently of all
other choices, then the expected size of $\mc{S}$ is
$\sum_{x \in \mc{X}} p_x$.

\begin{fact}
\label{th:expected-set-cardinality}
	Let $\mc{X}$ be a finite set and, for every $x \in \mc{X}$, let
	$p_x \in [0,1]$ be a real number. Let $\mc{S}$ be a random variable
	distributed on $\mc{P}(\mc{X})$ such that, for all
	$\mc{Y} \in \mc{P}(\mc{X})$
	\begin{equation}
		\Pr\left[\mc{S} = \mc{Y}\right]
		=
		\left(\prod_{x \in \mc{Y}} p_x\right)
		\cdot
		\left(\prod_{x \in \mc{Y} \setminus \mc{S}} 1 - p_x\right).
	\end{equation}
	Then, $
		\E \abs{\mc{S}}
		=
		\sum_{x \in \mc{X}} p_x
	$.
\end{fact}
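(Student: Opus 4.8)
The plan is to prove this by linearity of expectation applied to indicator random variables, one for each element of $\mc{X}$. First I would observe that the distribution on $\mc{P}(\mc{X})$ described in the statement is precisely the product measure in which each element $x \in \mc{X}$ is placed into $\mc{S}$ independently with probability $p_x$. For each $x \in \mc{X}$, let $X_x$ be the random variable which equals $1$ if $x \in \mc{S}$ and $0$ otherwise, so that $\abs{\mc{S}} = \sum_{x \in \mc{X}} X_x$ holds pointwise on the sample space.

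The key step is to compute the marginal $\Pr[x \in \mc{S}]$ and show it equals $p_x$. This follows by summing the joint probabilities over all subsets $\mc{Y} \subseteq \mc{X}$ with $x \in \mc{Y}$: extracting the factor $p_x$ from each such product leaves
\begin{equation}
	\sum_{\mc{Y}' \subseteq \mc{X} \setminus \{x\}}
		\left(\prod_{z \in \mc{Y}'} p_z\right)
		\left(\prod_{z \in (\mc{X} \setminus \{x\}) \setminus \mc{Y}'} (1 - p_z)\right)
	=
	\prod_{z \in \mc{X} \setminus \{x\}} \bigl(p_z + (1 - p_z)\bigr)
	=
	1,
\end{equation}
where the first equality is obtained by expanding the product over $z \neq x$. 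Hence $\Pr[x \in \mc{S}] = p_x$, and therefore $\E X_x = p_x$ for every $x \in \mc{X}$.

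Finally, by linearity of expectation,
\begin{equation}
	\E \abs{\mc{S}}
	=
	\E \sum_{x \in \mc{X}} X_x
	=
	\sum_{x \in \mc{X}} \E X_x
	=
	\sum_{x \in \mc{X}} p_x,
\end{equation}
which is the claim. There is no substantive obstacle here; the only mild care needed is the bookkeeping in extracting the marginal from the joint distribution, which amounts to the elementary fact that the marginals of a product distribution are its individual factors.
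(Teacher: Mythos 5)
Your proof is correct and follows essentially the same route as the paper's: the paper writes $\abs{\mc{Y}} = \sum_{x} \delta_{\mc{Y}}(x)$ with characteristic functions and swaps the order of summation, which is exactly your decomposition into indicators plus linearity of expectation. The only difference is that you explicitly verify the marginal $\Pr[x \in \mc{S}] = p_x$ via the product expansion, a step the paper simply asserts, so if anything your write-up is slightly more complete.
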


\begin{proof}
	For any $\mc{Y} \in \frak{\mc{X}}$, let $\delta_\mc{Y} : \mc{X} \to
	\{0,1\}$ be the characteristic function for the set $\mc{Y}$. We can
	then compute $\E\abs{\mc{S}}$ as
	\begin{equation}
		\sum_{\mc{Y} \in \mc{P}\left(\mc{X}\right)}
		\abs{\mc{Y}} \cdot \Pr\left[\mc{S} = \mc{Y}\right]
		=
		\sum_{\mc{Y} \in \mc{P}\left(\mc{X}\right)}
		\sum_{x \in \mc{X}}
		\delta_\mc{Y}(x)
		\cdot
		\Pr\left[\mc{S} = \mc{Y}\right]
		=
		\sum_{x \in \mc{X}}
		\sum_{\mc{Y} \in \mc{P}\left(\mc{X}\right)}
		\delta_\mc{Y}(x)
		\cdot
		\Pr\left[\mc{S} = \mc{Y}\right]
	\end{equation}
	which is precisely $\sum_{x \in \mc{X}} p_x$, the desired result, as
	$\sum_{\mc{Y} \in \mc{P}\left(\mc{X}\right)}
		\delta_\mc{Y}(x)
		\cdot
		\Pr\left[\mc{S} = \mc{Y}\right]
		=
		p_x$.
\end{proof}

We can now prove the lemma.

\begin{proof}[Proof of \cref{th:decision-to-search}.]
	Let $\mc{S}$ be the random variable distributed on
	$\mc{P}\left(\{0,1\}^n\right)$ which models the contents of the set
	maintained by the circuit $\tilde{C}$ once it terminates step $2$.
	Note that
	\begin{equation}
		\Pr\left[\mc{S} = \mc{X}\right]
		=
		\left(\prod_{x \in \mc{X}} \bra{1} C(x) \ket{1}\right)
		\left(
			\prod_{x \in \{0,1\}^n \setminus \mc{X}}
			1 - \bra{1} C_x(x) \ket{1}
		\right).
	\end{equation}
	We see that
	\begin{equation}
	\label{pr:s2d-1}
		\bra{s} \tilde{C}(s) \ket{s}
		\geq
		\sum_{\substack{
			\mc{X}
			\in
			\mc{P}\left(\{0,1\}^n\right)
			\\
			s \in \mc{X}
			}
		}
			\frac{1}{\abs{\mc{X}}}
			\cdot
			\Pr\left[\mc{S} = \mc{X}\right]
	\end{equation}
	where the inequality is obtained by neglecting the case where
	$\mc{S}$ is empty.

	Next, let $\mc{S}'$ be the random variable distributed on
	$\mc{P}\left(\{0,1\}^n \setminus \{s\}\right)$ such that
	\begin{equation}
		\Pr\left[\mc{S}' = \mc{X}\right]
		=
		\left(
			\prod_{x \in \mc{X}} \bra{1} C(x) \ket{1}
		\right)
		\left(
			\prod_{
				x\in\left(\{0,1\}^n\setminus\{s\}\right)\setminus\mc{X}
			}
			1 - \bra{1} C_x(x) \ket{1}
		\right)
\end{equation}
	and note that if $s \in \mc{X}$ then $\Pr[\mc{S} =
	\mc{X}] = \bra{1} C(s) \ket{1} \cdot \Pr[\mc{S}' = \mc{X} \setminus
	\{s\}]$ for any $\mc{X} \in \mc{P}(\{0,1\}^n)$. We then have that
	\begin{equation}
	\begin{aligned}
		\sum_{\substack{
			\mc{X}
			\in
			\mc{P}\left(\{0,1\}^n\right)
			\\
			s \in \mc{X}
			}
		}
			\frac{1}{\abs{\mc{X}}}
			\cdot
			\Pr\left[\mc{S} = \mc{X}\right]
		&=
		\sum_{\substack{
			\mc{X}
			\in
			\mc{P}\left(\{0,1\}^n\right)
			\\
			s \in \mc{X}
			}
		}
			\frac{
				1
			}{
				1 + \abs{\mc{X}\setminus \{s\}}
			}
			\cdot
			\Pr\left[\mc{S} = \mc{X}\right]
		\\&=
		\sum_{\substack{
			\mc{X}
			\in
			\mc{P}\left(\{0,1\}^n\right)
			\\
			s \in \mc{X}
		}
		}
			\frac{
				\bra{s} C(s) \ket{s}
			}{
				1 + \abs{\mc{X}\setminus \{s\}}
			}
			\cdot
			\Pr\left[\mc{S}' = \mc{X} \setminus \{s\}\right]
		\\&=
		\sum_{
			\mc{X}'
			\in
			\frak{P}\left(\{0,1\}^n \setminus \{s\}\right)
		}
			\frac{
				\bra{s}C(s)\ket{s}
			}{
				1 + \abs{\mc{X}'}
			}
			\cdot
			\Pr\left[\mc{S}' = \mc{X}'\right]
		\\&=
		\bra{s}C(s)\ket{s}
		\cdot
		\E \frac{1}{1 + \abs{\mc{S}'}}
		\\&\geq
		\bra{s}C(s)\ket{s}
		\cdot
		\frac{1}{1 + \E\abs{\mc{S}'}}
		\\&=
		\bra{s}C(s)\ket{s}
		\cdot
		\frac{1}{
			1
			+
			\sum_{x \in \{0,1\}^n \setminus \{s\}}
			\bra{1}C(x)\ket{1}
		}
	\end{aligned}
	\end{equation}
	where the inequality is obtained by Jensen's inequality and the
	expectation is evaluated by \cref{th:expected-set-cardinality}.
	Combining this with \cref{pr:s2d-1} yields the desired result.
\end{proof}

\addcontentsline{toc}{section}{References}

\newcommand{\etalchar}[1]{$^{#1}$}
\makeatletter\@ifundefined{url}{\newcommand{\url}[1]{\texttt{#1}}}{}\@ifundefined{href}{\newcommand{\href}[2]{\texttt{#2}}}{}\@ifundefined{mathbb}{\newcommand{\mathbb}[1]{#1}}{}\makeatother

\end{document}